\newcommand{\Z}{\mathbb{Z}}
\newcommand{\R}{\mathbb{R}}
\newcommand{\N}{\mathbb{N}}
\newcommand{\E}{\mathbb{E}}
\newcommand{\C}{\mathbb{C}}
\newtheorem{thm}{Theorem}
\newtheorem{lem}[thm]{Lemma}
\newtheorem{cor}[thm]{Corollary}
\newtheorem{prop}[thm]{Proposition}
\newtheorem{remark}[thm]{Remark}
\begin{document}

\title{Renormalized powers of
Ornstein-Uhlenbeck processes and\\
well-posedness
of stochastic Ginzburg-Landau 
equations}

\author{Weinan E$^1$, Arnulf Jentzen$^2$
and Hao Shen$^3$
\bigskip
\\
\small{$^1 $BICMR and School of Mathematical Sciences,  Peking University, Beijing, China, 100870;}
\\
\small{Department of Mathematics and 
Program in Applied and Computational Mathematics,}
\\
\small{Princeton University, Princeton, NJ 08544-1000, USA,
e-mail: weinan@math.princeton.edu}
\smallskip
\\
\small{$^2$Seminar for Applied 
Mathematics,
Swiss Federal Institute of 
Technology, Zurich,}\\
\small{8092 Z\"urich, Switzerland, e-mail: arnulf.jentzen@sam.math.ethz.ch;
Program in Applied}
\\
\small{and Computational Mathematics,
Princeton University,
Princeton, NJ 08544-1000, USA}
\smallskip
\\
\small{$^3 $Program in Applied and Computational Mathematics,
Princeton University,}
\\
\small{Princeton, NJ 08544-1000, USA,
e-mail: hshen@princeton.edu}
\\
}

\maketitle

\tableofcontents

\begin{abstract}
This article analyzes 
well-definedness 
and regularity of renormalized
powers of Ornstein-Uhlenbeck processes
and uses this analysis 
to establish 
local
existence, uniqueness and regularity
of strong solutions of 
stochastic Ginzburg-Landau equations
with polynomial nonlinearities 
in two space dimensions
and with quadratic nonlinearities
in three space dimensions.
\end{abstract}

\section{Introduction}

The first part of 
this article 
(see Section~\ref{sec:RenormalizedPowers} below)
investigates well-definedness
and regularity of suitable 
renormalized
powers of 
Ornstein-Uhlenbeck processes.
More formally, let
$ 
  \left( 
    \Omega, \mathcal{F}, \mathbb{P}
  \right)
$
be a probability space,
let $ d \in \N := \{ 1, 2, \dots \} $,
$ n \in \{ 2, 3, 4, \dots \} $
and let
$ ( W_t )_{ t \in \R } $
be a two-sided cylindrical
$ I $-Wiener process
on the $ \R $-Hilbert space
$ L^2( [0, 2 \pi]^d, \R ) $
of equivalence classes of Lebesgue
square integrable functions
from $ [0, 2 \pi]^d $ to $ \R $.
Moreover, 
let
$
  C_{ \mathcal{P} }( [0, 2\pi]^d, \R ) 
$
be the space of periodic continuous
functions from
$ [0, 2 \pi]^d $ to $ \R $,
let
$
  A
  \colon
  D( A )
  \subset 
  C_{ \mathcal{P} }( [0, 2\pi]^d, \R ) 
  \to
  C_{ \mathcal{P} }( [0, 2\pi]^d, \R ) 
$
be the Laplacian with periodic
boundary conditions 
on 
$
  C_{ \mathcal{P} }( [0, 2\pi]^d, \R ) 
$
minus the identity operator
(see \eqref{eq:def_Laplacian} below for details)
and consider the stationary solution
$ V_t = \int_{ - \infty }^t e^{ A (t - s) } \, dW_s $,
$ t \in \R $,
of the SPDE
\begin{equation}
\label{eq:SPDE_intro}
  d V_t
= 
  A V_t 
  \, dt
  + dW_t 
\end{equation}
for $ t \in \R $.
Note that
the process 
$ V_t $, $ t \in \R $,
does in the case $ d \geq 2 $ 
$ \mathbb{P} $-almost surely 
not take values 
in a function space anymore
but 
in
$
  D( ( - A )^{ ( 2 - d ) / 4 - \varepsilon } )
$
(see, for instance,
Da Prato \& Zabczyk~\cite{dz92}).
Nonetheless, powers of $ V $
are well defined in a suitable sense
in the case $ d = 2 $.
Indeed, $ n $-th renormalized power of $ V $,
that is, the stochastic process
$
  : \! ( V_t )^n \! :
$,
$ t \in \R $,
is well defined and its regularity is
analyzed in the case $ d = 2 $
in Lemma~3.2 in Da Prato
\& Debussche~\cite{DaPratoDebussche2003}
(see, e.g., also \cite{simon_functional_1979,glimm_quantum_1987,DaPratoTubaro2007} 
for further details
on the definition of the $ n $-th
renormalized power).
Proposition~\ref{prop:regularity_Wick} in this 
article extends the regularity statement
of this result and also establish 
well definedness of
$
  : \! ( V_t )^2 \! :
$,
$ t \in \R $,
in the case $ d = 3 $.
Moreover, if $ d = 3 $, $ n \geq 3 $
or if $ d \geq 4 $, then
$
  : \! ( V_t )^n \! :
$,
$ t \in \R $,
can not be defined anymore
(see Section~7.1 in Da Prato \&
Tubaro~\cite{DaPratoTubaro2007}
in the case $ d = n = 3 $
and Lemma~\ref{lem:limitation} 
below in the general case).
Although 
$
  : \! ( V_t )^3 \! :
$,
$ t \in \R $,
does not make
sense in the case $ d = 3 $,
we establish in 
Proposition~\ref{prop:regularity_WickAver}
and Lemma~\ref{lem:limitationAWP}
below that the processes
$
  \int_{ t_0 }^t : \! ( V_s )^n \! : \, ds
$,
$ t \in [ t_0, \infty ) $,
$ t_0 \in \R $,
(which we refer as \emph{averaged Wick powers})
are well defined if and only if
$
  \frac{ n + 1 }{ n - 1 }
  > 
  \frac{ d }{ 2 }
$
(i.e.,
if and only if
$ d \in \{ 1, 2 \} $
or
($
  d = 3 
$
and
$
  n \in \{ 2, 3, 4 \}
$)
or 
($
  d \in \{ 4, 5 \}
$
and
$ n = 2 $)).
The integral thus mollifies the renormalized
power in a suitable sense and allows us to
define 
$
  \int_{ t_0 }^t : \! ( V_s )^3 \! : \, ds
$,
$ t \in [ t_0, \infty ) $,
$ t_0 \in \R $,
even in the case $ d = 3 $.
Another possibility to extend
the definition of
$
  : \! ( V_t )^n \! : 
$,
$ t \in \R $,
is to consider the process
$
  \int_{ - \infty }^t e^{ A (t - s) } 
  : \! ( V_s )^n \! : \, ds
$,
$ t \in \R $,
which we refer as
\emph{convolutional Wick power}.
Proposition~\ref{prop:regularity_WickConv}
and Lemma~\ref{lem:limitationCWP}
prove that
$
  \int_{ - \infty }^t e^{ A (t - s) } 
  : \! ( V_s )^n \! : \, ds
$,
$ t \in \R $,
is (as in the case of averaged
Wick powers) well defined if and 
only if
$
  \frac{ n + 1 }{ n - 1 }
  > 
  \frac{ d }{ 2 }
$.
Proposition~\ref{prop:regularity_WickConv}
also proves that
convolutional Wick powers
enjoy more regularity properties
than averaged Wick powers
constructed in 
Proposition~\ref{prop:regularity_WickAver}.
Our analysis of convolutional Wick powers
is inspired by a Walsh-expansion
for the KPZ equation
in the fundamental recent 
article Hairer~\cite{Hairer2012}.
For details on the results on 
Wick power, averaged Wick powers
and convolutional Wick powers the
reader is referred to the summary
in Subsection~\ref{sec:summary}
below.

The above outlined results on the
well-definedness and regularity of
renormalized powers of
$ V $ are used 
in the second part of 
this article 
(see Section~\ref{sec:SPDEs} below)
to analyzes 
strong solutions of 
stochastic Ginzburg-Landau
equations
with polynomial nonlinearities.
More formally,
let 
$ 
  \eta,
  \kappa_0, \kappa_1, \dots, \kappa_n 
  \in \R 
$,
let 
$
  x_0  
  \in
  D( ( - A )^{ \eta } )
$
and consider a solution
process
$ ( X_t )_{ t \in [0, \infty) } $ 
of the SPDE
\begin{equation}
\label{eq:SPDE_intro}
  d X_t
= 
  \left[
    A X_t
    +
    : \!
    \left(
      \sum\nolimits_{ i = 0 }^n
      \kappa_i \left( X_t \right)^i
    \right)
    \! :
  \right]
  dt
  + dW_t 
\end{equation}
for $ t \in [0,\infty) $
with the initial condition
$   
  X_0 
  = 
  x_0  
$
and where the expression
$
    : \!
    \left(
      \sum_{ i = 0 }^n
      \kappa_i ( X_t )^i
    \right)
    \! :
$
is a suitable renormalization
of the term
$
  \sum_{ i = 0 }^n
  \kappa_i ( X_t )^i
$
for $ t \in [0, \infty) $
(see Subsections~\ref{sec:two_dim}
and \ref{sec:three_dim} 
below for further details).
The parameter $ \eta \in \R $ 
thus measures the regularity of the
initial value.
SPDEs of the 
form~\eqref{eq:SPDE_intro}
have a strong connection
to models from quantum field
theory; see
\cite{ParisiWu1981}.
Local and global existence, uniqueness
and regularity of solutions 
of SPDEs of the form
\eqref{eq:SPDE_intro}
(and suitable mollified
versions of \eqref{eq:SPDE_intro}
respectively)
have been intensively 
studied
in the last two decades;
see, e.g., the monograph
\cite{dz92} and the references
mentioned therein 
for the one-dimensional case 
$ d = 1 $
and see
\cite{Jona-lasinioMitter1985,BorkarChariMitter1988,AlbeverioRoeckner1991,dz92,DaPratoZabczyk1996,GatarekGoldys1996,LiskevichRockner1998,MikuleviciusRozovskii1999,DaPratoDebussche2003}
for the more
subtle two-dimensional 
case $ d = 2 $.
In this article we are mainly
interested in strong solutions
of \eqref{eq:SPDE_intro}
and we therefore review
results for strong solutions
of \eqref{eq:SPDE_intro}
in a bit more detail 
in the following.

%
%
%
%
%
%
%
%
%
%

\emph{In the case
$ d = 1 $}, global
existence, uniqueness and
regularity of strong 
solutions follows, e.g., 
from
Section~7.2 in
Da Prato \& 
Zabczyk~\cite{dz92}
if $ n $ is odd and if
$ \kappa_n < 0 $.
In the case $ d = 1 $ the
expression 
$
  \sum_{ i = 0 }^n
  \kappa_i ( X_t )^i
$
appearing in \eqref{eq:SPDE_intro}
is well defined and it is
not necessary to replace it
by its renormalization
$
    : \!
    \left(
      \sum_{ i = 0 }^n
      \kappa_i ( X_t )^i
    \right)
    \! :
$
for $ t \in [0, \infty) $.
Moreover, note that
the solution process
$ ( X_t )_{ t \in [0, \infty ) } $
of the SPDE~\eqref{eq:SPDE_intro}
satisfies
$
  \mathbb{P}\big[
    X_t \in 
    D( ( - A )^{
      1 / 4 - \varepsilon
    } )
    \cup \{ \infty \}
  \big] = 1
$
for all $ t, \varepsilon \in ( 0, \infty ) $
in the case $ d = 1 $.
The solution process thus takes 
$ \mathbb{P} $-almost surely 
values in 
$ 
    D( ( - A )^{
      1 / 4 - \varepsilon
    } )
  \cup \{ \infty \}
$
in the case $ d = 1 $
where $ \varepsilon \in ( 0, \infty) $
is arbitrarily small.
Here and below the solution process
takes the value $ \infty $ after
its possible blow up
(e.g., if $ \kappa_n > 0 $).

\emph{In the case $ d = 2 $} 
the renormalization
is necessary and 
can not be avoided
(see Walsh~\cite{Walsh1986}
and, e.g., Section~1 in 
Hairer et al.~\cite{HairerRyserWeber2012}).
In the case $ d = 2 $
local existence, uniqueness
and regularity of solutions 
of \eqref{eq:SPDE_intro}
have been established in 
Proposition~4.4 in
Da Prato \& 
Debussche~\cite{DaPratoDebussche2003}
if the condition
\begin{equation}
\label{eq:DaPrato_initial_ass}
  \eta > 
  \inf_{ p \in ( n , \infty ) }
  \left(
  \max\!\left\{
    \frac{
      - 2
    }{
      p \left( 2 n + 1 \right)
    }
    , 
    \frac{
      - 1
    }{
      \left( n - 1 \right)
    }
    \left(
      1 - \frac{ n }{ p }
    \right)
  \right\}
  \right)
=
  -
  \sup_{ p \in ( n , \infty ) }
  \left(
  \min\!\left\{
    \frac{
      2
    }{
      p \left( 2 n + 1 \right)
    }
    , 
    \frac{
      1
    }{
      \left( n - 1 \right)
    }
    \left(
      1 - \frac{ n }{ p }
    \right)
  \right\}
  \right)
\end{equation}
is fulfilled beside
other assumptions
(see also
Theorem~4.2 in
\cite{DaPratoDebussche2003} for
the corresponding global existence result).
The first main result of this article,
Theorem~\ref{thm:2D}
in Subsection~\ref{sec:two_dim},
extends Da Prato \& Debussche's
result 
by establishing local
existence of strong solutions
in the case $ d = 2 $
for a larger class of initial 
values, that is, 
if the condition
\begin{equation}
\label{eq:Weinan_initial_ass}
  \eta >
  - \frac{ 2 }{ n }
\end{equation}
is fulfilled
instead of \eqref{eq:DaPrato_initial_ass}.
Clearly, 
assumption 
\eqref{eq:Weinan_initial_ass}
is less restrictive 
than assumption 
\eqref{eq:DaPrato_initial_ass}.
In addition,
under assumption
\eqref{eq:Weinan_initial_ass},
Theorem~\ref{thm:2D}
establishes more regularity
of the solution process
of the SPDE~\eqref{eq:SPDE_intro}.
The reader is referred to
\eqref{eq:regularity_DPD}
in Subsection~\ref{sec:two_dim}
for a detailed comparison
of the regularity statement
in Proposition~4.4 in
Da Prato \& 
Debussche~\cite{DaPratoDebussche2003}
and of the regularity statement in
Theorem~\ref{thm:2D} 
below.
Under assumption
\eqref{eq:Weinan_initial_ass},
Theorem~\ref{thm:2D} 
also shows 
that
the solution process
$ ( X_t )_{ t \in [0, \infty ) } $
of the SPDE~\eqref{eq:SPDE_intro}
satisfies
$
  \mathbb{P}\!\left[
    X_t \in 
    D( ( - A )^{ - \varepsilon } )
    \cup \{ \infty \}
  \right] = 1
$
for all $ t, \varepsilon \in ( 0, \infty ) $
and all $ r \in ( - \infty, 0 ) $
in the case $ d = 2 $.
The solution process thus takes 
$ \mathbb{P} $-almost surely 
values in 
$ 
    D( ( - A )^{
      - \varepsilon
    } )
  \cup \{ \infty \}
$
in the case $ d = 2 $
where $ \varepsilon \in ( 0, \infty) $
is arbitrarily small.

The next main result of this article
is devoted to \emph{the case $ d = 3 $
and $ n = 2 $}.
More precisely,
Theorem~\ref{thm:quadratic}
in Subsection~\ref{sec:three_dim},
proves local existence, uniqueness
and regularity of strong solutions
of \eqref{eq:SPDE_intro} in the
case $ d = 3 $ and $ n = 2 $
if the condition
$ 
  \eta > - 1
$
is fulfilled.
Under these assumptions,
Theorem~\ref{thm:quadratic}
proves that
the solution process of the 
SPDE~\eqref{eq:SPDE_intro}
satisfies
$
  \mathbb{P}\!\left[
    X_t \in 
    D( ( - A )^{
      - 1 / 4 - \varepsilon
    } )
    \cup \{ \infty \}
  \right] = 1
$
for all $ t, \varepsilon \in ( 0, \infty ) $.
The solution process thus takes 
$ \mathbb{P} $-almost surely 
values in 
$ 
    D( ( - A )^{
      - 1 / 4 - \varepsilon
    } )
  \cup \{ \infty \}
$
in the case $ d = 3 $ and
$ n = 2 $ and $ \eta > - 1 $
where $ \varepsilon \in ( 0, \infty) $
is arbitrarily small.
To the best of our knowledge,
Theorem~\ref{thm:quadratic}
is the first result in the literature
that establish local existence of
solutions of the 
SPDE~\eqref{eq:SPDE_intro}
in the three
dimensional case $ d = 3 $.
The proof of 
Theorem~\ref{thm:quadratic}
is based on a detailed analysis
of mild solutions of 
determinisitic nonautonomous
partial differential equations
in 
Subsection~\ref{sec:det_existence}
and on the analysis of
$ : \! ( V_t )^2 \! : $, $ t \in \R $, in three
dimensions $ d = 3 $
(see Section~\ref{sec:RenormalizedPowers}).

\subsubsection*{Acknowledgements}

Jan van Neerven, Alesandra Lunardi
and Philipp Doersiek are gratefully
acknowledged for a number of 
quite useful comments and 
references concerning
analytic semigroups and their
infinitesmal generators.

The work of E and Shen is supported in part by 
grants from ARO (W911NF-11-1-0101) and ONR (N00014-01-1-0674).
The work of Jentzen is supported in part by the research project "Numerical solutions of stochastic differential
equations with non-globally Lipschitz continuous coefficients" funded by
the German Research Foundation.

\subsection{Notation}
\label{sec:notation}

Throughout this article the
following conventions are used.
If $ \Omega $ is a set and 
$ 
  \mathcal{F} \subset 
  \mathcal{P}( \Omega ) 
$
is a subsets of the 
power set of $ \Omega $, then we denote
by
$ \sigma_{ \Omega }( \mathcal{F} ) $ the sigma-algebra
on $ \Omega $ which is generated by $ \mathcal{F} $.
If $ ( E, \mathcal{E} ) $ is a topological
space, then we denote by
$
  \mathcal{B}( E ) := \sigma_E( \mathcal{E} )
$
the Borel sigma-algebra of $ (E, \mathcal{E}) $.
Furthermore,
if $ d \in \N := \{ 1, 2, \dots \} $, then
we denote by
$
  C_{ \mathcal{P} }( [0, 2\pi]^d, \R ) 
$
the $ \R $-Banach space of 
periodic continuous functions
from $ [0, 2 \pi]^d $ to $ \R $
and by
$ 
  \mathcal{A}_d
  \colon 
  D(
    \mathcal{A}_d 
  ) 
  \subset 
  C_{ \mathcal{P} }( [0, 2\pi]^d, \R ) 
  \to
  C_{ \mathcal{P} }( [0, 2\pi]^d, \R ) 
$
the generator of a strongly continuous
analytic semigroup which satisfies
\begin{multline}
\label{eq:def_Laplacian}
  D(
    \mathcal{A}_d
  ) 
  \supset
  \Bigg\{
    v \in 
    C_P( [0, 2 \pi]^d, \R ) 
    \colon
    \bigg(
      \exists
      \,
      w \in 
      C^2( \R^d, \R ) 
      \colon
\\
      \Big[
        \forall \, x \in \R^d \colon
        \forall \, j \in \{ 1, \dots, d \} 
        \colon
        w(x) 
        = 
        w\big(
          x + 2 \pi e_j^{ (d) } 
        \big)
      \Big]
      \wedge
      \Big[ 
        w|_{ [0,2\pi]^d } = v 
      \Big]
    \bigg)
  \Bigg\}
\end{multline}
and
$
  \mathcal{A}_d v 
  = 
  \triangle v - v
$
for all $ v \in D( \mathcal{A}_d ) $.
The fact that such an operator exists and is unique
can, e.g., be proved by considering the Laplacian
on the whole $ \R^d $.
In addition, if $ d \in \N $ 
and $ r \in \R $, then we denote by
\begin{equation}
\label{eq:def_Cr}
\begin{split}
  \big(
    \mathcal{C}^r_{ \mathcal{P} 
    }( [0, 2 \pi ]^d, \R ) , 
    \left\| \cdot \right\|_{
      \mathcal{C}^r_{ \mathcal{P} 
      }( [0, 2 \pi ]^d, \R ) 
    }
  \big)
  :=
  \left(
    D\big( 
      \left( 
        - \mathcal{A}_d 
      \right)^{ r / 2 } 
    \big),
    \big\| 
      \left( 
        - \mathcal{A}_d 
      \right)^{ 
        r / 2
      }\!( \cdot ) 
    \big\|_{
      C( [0, 2 \pi ]^d, \R )
    }
  \right)
\end{split}
\end{equation}
the $ \R $-Banach space
of the domain of 
the 
$ \frac{ r }{ 2 } $-fractional power
of 
$ 
  \mathcal{A}_d
$.
Finally, we observe that
there exist real numbers
$
  c^{ (d) }_{ \alpha, \beta, \gamma } \in [0,\infty)
$,
$
  \alpha, \beta, \gamma \in \R
$,
$ 
  d \in \N 
$,
such that
for every
$ d \in \N $,
every
$ \alpha, \beta, \gamma \in \R $
with $ \alpha + \beta > 0 $
and $ \gamma < \min( \alpha, \beta ) $,
every
$ v \in \mathcal{C}^{ \alpha }_{ \mathcal{P} }( [0, 2\pi]^d, \R ) $
and every
$ w \in \mathcal{C}^{ \beta }_{ \mathcal{P} }( [0, 2\pi]^d, \R ) $
it holds that
$
  v \cdot w \in \mathcal{C}^{ \gamma }_{ \mathcal{P} }( [0, 2\pi]^d, \R )
$
and that
\begin{equation}
\label{eq:multiplication}
  \left\| v \cdot w \right\|_{ 
    \mathcal{C}^{ \gamma }_{ \mathcal{P} }( [0, 2\pi]^d, \R )
  }
  \leq
  c^{ (d) }_{ \alpha, \beta, \gamma }
  \left\| v \right\|_{ 
    \mathcal{C}^{ \alpha }_{ \mathcal{P} }( [0, 2\pi]^d, \R )
  }
  \left\| w \right\|_{ 
    \mathcal{C}^{ \beta }_{ \mathcal{P} }( [0, 2\pi]^d, \R )
  } .
\end{equation}
More details on interpolation spaces and 
analytic semigroups can, e.g, 
be found in the 
excellent 
books
Lunardi~\cite{l09}, 
Van Neerven~\cite{neerven_adjoint_1992} 
and Sell \& You~\cite{sy02}.
Finally, throughout this article, 
if $ ( V, \left\| \cdot \right\|_V ) $ is an $ \R $-Banach space,
then we equip the set 
$ 
  V \cup \{ \infty \}  
$
with the topology
\begin{multline}
\label{eq:inf_topology}
  \Bigg\{
    A \subset 
    \Big( 
      V \cup \{ \infty \}
    \Big)
    \colon
    \bigg(
      \forall \, a \in A \backslash \{ \infty \}
      \colon
      \Big[
        \exists \, \varepsilon \in ( 0, \infty )
        \colon
        \left\{
          y \in V 
          \colon
          \left\| y - v 
          \right\|_V 
          <
          \varepsilon
        \right\}
        \subset A
      \Big]
    \bigg)
\\
    \text{and }
    \bigg(
      \infty \in A 
      \Rightarrow
      \Big[
        \exists \, R \in ( 0, \infty )
        \colon
          \left\{
            y \in V 
            \colon
            \left\| y  
            \right\|_V 
            >
            R
          \right\}
          \subset
          A
      \Big]
    \bigg)
  \Bigg\}
\end{multline}
and we observe that 
the
pairing
consisting of
$
    V \cup \{ \infty \}
$
and \eqref{eq:inf_topology}
is a complete metrizable topological space.

\section{Renormalized powers of Ornstein-Uhlenbeck processes}
\label{sec:RenormalizedPowers}

\subsection{Setting and assumptions}
\label{sec:setting}

Throughout Section~\ref{sec:RenormalizedPowers} we will frequently
assume that the following setting is fulfilled. 
Let $ d \in \N $, let
$ \delta \colon \Z^d \times \Z^d \to \R $
be a function defined through
\begin{equation}
  \delta_{ v, w }
  :=
  \begin{cases}
    1 
  & 
    \colon v = w
  \\
    0
  & 
    \colon v \neq w
  \end{cases}
\end{equation}
for all $ v, w \in \R^d $ and let
$ 
  g_v \colon [0, 2 \pi]^d \to \C
$,
$ v \in \Z^d $,
be a family of functions defined through
\begin{equation}
  g_v( x ) := 
  e^{ 
    {\bf i} 
    \left< v, x \right>_{ \R^d }
  }  
  =
  e^{ 
    {\bf i} 
    \left(
      v_1 x_1
      +
      \ldots 
      +
      v_d x_d
    \right)
  }  
\end{equation}
for all 
$ v = ( v_1, \dots, v_d ) \in \Z^d $
and all
$ x = ( x_1, \dots, x_d) \in [ 0, 2 \pi ]^d $. 
Next let
$ 
  \big( 
    H := L^2( (0, 2 \pi)^d ; \C ) ,
    \left< \cdot, \cdot \right>_H
    \left\| \cdot \right\|_H
  \big)
$
be the $ \C $-Hilbert space
of equivalence classes of
Lebesgue square integrable
functions from $ ( 0, 2 \pi )^d $
to $ \C $
with
$
  \left< v, w \right>_H
  =
  \int_{ ( 0, 2 \pi )^d }
  \overline{ v(x) }
  \cdot
  w(x) \,
  dx  
$
for all $ v, w \in H $.
Observe that
$ 
  \left( 2 \pi \right)^{ - \frac{ d }{ 2 } }
  g_v
$,
$ v \in \Z^d $,
is an orthonormal basis of $ H $
and that
$
  y =
  \sum_{
    v \in \Z^d
  }
  \frac{ 1 }{
    \left( 2 \pi \right)^d
  }
  \left<
    g_v, y
  \right>_H
  g_v
$
for all
$ 
  y \in H
$.
Moreover,
let
$
  \N_0 
  := \{ 0, 1, 2, \dots \}
$,
let
$ 
  \mathcal{P}_m
  :=
  \{
    (i, j) \in \{ 1, 2, \dots, m \}^2
    \colon
    i < j
  \}
$,
$ m \in \N $,
be sets
and let
$
  \Theta
  \colon
  \cup_{ 
    m = 1
  }^{ \infty }
 \left( \N_0
 \right)^{
  \mathcal{P}_m
 } 
 \to
 \cup_{ m = 1 }^{ \infty }
 \left( \N_0 \right)^m
$
be a function defined 
through
\begin{equation}
  \Theta( \alpha )
:=
  \left(
    \sum_{ 
      \substack{
        (i, j) \in 
        \mathcal{P}_{ m }
      \\
        i = 1 
        \text{ or }
        j = 1
      } 
    }
    \alpha_{ ( i, j ) }
    \,
    ,
    \quad
    \dots
    \quad
    ,
    \sum_{ 
      \substack{
        (i, j) \in 
        \mathcal{P}_{ m }
      \\
        i = m
        \text{ or }
        j = m
      } 
    }
    \alpha_{ ( i, j ) }
  \right)
  \in 
  \left( \N_0 \right)^m 
\end{equation}
for all 
$
  \alpha 
  \in
  \left( \N_0 \right)^{
    \mathcal{P}_m
  }
$
and all
$ m \in \N $.
Furthermore, we denote 
by
\begin{equation}
  \Phi
  :=
  \left\{
    \varphi
    \colon 
    \mathbb{Z}^d \to 
    [0,\infty)
    \colon
    \left(
      \forall \,
      v \in \Z^d
      \colon
      \varphi_v =
      \varphi_{ - v }
    \right)
  \right\}
\end{equation}
the set of all
functions from $ \Z^d $ to $ [0,\infty) $
that are symmetric with respect to
the origin
and equipp it with the Fr\'{e}chet metric
\begin{equation}
  d_{ \Phi }(  
    \varphi, 
    \psi
  ) 
  :=
  \sum_{
    k \in
    \mathbb{Z}^d
  }
  \frac{
    \min\!
    \left(
      1, \varphi_k - \psi_k 
    \right)
  }{
    2^{ 
      \left(
        \left| k_1 \right|
        +
        \ldots
        +
        \left| k_d \right|
      \right)
    }
  }
\end{equation}
for all 
$ 
  \varphi,
  \psi \in 
  \Phi
$.
Next 
define 
$ 
  \Phi_0 :=
  \{ 
    \varphi \in \Phi 
    \colon
    \varphi_k = 0
    \text{ for almost all }
    k \in \Z^d
  \}
  \subset \Phi
$
and
$
  \Phi_{ 0, \leq 1 }
  :=
  \{
    \varphi \in \Phi_0
    \colon
    (
      \forall \, k \in \Z^d 
      \colon
      \varphi_k \in [0,1]
    )
  \}
  \subset \Phi
$.
In addition, let $ \left( \Omega, \mathcal{F}, \mathbb{P} \right) $
be a probability space
and
let
$
  \beta^v
  \colon 
  \R \times \Omega \to \C
$,
$ v \in \Z^d $,
be a family of jointly Gaussian 
complex valued stochastic processes
with continuous sample paths and with
\begin{equation}
  \overline{ \beta^v_t }
  =
  \beta^{ - v }_t
\qquad
  \text{and}
\qquad
  \E\bigg[
    \overline{
      \beta^{ v }_{ t_1 }
    }
    \beta^{ w }_{ t_2 }
   \bigg]
  =
  \begin{cases}
    \delta_{
      v, w 
    }
    \min\!\left(
      \left| t_1 \right|, \left| t_2 \right|
    \right)
  &
    \colon
    t_1 \cdot t_2 \geq 0
  \\
    0
  &
    \colon
    t_1 \cdot t_2 < 0
  \end{cases}
\end{equation}
for all $ t, t_1, t_2 \in \R $
and all $ v, w \in \Z^d $.
Observe that $ \beta^v $, $ v \in \Z^d $,
are two-sided complex valued standard Brownian motions.
Moreover,
let 
$
  V^{ \varphi }
  \colon
  \R \times
  \Omega
  \to 
  \mathcal{C}_{
    \mathcal{P}
  }(
    [0, 2 \pi]^d, \R
  )
$,
$
  \varphi \in \Phi_0
$,
be a family of stochastic processes
with continuous sample paths satisfying
\begin{equation}
  V^{ \varphi }_t
  =
  \sum_{
    v \in \Z^d
  }
  \sqrt{ 2 } \,
  \varphi_v
  \left[
    \int_{ - \infty }^t
    e^{ 
      - \lambda_v
      \left( t - s \right)
    }
    \, d\beta^v_s
  \right]
  g_v
\end{equation}
$ \mathbb{P} $-almost surely 
for all $ t \in \R $
and all 
$ 
  \varphi 
  = ( \varphi_v )_{
    v \in \Z^d
  }
  \in \Phi_0 
$.
Observe that
\begin{equation}
\label{eq:covariance}
  \frac{ 1 }{ 
    \left( 2 \pi \right)^{ 2 d }
  }
  \,
  \mathbb{E}\!\left[
    \overline{
      \left<
        g_{ v_1 } ,
        V_{ t_1 
        }^{ \varphi^{ (1) } }
      \right>_{ \! H }
    }
    \left<
      g_{ v_2 } ,
      V_{ t_2
      }^{ \varphi^{ (2) } }
    \right>_{ \! H }
  \right]
  =  
  \frac{ 
    \delta_{ v_1, v_2 }
    \,
    \varphi_{ v_1 }^{ (1) }
    \varphi_{ v_2 }^{ (2) }
    \,
    e^{ 
      - \lambda_{ v_1 } \left| t_2 - t_1 \right| 
    }
  }{
    \lambda_{ v_1 }
  }
\end{equation}
for all $ t_1, t_2 \in \R $,
$ 
  v_1, v_2 \in \Z^d 
$
and all 
$ 
  \varphi^{ (1) } = 
  ( \varphi_v^{ (1) } )_{ v \in \Z^d } 
  ,
  \varphi^{ (2) } = 
  ( \varphi_v^{ (2) } )_{ v \in \Z^d } 
  \in 
  \Phi_0
$
and that
\begin{equation}
\label{eq:space_correlation}
\begin{split}
&
  \E\!\left[
  \overline{
    V^{
      \varphi^{ (1) }
    }_{ t_1 }\!( x_1 )
  }
  \,
    V^{
      \varphi^{ (2) }
    }_{ t_2 }\!( x_2 )
  \right]
  =
  \sum_{
    v \in \Z^d
  }
  \frac{ 1 }{
    \left( 2 \pi \right)^{ 2 d }
  }
  \,
  \mathbb{E}\!\left[
    \overline{
      \left<
        g_{ v } ,
        V_{ t_1 
        }^{ \varphi^{ (1) } }
      \right>_{ \! H }
    }
    \left<
      g_{ v } ,
      V_{ t_2
      }^{ \varphi^{ (2) } }
    \right>_{ \! H }
  \right]
  g_v( x_2 - x_1 )
\\ & =
  \sum_{
    v \in \Z^d
  }
  \frac{ 
    \varphi^{ (1) }_v
    \varphi^{ (2) }_v
    \,
    e^{ 
      - \lambda_v
      \left|
        t_2 - t_1
      \right|
    }
    \,
    g_v( x_1 - x_2 )
  }{ 
    \lambda_v
  }
\end{split}
\end{equation}
for all
$ 
  \varphi^{ (1) },
  \varphi^{ (2) }
  \in \Phi_0
$,
$ t_1, t_2 \in \R $
and all
$ x_1, x_2 \in [0, 2 \pi]^d $.
Moreover, if $ n \in \N $,
then we denote by
$ 
  \mathcal{W}_n \subset 
  L^2( \Omega; \R )
$
the closure 
in 
$ 
  L^2( \Omega; \R ) 
$
of the set
\begin{equation}
    \bigcup_{ k \in \N }
    \bigcup_{ 
      \substack{
        p \colon \R^k \to \R
        \text{ is a}
      \\
        \text{polyn.\ of degree $ n $}
      }
    }
    \bigcup_{ 
      \substack{
        v_1, \dots, v_k 
      \\
        \in \Z^d
      }
    }
    \bigcup_{ 
      \substack{
        t_1, \dots, t_k 
      \\
        \in \R
      }
    }
    \Big\{
      p\!\left( 
        \beta^{ v_1 }_{ t_1 }
        ,
        \dots ,
        \beta^{ v_k }_{ t_k }
      \right)
    \Big\}
  .
\end{equation}
Note for every $ n \in \N $
that the $ \R $-Hilbert
space 
$ 
  \mathcal{W}_n 
$
is the direct sum of the first
$ n $ Wiener chaoses;
see, e.g., Section~4
in 
Da Prato \&
Tubaro~\cite{DaPratoTubaro2007}
and
Section~A.1
in Hairer~\cite{Hairer2012}.
Furthermore, let
$ 
  H_n \colon \R \to \R 
$,
$ n \in \{ 0, 1, 2, \dots \} $,
be the unique functions
satisfying
\begin{equation}
  e^{ - \frac{ t^2 }{ 2 } + t x
  }
  = 
  \sum_{ n = 0 }^{ \infty }
  \frac{ t^n }{ n! } \cdot 
  H_n( x )  
\end{equation}
for all $ t, x \in \R $.
The functions $ H_n $, $ n \in \{ 0, 1, 2, \dots \} $,
are typically referred as \emph{(probabilists') Hermite polynomials}
in the literature.
Note that
$ H_0( x ) = 1 $,
$ H_1( x ) = x $,
$ H_2( x ) = x^2 - 1 $,
$
  H_3( x ) = x^3 - 3 x
$,
$
  H_4( x ) = x^4 - 6 x^2 + 3
$,
$ \dots $
for all $ x \in \R $.
In addition,
if
$
  Z \colon \Omega
  \rightarrow \R
$ 
is a centered real valued Gaussian random variable
and if $ n \in \N_0 $,
then we denote by
$
  : \!
  Z^n \!\!
  : \;
  \colon
  \Omega \to \R
$
the \emph{$ n $-th Wick power} 
of $ Z $, that is, 
the random variable given by
\begin{equation}
\label{eq:wickGRV}
  : \!
  Z^n \!\!
  :
  \;
  = 
  \begin{cases}
    \left(
      \E\big[
         Z^2  
      \big]
    \right)^{ 
      \frac{ n }{ 2 }
    }
    H_n\!\left( 
      \frac{ 
        Z 
      }{
        \sqrt{ 
         \E[  Z^2 ] 
        }
      }
    \right) 
  &
    \colon \E[ Z^2 ] > 0
  \\
    Z^n
  &
    \colon \E[ Z^2 ] = 0
  \end{cases}
\end{equation}
(see, e.g., page 9 
in Simon~\cite{simon_functional_1979}).
Moreover, 
we denote by
$
  : \!
  (
    V^{ \varphi }
  )^n
  \!\! : \;
  \colon
  \R \times \Omega
  \rightarrow
  \mathcal{C}_{ \mathcal{P} }( [ 0, 2 \pi ]^d, \R )
$,
$ \varphi \in \Phi_0 $,
$ n \in \N_0 $,
the stochastic processes
with continuous sample paths
given by
\begin{equation}
  \big(
    : \!
      \left( 
        V_t^{ \varphi } 
      \right)^n \!
    \! :
  \big)( x )
  = \;
    : \!
      \left( 
        V_t^{ \varphi }(x) 
      \right)^n     
    \! :
\end{equation}
for all $ t \in \R $,
$ x \in [0, 2 \pi]^d $,
$ \varphi \in \Phi_0 $
and all
$ n \in \N_0 $.
Note that
$
  : \!
    \left( 
      V_t^{ \varphi } 
    \right)^0 \!
  \! :
  \; =
  1
$,
$
  : \!
    \left( 
      V_t^{ \varphi } 
    \right)^1 \!
  \! : \;
  =
  V_t^{ \varphi }
$,
$
  : \! 
    \left( 
      V_t^{ \varphi } 
    \right)^2 \!
  \! : \; 
  =
  ( 
    V_t^{ \varphi } 
  )^2 
  -
  \E\big[
  ( 
    V_t^{ \varphi } 
  )^2 
  \big]
  =
  ( 
    V_t^{ \varphi } 
  )^2 
  -
  \sum_{ v \in \Z^d }
  \frac{ 
    ( \varphi_v )^2
  }{
    \lambda_v
  }
$,
$
  : \! 
    \left( 
      V_t^{ \varphi } 
    \right)^3 \!
  \! : \;
  =
  ( 
    V_t^{ \varphi } 
  )^3
  -
  3
  V_t^{ \varphi } 
  \E\big[
  ( 
    V_t^{ \varphi } 
  )^2 
  \big]
  =
  ( 
    V_t^{ \varphi } 
  )^3
  -
  3 V_t^{ \varphi } 
  \big(
  \sum_{ v \in \Z^d }
  \frac{ 
    ( \varphi_v )^2
  }{
    \lambda_v
  }
  \big)
$,
$ \dots $
for all
$ t \in \R $
and all
$
  \varphi \in \Phi_0
$.
In addition, 
we denote by
$
  \circ
  \,
  \big(
    V^{ \varphi }_{ t_0, ( \cdot ) }
  \big)^n
  \circ
  \colon
  [ t_0, \infty ) \times \Omega
  \rightarrow
  C_{ \mathcal{P} }( [ 0, 2 \pi ]^d, \R )
$,
$ \varphi \in \Phi_0 $,
$ n \in \N_0 $,
$ t_0 \in \R $,
the stochastic processes
with continuous sample paths
defined by
\begin{equation}
\label{eq:def_AWP}
  \circ \,
    ( V^{ \varphi }_{ t_0, t } )^n
  \circ
:=
  \int_{ t_0 }^t
  : \!
  ( V^{ \varphi }_s )^n 
  \! :
  ds
\end{equation}
for all 
$ \varphi \in \Phi_0 $,
$ n \in \N_0 $
and all
$ t_0, t \in \R $
with $ t_0 \leq t $
and we denote by
$
  \bullet
  \,
  \big(
    V^{ \varphi }
  \big)^n
  \bullet
  \colon
  \R
  \times \Omega
  \rightarrow
  C_{ \mathcal{P} 
  }( [ 0, 2 \pi ]^d, \R )
$,
$ \varphi \in \Phi_0 $,
$ n \in \N_0 $,
the stochastic processes
with continuous sample 
paths defined by
\begin{equation}
  \bullet \,
    ( V^{ \varphi }_t )^n
  \bullet
:=
  \int_{ - \infty }^t
  e^{ 
    \mathcal{A}_{ 
      d 
    } 
    ( t - s ) 
  }
  \big[
    : \! 
      ( V^{ \varphi }_s )^n 
    \! :
  \big]
  \,
  ds
\end{equation}
for all 
$ \varphi \in \Phi_0 $,
$ n \in \N_0 $
and all
$ t \in \R $.
The readers who are familiar with quantum field theory 
should distinguish the concept of the "time-ordered product" in quantum field theory
(see, for instance, Peskin \& Schroeder~\cite{PeskinSchroeder1995}) 
from the averaged and the
convolutional Wick power defined above.
Finally, note that
$
  \left( 
    V^{ \varphi }_t( x ) 
  \right)^n	
$,
$
  : \!\!
  \left( 
    V^{ \varphi }_t( x ) 
  \right)^n
  \!\!\! : \,
$,
$
  \circ
  \left( 
    V^{ \varphi }_{ t_0, t }( x ) 
  \right)^n
  \circ
$,
$
  \bullet
  \left( 
    V^{ \varphi }_t( x ) 
  \right)^n
  \bullet
  \in 
  \mathcal{W}_n
$
for all 
$ n \in \N $,
$ x \in [0, 2 \pi]^d $
and all 
$ t_0, t \in \R $
with $ t_0 \leq t $.


\subsection{Hypercontractivity estimates}

The following lemma allows us to 
calculate regularities of 
suitable stochastic processes 
by computing their correlations
in Fourier space.
It is quite similar to
Proposition~A.2 in
Hairer~\cite{Hairer2012}.

\begin{lem}
\label{lem:hypercontractivity}
Assume the setting of 
Subsection~\ref{sec:setting},
let $ n \in \N $
and let 
$ a, b \in \R $
with $ a < b $.
Then there exist
real numbers
$ 
  \chi^{
    n, d, p, a, b
  }_{
    \alpha, \hat{ \alpha },
    \beta, \hat{ \beta }
  }
  \in [0, \infty)
$,
$ 
  p, \alpha, \hat{ \alpha },
  \beta, \hat{ \beta } \in \R 
$,
such that 
\begin{align}
\label{eq:hypercontractivity_main}
&
  \left\|
    X
  \right\|_{
    L^p(
      \Omega ;
      C^{ \alpha 
      }(
        [ a, b ], 
        \mathcal{C}^{ 
          2 \beta
        }_{ \mathcal{P} }(
          [ 0, 2 \pi ]^d, \mathbb{R}
        )
      )
    )
  }
\\ &
  \leq
  \chi^{
    n, d, p, a, b
  }_{
    \alpha, \hat{ \alpha },
    \beta, \hat{ \beta }
  }
  \left[
  \sup_{ 
    \substack{
      t_1, t_2 
    \\
      \in [ a, b ] ,
    \\
      t_1 \neq t_2
    }
  }
  \sum_{ 
    \substack{
      v_1, 
    \\
      v_2 
    \\
      \in \Z^d 
    }
  }
  \left[
  \tfrac{
  \left|
  \E\!\left[
    \overline{
      \left<
        g_{ v_1 },
        X_{ t_1 }
      \right>_{ \! H }
    }
    \left<
      g_{ v_2 },
      X_{ t_1 }
    \right>_{ \! H }
  \right]
  \right|
  }{
    \left(
      \lambda_{ v_1 }
      \lambda_{ v_2 }
    \right)^{ - \hat{ \beta } }
  }
  +
  \tfrac{
  \left|
  \E\!\left[
    \overline{
      \left< g_{ v_1 } ,
        X_{ t_1 } - X_{ t_2 }
      \right>_{ \! H }
    }
    \left< g_{ v_2 } ,
      X_{ t_1 } - X_{ t_2 } 
    \right>_{ \! H }
  \right]
  \right|
  }{
    \left(
      \lambda_{ v_1 }
      \lambda_{ v_2 }
    \right)^{ - \hat{ \beta } }
    \left| 
      t_1 - t_2 
    \right|^{
      2 \hat{ \alpha }
    }
  }
  \right]
  \right]^{ \! \frac{ 1 }{ 2 } }
\nonumber
\end{align}
for all
$
  p \in ( 0, \infty )
$,
$
  \hat{ \alpha } \in ( \alpha , 1 ) 
$,
$
  \alpha \in ( 0, 1 )
$,
$
  \hat{ \beta } 
  \in ( \beta, \infty )
$,
$
  \beta \in \R
$
and all stochastic processes
$ 
  X \colon 
  [a, b] \times \Omega \to
  \cap_{ r \in \R } 
  \mathcal{C}^r_{ \mathcal{ P } }( 
    [0, 2 \pi]^d, \R 
  )
$
with continuous sample paths
which satisfy 
for every $ t \in [a, b] $
and every $ x \in [0, 2 \pi]^d $
that
$
  X_t( x )
  \in
  \mathcal{W}_n
$.
\end{lem}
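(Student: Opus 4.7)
\emph{Strategy.} The plan is to combine Nelson's hypercontractivity for the Wiener chaos with a Kolmogorov--\v{C}entsov continuity argument applied to the rescaled random field $Y \colon [a,b] \times [0,2\pi]^d \times \Omega \to \R$ defined by $Y_t(x) := \bigl( ( - \mathcal{A}_d )^{ \beta } X_t \bigr)(x)$. Since $( - \mathcal{A}_d )^{ \beta }$ acts diagonally on the Fourier basis $\{ g_v \}_{ v \in \Z^d }$ (with eigenvalues $\lambda_v^\beta$), it preserves the Wiener chaos structure, so $Y_t(x) \in \mathcal{W}_n$ for every $(t,x) \in [a,b] \times [0,2\pi]^d$. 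By Nelson's hypercontractivity theorem (see, e.g., \cite{simon_functional_1979}), for every $Z \in \mathcal{W}_n$ and every $p \geq 2$ one has $\| Z \|_{ L^p(\Omega) } \leq ( p - 1 )^{ n / 2 } \| Z \|_{ L^2(\Omega) }$, so all $L^p$-moment bounds on $Y$ reduce to second-moment bounds, which are expressible entirely through the Fourier covariances appearing on the right-hand side of \eqref{eq:hypercontractivity_main}.

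\emph{Second moment of the space--time increments.} Using the Fourier expansion $Y_t(x) = ( 2 \pi )^{ - d } \sum_{ v \in \Z^d } \lambda_v^{ \beta } \langle g_v, X_t \rangle_H \, g_v(x)$ and the bilinearity of the covariance, one writes
\begin{equation*}
\E\!\left[ \bigl| Y_{ t_1 }(x_1) - Y_{ t_2 }(x_2) \bigr|^2 \right]
= \tfrac{ 1 }{ ( 2 \pi )^{ 2 d } }
\sum_{ v_1, v_2 \in \Z^d } \lambda_{ v_1 }^{ \beta } \lambda_{ v_2 }^{ \beta } \,
\E\!\left[ \overline{ \Xi(v_1) } \, \Xi(v_2) \right] ,
\end{equation*}
where $\Xi(v) := \langle g_v, X_{ t_1 } \rangle_H \, g_v(x_1) - \langle g_v, X_{ t_2 } \rangle_H \, g_v(x_2)$ is split bilinearly as $\langle g_v, X_{ t_1 } - X_{ t_2 } \rangle_H \, g_v(x_1) + \langle g_v, X_{ t_2 } \rangle_H\, ( g_v(x_1) - g_v(x_2) )$. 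Since $| g_v(y) | = 1$ and $| g_v(x_1) - g_v(x_2) | \leq 2 \, | v |^{ \gamma } | x_1 - x_2 |^{ \gamma } \leq 2 \, \lambda_v^{ \gamma / 2 } | x_1 - x_2 |^{ \gamma }$ for every $\gamma \in [0,1]$ (using $\lambda_v = 1 + | v |^2$), the choice $\gamma \in ( 0, \hat{ \beta } - \beta )$ converts the $\lambda_v^{ \beta }$ weights into $\lambda_v^{ \hat{ \beta } }$ weights at the cost of a factor $| x_1 - x_2 |^{ \gamma }$; combining this with the factor $| t_1 - t_2 |^{ \hat{ \alpha } }$ extracted from the time-increment summand in \eqref{eq:hypercontractivity_main} yields
\begin{equation*}
\E\!\left[ \bigl| Y_{ t_1 }(x_1) - Y_{ t_2 }(x_2) \bigr|^2 \right]^{ 1 / 2 }
\leq C_{ n, d, \alpha, \hat{ \alpha }, \beta, \hat{ \beta } }
\bigl( | t_1 - t_2 |^{ \hat{ \alpha } } + | x_1 - x_2 |^{ \gamma } \bigr) \sqrt{ S_X }\, ,
\end{equation*}
where $\sqrt{ S_X }$ denotes the bracketed square root on the right-hand side of \eqref{eq:hypercontractivity_main}.

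\emph{Kolmogorov--\v{C}entsov in $d + 1$ variables.} Hypercontractivity lifts this $L^2$ estimate to an $L^p$ estimate for every $p \geq 2$, at the cost of the prefactor $(p - 1)^{n/2}$. Applying a version of the Kolmogorov--\v{C}entsov continuity theorem (in the anisotropic form adapted to H\"older regularity separately in each variable) to the random field $Y$ on $[a,b] \times [0, 2 \pi]^d$ for $p$ sufficiently large that $\hat{ \alpha } - ( d + 1 ) / p > \alpha$ and $\gamma - ( d + 1 ) / p > 0$ (both possible since $\hat{ \alpha } > \alpha$ and $\gamma > 0$) yields a continuous modification of $Y$ whose $\alpha$-H\"older seminorm in $t$ with values in $C( [ 0, 2 \pi ]^d, \R )$ is bounded in $L^p$. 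Since $\| X_t \|_{ \mathcal{C}^{ 2 \beta }_{ \mathcal{P} }( [0, 2 \pi]^d, \R ) } = \sup_{ x \in [0, 2\pi]^d } | Y_t( x ) |$ by the definition \eqref{eq:def_Cr}, this is precisely the asserted bound, with $\chi^{ n, d, p, a, b }_{ \alpha, \hat{ \alpha }, \beta, \hat{ \beta } }$ absorbing the hypercontractivity, Kolmogorov, and Fourier-sum constants.

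\emph{Main obstacle.} The technical crux is the bookkeeping of the bilinear Fourier cross terms in the expansion of $\E\bigl[ | Y_{ t_1 }(x_1) - Y_{ t_2 }(x_2) |^2 \bigr]$, together with verifying that the strict margins $\hat{ \alpha } > \alpha$ and $\hat{ \beta } > \beta$ simultaneously provide sufficient Kolmogorov headroom in both the time and space variables for a common finite value of $p$. No extra summability hypothesis is needed beyond the finiteness of $\sqrt{ S_X }$, since the $p \to \infty$ limit is essentially free under hypercontractivity (only polynomial in $p$).
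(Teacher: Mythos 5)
Your proposal is correct in substance and shares the paper's first ingredient—hypercontractivity on $\mathcal{W}_n$ to reduce every $L^p(\Omega)$ moment of the weighted field $Y_t(x)=\big((-\mathcal{A}_d)^{\beta}X_t\big)(x)$ (resp.\ $(-\mathcal{A}_d)^{\hat\beta}X_t$ in the paper) to second moments, which are exactly the Fourier covariance sums on the right-hand side—but your second half takes a genuinely different route. The paper never touches spatial increments: it estimates $\|(-\mathcal{A}_d)^{\hat\beta}X\|_{C^{\hat\alpha}([a,b],L^p(\Omega;L^p((0,2\pi)^d;\R)))}$, exchanges the $\Omega$- and time-norms (both being integral norms, via the intermediate $W^{\tilde\alpha,p}$ seminorm in time), and then concludes by Sobolev embedding, using $\tilde\alpha-\alpha>1/p$ to upgrade $W^{\tilde\alpha,p}$ to $C^{\alpha}$ in time and the margin $\hat\beta-\beta>d/p$ to upgrade $L^p$ to the sup norm in space after peeling off $(-\mathcal{A}_d)^{\beta}$. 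You instead run an anisotropic Kolmogorov--\v{C}entsov/chaining argument on the $(d+1)$-parameter scalar field $Y$, which requires the explicit spatial modulus $|g_v(x_1)-g_v(x_2)|\le 2\,\lambda_v^{\gamma/2}|x_1-x_2|^{\gamma}$ and the split of the increment into a time part and a space part; note that to stay within the two covariance quantities of your $S_X$ you should bound the two partial sums separately via $|a+b|^2\le 2|a|^2+2|b|^2$ rather than expand the full bilinear form, since the mixed covariances $\E\big[\overline{\langle g_{v_1},X_{t_1}-X_{t_2}\rangle_H}\langle g_{v_2},X_{t_2}\rangle_H\big]$ do not appear in $S_X$. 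Two further sketch-level points are easily repaired: the $C^{\alpha}$ norm also contains the zeroth-order term $\sup_{t}\|Y_t\|_{C([0,2\pi]^d,\R)}$, which follows from the one-point bound $\E|Y_t(x)|^2\le S_X$ (first summand) combined with your increment bound; and for small $p$ both your argument and the paper's need the monotonicity $\|\cdot\|_{L^p(\Omega)}\le\|\cdot\|_{L^q(\Omega)}$ with $q$ large, admissible because the constant may depend on $p$. What your route buys is that everything is a scalar space--time field estimate, avoiding vector-valued Sobolev embeddings and the Fubini exchange; what the paper's route buys is that no spatial increment estimate or anisotropic chaining is needed, the spatial sup norm coming for free from the fractional-power margin $\hat\beta-\beta$.
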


\begin{proof}[Proof
of Lemma~\ref{lem:hypercontractivity}]
Hypercontractivity
(see, e.g., Lemma~A.1 in
Hairer~\cite{Hairer2012})
ensures that
there exist real numbers
$ 
  \kappa_{ k, p }
  \in [0,\infty) 
$,
$ k \in \N $,
$ p \in [2, \infty) $,
such that
\begin{equation}
\label{eq:est_hypercontractivity}
  \E\big[
    \left| Y \right|^p
  \big]
\leq
  \kappa_{ k, p }
  \left(
    \E\!\left[
      \left| Y \right|^2
    \right]
  \right)^{
    \! \frac{ p }{ 2 }
  }
\end{equation}
for all 
$ p \in [2,\infty) $,
$ Y \in \mathcal{W}_k $
and all $ k \in \N $.
Note that
\begin{equation}
\begin{split}
&
  \big\|
    ( - A )^{ \hat{ \beta } } X
  \big\|_{
    C^{ \hat{ \alpha } }( 
      [a, b],
      L^p( \Omega;
        L^p( 
          ( 0, 2 \pi )^d; \R
         )
      )
    )
  }
  =
  \sup_{ t \in [a, b] }
  \big\|
    ( - A )^{ \hat{ \beta } } X_t
  \big\|_{
    L^p( 
      \Omega;
      L^p( 
        ( 0, 2 \pi )^d; \R 
       )
    )
  }
\\ &
  +
  \sup_{ 
    \substack{
      t_1, t_2 \in [ a, b ]
    \\
      t_1 \neq t_2
    }
  }
  \frac{
    \big\|
      ( - A )^{ \hat{ \beta } } 
      (
        X_{ t_1 } -
        X_{ t_2 } 
      )
    \big\|_{
      L^p\left( 
        \Omega;
        L^p( 
          ( 0, 2 \pi )^d; \R 
        )
      \right)
    }
  }{
    \left| t_1 - t_2 \right|^{
      \hat{ \alpha }
    }
  }
\\ & =
  \sup_{ t \in [a, b] }
  \left\{ 
    \int_{ 
      ( 0, 2 \pi )^d
    }
    \E\!\left[
      \big|
        (
          ( - A )^{ \hat{ \beta } } X_t 
        )(x)
      \big|^p
    \right] 
    dx
  \right\}^{ 
    \frac{ 1 }{ p } 
  }
\\ &
  +
  \sup_{ 
    \substack{
      t_1, t_2 \in [ a, b ]
    \\
      t_1 \neq t_2
    }
  }
  \frac{
    \left\{
      \int_{ 
        ( 0, 2 \pi )^d
      }
      \E\!\left[
        \big|
          (
            ( - A )^{ \hat{ \beta } } 
            (
              X_{ t_1 } -
              X_{ t_2 } 
            )
          )(x)
        \big|^p
      \right]
      dx
    \right\}^{ \frac{ 1 }{ p } }
  }{
    \left| t_1 - t_2 \right|^{
      \hat{ \alpha }
    }
  }
\\ & =
  \sup_{ t \in [a, b] }
  \left\{ 
    \int_{ 
      ( 0, 2 \pi )^d
    }
    \E\!\left[
      \bigg|
        \sum_{ v \in \Z^d }
        \left( \lambda_ v \right)^{
          \hat{ \beta }
        }
        \left< g_v, X_t \right>_{ \! H }
        g_v( x )
      \bigg|^p
    \right] 
    dx
  \right\}^{ 
    \!
    \frac{ 1 }{ p } 
  }
\\ &
  +
  \sup_{ 
    \substack{
      t_1, t_2 \in [ a, b ]
    \\
      t_1 \neq t_2
    }
  }
  \frac{
    \left\{
      \int_{ 
        ( 0, 2 \pi )^d
      }
      \E\!\left[
        \big|
          \sum_{ v \in \Z^d }
          \left( \lambda_ v \right)^{
            \hat{ \beta }
          }
          \left< 
            g_v, X_{ t_1 } - X_{ t_2 }
          \right>_{ \! H }
          g_v( x )
        \big|^p
      \right]
      dx
    \right\}^{ \frac{ 1 }{ p } }
  }{
    \left| t_1 - t_2 \right|^{
      \hat{ \alpha }
    }
  }
\end{split}
\end{equation}
for all
$
  p \in ( 0, \infty )
$,
$
  \hat{ \alpha } \in ( 0, 1 )
$,
$
  \hat{ \beta } \in \R
$
and all stochastic processes
$ 
  X \colon 
  [a, b] \times \Omega \to
  \cap_{ r \in \R } 
  \mathcal{C}^r_{ \mathcal{ P } }( 
    [0, 2 \pi]^d, \R 
  )
$.
Estimate~\eqref{eq:est_hypercontractivity}
hence implies that
\begin{equation}
\begin{split}
&
  \big\|
    ( - A )^{ \hat{ \beta } } X
  \big\|_{
    C^{ \hat{ \alpha } }( 
      [a, b],
      L^p( \Omega;
        L^p( 
          ( 0, 2 \pi )^d; \R
         )
      )
    )
  }
\\ & \leq 
  \frac{
    \kappa_{ n, p }
  }{
    \left( 2 \pi \right)^d
  }
  \Bigg[
    \sup_{ t \in [a, b] }
      \left\{ 
        \int_{  
          ( 0, 2 \pi )^d
        }
        \left(
          \E\!\left[
            \left|
              \sum\nolimits_{ v \in \Z^d }
              \left( \lambda_ v \right)^{
                \hat{ \beta }
              }
              \left< 
                g_v, X_t
              \right>_{ \! H }
              g_v( x )
            \right|^2
          \right] 
        \right)^{ 
          \! \frac{ p }{ 2 } 
        }
      dx
    \right\}^{ 
      \! \frac{ 1 }{ p } 
    }
\\ &  
    +
    \sup_{ 
      \substack{
        t_1, t_2 \in [ a, b ]
      \\
        t_1 \neq t_2
      }
    }
    \tfrac{
      \left\{
        \int_{ 
          ( 0, 2 \pi )^d
        }
        \left(
          \E\left[
            \left|
              \sum_{ v \in \Z^d }
              \left( \lambda_ v \right)^{
                \hat{ \beta }
              }
              \left< 
                g_v, X_{ t_1 } - X_{ t_2 }
              \right>_{ \! H }
              g_v( x )
            \right|^2
          \right]
        \right)^{ 
          \! \frac{ p }{ 2 }
        }
      dx
    \right\}^{ 
      \! \frac{ 1 }{ p } 
    }
  }{
    \left| t_1 - t_2 \right|^{
      \hat{ \alpha }
    }
  }
  \Bigg]
\\ & = 
  \frac{
    \kappa_{ n, p }
  }{
    \left( 2 \pi \right)^d
  }
  \Bigg[
    \sup_{ t \in [a, b] }
      \left\{ 
        \int_{  
          ( 0, 2 \pi )^d
        }
        \left(
          \sum_{ 
            v_1, v_2 \in \Z^d 
          }
          \tfrac{
            \E\!\left[
              \overline{
                \left< 
                  g_{ v_1 }, X_t
                \right>_{ \! H }
              }
                \left< 
                  g_{ v_2 }, X_t
                \right>_{ \! H }
            \right] 
            g_{ (v_2 - v_1) }( x )
          }{ 
              \left( 
                \lambda_{ v_1 } 
                \lambda_{ v_2 } 
              \right)^{
                - \hat{ \beta }
              }
          }
        \right)^{ 
          \!\! \frac{ p }{ 2 } 
        }
      dx
    \right\}^{ 
      \! \frac{ 1 }{ p } 
    }
\\ &  
    +
    \sup_{ 
      \substack{
        t_1, t_2 \in [ a, b ]
      \\
        t_1 \neq t_2
      }
    }
      \left\{
        \int_{ 
          ( 0, 2 \pi )^d
        }
        \left(
          \sum_{ v_1, v_2 \in \Z^d }
          \tfrac{
            \E\left[
              \overline{
                \left< 
                  g_{ v_1 }, X_{ t_1 } - X_{ t_2 }
                \right>_{ \! H }
              }
              \left< 
                g_{ v_2 }, X_{ t_1 } - X_{ t_2 }
              \right>_{ \! H }
            \right]
            g_{ ( v_2 - v_1 ) }( x )
          }{
            \left( 
              \lambda_{ v_1 } 
              \lambda_{ v_2 }
            \right)^{
              - \hat{ \beta }
            }
            \left| t_1 - t_2 \right|^{
              2 \hat{ \alpha }
            }
          }
        \right)^{ 
          \!\! \frac{ p }{ 2 }
        }
      dx
    \right\}^{ 
      \! \frac{ 1 }{ p } 
    }
  \Bigg]
\end{split}
\end{equation}
for all
$
  p \in ( 0, \infty )
$,
$
  \hat{ \alpha }
  \in ( 0, 1 )
$,
$
  \hat{ \beta } 
  \in \R
$
and all stochastic processes
$ 
  X \colon 
  [a, b] \times \Omega \to
  \cap_{ r \in \R } 
  \mathcal{C}^r_{ \mathcal{ P } }( 
    [0, 2 \pi]^d, \R 
  )
$
which satisfy 
for every $ t \in [a, b] $
and every $ x \in [0, 2 \pi]^d $
that
$
  X_t( x )
  \in
  \mathcal{W}_n
$.
This implies
\begin{equation}
\begin{split}
&
  \big\|
    ( - A )^{ \hat{ \beta } } X
  \big\|_{
    C^{ \hat{ \alpha } }( 
      [a, b],
      L^p( \Omega;
        L^p( 
          ( 0, 2 \pi )^d; \R
         )
      )
    )
  }
\\ & \leq 
  \frac{
    \kappa_{ n, p }
  }{
    \left( 2 \pi \right)^d
  }
  \Bigg[
    \sup_{ t \in [a, b] }
      \left\{ 
        \int_{  
          ( 0, 2 \pi )^d
        }
        \left(
          \sum\nolimits_{ 
            v_1, v_2 \in \Z^d 
          }
          \tfrac{
            \left|
            \E\!\left[
              \overline{
                \left< 
                  g_{ v_1 }, X_t
                \right>_{ \! H }
              }
                \left< 
                  g_{ v_2 }, X_t
                \right>_{ \! H }
            \right] 
            \right|
          }{ 
              \left( 
                \lambda_{ v_1 } 
                \lambda_{ v_2 } 
              \right)^{
                - \hat{ \beta }
              }
          }
        \right)^{ 
          \! \frac{ p }{ 2 } 
        }
      dx
    \right\}^{ 
      \! \frac{ 1 }{ p } 
    }
\\ &  
    +
    \sup_{ 
      \substack{
        t_1, t_2 \in [ a, b ]
      \\
        t_1 \neq t_2
      }
    }
      \left\{
        \int_{ 
          ( 0, 2 \pi )^d
        }
        \left(
          \sum\nolimits_{ v_1, v_2 \in \Z^d }
          \tfrac{
            \left|
            \E\left[
              \overline{
                \left< 
                  g_{ v_1 }, X_{ t_1 } - X_{ t_2 }
                \right>_{ \! H }
              }
              \left< 
                g_{ v_2 }, X_{ t_1 } - X_{ t_2 }
              \right>_{ \! H }
            \right]
            \right|
          }{
            \left( 
              \lambda_{ v_1 } 
              \lambda_{ v_2 }
            \right)^{
              - \hat{ \beta }
            }
            \left| t_1 - t_2 \right|^{
              2 \hat{ \alpha }
            }
          }
        \right)^{ 
          \! \frac{ p }{ 2 }
        }
      dx
    \right\}^{ 
      \! \frac{ 1 }{ p } 
    }
  \Bigg]
\\ & =
  \frac{
    \kappa_{ n, p }
  }{
    \left( 2 \pi \right)^d
  }
  \Bigg[
    \sup_{ t \in [a, b] }
      \left\{ 
          \sum\nolimits_{ 
            v_1, v_2 \in \Z^d 
          }
          \tfrac{
            \left|
            \E\!\left[
              \overline{
                \left< 
                  g_{ v_1 }, X_t
                \right>_{ \! H }
              }
                \left< 
                  g_{ v_2 }, X_t
                \right>_{ \! H }
            \right] 
            \right|
          }{ 
              \left( 
                \lambda_{ v_1 } 
                \lambda_{ v_2 } 
              \right)^{
                - \hat{ \beta }
              }
          }
    \right\}^{ 
      \! \frac{ 1 }{ 2 } 
    }
\\ &  
    +
    \sup_{ 
      \substack{
        t_1, t_2 \in [ a, b ]
      \\
        t_1 \neq t_2
      }
    }
      \left\{
          \sum\nolimits_{ v_1, v_2 \in \Z^d }
          \tfrac{
            \left|
            \E\left[
              \overline{
                \left< 
                  g_{ v_1 }, X_{ t_1 } - X_{ t_2 }
                \right>_{ \! H }
              }
              \left< 
                g_{ v_2 }, X_{ t_1 } - X_{ t_2 }
              \right>_{ \! H }
            \right]
            \right|
          }{
            \left( 
              \lambda_{ v_1 } 
              \lambda_{ v_2 }
            \right)^{
              - \hat{ \beta }
            }
            \left| t_1 - t_2 \right|^{
              2 \hat{ \alpha }
            }
          }
    \right\}^{ 
      \! \frac{ 1 }{ 2 } 
    }
  \Bigg]
\end{split}
\end{equation}
and hence
\begin{equation}
\begin{split}
&
  \big\|
    ( - A )^{ \hat{ \beta } } X
  \big\|_{
    C^{ \hat{ \alpha } }( 
      [a, b],
      L^p( \Omega;
        L^p( 
          ( 0, 2 \pi )^d; \R
         )
      )
    )
  }
\\ & \leq 
  \kappa_{ n, p }
  \left[
    \sup_{ 
      \substack{
        t_1, t_2 
      \\
        \in [ a, b ]
      \\
        t_1 \neq t_2
      }
    }
      \sum_{ 
        \substack{
          v_1, v_2 
        \\
          \in \Z^d 
        }
      }
    \Bigg\{
          \tfrac{
            \left|
            \E\left[
              \overline{
                \left< 
                  g_{ v_1 }, X_{ t_1 }
                \right>_{ \! H }
              }
                \left< 
                  g_{ v_2 }, X_{ t_1 }
                \right>_{ \! H }
            \right] 
            \right|
          }{ 
              \left( 
                \lambda_{ v_1 } 
                \lambda_{ v_2 } 
              \right)^{
                - \hat{ \beta }
              }
          }
          +
          \tfrac{
            \left|
            \E\left[
              \overline{
                \left< 
                  g_{ v_1 }, X_{ t_1 } - X_{ t_2 }
                \right>_{ \! H }
              }
              \left< 
                g_{ v_2 }, X_{ t_1 } - X_{ t_2 }
              \right>_{ \! H }
            \right]
            \right|
          }{
            \left( 
              \lambda_{ v_1 } 
              \lambda_{ v_2 }
            \right)^{
              - \hat{ \beta }
            }
            \left| t_1 - t_2 \right|^{
              2 \hat{ \alpha }
            }
          }
    \Bigg\}
  \right]^{
    \! \frac{ 1 }{ 2 }
  }
\end{split}
\label{eq:hypercontractivity}
\end{equation}
for all
$
  p \in ( 0, \infty )
$,
$
  \hat{ \alpha }
  \in ( 0, 1 )
$,
$
  \hat{ \beta } 
  \in \R
$
and all stochastic processes
$ 
  X \colon 
  [a, b] \times \Omega \to
  \cap_{ r \in \R } 
  \mathcal{C}^r_{ \mathcal{ P } }( 
    [0, 2 \pi]^d, \R 
  )
$
which satisfy 
for every $ t \in [a, b] $
and every $ x \in [0, 2 \pi]^d $
that
$
  X_t( x )
  \in
  \mathcal{W}_n
$.
Moreover, the Sobolev
embedding theorem 
ensures that there
exist real numbers
$ 
  \rho^{ 
    p, \alpha, \tilde{ \alpha } 
  }_{
    \beta, \hat{ \beta }
  } 
  \in [ 0, \infty )
$,
$ 
  p, \alpha, \tilde{ \alpha },
  \beta, \hat{ \beta } \in \R
$,
and
$
  \bar{ \rho }^{
    p , \tilde{ \alpha }, \hat{ \alpha }
  }
  \in [ 0, \infty )
$,
$
  p, \tilde{ \alpha },
  \hat{ \alpha } \in \R
$,
such that
\begin{equation}
\label{eq:sobolev}
\begin{split}
&
  \left\|
    X
  \right\|_{
    L^p(
      \Omega ;
      C^{ \alpha }(
        [ a, b ], 
        \mathcal{C}^{ 2 \beta  
        }_{ \mathcal{P} }(
          [ 0, 2 \pi ]^d
          , \R
        )
      )
    )
  }
=
  \left\|
    ( - A )^{ \beta }
    X
  \right\|_{
    L^p(
      \Omega ;
      C^{ \alpha }(
        [ a, b ], 
        C_{ \mathcal{P} }(
          [ 0, 2 \pi ]^d
          , \R
        )
      )
    )
  }
\\ & \leq
  \rho^{
    p,
    \alpha, \tilde{ \alpha }
  }_{
    \beta, \hat{ \beta }
  }
  \,
  \big\|
    ( - A )^{ \hat{ \beta } }
    X
  \big\|_{
    L^p(
      \Omega ;
      W^{ \tilde{ \alpha }, p }(
        [ a, b ], 
        L^p(
          ( 0, 2 \pi )^d
          ; \R
        )
      )
    )
  }
\\ & =
  \rho^{
    p,
    \alpha, \tilde{ \alpha }
  }_{
    \beta, \hat{ \beta }
  }
  \,
  \big\|
    ( - A )^{ \hat{ \beta } }
    X
  \big\|_{
    W^{ \tilde{ \alpha }, p }(
      [ a, b ], 
      L^p(
        \Omega ;
        L^p(
          ( 0, 2 \pi )^d
          ; \R
        )
      )
    )
  }
\\ & \leq
  \rho^{
    p,
    \alpha, \tilde{ \alpha }
  }_{
    \beta, \hat{ \beta }
  }
  \,
  \bar{ \rho }^{
    p, \tilde{ \alpha },
    \hat{ \alpha }
  }
  \,
  \big\|
    ( - A )^{ \hat{ \beta } }
    X
  \big\|_{
    C^{ \hat{ \alpha } }(
      [ a, b ], 
      L^p(
        \Omega ;
        L^p(
          ( 0, 2 \pi )^d
          ; \R
        )
      )
    )
  }
\end{split}
\end{equation}
for all stochastic processes
$ 
  X \colon 
  [a, b] \times \Omega \to
  \cap_{ r \in \R } 
  \mathcal{C}^r_{ \mathcal{ P } }( 
    [0, 2 \pi]^d, \R 
  )
$
with continuous 
sample paths
and all
$
  p \in ( 0, \infty )
$,
$
  \alpha, 
  \tilde{ \alpha },
  \hat{ \alpha }
  \in ( 0, 1 )
$,
$
  \beta ,
  \hat{ \beta } 
  \in \R
$
with
$
  \hat{ \alpha }
  >
  \tilde{ \alpha }
$,
$
  \tilde{ \alpha }
  - 
  \alpha
  > \frac{ 1 }{ p }
$
and
$
  \hat{ \beta } - \beta 
  >
  \frac{ d }{ p }
$.
Combining 
\eqref{eq:hypercontractivity}
and \eqref{eq:sobolev}
implies
\eqref{eq:hypercontractivity_main}
and this completes the proof
of Lemma~\ref{lem:hypercontractivity}.
\end{proof}

%
%
%
%
%
%

\subsection{Estimates for discrete
convolutions}

We first state three
well known lemmas that
we will use below.

\begin{lem}[Finiteness of 
infinite sums]
\label{lem:sum_finiteness}
Let $ d \in \N $,
$ 
  \alpha \in \R
$
and let
$ \lambda_x \in [1, \infty) $,
$ x \in \R^d $, 
be real numbers
with 
$
  \lambda_x = 1 + 
  \left( x_1 \right)^2 +
  \ldots + \left( x_d \right)^2
$
for all 
$ x = ( x_1, \dots, x_d ) \in \R^d $.
Then
$
  \sum_{
    \substack{ 
      k \in \Z^d
    }
  }
  \frac{  
    1
  }{
    \left( 
      \lambda_k 
    \right)^{ \alpha }
  }
  < \infty
$
if and only if 
$ 
  \alpha > \frac{ d }{ 2 }
$.
\end{lem}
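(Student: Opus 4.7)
The plan is to compare the infinite sum to the corresponding integral $\int_{\R^d}(1+|x|^2)^{-\alpha}\,dx$ and then reduce that integral to a one-dimensional radial integral whose convergence is elementary.

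First, I would establish the two-sided comparison between the sum and the integral. For each $k\in\Z^d$, on the unit cube $k+[0,1]^d$ the function $x\mapsto (1+|x|^2)^{-\alpha}$ is bounded between constant multiples of $(\lambda_k)^{-\alpha}$ (uniformly in $k$), with the constants depending only on $d$ and $\alpha$. More precisely, for $x\in k+[-1/2,1/2]^d$ we have $|x|^2$ and $|k|^2$ differing by at most $O(|k|+1)$, so that $(1+|x|^2)/(1+|k|^2)$ is bounded above and below by positive constants independent of $k$. Summing these pointwise bounds and integrating over the union of cubes gives the sandwich
\begin{equation*}
  c_1(d,\alpha)\int_{\R^d}\frac{dx}{(1+|x|^2)^{\alpha}}
  \;\leq\;
  \sum_{k\in\Z^d}\frac{1}{(\lambda_k)^{\alpha}}
  \;\leq\;
  c_2(d,\alpha)\left(1+\int_{\R^d}\frac{dx}{(1+|x|^2)^{\alpha}}\right),
\end{equation*}
so the sum converges if and only if the integral does.

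Next, I would pass to spherical coordinates on $\R^d$. Writing $\omega_{d-1}$ for the surface area of the unit sphere in $\R^d$, one obtains
\begin{equation*}
  \int_{\R^d}\frac{dx}{(1+|x|^2)^{\alpha}}
  \;=\;
  \omega_{d-1}\int_0^{\infty}\frac{r^{d-1}}{(1+r^2)^{\alpha}}\,dr.
\end{equation*}
The integrand is continuous on $[0,\infty)$, so only the behavior at infinity matters. As $r\to\infty$ the integrand is asymptotic to $r^{d-1-2\alpha}$, and the integral $\int_1^\infty r^{d-1-2\alpha}\,dr$ is finite if and only if $d-1-2\alpha<-1$, i.e.\ if and only if $\alpha>d/2$. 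Combining this with the sum-integral comparison completes both directions of the equivalence.

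The only mildly delicate point is the uniform cube comparison between $(1+|x|^2)^{-\alpha}$ and $(\lambda_k)^{-\alpha}$; one must handle the cases $\alpha\geq 0$ and $\alpha<0$ separately when choosing which vertex of the cube realizes the max or min, but the resulting constants are harmless. Everything else is routine, so I do not anticipate a genuine obstacle; the statement is essentially a textbook integral test for the Epstein-type zeta sum.
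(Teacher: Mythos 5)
Your proof is correct and follows exactly the route the paper indicates for this lemma (which it leaves unproved): comparing the sum with the Lebesgue integral of $(1+|x|^2)^{-\alpha}$ over $\R^d$ via a uniform cube-by-cube sandwich and then evaluating the convergence of the radial integral in polar coordinates, yielding the criterion $\alpha>\frac{d}{2}$. No gaps; the uniform comparison constants and the treatment of the sign of $\alpha$ are handled appropriately.
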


\begin{lem}[Growth
rate of finite sums]
\label{lem:smallvalues}
Let $ d \in \N $,
$ 
  \alpha \in [ 0, \frac{ d }{ 2 } ) 
$,
$
  \beta \in
  \R
$,
$ 
  c \in ( 0, \infty ) 
$
and let
$ \lambda_x \in [1, \infty) $,
$ x \in \R^d $, 
be real numbers
with 
$
  \lambda_x = 1 + 
  \left( x_1 \right)^2 +
  \ldots + \left( x_d \right)^2
$
for all 
$ x = ( x_1, \dots, x_d ) \in \R^d $.
Then
$
  \sup_{ v \in \Z^d }
  \left[
  \sum_{
    \substack{ 
      k \in \Z^d ,
      \| k \|_{ \R^d } \leq 
      c \| v \|_{ \R^d }
    }
  }
  \frac{  
    \left(
      \lambda_v
    \right)^{ \beta }
  }{
    \left( 
      \lambda_k 
    \right)^{ \alpha }
  }
  \right]
  < \infty
$
if and only if 
$ 
  \beta \leq \alpha - \frac{ d }{ 2 } 
$.
\end{lem}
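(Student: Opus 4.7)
The plan is to reduce the problem to a single cut-off sum and analyze its growth rate in $R$. Pulling $(\lambda_v)^\beta$ out of the inner sum, set
$$S(R) := \sum_{k \in \Z^d,\; \|k\|_{\R^d} \leq R} (\lambda_k)^{-\alpha}, \qquad R \in [0,\infty),$$
so that the quantity whose supremum over $v \in \Z^d$ we must control becomes $(\lambda_v)^\beta \cdot S(c \|v\|_{\R^d})$.

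The central step is to establish the two-sided estimate
$$c_1 \bigl(1 + R^{d-2\alpha}\bigr) \;\leq\; S(R) \;\leq\; c_2 \bigl(1 + R^{d-2\alpha}\bigr) \qquad \text{for all } R \in [0,\infty),$$
with constants $c_1, c_2 \in (0,\infty)$ depending only on $d$ and $\alpha$. I would prove this via a dyadic-shell decomposition. On the annular lattice piece $\{k \in \Z^d : 2^j \leq \|k\|_{\R^d} < 2^{j+1}\}$ the quantity $\lambda_k$ is comparable to $2^{2j}$ up to multiplicative constants, and the number of lattice points in such an annulus is of order $2^{jd}$. Summing the shell contributions $2^{jd} \cdot 2^{-2j\alpha} = 2^{j(d-2\alpha)}$ over $0 \leq j \leq \log_2 R$ and invoking the hypothesis $\alpha < d/2$ (equivalently $d - 2\alpha > 0$) yields a geometric-type sum dominated by its largest term, which is of order $R^{d-2\alpha}$. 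The matching lower bound comes from the single outermost dyadic shell alone. The regime $R \in [0,1)$ reduces to $S(R) = 1 = \lambda_0^{-\alpha}$, consistent with the $O(1)$ contribution in the bound.

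Combining this with $\lambda_v = 1 + \|v\|_{\R^d}^2$ shows that $(\lambda_v)^\beta \cdot S(c\|v\|_{\R^d})$ is comparable, up to constants depending on $d, \alpha, c$, to $\bigl(1 + \|v\|_{\R^d}^2\bigr)^\beta \bigl(1 + \|v\|_{\R^d}^{d-2\alpha}\bigr)$. The supremum of this quantity over $v \in \Z^d$ is finite if and only if the net exponent of $\|v\|_{\R^d}$ as $\|v\|_{\R^d} \to \infty$ is nonpositive, that is, $2\beta + (d - 2\alpha) \leq 0$, which is exactly the claimed condition $\beta \leq \alpha - d/2$. The necessity of this condition follows by testing against any sequence $v \in \Z^d$ with $\|v\|_{\R^d} \to \infty$, using the lower bound $c_1 (1 + R^{d-2\alpha}) \leq S(R)$.

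The only substantive work is the two-sided asymptotics of $S(R)$; the rest is arithmetic. Note that this lemma complements Lemma~\ref{lem:sum_finiteness}: the regime $\alpha < d/2$ treated here is exactly the one in which the unrestricted sum diverges, and the present statement quantifies the precise rate $R^{d-2\alpha}$ of that divergence, which must be killed by the prefactor $(\lambda_v)^\beta$ in order for the supremum to be finite.
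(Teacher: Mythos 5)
Your argument is correct, and it fills in a proof that the paper deliberately omits; but your route differs from the one the paper sketches. The paper's one-line indication is to compare the lattice sum with a Lebesgue integral and evaluate it in polar coordinates, i.e.\ to get the key asymptotics from
$\int_0^R r^{d-1} (1+r^2)^{-\alpha}\,dr \asymp R^{\,d-2\alpha}$
(using $\alpha < \tfrac{d}{2}$), whereas you obtain the same two-sided bound
$S(R) \asymp 1 + R^{\,d-2\alpha}$
by a dyadic-shell decomposition of the lattice sum itself, bounding $\lambda_k \asymp 2^{2j}$ and the lattice-point count $\asymp 2^{jd}$ on each shell and summing the resulting geometric progression, with the lower bound supplied by the outermost full shell (and by the $k=0$ term in the small-$R$ regime). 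The two approaches buy essentially the same thing at the same cost: the integral comparison is the more classical packaging and transfers directly to the companion Lemmas~\ref{lem:sum_finiteness} and \ref{lem:largevalues}, while your dyadic argument avoids the sum-versus-integral comparison step and makes the growth rate $R^{\,d-2\alpha}$ and its dependence on $d-2\alpha>0$ completely explicit; it also adapts verbatim to Lemma~\ref{lem:largevalues} by summing the now-decaying geometric tail. Your reduction of the supremum over $v$ to the sign of the exponent $2\beta + d - 2\alpha$, including the necessity direction via the lower bound on $S$ along $\|v\|_{\R^d}\to\infty$ and the borderline case of equality, is exactly what is needed; the only remaining details (full containment of the outermost shell for moderate $R$, lattice-point counts in annuli) are routine constant-chasing.
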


\begin{lem}[Growth rate
of infinite sums]
\label{lem:largevalues}
Let $ d \in \N $,
$ 
  \alpha \in ( \frac{ d }{ 2 } , \infty ) 
$,
$
  \beta \in
  \R
$,
$ 
  c \in ( 0, \infty ) 
$
and let
$ \lambda_x \in [1, \infty) $,
$ x \in \R^d $, 
be real numbers
with 
$
  \lambda_x = 1 + 
  \left( x_1 \right)^2 +
  \ldots + \left( x_d \right)^2
$
for all 
$ x = ( x_1, \dots, x_d ) \in \R^d $.
Then
$
  \sup_{ v \in \Z^d }
  \left[
  \sum_{
    \substack{ 
      k \in \Z^d ,
      \| k \|_{ \R^d } >
      c \| v \|_{ \R^d }
    }
  }
  \frac{  
    \left(
      \lambda_v
    \right)^{ \beta }
  }{
    \left( 
      \lambda_k 
    \right)^{ \alpha }
  }
  \right]
  < \infty
$
if and only if 
$ 
  \beta \leq \alpha - \frac{ d }{ 2 } 
$.
\end{lem}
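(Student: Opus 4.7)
The plan is to reduce the discrete sum to a continuous integral via a standard comparison argument, and then evaluate the integral in polar coordinates to read off the precise growth rate in $\|v\|_{\R^d}$. Note first that the finiteness of each individual sum (for fixed $v$) follows from Lemma~\ref{lem:sum_finiteness}, since $\alpha > d/2$; the whole content of the lemma is therefore the uniform control as $\|v\|_{\R^d} \to \infty$. Also, only the regime of large $\|v\|_{\R^d}$ can drive the supremum, because for $v$ in any bounded set the sum is bounded by $(\max_{\|v\| \le R} \lambda_v^\beta) \cdot \sum_{k \in \Z^d} \lambda_k^{-\alpha} < \infty$.

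For the \emph{if} direction ($\beta \le \alpha - d/2 \Rightarrow$ finite supremum), I would compare the tail sum with the integral $\int_{\{x \in \R^d : \|x\|_{\R^d} > c\|v\|_{\R^d} - \sqrt{d}\}} (1 + \|x\|_{\R^d}^2)^{-\alpha}\, dx$ using a simple cube-covering of $\Z^d$. Writing this integral in polar coordinates gives a constant multiple of $\int_{c\|v\|_{\R^d} - \sqrt{d}}^{\infty} r^{d-1} (1 + r^2)^{-\alpha}\, dr$, which for $\|v\|_{\R^d}$ large is bounded by $C (1 + \|v\|_{\R^d}^2)^{d/2 - \alpha} = C \lambda_v^{d/2 - \alpha}$, using $\alpha > d/2$ to ensure convergence at infinity. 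Multiplying by $\lambda_v^{\beta}$ produces the bound $C \lambda_v^{\beta + d/2 - \alpha}$, which stays uniformly bounded in $v$ precisely when $\beta + d/2 - \alpha \le 0$.

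For the \emph{only if} direction ($\beta > \alpha - d/2 \Rightarrow$ supremum infinite), I would establish a matching lower bound by restricting the sum to the annular shell $\mathcal{S}_v := \{k \in \Z^d : c\|v\|_{\R^d} < \|k\|_{\R^d} \le 2c\|v\|_{\R^d}\}$. On $\mathcal{S}_v$ one has $\lambda_k \le 1 + 4 c^2 \|v\|_{\R^d}^2 \le C \lambda_v$, so each summand is at least $C' \lambda_v^{-\alpha}$, and the number of lattice points in $\mathcal{S}_v$ is of order $\|v\|_{\R^d}^d$ (again by a cube-covering comparison with the Lebesgue volume of the annulus) and hence at least $c'' \lambda_v^{d/2}$ for $\|v\|_{\R^d}$ large. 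Multiplying out yields a lower bound of the form $c''' \lambda_v^{\beta + d/2 - \alpha}$, which tends to $\infty$ along any sequence $v_n \in \Z^d$ with $\|v_n\|_{\R^d} \to \infty$ whenever $\beta > \alpha - d/2$.

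The one mildly delicate step is lining up the cube-covering of $\Z^d$ with the ball/annulus $\{c\|v\|_{\R^d} < \|x\|_{\R^d} \le 2c\|v\|_{\R^d}\}$ so that the integral comparison is valid in both directions uniformly in $v$; this is where the small shift by $\sqrt{d}$ in the radii, and the restriction to $\|v\|_{\R^d}$ large enough that the annulus dominates the cube-size effects, enter. Once that bookkeeping is done, both directions follow from the elementary fact that $\int_R^\infty r^{d-1-2\alpha}\, dr$ is finite (for $R > 0$) and scales like $R^{d-2\alpha}$ precisely because $\alpha > d/2$.
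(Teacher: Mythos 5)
Your argument is correct, and it is exactly the route the paper indicates (and omits as standard): comparison of the lattice sums with Lebesgue integrals followed by polar coordinates, with the annulus lower bound giving the divergence when $\beta > \alpha - \tfrac{d}{2}$. Nothing further is needed.
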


Lemmas~\ref{lem:sum_finiteness}--\ref{lem:largevalues}
can all be proved
by estimating the sums through
suitable Lebesgue integrals 
and then by 
using polar coordinates.
The proofs of
Lemmas~\ref{lem:sum_finiteness}--\ref{lem:largevalues}
are straightforward and well known
and therefore
omitted.

\begin{lem}[Two-sided bounds
for discrete convolutions]
\label{lem:twosided_dc}
Let $ d \in \N $
and let
$ \lambda_x \in [1, \infty) $,
$ x \in \R^d $, 
be real numbers
with 
$
  \lambda_x = 1 + 
  \left( x_1 \right)^2 +
  \ldots + \left( x_d \right)^2
$
for all 
$ x = ( x_1, \dots, x_d ) \in \R^d $.
Then
\begin{equation}
\label{eq:twosided_first}
  \frac{
    4^{ - \beta }
  }{
    \left(
      \lambda_{ v } 
    \right)^{ \beta }
  }
  \left[
  \sum_{
    \substack{ 
      k \in \Z^d ,
    \\
      \| k \|_{ \R^d } \leq 
    \\
      \frac{ 1 }{ 2 } \| v \|_{ \R^d }
    }
  }
  \frac{  
    1
  }{
    \left( 
      \lambda_k 
    \right)^{ \alpha }
  }
  \right]
\leq
  \sum_{
    \substack{ 
      k \in \Z^d ,
    \\
      \| k \|_{ \R^d } \leq 
    \\
      \frac{ 1 }{ 2 } \| v \|_{ \R^d }
    }
  }
  \frac{ 1 }{
    \left( 
      \lambda_k 
    \right)^{ \alpha }
    \left(
      \lambda_{ v - k } 
    \right)^{ \beta }
  }
\leq
  \frac{
    4^{ \beta } 
  }{
    \left(
      \lambda_{ v } 
    \right)^{ \beta }
  }
  \left[
  \sum_{
    \substack{ 
      k \in \Z^d ,
    \\
      \| k \|_{ \R^d } \leq 
    \\
      \frac{ 1 }{ 2 } \| v \|_{ \R^d }
    }
  }
  \frac{ 
    1
  }{
    \left( 
      \lambda_k 
    \right)^{ \alpha }
  }
  \right] ,
\end{equation}
\begin{equation}
\label{eq:twosided_second}
  \frac{ 
    4^{ - \alpha }
  }{
    \left( 
      \lambda_v
    \right)^{ \alpha }
  }
  \left[
  \sum_{
    \substack{
      k \in \Z^d ,
    \\
      \| k \|_{ \R^d } 
      \leq 
    \\
      \frac{ 
        1 
      }{ 3 } 
      \| v \|_{ \R^d }
    }
  }
  \frac{ 
    1
  }{
    \left(
      \lambda_k 
    \right)^{ \beta }
  }
  \right]
\leq
  \sum_{
    \substack{
      k \in \Z^d ,
      \frac{ 
        1
      }{ 2 } 
        \| v \|_{ \R^d }
      <
    \\
      \| k \|_{ \R^d } 
      \leq 
      2 
        \| v \|_{ \R^d }
    }
  }
  \frac{ 1 }{
    \left( 
      \lambda_k 
    \right)^{ \alpha }
    \left(
      \lambda_{ v - k } 
    \right)^{ \beta }
  }
\leq
  \frac{
    4^{ \alpha }
  }{
    \left( 
      \lambda_v 
    \right)^{ \alpha }
  }
  \left[
  \sum_{
    \substack{
      k \in \Z^d ,
    \\
      \| k \|_{ \R^d } 
      \leq 
    \\
      3
      \| v \|_{ \R^d }
    }
  }
  \frac{ 
    1
  }{
    \left(
      \lambda_k 
    \right)^{ \beta }
  }
  \right]
  ,
\end{equation}
\begin{equation}
\label{eq:twosided_third}
  4^{ - \beta }
  \left[
  \sum_{ 
    \substack{
      k \in \Z^d ,
    \\
      \left\| k \right\|_{ \R^d }
      > 
    \\
      2
      \left\| v \right\|_{ \R^d }
    }
  }
  \frac{ 
    1
  }{
    \left( 
      \lambda_k 
    \right)^{ ( \alpha + \beta) }
  }
  \right]
\leq
  \sum_{ 
    \substack{
      k \in \Z^d ,
    \\
      \left\| k \right\|_{ \R^d }
      >
    \\
      2 
      \left\| v \right\|_{ \R^d }
    }
  }
  \frac{ 1 }{
    \left( 
      \lambda_k 
    \right)^{ \alpha }
    \left(
      \lambda_{ v - k } 
    \right)^{ \beta }
  }
\leq
  4^{ \beta }
  \left[
  \sum_{ 
    \substack{
      k \in \Z^d ,
    \\
      \left\| k \right\|_{ \R^d }
      >
    \\
      2 
      \left\| v \right\|_{ \R^d }
    }
  }
  \frac{ 
    1
  }{
    \left( 
      \lambda_k
    \right)^{ ( \alpha + \beta ) }
  }
  \right]
\end{equation}
for all $ v \in \Z^d $
and all
$ \alpha, \beta \in [0,\infty) $.
\end{lem}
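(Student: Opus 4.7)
All three inequalities follow the same pattern: in each of the three regions of summation, one of the three factors $\lambda_v$, $\lambda_k$, $\lambda_{v-k}$ is comparable to another via simple triangle-inequality bounds, so the corresponding factor can be pulled out of the sum up to multiplicative constants. The plan is to establish a pointwise two-sided bound in each region and then sum.

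First, for \eqref{eq:twosided_first}, in the region $\|k\| \leq \tfrac{1}{2}\|v\|$ I use $\|v-k\| \leq \|v\|+\|k\| \leq \tfrac{3}{2}\|v\|$ and $\|v-k\| \geq \|v\|-\|k\| \geq \tfrac{1}{2}\|v\|$, which combined with $\lambda_x = 1 + \|x\|^2$ gives $\tfrac{1}{4}\lambda_v \leq \lambda_{v-k} \leq \tfrac{9}{4}\lambda_v \leq 4\lambda_v$. Raising to the power $-\beta$ and multiplying by $\lambda_k^{-\alpha}$ yields
\begin{equation*}
  \frac{ 4^{ -\beta } }{ \lambda_v^{ \beta } \lambda_k^{\alpha} }
  \leq
  \frac{ 1 }{ \lambda_k^{\alpha} \lambda_{v-k}^{\beta} }
  \leq
  \frac{ 4^{ \beta } }{ \lambda_v^{ \beta } \lambda_k^{\alpha} },
\end{equation*}
and summing over $k$ with $\|k\| \leq \tfrac{1}{2}\|v\|$ produces \eqref{eq:twosided_first}.

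Second, for \eqref{eq:twosided_second}, in the region $\tfrac{1}{2}\|v\| < \|k\| \leq 2\|v\|$ the same type of computation shows $\tfrac{1}{4}\lambda_v \leq \lambda_k \leq 4\lambda_v$, so $\lambda_k^{-\alpha}$ is comparable to $\lambda_v^{-\alpha}$ with multiplicative constants $4^{\pm \alpha}$. After extracting this factor I change variables $j := v-k$ and note that if $\|j\| \leq \tfrac{1}{3}\|v\|$ then $\tfrac{2}{3}\|v\| \leq \|v-j\| \leq \tfrac{4}{3}\|v\|$, so $\{j \in \Z^d : \|j\| \leq \tfrac{1}{3}\|v\|\}$ lies inside the transformed summation region, giving the lower bound; conversely, $\|v-j\| \leq 2\|v\|$ forces $\|j\| \leq 3\|v\|$, giving the upper bound.

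Third, for \eqref{eq:twosided_third}, in the region $\|k\| > 2\|v\|$ I use $\|k\|-\|v\| \leq \|v-k\| \leq \|k\|+\|v\|$ together with $\|v\| < \tfrac{1}{2}\|k\|$ to obtain $\tfrac{1}{2}\|k\| < \|v-k\| < \tfrac{3}{2}\|k\|$, hence $\tfrac{1}{4}\lambda_k \leq \lambda_{v-k} \leq 4\lambda_k$. Raising to $-\beta$, multiplying by $\lambda_k^{-\alpha}$, and summing directly gives \eqref{eq:twosided_third}.

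None of the three steps presents any real obstacle; this is essentially a bookkeeping exercise with the triangle inequality and a single change of variables in the middle region. The only mild subtlety is in region (ii), where both of the natural two-sided comparisons for $\lambda_{v-k}$ need the slightly asymmetric auxiliary radii $\tfrac{1}{3}\|v\|$ and $3\|v\|$ to keep the sets nested after changing variables, which is exactly what is built into the statement of \eqref{eq:twosided_second}.
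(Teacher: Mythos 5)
Your proof is correct and follows essentially the same argument as the paper's: pointwise triangle-inequality comparisons showing that in each region the relevant factor ($\lambda_{v-k}$ comparable to $\lambda_v$ for $\|k\|\leq\tfrac12\|v\|$, $\lambda_k$ comparable to $\lambda_v$ in the annulus, $\lambda_{v-k}$ comparable to $\lambda_k$ for $\|k\|>2\|v\|$) can be pulled out at the cost of $4^{\pm\alpha}$ or $4^{\pm\beta}$, combined with the change of variables $k\mapsto v-k$ and the ball inclusions with radii $\tfrac13\|v\|$ and $3\|v\|$ for \eqref{eq:twosided_second}. The only difference is organizational (you bound the middle sums in both directions directly, whereas the paper writes each chain outside-in), which is immaterial.
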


\begin{proof}[Proof
of Lemma~\ref{lem:twosided_dc}]
First of all, observe that
\begin{equation}
\label{eq:small}
\begin{split}
&
  \sum_{
    \substack{ 
      k \in \Z^d ,
    \\
      \| k \|_{ \R^d } \leq 
    \\
      \frac{ 1 }{ 2 } \| v \|_{ \R^d }
    }
  }
  \frac{  
    3^{ - \beta }
  }{
    \left( 
      \lambda_k 
    \right)^{ \alpha }
    \left(
      \lambda_{ v } 
    \right)^{ \beta }
  }
\leq
  \sum_{
    \substack{ 
      k \in \Z^d ,
    \\
      \| k \|_{ \R^d } \leq 
    \\
      \frac{ 1 }{ 2 } \| v \|_{ \R^d }
    }
  }
  \frac{ 1 }{
    \left( 
      \lambda_k 
    \right)^{ \alpha }
    \left(
      \frac{ 9 }{ 4 }
      \lambda_{ v } 
    \right)^{ \beta }
  }
\\ & \leq
  \sum_{
    \substack{ 
      k \in \Z^d ,
    \\
      \| k \|_{ \R^d } \leq 
    \\
      \frac{ 1 }{ 2 } \| v \|_{ \R^d }
    }
  }
  \frac{ 1 }{
    \left( 
      \lambda_k 
    \right)^{ \alpha }
    \left(
      \lambda_{ v - k } 
    \right)^{ \beta }
  }
\leq
  \sum_{
    \substack{ 
      k \in \Z^d ,
    \\
      \| k \|_{ \R^d } \leq 
    \\
      \frac{ 1 }{ 2 } \| v \|_{ \R^d }
    }
  }
  \frac{ 1 }{
    \left( 
      \lambda_k 
    \right)^{ \alpha }
    \left(
      \frac{
        \lambda_{ v } 
      }{
        4
      }
    \right)^{ \beta }
  }
\leq
  \sum_{
    \substack{ 
      k \in \Z^d ,
    \\
      \| k \|_{ \R^d } \leq 
    \\
      \frac{ 1 }{ 2 } \| v \|_{ \R^d }
    }
  }
  \frac{ 
    4^{ \beta } 
  }{
    \left( 
      \lambda_k 
    \right)^{ \alpha }
    \left(
      \lambda_{ v } 
    \right)^{ \beta }
  }
\end{split}
\end{equation}
for all
$ v \in \R^d $.
This proves \eqref{eq:twosided_first}.
Furthermore, note that
\begin{equation}
\label{eq:middle}
\begin{split}
&
  \sum_{
    \substack{
      k \in \Z^d ,
    \\
      \| k \|_{ \R^d } 
      < 
    \\
      \frac{ 
        1 
      }{ 2 } 
      \| v \|_{ \R^d }
    }
  }
  \frac{ 
    4^{ - \alpha }
  }{
    \left(
      \lambda_k 
    \right)^{ \beta }
    \left( 
      \lambda_v
    \right)^{ \alpha }
  }
\leq
  \sum_{
    \substack{
      k \in \Z^d ,
      \frac{ 
        1
      }{ 2 } 
      \| v \|_{ \R^d }
      <
    \\
      \| v - k \|_{ \R^d } 
      \leq 
      \frac{ 
        3 
      }{ 2 } 
        \| v \|_{ \R^d }
    }
  }
  \frac{ 
    4^{ - \alpha }
  }{
    \left(
      \lambda_k 
    \right)^{ \beta }
    \left( 
      \lambda_v
    \right)^{ \alpha }
  }
  =
  \sum_{
    \substack{
      k \in \Z^d ,
      \frac{ 
        1
      }{ 2 } 
      \| v \|_{ \R^d }
      <
    \\
      \| k \|_{ \R^d } 
      \leq 
      2
      \| v \|_{ \R^d }
    }
  }
  \frac{ 1 }{
    \left( 
      4
      \lambda_v
    \right)^{ \alpha }
    \left(
      \lambda_{ v - k } 
    \right)^{ \beta }
  }
\\ & \leq
  \sum_{
    \substack{
      k \in \Z^d ,
      \frac{ 
        1
      }{ 2 } 
      \| v \|_{ \R^d }
      <
    \\
      \| k \|_{ \R^d } 
      \leq 
      2 
      \| v \|_{ \R^d }
    }
  }
  \frac{ 1 }{
    \left( 
      \lambda_k 
    \right)^{ \alpha }
    \left(
      \lambda_{ v - k } 
    \right)^{ \beta }
  }
\leq
  \sum_{
    \substack{
      k \in \Z^d ,
      \frac{ 
        1
      }{ 2 } 
      \| v \|_{ \R^d }
      <
    \\
      \| k \|_{ \R^d } 
      \leq 
      2 
      \| v \|_{ \R^d }
    }
  }
  \frac{ 1 }{
    \left( 
      \frac{ \lambda_v }{ 4 }
    \right)^{ \alpha }
    \left(
      \lambda_{ v - k } 
    \right)^{ \beta }
  }
\\ & =
  \sum_{
    \substack{
      k \in \Z^d ,
      \frac{ 
        1
      }{ 2 } 
      \| v \|_{ \R^d }
      <
    \\
      \| v - k \|_{ \R^d } 
      \leq 
      2 
      \| v \|_{ \R^d }
    }
  }
  \frac{ 
    4^{ \alpha } 
  }{
    \left(
      \lambda_k 
    \right)^{ \beta }
    \left( 
      \lambda_v 
    \right)^{ \alpha }
  }
\leq
  \sum_{
    \substack{
      k \in \Z^d ,
    \\
      \| k \|_{ \R^d } 
      \leq 
    \\
      3
      \| v \|_{ \R^d }
    }
  }
  \frac{ 
    4^{ \alpha } 
  }{
    \left(
      \lambda_k 
    \right)^{ \beta }
    \left( 
      \lambda_v 
    \right)^{ \alpha }
  }
\end{split}
\end{equation}
for all $ v \in \Z^d $.
This establishes
\eqref{eq:twosided_second}.
Finally, observe that
\begin{equation}
\label{eq:large}
\begin{split}
&
  \sum_{ 
    \substack{
      k \in \Z^d ,
    \\
      \left\| k \right\|_{ \R^d }
      > 
    \\
      2 
      \left\| v \right\|_{ \R^d }
    }
  }
  \frac{ 3^{ - \beta } }{
    \left( 
      \lambda_k 
    \right)^{ ( \alpha + \beta) }
  }
\leq
  \sum_{ 
    \substack{
      k \in \Z^d ,
    \\
      \left\| k \right\|_{ \R^d }
      > 
    \\
      2 
      \left\| v \right\|_{ \R^d }
    }
  }
  \frac{ 1 }{
    \left( 
      \lambda_k 
    \right)^{ \alpha }
    \left[
      1 +
      \big[
        \left\| k \right\|_{ \R^d }
        +
        \left\| v \right\|_{ \R^d }
      \big]^2
    \right]^{ \beta }
  }
\\ & \leq
  \sum_{ 
    \substack{
      k \in \Z^d ,
    \\
      \left\| k \right\|_{ \R^d }
      >
    \\
      2 
      \left\| v \right\|_{ \R^d }
    }
  }
  \frac{ 1 }{
    \left( 
      \lambda_k 
    \right)^{ \alpha }
    \left(
      \lambda_{ v - k } 
    \right)^{ \beta }
  }
\leq
  \sum_{ 
    \substack{
      k \in \Z^d ,
    \\
      \left\| k \right\|_{ \R^d }
      >
    \\
      2 
      \left\| v \right\|_{ \R^d }
    }
  }
  \tfrac{ 1 }{
    \left( 
      \lambda_k 
    \right)^{ \alpha }
    \left(
      1 +
      \left[
        \left\| k \right\|_{ \R^d }
        -
        \left\| v \right\|_{ \R^d }
      \right]^2
    \right)^{ \beta }
  }
\\ & \leq
  \sum_{ 
    \substack{
      k \in \Z^d ,
    \\
      \left\| k \right\|_{ \R^d }
      >
    \\
      2 
      \left\| v \right\|_{ \R^d }
    }
  }
  \frac{ 1 }{
    \left( 
      \lambda_k
    \right)^{ \alpha }
    \left(
      \frac{ \lambda_{ k } }{ 4 } 
    \right)^{ \beta }
  }
=
  \sum_{ 
    \substack{
      k \in \Z^d ,
    \\
      \left\| k \right\|_{ \R^d }
      >
    \\
      2 
      \left\| v \right\|_{ \R^d }
    }
  }
  \frac{ 
    4^{ \beta } 
  }{
    \left( 
      \lambda_k
    \right)^{ ( \alpha + \beta ) }
  }
\end{split}
\end{equation}
for all
$ v \in \R^d $.
This shows 
\eqref{eq:twosided_third}.
The proof
of Lemma~\ref{lem:twosided_dc}
is thus completed.
\end{proof}

The next elementary 
lemma,
Lemma~\ref{lem:discrete_conv_finite}, 
is a direct consequence
of  
Lemma~\ref{lem:sum_finiteness}
and of
\eqref{eq:twosided_third}
in Lemma~\ref{lem:twosided_dc}.
The proof of 
Lemma~\ref{lem:discrete_conv_finite}
is clear and therefore omitted.

\begin{lem}[Finiteness of 
discrete convolutions]
\label{lem:discrete_conv_finite}
Let $ d \in \N $,
$ \alpha, \beta \in [0,\infty) $,
$ v \in \Z^d $
and let
$ \lambda_x \in [1, \infty) $,
$ x \in \R^d $, 
be real numbers
with 
$
  \lambda_x = 1 + 
  \left( x_1 \right)^2 +
  \ldots + \left( x_d \right)^2
$
for all $ x = ( x_1, \dots, x_d ) \in \R^d $.
Then
$
  \sum_{ k \in \Z^d }
  \frac{ 1 }{
    \left( 
      \lambda_k 
    \right)^{ \alpha }
    \left(
      \lambda_{ v - k } 
    \right)^{ \beta }
  }
  < \infty
$
if and only if 
$
  \alpha + \beta > \frac{ d }{ 2 } 
$.
\end{lem}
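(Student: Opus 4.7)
The plan is to decompose the sum over $\Z^d$ into a near region where $\|k\|_{\R^d} \le 2\|v\|_{\R^d}$ and a far region where $\|k\|_{\R^d} > 2\|v\|_{\R^d}$, and then to observe that convergence is governed entirely by the far region.

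First, since $v \in \Z^d$ is fixed, the set $\{k \in \Z^d : \|k\|_{\R^d} \le 2\|v\|_{\R^d}\}$ is finite, and each summand $(\lambda_k)^{-\alpha}(\lambda_{v-k})^{-\beta}$ is bounded above by $1$ because $\lambda_x \ge 1$ for all $x \in \R^d$ and $\alpha,\beta \ge 0$. Hence the near-region contribution is always a finite number and plays no role in determining whether the total sum converges.

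Next, for the far region I would invoke the right-most inequality in \eqref{eq:twosided_third} of Lemma~\ref{lem:twosided_dc} to obtain an upper bound, and the left-most inequality of \eqref{eq:twosided_third} to obtain a matching lower bound, giving
\begin{equation*}
  4^{-\beta}
  \!\!\!\sum_{\substack{k \in \Z^d\\ \|k\|_{\R^d} > 2\|v\|_{\R^d}}}\!\!
  \frac{1}{(\lambda_k)^{\alpha+\beta}}
  \;\leq\;
  \!\!\!\sum_{\substack{k \in \Z^d\\ \|k\|_{\R^d} > 2\|v\|_{\R^d}}}\!\!
  \frac{1}{(\lambda_k)^\alpha (\lambda_{v-k})^\beta}
  \;\leq\;
  4^{\beta}
  \!\!\!\sum_{\substack{k \in \Z^d\\ \|k\|_{\R^d} > 2\|v\|_{\R^d}}}\!\!
  \frac{1}{(\lambda_k)^{\alpha+\beta}} .
\end{equation*}
By Lemma~\ref{lem:sum_finiteness}, the sum $\sum_{k \in \Z^d} (\lambda_k)^{-(\alpha+\beta)}$ is finite if and only if $\alpha+\beta > d/2$; removing the finitely many terms with $\|k\|_{\R^d} \le 2\|v\|_{\R^d}$ does not affect this. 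Hence the far-region sum is finite iff $\alpha+\beta > d/2$.

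Combining the two observations, the full sum is finite iff $\alpha+\beta > d/2$. There is no real obstacle here: the argument is a direct corollary of the two cited lemmas, and the only point worth emphasizing is that because $v$ is held fixed, every region of bounded $\|k\|_{\R^d}$ automatically contributes only finitely many terms, so that the criterion for convergence is precisely the tail criterion supplied by Lemma~\ref{lem:sum_finiteness}.
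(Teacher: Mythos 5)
Your argument is correct and is precisely the route the paper has in mind: it declares Lemma~\ref{lem:discrete_conv_finite} a direct consequence of Lemma~\ref{lem:sum_finiteness} and \eqref{eq:twosided_third} in Lemma~\ref{lem:twosided_dc} and omits the proof, and your near/far decomposition with the two-sided bound on the tail is exactly that omitted argument, carried out correctly in both directions.
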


The next lemma,
Lemma~\ref{lem:discrete_conv},
follows from
Lemmas~\ref{lem:smallvalues},
\ref{lem:largevalues}
and \ref{lem:twosided_dc}.

\begin{lem}[Regularity
of discrete convolutions]
\label{lem:discrete_conv}
Let $ d \in \N $, 
$ 
  \alpha, \beta, \gamma \in [ 0, \infty ) 
$
be real numbers with
$
  \alpha + \beta 
  > \frac{ d }{ 2 } 
  \neq \max( \alpha, \beta )
$
and let
$ \lambda_x \in [1, \infty) $,
$ x \in \R^d $, 
be real numbers
with
$
  \lambda_x = 1 + 
  \left( x_1 \right)^2 +
  \ldots + \left( x_d \right)^2
$
for all $ x = ( x_1, \dots, x_d ) \in \R^d $.
Then 
\begin{equation}
  \sup_{ v \in \Z^d }
  \left[
  \sum_{ k \in \Z^d }
  \frac{ 
    \left( \lambda_v 
    \right)^{ 
      \gamma
    }
  }{
    \left( 
      \lambda_k 
    \right)^{ \alpha }
    \left(
      \lambda_{ v - k } 
    \right)^{ \beta }
  }
  \right]
  < \infty
\end{equation}
if and only if
$
  \gamma \leq
  \min( 
    \alpha, \beta, 
    \alpha + \beta - \frac{ d }{ 2 }
  )
$.
\end{lem}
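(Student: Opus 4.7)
The plan is to split the sum $\sum_{k\in\Z^d} (\lambda_v)^\gamma / [(\lambda_k)^\alpha (\lambda_{v-k})^\beta]$ into the three regions
$R_1 := \{k : \|k\|_{\R^d} \le \tfrac12\|v\|_{\R^d}\}$, $R_2 := \{k : \tfrac12\|v\|_{\R^d} < \|k\|_{\R^d} \le 2\|v\|_{\R^d}\}$, $R_3 := \{k : \|k\|_{\R^d} > 2\|v\|_{\R^d}\}$, use the two-sided estimates \eqref{eq:twosided_first}--\eqref{eq:twosided_third} of Lemma~\ref{lem:twosided_dc} to reduce each piece to a sum of the form considered in Lemmas~\ref{lem:sum_finiteness}, \ref{lem:smallvalues}, \ref{lem:largevalues}, and then take the intersection of the three resulting conditions.

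For $R_1$, estimate \eqref{eq:twosided_first} shows that $(\lambda_v)^\gamma \sum_{k\in R_1}$ is comparable to $(\lambda_v)^{\gamma-\beta}\sum_{\|k\|\le\|v\|/2}(\lambda_k)^{-\alpha}$. If $\alpha<d/2$, Lemma~\ref{lem:smallvalues} gives finiteness iff $\gamma-\beta\le\alpha-d/2$, that is, $\gamma\le\alpha+\beta-d/2$; if $\alpha>d/2$, Lemma~\ref{lem:sum_finiteness} bounds the inner sum by a constant and finiteness is equivalent to $\gamma\le\beta$, and in this regime $\beta<\alpha+\beta-d/2$ so the bound $\gamma\le\beta$ coincides with $\gamma\le\min(\beta,\alpha+\beta-d/2)$. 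The case $R_2$ is handled symmetrically via \eqref{eq:twosided_second}: the relevant quantity is $(\lambda_v)^{\gamma-\alpha}\sum_{\|k\|\le 3\|v\|}(\lambda_k)^{-\beta}$, producing the condition $\gamma\le\alpha+\beta-d/2$ when $\beta<d/2$ and $\gamma\le\alpha$ when $\beta>d/2$. For $R_3$, estimate \eqref{eq:twosided_third} shows that $(\lambda_v)^\gamma\sum_{k\in R_3}$ is comparable to $(\lambda_v)^\gamma\sum_{\|k\|>2\|v\|}(\lambda_k)^{-(\alpha+\beta)}$; since $\alpha+\beta>d/2$, Lemma~\ref{lem:largevalues} applied with exponent $\alpha+\beta$ yields finiteness iff $\gamma\le(\alpha+\beta)-d/2$. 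Intersecting the three conditions in each of the four case distinctions according to whether $\alpha, \beta$ lie above or below $d/2$ collapses, as a short case check shows, to the single condition $\gamma\le\min(\alpha,\beta,\alpha+\beta-\tfrac{d}{2})$.

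For the necessity direction, each of the three upper estimates from Lemma~\ref{lem:twosided_dc} has a matching lower estimate, so if any of the three individual conditions fails the corresponding piece $S_i(v)$ is unbounded in $v$ by the ``only if'' halves of Lemmas~\ref{lem:smallvalues} and \ref{lem:largevalues}. The bounds $\gamma\le\alpha$ and $\gamma\le\beta$ are even more immediate: the total sum is bounded below by the single term at $k=v$ (giving $(\lambda_v)^{\gamma-\alpha}$) and the single term at $k=0$ (giving $(\lambda_v)^{\gamma-\beta}$), both of which blow up in $v$ unless the respective inequalities hold. The hypothesis $\tfrac{d}{2}\neq\max(\alpha,\beta)$ is precisely what lets us apply Lemma~\ref{lem:smallvalues} or Lemma~\ref{lem:largevalues} in each region without hitting the critical exponent. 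The only real obstacle is bookkeeping: verifying that the four case splits always conspire to give the same minimum $\min(\alpha,\beta,\alpha+\beta-d/2)$, which is done by noting that $\alpha>d/2$ forces $\alpha+\beta-d/2>\beta$ and analogously for $\beta$.
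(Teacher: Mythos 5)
Your decomposition into the three regions $R_1,R_2,R_3$, the reduction via \eqref{eq:twosided_first}--\eqref{eq:twosided_third} to Lemmas~\ref{lem:sum_finiteness}, \ref{lem:smallvalues} and \ref{lem:largevalues}, and the final intersection of conditions is exactly the paper's argument, and your treatment of the generic cases is correct (your observation that necessity of $\gamma\le\alpha$ and $\gamma\le\beta$ already follows from the single terms $k=v$ and $k=0$ is in fact a small simplification of the necessity direction). However, there is a gap in the sufficiency direction: your case analysis on $R_1$ only distinguishes $\alpha<\tfrac d2$ from $\alpha>\tfrac d2$, and on $R_2$ only $\beta<\tfrac d2$ from $\beta>\tfrac d2$, and your closing claim that the hypothesis $\max(\alpha,\beta)\neq\tfrac d2$ prevents the critical exponent from ever occurring is false. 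The hypothesis only excludes the \emph{larger} exponent from being critical; the smaller one may well equal $\tfrac d2$. For instance $d=2$, $\alpha=1=\tfrac d2$, $\beta=\tfrac32$ satisfies all assumptions, and then on $R_1$ the inner sum $\sum_{\|k\|\le\|v\|/2}(\lambda_k)^{-\alpha}$ grows logarithmically, so neither Lemma~\ref{lem:smallvalues} nor the constant bound from Lemma~\ref{lem:sum_finiteness} applies as you use them, and the region-$R_1$ condition is the strict inequality $\gamma<\beta$ rather than either of the two conditions you list.

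The paper closes exactly this case: it first reduces w.l.o.g.\ to $\alpha\le\beta$ (so that only $\alpha$ can be critical, and only in region $R_1$), and in \eqref{eq:aquivalence_first} it records the $R_1$ condition as a three-way alternative including the subcase $(\gamma<\beta)\wedge(\alpha=\tfrac d2)$; crucially, for $R_1$ it only claims the implication ``$\Leftarrow$'' from $\gamma\le\min(\alpha,\beta,\alpha+\beta-\tfrac d2)$, which holds because $\alpha=\tfrac d2<\beta$ forces $\min(\alpha,\beta,\alpha+\beta-\tfrac d2)=\tfrac d2<\beta$, while the exact equivalence with the minimum is extracted from $R_2$ (where $\beta\neq\tfrac d2$ is guaranteed) and $R_3$. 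So your proof is repaired either by adding the same w.l.o.g.\ ordering and treating the subcase $\min(\alpha,\beta)=\tfrac d2$ explicitly as above, or by noting directly that in that subcase the critical region still converges for every $\gamma\le\min(\alpha,\beta,\alpha+\beta-\tfrac d2)$ because this minimum is then strictly smaller than the other exponent; as written, the step ``the hypothesis lets us avoid the critical exponent in each region'' is the one that would fail.
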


\begin{proof}[Proof
of 
Lemma~\ref{lem:discrete_conv}]
Note that
$
  \sum_{ k \in \Z^d }
  \frac{ 
    \left( \lambda_v 
    \right)^{ 
      \gamma
    }
  }{
    \left( 
      \lambda_k 
    \right)^{ \alpha }
    \left(
      \lambda_{ v - k } 
    \right)^{ \beta }
  }
  =
  \sum_{ k \in \Z^d }
  \frac{ 
    \left( \lambda_v 
    \right)^{ 
      \gamma
    }
  }{
    \left( 
      \lambda_k 
    \right)^{ \beta }
    \left(
      \lambda_{ v - k } 
    \right)^{ \alpha }
  }
$
for all $ v \in \Z^d $.
W.l.o.g.\ we assume that
$ \alpha \leq \beta $.
This ensures that
$ \beta \neq \frac{ d }{ 2 } $.
Moreover,
Lemma~\ref{lem:smallvalues}
and \eqref{eq:twosided_first}
in Lemma~\ref{lem:twosided_dc}
prove that
\begin{equation}
\label{eq:aquivalence_first}
\begin{split}
&
  \left(
  \sup_{ v \in \Z^d }
  \left[
  \sum_{
    \substack{ 
      k \in \Z^d ,
      \| k \|_{ \R^d } \leq 
      \frac{ 1 }{ 2 } \| v \|_{ \R^d }
    }
  }
  \frac{ 
    \left(
      \lambda_v
    \right)^{ \gamma }
  }{
    \left( 
      \lambda_k 
    \right)^{ \alpha }
    \left(
      \lambda_{ v - k } 
    \right)^{ \beta }
  }
  \right]
  < \infty
  \right)
\\ &
  \Leftrightarrow
  \Bigg(
  \bigg[
    \Big(
      \gamma \leq 
      \alpha + \beta - \tfrac{ d }{ 2 }
    \Big)
    \wedge
    \Big(
      \alpha < \tfrac{ d }{ 2 }
    \Big)
  \bigg]
  \vee
  \bigg[
    \Big(
      \gamma <
      \beta 
    \Big)
    \wedge
    \Big(
      \alpha = \tfrac{ d }{ 2 }
    \Big)
  \bigg]
  \vee
  \bigg[
    \Big(
      \gamma \leq 
      \beta 
    \Big)
    \wedge
    \Big(
      \alpha > \tfrac{ d }{ 2 }
    \Big)
  \bigg]
  \Bigg) 
\\ & 
  \Leftarrow
  \Big(
    \gamma \leq
    \min( 
      \alpha, 
      \beta, 
      \alpha + \beta - \tfrac{ d }{ 2 }
    )
  \Big)
  .
\end{split}
\end{equation}
In addition,
Lemma~\ref{lem:smallvalues}
and \eqref{eq:twosided_second}
in Lemma~\ref{lem:twosided_dc}
show that
\begin{equation}
\label{eq:aquivalence_second}
\begin{split}
&
  \left(
  \sup_{ v \in \Z^d }
  \left[
  \sum_{
    \substack{
      k \in \Z^d ,
      \frac{ 
        1
      }{ 2 } 
      \| v \|_{ \R^d }
      <
      \| k \|_{ \R^d } 
      \leq 
      2 
      \| v \|_{ \R^d }
    }
  }
  \frac{ 
    \left(
      \lambda_v
    \right)^{ \gamma }
  }{
    \left( 
      \lambda_k 
    \right)^{ \alpha }
    \left(
      \lambda_{ v - k } 
    \right)^{ \beta }
  }
  \right]
  < \infty
  \right)
\\ &
  \Leftrightarrow
  \!
  \Bigg(
  \bigg[
    \Big(
      \gamma \leq 
      \alpha + \beta - \tfrac{ d }{ 2 }
    \Big)
    \wedge
    \Big(
      \beta < \tfrac{ d }{ 2 }
    \Big)
  \bigg]
  \vee
  \bigg[
    \Big(
      \gamma \leq 
      \alpha 
    \Big)
    \wedge
    \Big(
      \beta > \tfrac{ d }{ 2 }
    \Big)
  \bigg]
  \Bigg) 
\\ &
  \Leftrightarrow
  \!
  \Bigg(
  \bigg[
    \Big(
      \gamma \leq 
      \min( 
        \alpha , 
        \alpha + \beta - \tfrac{ d }{ 2 } 
      )
    \Big)
    \wedge
    \Big(
      \beta < \tfrac{ d }{ 2 }
    \Big)
  \bigg]
  \vee
  \bigg[
    \Big(
      \gamma \leq 
      \min( 
        \alpha , 
        \alpha + \beta - \tfrac{ d }{ 2 } 
      )
    \Big)
    \wedge
    \Big(
      \beta > \tfrac{ d }{ 2 }
    \Big)
  \bigg]
  \Bigg) 
\\ &
  \Leftrightarrow
    \Big(
      \gamma \leq 
      \min( 
        \alpha , 
        \alpha + \beta - \tfrac{ d }{ 2 } 
      )
    \Big) 
  \Leftrightarrow
    \Big(
      \gamma \leq 
      \min( 
        \alpha , 
        \beta ,
        \alpha + \beta - \tfrac{ d }{ 2 } 
      )
    \Big) .
\end{split}
\end{equation}
Finally, 
Lemma~\ref{lem:largevalues}
and \eqref{eq:twosided_third}
in Lemma~\ref{lem:twosided_dc}
prove that
\begin{equation}
\label{eq:aquivalence_third}
\begin{split}
&
  \left(
  \sup_{ v \in \Z^d }
  \left[
  \sum_{
    \substack{
      k \in \Z^d ,
      \| k \|_{ \R^d } 
      > 
      2 
      \| v \|_{ \R^d }
    }
  }
  \frac{ 
    \left(
      \lambda_v
    \right)^{ \gamma }
  }{
    \left( 
      \lambda_k 
    \right)^{ \alpha }
    \left(
      \lambda_{ v - k } 
    \right)^{ \beta }
  }
  \right]
  < \infty
  \right)
  \Leftrightarrow
    \Big(
      \gamma \leq 
      \alpha + \beta - \tfrac{ d }{ 2 }
    \Big)
  .
\end{split}
\end{equation}
Combining 
\eqref{eq:aquivalence_first}--\eqref{eq:aquivalence_third}
completes the proof
of Lemma~\ref{lem:discrete_conv}.
\end{proof}

\begin{cor}[Regularity
of discrete convolutions]
\label{cor:discrete_conv}
Let $ d \in \N $, 
$ 
  \alpha, \beta \in [ 0, \infty ) 
$
and let
$ \lambda_x \in [1, \infty) $,
$ x \in \R^d $, 
be real numbers
with
$
  \lambda_x = 1 + 
  \left( x_1 \right)^2 +
  \ldots + \left( x_d \right)^2
$
for all $ x = ( x_1, \dots, x_d ) \in \R^d $.
Then 
$
  \sup_{ v \in \Z^d }
  \left[
  \sum_{ k \in \Z^d }
  \frac{ 
    \left( \lambda_v 
    \right)^{ 
      \gamma
    }
  }{
    \left( 
      \lambda_k 
    \right)^{ \alpha }
    \left(
      \lambda_{ v - k } 
    \right)^{ \beta }
  }
  \right]
  < \infty
$
for all
$
  \gamma 
  \in
  \big[ 0,
    \min( 
      \alpha, \beta, 
      \alpha + \beta - \frac{ d }{ 2 }
    )
  \big)
$.
\end{cor}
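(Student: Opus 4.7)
The plan is to deduce this corollary from Lemma~\ref{lem:discrete_conv} by a small perturbation argument. Lemma~\ref{lem:discrete_conv} gives the sharp criterion $\gamma \leq \min(\alpha, \beta, \alpha + \beta - \tfrac{d}{2})$, but only under the extra non-degeneracy assumption $\alpha + \beta > \tfrac{d}{2} \neq \max(\alpha, \beta)$. The corollary drops this assumption at the cost of making the inequality on $\gamma$ strict, and the natural way to absorb this loss is to perturb $(\alpha, \beta)$ slightly into the regime where Lemma~\ref{lem:discrete_conv} applies.

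First, I would dispose of the trivial case. If $\min(\alpha, \beta, \alpha + \beta - \tfrac{d}{2}) \leq 0$, then the interval $[0, \min(\alpha, \beta, \alpha + \beta - \tfrac{d}{2}))$ is empty and there is nothing to prove. So I may henceforth assume $\alpha > 0$, $\beta > 0$, and $\alpha + \beta > \tfrac{d}{2}$, and fix an arbitrary $\gamma \in [0, \min(\alpha, \beta, \alpha + \beta - \tfrac{d}{2}))$.

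Next, I would produce the perturbed exponents. Since $\gamma$ lies strictly below $\alpha$, $\beta$, and $\alpha + \beta - \tfrac{d}{2}$, for every sufficiently small $\varepsilon \in (0, \infty)$ the numbers $\alpha' := \alpha - \varepsilon$ and $\beta' := \beta - \varepsilon$ satisfy $\alpha', \beta' \in (0, \infty)$, $\alpha' + \beta' > \tfrac{d}{2}$, and $\gamma \leq \min(\alpha', \beta', \alpha' + \beta' - \tfrac{d}{2})$. Moreover, by shrinking $\varepsilon$ slightly if necessary (noting that at most one value of $\varepsilon$ makes $\max(\alpha', \beta')$ equal to $\tfrac{d}{2}$), I can simultaneously ensure $\max(\alpha', \beta') \neq \tfrac{d}{2}$. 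With this choice Lemma~\ref{lem:discrete_conv} applies to $(\alpha', \beta', \gamma)$ and yields $\sup_{v \in \Z^d} \sum_{k \in \Z^d} (\lambda_v)^{\gamma} / \bigl[ (\lambda_k)^{\alpha'} (\lambda_{v-k})^{\beta'} \bigr] < \infty$.

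Finally, I would finish by monotonicity in the exponents. Because $\lambda_k \geq 1$ and $\lambda_{v-k} \geq 1$ with $\alpha \geq \alpha'$ and $\beta \geq \beta'$, one has $(\lambda_k)^{\alpha} \geq (\lambda_k)^{\alpha'}$ and $(\lambda_{v-k})^{\beta} \geq (\lambda_{v-k})^{\beta'}$ termwise, so the sum with $(\alpha, \beta)$ is bounded above by the sum with $(\alpha', \beta')$ uniformly in $v$, which completes the proof. There is no real obstacle here; the only subtlety worth being explicit about is checking that the perturbation $\varepsilon$ can be chosen to simultaneously meet all four constraints (positivity, $\alpha' + \beta' > \tfrac{d}{2}$, avoidance of $\max(\alpha', \beta') = \tfrac{d}{2}$, and the $\gamma$-inequality), which is immediate from the strict inequality $\gamma < \min(\alpha, \beta, \alpha + \beta - \tfrac{d}{2})$.
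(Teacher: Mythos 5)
Your proposal is correct and follows the route the paper intends (the corollary is stated without proof as a consequence of Lemma~\ref{lem:discrete_conv}): reduce to the lemma and use monotonicity in the exponents, which is valid since $\lambda_k, \lambda_{v-k} \geq 1$. Your perturbation in $\varepsilon$ is exactly the right way to dispose of the degenerate case $\max(\alpha,\beta) = \tfrac{d}{2}$ excluded by the lemma's hypothesis (and the vacuous case $\min(\alpha,\beta,\alpha+\beta-\tfrac{d}{2}) \leq 0$), and all four constraints on $\varepsilon$ are indeed simultaneously satisfiable because the inequality $\gamma < \min(\alpha,\beta,\alpha+\beta-\tfrac{d}{2})$ is strict.
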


%
%
%

\subsection{Wick powers 
of Ornstein-Uhlenbeck processes}


The next elementary lemma is, e.g., 
similar to
Lemma~2.4 in 
Da Prato \&
Tubaro~\cite{DaPratoTubaro2007}
and
Corollary 8.3.2 in Glimm \& Jaffe~\cite{glimm_quantum_1987}.

\begin{lem}[Expectations of products
of Wick powers of Gaussian random variables]
\label{lem:Generalized_WickTheorem}
Assume the setting of 
Subsection~\ref{sec:setting},
let $ m \in \N $
and let
$
  Z = ( Z_1, \dots, Z_m )
  \colon \Omega
  \to \R^m
$
be a centered
jointly normally distributed random variable.
Then
\begin{equation}
  \E\Big[
    \!
    \left(
      : \!
      ( Z_1 )^{ n_1 }
      \! :
    \right)
    \cdot
    \left(
      : \!
      ( Z_2 )^{ n_2 }
      \! :
    \right)
    \cdot
    \ldots
    \cdot
    \left(
      : \!
      ( Z_m )^{ n_m }
      \! :
    \right)
    \!
  \Big]
  =
  \!\!\!\!
  \sum_{
    \substack{
      \alpha \in
      ( \N_0 )^{
        \mathcal{P}_m
      }
    \\
      \Theta( \alpha ) =n
    }
  }
  \frac{
    n !
  }{
    \alpha !
  }
  \left[
  \prod_{ 
    ( i, j ) \in \mathcal{P}_m 
  }  
    \big(
      \E\big[
        Z_i
        Z_j
      \big]
    \big)^{ 
      \alpha_{ ( i, j ) } 
    }
  \right]
\end{equation}
for all 
$ n = (n_1, \dots, n_m) \in ( \N_0 )^m $.
\end{lem}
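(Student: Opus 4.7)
The plan is to prove the identity by matching coefficients in a suitable exponential generating function. First, I would recall the generating function for the Hermite polynomials which, together with the definition \eqref{eq:wickGRV}, yields that for any centered real Gaussian random variable $Z$ with variance $\sigma^2 = \E[Z^2]$,
\begin{equation*}
  \sum_{ n = 0 }^{ \infty }
  \frac{ t^n }{ n! }
  \, : \! Z^n \! :
  \; = \;
  \exp\!\left( t Z - \tfrac{ t^2 }{ 2 } \E[ Z^2 ] \right)
\end{equation*}
as an identity of formal power series in $t$ (and, in fact, as an $L^2(\Omega;\R)$-valued real-analytic function of $t$). This is the key algebraic input: the Wick power is precisely the coefficient of $\tfrac{t^n}{n!}$ in the exponential generating function with a Gaussian correction.

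Next I would introduce variables $t = (t_1,\dots,t_m)$ and, by multiplying the above identity over $i = 1, \dots, m$, observe that
\begin{equation*}
  \sum_{ n \in ( \N_0 )^m }
  \frac{ t^n }{ n! }
  \prod_{ i = 1 }^{ m }
  \! : \! ( Z_i )^{ n_i } \! :
  \; = \;
  \exp\!\Big(
    \sum\nolimits_{ i = 1 }^{ m } t_i Z_i
    - \tfrac{ 1 }{ 2 }
    \sum\nolimits_{ i = 1 }^{ m } ( t_i )^2 \, \E[ ( Z_i )^2 ]
  \Big) .
\end{equation*}
Now I would take expectations: since $\sum_{i} t_i Z_i$ is a centered Gaussian random variable with variance $\sum_{ i, j = 1 }^{ m } t_i t_j \, \E[ Z_i Z_j ]$, the standard Gaussian Laplace transform gives
\begin{equation*}
  \E\!\Big[
    \exp\!\big(
      \sum\nolimits_{ i = 1 }^{ m } t_i Z_i
    \big)
  \Big]
  =
  \exp\!\Big(
    \tfrac{ 1 }{ 2 }
    \sum\nolimits_{ i, j = 1 }^{ m } t_i t_j \, \E[ Z_i Z_j ]
  \Big) .
\end{equation*}
The diagonal terms $\tfrac{1}{2}(t_i)^2 \E[(Z_i)^2]$ from the Gaussian characteristic function exactly cancel the Wick correction. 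What remains is the clean identity
\begin{equation*}
  \sum_{ n \in ( \N_0 )^m }
  \frac{ t^n }{ n! }
  \, \E\!\Big[
    \prod_{ i = 1 }^{ m }
    \! : \! ( Z_i )^{ n_i } \! :
  \Big]
  \; = \;
  \exp\!\Big(
    \sum\nolimits_{ ( i, j ) \in \mathcal{P}_m }
    t_i t_j \, \E[ Z_i Z_j ]
  \Big) .
\end{equation*}

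Finally, I would expand the right-hand side by factoring the exponential of a sum into a product over $( i, j ) \in \mathcal{P}_m $ and expanding each single-variable exponential as a power series:
\begin{equation*}
  \exp\!\Big(
    \sum\nolimits_{ ( i, j ) \in \mathcal{P}_m }
    t_i t_j \, \E[ Z_i Z_j ]
  \Big)
  =
  \sum_{ \alpha \in ( \N_0 )^{ \mathcal{P}_m } }
  \frac{ 1 }{ \alpha ! }
  \prod_{ ( i, j ) \in \mathcal{P}_m }
  ( t_i t_j )^{ \alpha_{ ( i, j ) } }
  \big( \E[ Z_i Z_j ] \big)^{ \alpha_{ ( i, j ) } } .
\end{equation*}
By the very definition of $\Theta$, the combined exponent of $t_i$ in the monomial associated with $\alpha$ is exactly the $i$-th component of $\Theta( \alpha )$; hence the above equals $\sum_{ \alpha } \tfrac{1}{\alpha!} \, t^{ \Theta( \alpha ) } \prod_{ ( i, j ) } \E[ Z_i Z_j ]^{ \alpha_{ ( i, j ) } }$. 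Matching the coefficient of $t^n / n!$ on both sides yields the asserted formula. The only non-routine step is the bookkeeping step identifying the exponents of $t_i$ with $\Theta(\alpha)_i$, so this is where I would be most careful, but there is no analytic obstacle since the identities can be understood either as formal power series identities or, on a small polydisc of $t$, as convergent series.
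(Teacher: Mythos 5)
Your proposal is correct and follows essentially the same route as the paper: both arguments multiply the Hermite/Wick exponential generating functions, take expectations via the Gaussian moment generating function so that the diagonal corrections cancel, expand $\exp\big(\sum_{(i,j)\in\mathcal{P}_m} t_i t_j \E[Z_i Z_j]\big)$ over multi-indices $\alpha\in(\N_0)^{\mathcal{P}_m}$, and match coefficients using the definition of $\Theta$. The only cosmetic difference is that the paper first normalizes to $\hat{Z}_i = Z_i/(\E[(Z_i)^2])^{1/2}$ and converts back at the end, whereas you work with the unnormalized Wick exponential directly.
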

\begin{proof}[Proof
of 
Lemma~\ref{lem:Generalized_WickTheorem}]
W.l.o.g.\ we assume that
$
  \E\big[
    ( Z_i )^2
  \big]
  > 0
$
for all
$
  i \in \{ 1, 2, \dots, m \} 
$.
Next throughout this proof 
let 
$ 
  \hat{ Z }_i \colon \Omega \to \R 
$,
$ 
  i \in \{ 1, 2, \dots, m \} 
$,
be random variables defined through
\begin{equation}
  \hat{ Z }_i
  :=
  \frac{
    Z_i
  }{
    \big(
      \E\big[
        ( Z_i )^2
      \big]
    \big)^{ 1 / 2 }
  }
\end{equation}
for all $ i \in \{ 1, 2, \dots, m \} $.
The definition of the 
Hermite polynomials $ H_n $, $ n \in \{ 0, 1, 2, \dots \} $,
then proves that
\begin{equation}
\begin{split}
&
  \sum_{ 
    n = (n_1, n_2, \dots, n_m) \in \N_0 
  }
  \frac{ 
    \left( s_1 
    \right)^{ n_1 }
    \cdot
    \left( s_2
    \right)^{ n_2 }
    \cdot
    \ldots
    \cdot
    \left( s_m
    \right)^{ n_m }
    \cdot
    \E\big[
       \prod_{ i = 1 }^m
       H_{ n_i }\!\big(
        \hat{ Z }_i
      \big)
    \big]
  }{
    n_1 ! \, n_2 ! \,
    \dots
    \, n_m!
  } 
\\ & =
  \E\!\left[
    \prod_{ i = 1 }^m
    \left(
      \sum_{
        n_i = 0
      }^{ \infty }
      \frac{
        \left( s_i \right)^{ n_i }
      }{
        n_i !
      }
       H_{ n_i }\!\big(
        \hat{ Z }_k
      \big)
    \right)
  \right]
  =
  \E\!\left[
    \prod_{ i = 1 }^m
    \exp\!\left(
      -
      \frac{ \left( s_i \right)^2 }{ 2 }
      +
      s_i 
      \hat{Z}_i
    \right)
  \right]
\\ & =
  \exp\!\left(
    \frac{ 
      - 
      \sum_{ i = 1 }^{ m } 
      \left( s_i \right)^2 
    }{ 2 }
  \right)
  \,
  \E\!\left[
    \exp\!\left(
      \sum_{ i = 1 }^m 
      s_i 
      \hat{Z}_i
    \right)
  \right]
\\ & =
  \exp\!\left(
    \frac{ 
      - 
      \sum_{ i = 1 }^{ m } 
      \left( s_i \right)^2 
    }{ 
      2 
    }
    +
    \frac{ 1 }{ 2 }
    \,
    \E\!\left[
      \left(
        \sum_{ i = 1 }^m 
        s_i
        \hat{Z}_i
      \right)^{ \! 2 }
    \right]
  \right)
\\ &  
  =
  \exp\!\left(
    \tfrac{ 
      - \sum_{ i = 1 }^{ m } \left( s_i \right)^2 
      +
      \sum_{ i, j = 1 }^m 
      s_i s_j
      \E\left[
        \hat{Z}_i
        \hat{Z}_j
      \right]
    }{ 2 }
  \right)
  =
  \prod_{
    \substack{
      i, j \in \{ 1, 2, \dots, m \}
    \\ 
      i < j
    }
  }
  \exp\!\left(
    s_i
    s_j
    \E\!\left[
      \hat{Z}_{ i }
      \hat{Z}_{ j }
    \right]
  \right)
\end{split}
\end{equation}
and the identity
$ 
  e^{ s_1 s_2 c }
  = 
  \sum_{ n = 0 }^{ \infty }
  \frac{ 
    ( s_1 s_2 )^n
    c^n
  }{
    n!
  }
$
for all $ s_1, s_2, c \in \R $
therefore shows that
\begin{equation}
\begin{split}
&
  \sum_{ 
    n = (n_1, n_2, \dots, n_m) \in \N_0 
  }
  \frac{ 
    \left( s_1 
    \right)^{ n_1 }
    \cdot
    \left( s_2
    \right)^{ n_2 }
    \cdot
    \ldots
    \cdot
    \left( s_m
    \right)^{ n_m }
    \cdot
    \E\big[
       \prod_{ i = 1 }^m
       H_{ n_i }\!\big(
        \hat{ Z }_i
      \big)
    \big]
  }{
    n !
  } 
\\ & =
  \prod_{
    (i, j) \in \mathcal{P}_m
  }
  \left(
  \sum_{ 
    \alpha_{ (i, j) } = 0
  }^{ \infty }
  \frac{
    \left(
      s_i
      s_j
    \right)^{ 
      \alpha_{ (i, j) } 
    }
    \left\{
      \E\!\left[
        \hat{Z}_{ i }
        \hat{Z}_{ j }
      \right]
    \right\}^{
      \alpha_{ (i, j) } 
    }
  }{
    \alpha_{ (i, j) }!
  }
  \right) 
\\ & =
  \sum_{ 
    \alpha \in ( \N_0 )^{ \mathcal{P}_m }
  }
  \frac{ 
    1
  }{
    \alpha !
  }
  \left(
    \prod_{
      (i, j) \in \mathcal{P}_m
    }
    \left(
      s_i
      s_j
    \right)^{ 
      \alpha_{ (i, j) } 
    }
    \left\{
      \E\!\left[
        \hat{Z}_{ i }
        \hat{Z}_{ j }
      \right]
    \right\}^{
      \alpha_{ (i, j) } 
    }
  \right)
  .
\end{split}
\end{equation}
This implies
\begin{equation}
\begin{split}
&
  \frac{ 1 }{ n! }
  \,
    \E\!\left[
     \prod_{ i = 1 }^{ m }
       H_{ n_i }\!\big(
        \hat{Z}_i
      \big)
    \right]
  =
  \sum_{
    \substack{
      \alpha \in ( \N_0 )^{ \mathcal{P}_m }
    \\
      \Theta( \alpha ) = n 
    }
  }
  \frac{
    1
  }{
    \alpha !
  }
  \left[
  \prod_{ 
    ( i, j ) \in \mathcal{P}_m 
  }  
    \left\{
      \E\!\left[
        \hat{Z}_i
        \hat{Z}_j
      \right]
    \right\}^{ 
      \alpha_{ (i, j ) }
    }
  \right]
\end{split}
\end{equation}
and hence
\begin{equation}
\begin{split}
&
  \E\!\left[
    \prod_{ i = 1 }^{ m }
    \big(
      \!
      : \!
      (
        Z_i
      )^n
      \! :
      \!
    \big)
  \right]
  =
  \left\{
    \E\!\left[
      ( Z_1 )^2
    \right]
  \right\}^{
    \frac{ n_1 }{ 2 }
  }
  \cdot
  \ldots
  \cdot
  \left\{
    \E\!\left[
      ( Z_m )^2
    \right]
  \right\}^{
    \frac{ n_m }{ 2 }
  }  
\\ & \quad \cdot
  \sum_{
    \substack{
      \alpha \in ( \N_0 )^{ \mathcal{P}_m }
    \\
      \Theta( \alpha ) = n 
    }
  }
  \frac{
    n !
  }{
    \alpha !
  }
  \left[
  \prod_{ 
    ( i, j ) \in \mathcal{P}_m 
  }  
  \left\{
    \frac{
      \E\!\left[
        Z_i
        Z_j
      \right]
    }{
      \sqrt{
        \E\!\left[
          ( Z_i )^2
        \right]
        \E\!\left[
          ( Z_j )^2
        \right]
      }
    }
  \right\}^{  
    \!
    \alpha_{ (i, j ) }
  }
  \right]
\end{split}
\end{equation}
for all $ n = (n_1, \dots, n_m) \in ( \N_0 )^m $.
The definition of the function
$ 
  \Theta 
  \colon
  \cup_{ m = 1 }^{ \infty }
  ( \N_0 )^{
    \mathcal{P}_m
  }
  \to
  \cup_{ m = 1 }^{ \infty }
  ( \N_0 )^m
$ 
therefore completes the 
proof of Lemma~\ref{lem:Generalized_WickTheorem}.
\end{proof}

\begin{remark}[Wick's theorem]
Assume the setting of 
Subsection~\ref{sec:setting},
let $ m \in \N $
and let
$
  Z = ( Z_1, \dots, Z_m )
  \colon \Omega
  \to \R^m
$
be a centered jointly normally distributed
random variable.
Then Lemma~\ref{lem:Generalized_WickTheorem}
implies that
\begin{equation}
\label{eq:wick_theorem}
  \E\big[
    Z_1 
    \cdot
    Z_2 
    \cdot
    \ldots
    \cdot
    Z_m 
  \big]
  =
  \!\!\!\!
  \sum_{
    \substack{
      \alpha \in 
      ( \N_0 )^{ 
        \mathcal{P}_m
      }
    \\
      \Theta( \alpha ) = (1,1,\dots,1)
    }
  }
  n !
  \left[
  \prod_{ 
    ( i, j ) \in \mathcal{P}_m 
  }  
    \big(
      \E\big[
        Z_i
        Z_j
      \big]
    \big)^{ 
      \alpha_{ ( i, j) }
    }
  \right] .
\end{equation}
Equation~\eqref{eq:wick_theorem}
is often referred as \emph{Wick's theorem}
in the literature
(see, e.g., Proposition~5.2
in Hairer~\cite{Hairer2012}).
\end{remark}

The next lemma is a direct
consequence
of Lemma~\ref{lem:Generalized_WickTheorem}.

\begin{cor}[Products of 
Wick powers of 
$ V^{ \varphi } $, 
$ \varphi \in \Phi_0 $,
in real space]
\label{cor:space_correlation}
Assume the setting of 
Subsection~\ref{sec:setting},
let 
$ m \in \N $
and let
$ 
  n = (n_1, n_2, \dots, n_m) \in 
  ( \N_0 )^m
  \backslash \{ 0 \} 
$.
Then
\begin{equation}
\begin{split}
&
  \E\bigg[
    \Big(
    \!
    : \! 
    \big( 
      V_{ t_1 }^{ \varphi^{ (1) } } 
    \big)^{ n_1 } 
    \!\! :
    \! \Big) 
    (x_1) 
    \cdot
    \Big(
    \!
    : \! 
    \big( 
      V_{ t_2 }^{ \varphi^{ (2) } } 
    \big)^{ n_2 } 
    \!\! : 
    \! \Big) 
    (x_2) 
    \cdot
    \ldots
    \cdot
    \Big(
    \!
    : \! 
    \big( 
      V_{ t_m }^{ \varphi^{ (m) } } 
    \big)^{ n_m } 
    \!\! :
    \! \Big) 
    (x_m) 
  \bigg]
\\ & =
  \sum_{
    \substack{
      \alpha 
      \in ( \N_0 )^{ 
        \mathcal{P}_m
      }
    \\
      \Theta( \alpha ) = n
    }
  }
  \frac{ 
    n!
  }{
    \alpha !
  }
  \left[
  \prod_{ ( i, j ) \in \mathcal{P}_m }  
  \left[
    \sum_{
      k \in \Z^d
    }
    \frac{
      \varphi^{ (i) }_k \,
      \varphi^{ (j) }_k \,
      g_k( x_i - x_j ) \,
      e^{ 
        - \lambda_k \left| t_i - t_j \right|
      }
    }{
      \lambda_k
    }
  \right]^{ 
    \alpha_{ 
      ( i, j ) 
    }
  }
  \right]
\\ & =
  \sum_{
    \substack{
      \alpha \in
      ( \N_0 )^{
        \mathcal{P}_m
      }
    \\
      \Theta( \alpha ) =n
    }
  }
  \tfrac{
    n !
  }{
    \alpha !
  }
  \!\!\!\!\!\!\!\!\!\!\!\!
    \sum_{
      k \in 
      ( \Z^d )^{
        \left\{ 
          \substack{
            ( A, l ) \in 
          \\
            \mathcal{P}_m \times \N \colon
          \\
            l \leq \alpha_{ A } 
          }
        \right\}
      }
    }
    \prod_{
      ( i, j, r )
      \in 
      \left\{ 
        \substack{
          ( A, l ) \in 
        \\
          \mathcal{P}_m \times \N \colon
        \\
          l \leq \alpha_{ A } 
        }
      \right\}
    }
      \tfrac{ 
        \varphi^{ (i) }_{ k_{ (i, j, r) } }
        \,
        \varphi^{ (j) }_{ k_{ (i, j, r) } }
        \,
        e^{ 
          - \lambda_{ k_{ (i, j, r) } }
          \left|
            t_i - t_j
          \right|
        }
        \,
        g_{ k_{ (i, j, r) } }( x_i - x_j )
      }{ 
        \lambda_{ k_{ ( i, j, r) } }
      }
\end{split}
\end{equation}
for all 
$ t_1, t_2, \dots, t_m \in \R $,
$ 
  x_1, x_2, \dots, x_m \in 
  [
    0, 2 \pi
  ]^d
$
and all
$ 
  \varphi^{ (1) } 
  = 
  ( \varphi^{ (1) }_k )_{ k \in \Z^d }
$,
$
  \varphi^{ (2) }
  ( \varphi^{ (2) }_k )_{ k \in \Z^d }
$,
$  
  \dots
$,
$
  \varphi^{ (m) } =
  ( \varphi^{ (m) }_k )_{ k \in \Z^d }
  \in \Phi_0 
$.
\end{cor}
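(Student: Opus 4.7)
The proof is a direct combination of Lemma~\ref{lem:Generalized_WickTheorem} with the covariance formula~\eqref{eq:space_correlation}, followed by an elementary expansion of products of sums. I would proceed in three steps.

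First, I would fix $t_1,\dots,t_m \in \R$, $x_1,\dots,x_m \in [0,2\pi]^d$ and $\varphi^{(1)},\dots,\varphi^{(m)} \in \Phi_0$, and apply Lemma~\ref{lem:Generalized_WickTheorem} to the centered jointly Gaussian real-valued vector $Z = \big(V^{\varphi^{(1)}}_{t_1}(x_1),\dots,V^{\varphi^{(m)}}_{t_m}(x_m)\big)$ (the joint Gaussianity follows from the fact that each $V^{\varphi^{(i)}}_{t_i}(x_i)$ is a real-linear combination of the $\beta^v_s$). By definition of $:(V^{\varphi^{(i)}}_{t_i}(x_i))^{n_i}:$ via~\eqref{eq:wickGRV}, Lemma~\ref{lem:Generalized_WickTheorem} yields
\begin{equation*}
  \E\!\left[\prod_{i=1}^m \big(\!:\!(V^{\varphi^{(i)}}_{t_i})^{n_i}\!:\!\big)(x_i)\right]
  = \sum_{\substack{\alpha \in (\N_0)^{\mathcal{P}_m} \\ \Theta(\alpha)=n}} \frac{n!}{\alpha!} \prod_{(i,j)\in\mathcal{P}_m} \Big(\E\!\left[V^{\varphi^{(i)}}_{t_i}(x_i)\,V^{\varphi^{(j)}}_{t_j}(x_j)\right]\!\Big)^{\alpha_{(i,j)}}.
\end{equation*}

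Second, I would compute the pairwise covariance using~\eqref{eq:space_correlation}. Since $V^{\varphi}$ is real-valued the complex conjugate in~\eqref{eq:space_correlation} is immaterial, so
\begin{equation*}
  \E\!\left[V^{\varphi^{(i)}}_{t_i}(x_i)\,V^{\varphi^{(j)}}_{t_j}(x_j)\right]
  = \sum_{k\in\Z^d} \frac{\varphi^{(i)}_k\,\varphi^{(j)}_k\, e^{-\lambda_k|t_i-t_j|}\, g_k(x_i-x_j)}{\lambda_k},
\end{equation*}
where symmetry $\varphi^{(\cdot)}_k = \varphi^{(\cdot)}_{-k}$ (built into the definition of $\Phi$) handles the change of sign $g_k(x_j-x_i) = g_{-k}(x_i-x_j)$ under relabeling. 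Substituting this back into the first display gives the first claimed identity.

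Third, to obtain the second identity I would expand each $\alpha_{(i,j)}$-th power as an iterated sum: for any complex numbers $(a_k)_{k\in\Z^d}$ with $a_k\neq 0$ only for finitely many $k$ (which is the case here since $\varphi^{(\cdot)} \in \Phi_0$, ensuring absolute convergence),
\begin{equation*}
  \left[\sum_{k\in\Z^d} a_k\right]^{\!\alpha_{(i,j)}} = \!\!\!\sum_{(k_{(i,j,1)},\dots,k_{(i,j,\alpha_{(i,j)})}) \in (\Z^d)^{\alpha_{(i,j)}}} \prod_{r=1}^{\alpha_{(i,j)}} a_{k_{(i,j,r)}}.
\end{equation*}
Applying this simultaneously for all $(i,j) \in \mathcal{P}_m$ replaces $\prod_{(i,j)}[\cdots]^{\alpha_{(i,j)}}$ by a single sum over tuples $k = (k_{(i,j,r)})_{(i,j,r)}$ indexed by the set $\{(A,l) \in \mathcal{P}_m \times \N : l \leq \alpha_A\}$, with the product over that index set of the factor $\varphi^{(i)}_{k_{(i,j,r)}} \varphi^{(j)}_{k_{(i,j,r)}} e^{-\lambda_{k_{(i,j,r)}}|t_i-t_j|}\, g_{k_{(i,j,r)}}(x_i-x_j)/\lambda_{k_{(i,j,r)}}$, which is exactly the claimed expression.

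There is no substantive obstacle here; the only points requiring care are the justification of joint Gaussianity (immediate from the Wiener integral representation of $V^{\varphi}$), the interchange of $\E$ with the absolutely convergent Fourier sums defining the covariance (legitimate since $\varphi^{(\cdot)} \in \Phi_0$ is finitely supported), and the correct bookkeeping of the index set $\{(A,l):l\leq\alpha_A\}$ when expanding the product-of-powers into a multi-indexed sum.
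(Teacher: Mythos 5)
Your proposal is correct and follows essentially the same route as the paper's own proof: apply Lemma~\ref{lem:Generalized_WickTheorem} to the centered Gaussian vector $\big(V^{\varphi^{(1)}}_{t_1}(x_1),\dots,V^{\varphi^{(m)}}_{t_m}(x_m)\big)$, insert the covariance formula~\eqref{eq:space_correlation}, and then expand each $\alpha_{(i,j)}$-th power of the (finitely supported, hence absolutely convergent) Fourier sum into the multi-indexed sum over $k\in(\Z^d)^{\{(A,l)\in\mathcal{P}_m\times\N\colon l\leq\alpha_A\}}$. No gaps; your bookkeeping of the index set matches the paper's.
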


\begin{proof}[Proof
of 
Corollary~\ref{cor:space_correlation}]
Combining
Lemma~\ref{lem:Generalized_WickTheorem}
and equation~\eqref{eq:space_correlation}
implies that
\begin{equation}
\begin{split}
&
  \E\bigg[
    \Big(
    \!
    : \! 
    \big( 
      V_{ t_1 }^{ \varphi^{ (1) } } 
    \big)^{ n_1 } 
    \!\! :
    \! \Big) 
    (x_1) 
    \cdot
    \Big(
    \!
    : \! 
    \big( 
      V_{ t_2 }^{ \varphi^{ (2) } } 
    \big)^{ n_2 } 
    \!\! :
    \! \Big) 
    (x_2) 
    \cdot
    \ldots
    \cdot
    \Big(
    \!
    : \! 
    \big( 
      V_{ t_m }^{ \varphi^{ (m) } } 
    \big)^{ n_m } 
    \!\! :
    \! \Big) 
    (x_m) 
  \bigg]
\\ & =
  \E\!\left[
    \left(
    : \! 
    \left( 
      V_{ t_1 }^{ \varphi^{ (1) } }( x_1 ) 
    \right)^{ \! n_1 } 
    \!\! :
    \right) 
    \cdot
    \left(
    : \! 
    \left( 
      V_{ t_2 }^{ \varphi^{ (2) } }( x_2 ) 
    \right)^{ \! n_2 } 
    \!\! :
    \right) 
    \cdot
    \ldots
    \cdot
    \left(
    : \! 
    \left( 
      V_{ t_{ n_m } 
      }^{ \varphi^{ ( n_m ) } }( x_{ n_m } ) 
    \right)^{ \! n_m } 
    \!\! :
    \right) 
  \right]
\\ & =
  \sum_{
    \substack{
      \alpha \in
      ( \N_0 )^{
        \mathcal{P}_m
      }
    \\
      \Theta( \alpha ) =n
    }
  }
  \frac{
    n !
  }{
    \alpha !
  }
  \left[
  \prod_{ 
    ( i, j ) \in \mathcal{P}_m 
  }  
    \left(
      \E\!\left[
        V^{ \varphi^{ (i) } }_{
          t_{ n_i }
        }\!( x_{ n_i } )
        \,
        V^{ \varphi^{ (j) } }_{
          t_{ n_j }
        }\!( x_{ n_j } )
      \right]
    \right)^{ 
      \alpha_{ ( i, j ) } 
    }
  \right]
\\ & =
  \sum_{
    \substack{
      \alpha \in
      ( \N_0 )^{
        \mathcal{P}_m
      }
    \\
      \Theta( \alpha ) =n
    }
  }
  \frac{
    n !
  }{
    \alpha !
  }
  \left(
    \prod_{ 
      ( i, j ) \in \mathcal{P}_m 
    }  
    \left[
      \sum_{
        k \in \Z^d
      }
      \frac{ 
        \varphi^{ (i) }_k
        \,
        \varphi^{ (j) }_k
        \,
        e^{ 
          - \lambda_k
          \left|
            t_i - t_j
          \right|
        }
        \,
        g_k( x_i - x_j )
      }{ 
        \lambda_k
      }
    \right]^{ 
      \! \alpha_{ ( i, j ) } 
    }
  \right)
\end{split}
\end{equation}
and therefore
\begin{equation}
\begin{split}
&
  \E\bigg[
    \Big(
    \!
    : \! 
    \big( 
      V_{ t_1 }^{ \varphi^{ (1) } } 
    \big)^{ n_1 } 
    \!\! :
    \! \Big) 
    (x_1) 
    \cdot
    \Big(
    \!
    : \! 
    \big( 
      V_{ t_2 }^{ \varphi^{ (2) } } 
    \big)^{ n_2 } 
    \!\! :
    \! \Big) 
    (x_2) 
    \cdot
    \ldots
    \cdot
    \Big(
    \!
    : \! 
    \big( 
      V_{ t_m }^{ \varphi^{ (m) } } 
    \big)^{ n_m } 
    \!\! :
    \! \Big) 
    (x_m) 
  \bigg]
\\ & =
  \sum_{
    \substack{
      \alpha \in
      ( \N_0 )^{
        \mathcal{P}_m
      }
    \\
      \Theta( \alpha ) =n
    }
  }
  \frac{
    n !
  }{
    \alpha !
  }
  \left(
    \prod_{ 
      ( i, j ) \in \mathcal{P}_m 
    }  
    \left[
  \sum_{
    \substack{
      k_1, k_2, 
      \dots, 
    \\
      k_{
        \alpha_{ (i, j) }
      }
      \in \Z^d
    }
  }
    \prod_{
      l = 1
    }^{
      \alpha_{ (i, j) }
    }
    \left\{
      \frac{ 
        \varphi^{ (i) }_{ k_l }
        \,
        \varphi^{ (j) }_{ k_l }
        \,
        e^{ 
          - \lambda_{ k_l }
          \left|
            t_i - t_j
          \right|
        }
        \,
        g_{ k_l }( x_i - x_j )
      }{ 
        \lambda_{ k_l }
      }
      \right\}
    \right]
  \right)
\\ & =
  \sum_{
    \substack{
      \alpha \in
      ( \N_0 )^{
        \mathcal{P}_m
      }
    \\
      \Theta( \alpha ) =n
    }
  }
  \tfrac{
    n !
  }{
    \alpha !
  }
  \!\!\!\!\!\!\!\!\!\!\!\!
    \sum_{
      k \in 
      ( \Z^d )^{
        \left\{ 
          \substack{
            ( A, l ) \in 
          \\
            \mathcal{P}_m \times \N \colon
          \\
            l \leq \alpha_{ A } 
          }
        \right\}
      }
    }
    \prod_{
      ( i, j, r )
      \in 
      \left\{ 
        \substack{
          ( A, l ) \in 
        \\
          \mathcal{P}_m \times \N \colon
        \\
          l \leq \alpha_{ A } 
        }
      \right\}
    }
      \tfrac{ 
        \varphi^{ (i) }_{ k_{ (i, j, r) } }
        \,
        \varphi^{ (j) }_{ k_{ (i, j, r) } }
        \,
        e^{ 
          - \lambda_{ k_{ (i, j, r) } }
          \left|
            t_i - t_j
          \right|
        }
        \,
        g_{ k_{ (i, j, r) } }( x_i - x_j )
      }{ 
        \lambda_{ k_{ ( i, j, r) } }
      }
\end{split}
\end{equation}
for all 
$ t_1, t_2, \dots, t_m \in \R $,
$ 
  x_1, x_2, \dots, x_m \in 
  [
    0, 2 \pi
  ]^d
$
and all
$ 
  \varphi^{ (1) },  
  \varphi^{ (2) },
  \dots,
  \varphi^{ (m) }
  \in \Phi_0 
$.
The proof of
Corollary~\ref{cor:space_correlation}
is thus completed.
\end{proof}

In the special
case $ m = 2 $,
Corollary~\ref{cor:space_correlation} 
reduces to
the following result.

\begin{cor}[Correlation of Wick powers
of 
$ V^{ \varphi } $, $ \varphi \in \Phi_0 $, 
in real space]
\label{cor:space_correlation2}
Assume the setting of 
Subsection~\ref{sec:setting}.
Then
\begin{equation}
\begin{split}
&
  \E\bigg[
    \Big(
    \!
    : \! 
    \big( 
      V_{ t_1 }^{ \varphi^{ (1) } } 
    \big)^{ n_1 } 
    \!\! :
    \! \Big) 
    (x_1) 
    \cdot
    \Big(
    \!
    : \! 
    \big( 
      V_{ t_2 }^{ \varphi^{ (2) } } 
    \big)^{ n_2 } 
    \!\! :
    \! \Big) 
    (x_2) 
  \bigg]
\\ & =
    n_1 !
    \,
    \delta_{ n_1, n_2 }
    \left[
      \sum_{
        k \in \Z^d
      }
      \frac{
        \varphi^{ (1) }_k \,
        \varphi^{ (2) }_k \,
        g_k( x_1 - x_2 ) \,
        e^{ 
          - \lambda_k 
          \left| t_1 - t_2 \right|
        }
      }{
        \lambda_k
      }
    \right]^{ 
      n_1
    }
\\ & =
\begin{cases}
    n_1 !
    \,
    \delta_{ n_1, n_2 }
    \!
    \left[
    \sum_{
      k_1, \dots, k_{ n_1 }
      \in 
      \Z^d 
    }
    \prod_{ r = 1 }^{
      n_1
    }
      \frac{ 
        \varphi^{ (1) }_{ k_r }
        \,
        \varphi^{ (2) }_{ k_r }
        \,
        e^{ 
          - \lambda_{ k_r }
          \left|
            t_1 - t_2
          \right|
        }
        \,
        g_{ k_r }( x_1 - x_2 )
      }{ 
        \lambda_{ k_r }
      }
    \right]
&
  \colon
  n_1 \cdot n_2
  \neq 0
\\
  \delta_{ n_1, n_2 }
&
  \colon
  n_1 \cdot n_2 = 0
\end{cases}
\end{split}
\end{equation}
for all 
$ 
  t_1, t_2
  \in \R $,
$ 
  x_1, x_2 
  \in
  [
    0, 2 \pi
  ]^d
$,
$ 
  \varphi^{ (1) },  
  \varphi^{ (2) }
  \in \Phi_0 
$
and all
$ 
  n_1, n_2  
  \in \N_0 
$.
\end{cor}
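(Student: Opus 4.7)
The plan is to derive Corollary~\ref{cor:space_correlation2} directly from Corollary~\ref{cor:space_correlation} by specializing to $ m = 2 $ and carefully inspecting how the index set $ \mathcal{P}_m $ collapses. First I would observe that when $ m = 2 $ the set $ \mathcal{P}_2 = \{ (1, 2) \} $ consists of a single pair, so every multi-index $ \alpha \in (\N_0)^{\mathcal{P}_2} $ is parameterized by the single non-negative integer $ \alpha_{(1,2)} $. The map $ \Theta $ then reduces to $ \Theta(\alpha) = (\alpha_{(1,2)}, \alpha_{(1,2)}) $, so the constraint $ \Theta(\alpha) = (n_1, n_2) $ admits a solution if and only if $ n_1 = n_2 $, in which case the unique solution is $ \alpha_{(1,2)} = n_1 $.

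Under the hypothesis $ n_1 \cdot n_2 \neq 0 $ with $ n_1 = n_2 $, the multinomial coefficient in Corollary~\ref{cor:space_correlation} simplifies to $ n!/\alpha! = (n_1! \, n_2!)/(n_1!) = n_1! $, and the product over $ \mathcal{P}_2 $ contains only the factor indexed by $ (1, 2) $. This immediately gives the first equality in the claim. The second equality then follows by expanding the $ n_1 $-th power of the sum over $ k \in \Z^d $ as an iterated sum over $ ( k_1, \dots, k_{n_1} ) \in ( \Z^d )^{ n_1 } $ and factoring the product, which is exactly the expanded form appearing in Corollary~\ref{cor:space_correlation}.

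It remains to address the degenerate case $ n_1 \cdot n_2 = 0 $, which is outside the scope of Corollary~\ref{cor:space_correlation} since that result excludes $ n = 0 $. Here I would argue directly from the definition \eqref{eq:wickGRV} of the Wick power: if both $ n_1 = n_2 = 0 $, then both factors equal $ 1 $ by the convention on $ H_0 $, giving expectation $ 1 = \delta_{0,0} $; if exactly one of $ n_1, n_2 $ vanishes, say $ n_1 = 0 $ and $ n_2 \geq 1 $, then the product reduces to $ \mathbb{E}[ : \! (V^{\varphi^{(2)}}_{t_2}(x_2))^{n_2} \!: ] $, which equals zero because $ H_{n_2} $ has mean zero under the standard Gaussian law for $ n_2 \geq 1 $, and this coincides with $ \delta_{n_1, n_2} = 0 $.

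No serious obstacle is expected; the proof is essentially bookkeeping in the collapse of the index set $ \mathcal{P}_m $ when $ m = 2 $, together with the separate verification in the trivial cases $ n_1 = 0 $ or $ n_2 = 0 $. The mildly subtle point is ensuring that the combinatorial factor $ n_1! $ in front of the sum is tracked correctly through the definition $ \alpha! = \prod_{(i,j)} \alpha_{(i,j)}! $ and $ n! = \prod_i n_i! $; once this is in place, the two stated equalities follow by inspection.
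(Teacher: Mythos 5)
Your proposal is correct and follows essentially the same route as the paper, which presents Corollary~\ref{cor:space_correlation2} precisely as the $m=2$ specialization of Corollary~\ref{cor:space_correlation} (with the collapse $\mathcal{P}_2=\{(1,2)\}$, $\Theta(\alpha)=(\alpha_{(1,2)},\alpha_{(1,2)})$, and $n!/\alpha!=n_1!$), leaving the bookkeeping to the reader exactly as you carry it out. One minor remark: the mixed case $n_1=0$, $n_2\geq 1$ is in fact still covered by Corollary~\ref{cor:space_correlation} (only $n=(0,0)$ is excluded), where the constraint $\Theta(\alpha)=(0,n_2)$ makes the sum empty and hence zero, so your separate Hermite-mean-zero argument, while valid, is only strictly needed for $n_1=n_2=0$.
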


Corollary~\ref{cor:space_correlation2}
investigates 
correlations of Wick powers
of 
$ V^{ \varphi } $, 
$ \varphi \in \Phi_0 $,
in real space.
The next lemma studies 
correlations of Wick powers
of 
$ V^{ \varphi } $, 
$ \varphi \in \Phi_0 $,
in Fourier space.
Its proof makes use of
Corollary~\ref{cor:space_correlation2}.

\begin{lem}[Correlation of
Wick powers of 
$ V^{ \varphi } $, $ \varphi \in \Phi_0 $, 
in
Fourier space]
\label{lem:correlation}
Assume the setting of 
Subsection~\ref{sec:setting}.
Then
\begin{equation}
\label{eq:lem_correlation_main}
\begin{split} 
& 
  \frac{ 1 }{
    \left( 2 \pi \right)^{ 2 d }
  }
  \,
  \E\Bigg[
    \overline{
    \left<
      g_{ k_1 }
      ,
      : \! 
      \big( 
        V_{ t_1 }^{ \varphi^{ (1) } }
      \big)^{ n_1 }
      \!\! :
    \right>_{ \! H }
    }
    \left<
      g_{ k_2 } 
      ,
      : \!
      \big( 
        V_{ t_2 
        }^{
          \varphi^{ (2) }
        }
      \big)^{ n_2 }
      \!\! :
    \right>_{ \! H }
  \Bigg]
\\ & =
\begin{cases} 
  n_1 !
  \,
  \delta_{ n_1, n_2 }
  \,
  \delta_{ k_1, k_2 }
  \!
  \left[
  \sum_{
    \substack{
      l_{ 1 },
      \dots,
      l_{ n_1 }
      \in
      \Z^d
    \\
      l_1 
      +
      \ldots
      +
      l_{ n_1 } =
      k_1
    }
  }
  \left\{
  \prod_{ i = 1 }^{ n_1 }
  \frac{
    \varphi^{ (1) }_{
      l_i
    }
    \,
    \varphi^{ (2) }_{
      l_i
    }
    \,
    e^{
      - \lambda_{ l_i }
      | t_1 - t_2 |
    }
  }{
    \lambda_{ l_i }
  }
  \right\}
  \right]
&
  \colon 
  n_1 \cdot n_2
  \neq 0
\\
  \delta_{ n_1, n_2 }
  \,
  \delta_{ k_1, k_2 }
  \,
  \delta_{ k_1, 0 }
&
  \colon 
  n_1 \cdot n_2 = 0
\end{cases}
\end{split}
\end{equation}
for all 
$
  t_1, t_2 
  \in \R
$,
$ 
  k_1, k_2 \in \mathbb{Z}^d 
$,
$ 
  \varphi^{ (1) }, \varphi^{(2)} \in \Phi_0 
$ 
and all
$ 
  n_1, n_2 
  \in \N_0  
$.
\end{lem}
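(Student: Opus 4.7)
The plan is to reduce the Fourier-space correlation to the real-space correlation already computed in Corollary~\ref{cor:space_correlation2} and then collapse the resulting double space integral using orthogonality of the Fourier basis $g_v$, $v \in \Z^d$.

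First I would unfold the definition of the $H$-inner product to rewrite the left-hand side of \eqref{eq:lem_correlation_main} as
\begin{equation*}
  \frac{1}{(2\pi)^{2d}}
  \int_{[0,2\pi]^d}\!\!\int_{[0,2\pi]^d}
   \overline{g_{k_1}(x_1)}\, g_{k_2}(x_2)\,
   \E\!\left[
     \Big({:}(V_{t_1}^{\varphi^{(1)}}(x_1))^{n_1}{:}\Big)
     \Big({:}(V_{t_2}^{\varphi^{(2)}}(x_2))^{n_2}{:}\Big)
   \right] dx_1\, dx_2,
\end{equation*}
where the exchange of integral and expectation is justified by continuity of the sample paths, compactness of $[0,2\pi]^d$, and the finite moments of Wick powers of jointly Gaussian variables. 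Corollary~\ref{cor:space_correlation2} then allows me to substitute the expectation by $n_1!\,\delta_{n_1,n_2}\bigl[\sum_{k\in\Z^d} \varphi^{(1)}_k\varphi^{(2)}_k g_k(x_1-x_2) e^{-\lambda_k|t_1-t_2|}/\lambda_k\bigr]^{n_1}$ when $n_1\cdot n_2\neq 0$, and by $\delta_{n_1,n_2}$ when $n_1\cdot n_2=0$.

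Next, in the case $n_1\cdot n_2\neq 0$, I would expand the $n_1$-th power as
\begin{equation*}
  \sum_{l_1,\dots,l_{n_1}\in\Z^d}
  \prod_{i=1}^{n_1}
  \frac{\varphi^{(1)}_{l_i}\varphi^{(2)}_{l_i} e^{-\lambda_{l_i}|t_1-t_2|}}{\lambda_{l_i}}
  \,g_{l_i}(x_1-x_2),
\end{equation*}
and use the multiplicativity $\prod_{i=1}^{n_1} g_{l_i}(x_1-x_2)=g_{l_1+\cdots+l_{n_1}}(x_1-x_2)=g_{l_1+\cdots+l_{n_1}}(x_1)\,\overline{g_{l_1+\cdots+l_{n_1}}(x_2)}$. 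This separates the $x_1$- and $x_2$-integrals, and orthogonality of $\{g_v\}_{v\in\Z^d}$ on $[0,2\pi]^d$ gives
\begin{equation*}
  \int_{[0,2\pi]^d}\overline{g_{k_1}(x_1)}\, g_{l_1+\cdots+l_{n_1}}(x_1)\,dx_1
  =(2\pi)^d\,\delta_{k_1,\,l_1+\cdots+l_{n_1}},
\end{equation*}
and analogously for the $x_2$-integral, producing the factor $(2\pi)^{2d}\,\delta_{k_1,k_2}\,\mathbf{1}_{\{l_1+\cdots+l_{n_1}=k_1\}}$. Dividing by $(2\pi)^{2d}$ and restricting the sum over $(l_1,\dots,l_{n_1})$ to tuples with $l_1+\cdots+l_{n_1}=k_1$ yields the claimed formula.

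Finally, the degenerate case $n_1\cdot n_2=0$ is handled directly: a zeroth Wick power equals $1$, so $\langle g_{k_i},1\rangle_H=(2\pi)^d\delta_{k_i,0}$, and if exactly one of $n_1,n_2$ vanishes the correlation is zero by centering of non-trivial Wick powers, matching the right-hand side of \eqref{eq:lem_correlation_main}. I anticipate no serious obstacle: the argument is essentially Fubini plus Fourier orthogonality, and the main care required is bookkeeping of the indices $l_1,\dots,l_{n_1}$ together with the symmetry of the constraint $l_1+\cdots+l_{n_1}=k_1=k_2$.
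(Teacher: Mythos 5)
Your proposal is correct and follows essentially the same route as the paper: both reduce to the real-space correlation of Corollary~\ref{cor:space_correlation2} and then collapse the double space integral by Fourier orthogonality to produce the factor $(2\pi)^{2d}\,\delta_{k_1,k_2}$ and the constraint $l_1+\cdots+l_{n_1}=k_1$. The only cosmetic difference is that the paper performs a change of variables $y=x_1-x_2$ before integrating, whereas you factor $g_{l_1+\cdots+l_{n_1}}(x_1-x_2)=g_{l_1+\cdots+l_{n_1}}(x_1)\,\overline{g_{l_1+\cdots+l_{n_1}}(x_2)}$ and integrate each variable directly; the degenerate case $n_1\cdot n_2=0$ is handled identically.
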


\begin{proof}[Proof
of Lemma~\ref{lem:correlation}]
First of all, observe that
\begin{equation}
\label{eq:lem_correlation}
\begin{split} 
& 
  \mathbb{E}\!
  \left[
    \overline{
    \left<
      g_{ k_1 } ,
      :  \!
      \big(
        V_{ t_1
        }^{
          \varphi^{ (1) }
        }
      \big)^{ n_1 }
      \!\! :
    \right>_{ \! H }
    }
    \left<
      g_{ k_2 }  
      ,
      : \!
      \big(
        V_{ t_2
        }^{
          \varphi^{ (2) }
        }
      \big)^{ n_2 }
      \!\! :
    \right>_{ \! H }
  \right]
\\ & = 
  \int_{
    ( 0, 2 \pi )^d
  }
  \int_{
    ( 0, 2 \pi )^d
  }
  \mathbb{E}\!
  \left[
    \overline{
    \left(
      : \!
      \big(
        V_{ t_1
        }^{
          \varphi^{ (1) }
        }
      \big)^{ n_1 
      }
    \!\! :
  \right)
  \!
  ( x_1 )
  }
  \cdot
  \left(
    : \!
    \big(
      V_{ t_2
      }^{
        \varphi^{ (2) 
        }
      }
    \big)^{
      n_2
    }
    \!\! :
  \right) \!
  (x_{2})
  \right]
  g_{ - k_1 }( x_1 )
  \,
  g_{ k_2 }( x_2 )
  \, dx_1 \, dx_2
\end{split}
\end{equation}
for all 
$
  t_1, t_2 
  \in \R
$,
$ 
  k_1, k_2 \in \mathbb{Z}^d 
$,
$ 
  \varphi^{ (1) }, \varphi^{(2)} \in \Phi_0 
$ 
and all
$ 
  n_1, n_2  
  \in \N_0 
$.
Equation~\eqref{eq:lem_correlation}
and
Corollary~\ref{cor:space_correlation2}
prove
\eqref{eq:lem_correlation_main}
in the case
$ 
  ( n_1, n_2 ) 
  \in ( \N_0 )^2 
  \backslash 
  \{ 
    (k, k) \in \N^2
    \colon
    k \in \N
  \}
$.
Furthermore,
equation~\eqref{eq:lem_correlation},
Corollary~\ref{cor:space_correlation2}
and the integral transformation
theorem imply that
\begin{equation}
\begin{split} 
& 
  \mathbb{E}\!
  \left[
    \overline{
    \left<
      g_{ k_1 } ,
      : \!  
      \big(
        V_{ t_1
        }^{
          \varphi^{ (1) }
        }
      \big)^{ n }
      \!\! :
    \right>_{ \! H }
    }
    \left<
      g_{ k_2 }  
      ,
      : \!
      \big(
        V_{ t_2
        }^{
          \varphi^{ (2) }
        }
      \big)^{ n }
      \!\! :
    \right>_{ \! H }
  \right]
\\ & = 
    n !
  \int_{
    ( 0, 2 \pi )^d
  }
  \int_{
    ( 0, 2 \pi )^d
  }
  \!\!
    \left[
    \sum_{  
      \substack{
        v_1, 
      \\
        \dots, 
      \\
        v_{ n }
      \\
        \in 
        \Z^d 
      }
    }
      \tfrac{ 
    \prod_{ r = 1 }^{
      n
    }
        \left[
        \varphi^{ (1) }_{ v_r }
        \,
        \varphi^{ (2) }_{ v_r }
        \,
        e^{ 
          - \lambda_{ v_r }
          \left|
            t_1 - t_2
          \right|
        }
        \,
        g_{ v_r }( x_1 - x_2 )
        \right]
      }{ 
        \lambda_{ v_1 }
        \cdot \ldots \cdot
        \lambda_{ v_n }
      }
    \right]
  g_{ - k_1 }( x_1 )
  \,
  g_{ k_2 }( x_2 )
  \, dx_1 \, dx_2
\\ & = 
      n !
  \int_{
    ( 0, 2 \pi )^d
  }
  \int_{
    ( 0, 2 \pi )^d - x_2
  }
   \left[
    \sum_{  
      \substack{
        v_1, \dots, 
      \\
        v_n
        \in 
        \Z^d 
      }
    }
      \tfrac{ 
    \prod_{ r = 1 }^{
      n
    }
        \left[
        \varphi^{ (1) }_{ v_r }
        \,
        \varphi^{ (2) }_{ v_r }
        \,
        e^{ 
          - \lambda_{ v_r }
          \left|
            t_1 - t_2
          \right|
        }
        \,
        g_{ v_r }( y )
        \right]
      }{ 
        \lambda_{ v_1 }
        \cdot \ldots \cdot
        \lambda_{ v_n }
      }
    \right]
  g_{ - k_1 }( y + x_2 )
  \,
  g_{ k_2 }( x_2 )
  \, dy \, dx_2
\\ & = 
  n !
  \int_{
    ( 0, 2 \pi )^d
  }
  \!\!
  \left[
    \sum_{  
      \substack{
        v_1, \dots, 
      \\
        v_n
        \in 
        \Z^d 
      }
    }
  \int_{
    ( 0, 2 \pi )^d - x_2
  }
      \tfrac{ 
    \prod_{ r = 1 }^{
      n
    }
        \left[
        \varphi^{ (1) }_{ v_r }
        \,
        \varphi^{ (2) }_{ v_r }
        \,
        e^{ 
          - \lambda_{ v_r }
          \left|
            t_1 - t_2
          \right|
        }
        \,
        g_{ v_r }( y )
        \right]
      }{ 
        \lambda_{ v_1 }
        \cdot \ldots \cdot
        \lambda_{ v_n }
      }
  \,
  g_{ - k_1 }( y )
  \,
  dy
  \right]
  \!
  g_{ ( k_2 - k_1 ) }( x_2 )
  \, dx_2
\\ & =
    \delta_{ k_1, k_2 }
    n !
    \left( 2 \pi \right)^d
  \left[
    \sum_{  
      \substack{
        v_1, \dots, 
        v_n
        \in 
        \Z^d 
      }
    }
  \int_{
    ( 0, 2 \pi )^d
  }
      \tfrac{ 
    \prod_{ r = 1 }^{
      n
    }
        \left[
          \varphi^{ (1) }_{ v_r }
          \,
          \varphi^{ (2) }_{ v_r }
          \,
        e^{ 
          - \lambda_{ v_r }
          \left|
            t_1 - t_2
          \right|
        }
        \,
        g_{ v_r }( y )
        \right]
      }{ 
        \lambda_{ v_1 }
        \cdot \ldots \cdot
        \lambda_{ v_n }
      }
  \,
  g_{ - k_1 }( y )
  \,
  dy
  \right]
\end{split}
\end{equation}
for all 
$
  t_1, t_2 
  \in \R
$,
$ 
  k_1, k_2 \in \mathbb{Z}^d 
$,
$ 
  \varphi^{ (1) }, \varphi^{(2)} \in \Phi_0 
$ 
and all
$ n \in \N $.
This shows that 
\begin{equation}
\begin{split} 
& 
  \mathbb{E}\!
  \left[
    \overline{
    \left<
      g_{ k_1 } ,
      : \! 
      \big(
        V_{ t_1
        }^{
          \varphi^{ (1) }
        }
      \big)^{ n }
      \!\! :
    \right>_{ \! H }
    }
    \left<
      g_{ k_2 }  
      ,
      : \!
      \big(
        V_{ t_2
        }^{
          \varphi^{ (2) }
        }
      \big)^{ n }
      \!\! :
    \right>_{ \! H }
  \right]
\\ & = 
    \delta_{ k_1, k_2 }
    n !
    \left( 2 \pi \right)^{
      d
    }
  \left[
    \sum_{  
      \substack{
        v_1, \dots, 
        v_n
        \in 
        \Z^d 
      }
    }
  \int_{
    ( 0, 2 \pi )^d
  }
      \tfrac{ 
    \prod_{ r = 1 }^{
      n
    }
        \left[
          \varphi^{ (1) }_{ v_r }
          \,
          \varphi^{ (2) }_{ v_r }
          \,
        e^{ 
          - \lambda_{ v_r }
          \left|
            t_1 - t_2
          \right|
        }
        \,
        g_{ v_r }( y )
        \right]
      }{ 
        \lambda_{ v_1 }
        \cdot \ldots \cdot
        \lambda_{ v_n }
      }
  \,
  g_{ - k_1 }( y )
  \,
  dy
  \right]
\\ & = 
    \delta_{ k_1, k_2 }
    n!
    \left( 2 \pi \right)^{
      d
    }
  \!\!\!
  \sum_{
    v_1, \dots, v_n \in \Z^d
  }
  \int_{
    ( 0, 2 \pi )^d 
  }
      \tfrac{
        \left[
          \prod_{ r = 1 }^n
          \varphi^{ (1) }_{ v_r } \,
          \varphi^{ (2) }_{ v_r } 
        \right]
        \,
        g_{ 
          ( 
            - k_1
            +
            \sum_{ r = 1 }^n 
            v_r
          ) 
        }( y ) \,
        e^{ 
          - 
          \left[
            \sum_{ r = 1 }^n
            \lambda_{ v_r }
          \right]
          \left| t_1 - t_2 \right|
        }
      }{
        \lambda_{ v_1 }
        \cdot
        \ldots
        \cdot
        \lambda_{ v_n }
      }
  \,
  dy
\\ & = 
    \delta_{ k_1, k_2 }
    n!
    \left( 2 \pi \right)^{
      2 d
    }
  \!\!\!
  \sum_{
    \substack{
      v_1, \dots, v_n \in \Z^d
    \\
      v_1 + \ldots + v_n = k_1
    }
  }
  \prod_{ i = 1 }^n
  \left[
      \frac{
        \varphi^{ (1) }_{ v_i } \,
        \varphi^{ (2) }_{ v_i } \,
        e^{ 
          - 
          \lambda_{ v_i }
          \left| t_1 - t_2 \right|
        }
      }{
        \lambda_{ v_i }
      }
  \right]
\end{split}
\end{equation}
for all 
$
  t_1, t_2 
  \in \R
$,
$ 
  k_1, k_2 \in \mathbb{Z}^d 
$,
$ 
  \varphi^{ (1) }, \varphi^{(2)} \in \Phi_0 
$ 
and all
$ n \in \N $.
The proof of Lemma~\ref{lem:correlation}
is thus completed.
\end{proof}

The next result,
Proposition~\ref{prop:regularity_Wick},
proves convergence of
Wick powers in the case
$
  (n, d)
  \in
  (
    \{ 2, 3, \dots \}
    \times \{ 2 \}
  )
  \cup
  \{
    ( 2, 3 )
  \}
$.
The proof of
Proposition~\ref{prop:regularity_Wick}
makes use of
Lemma~\ref{lem:correlation}.

\begin{prop}[Convergence of Wick
powers]
\label{prop:regularity_Wick}
Assume the setting of 
Subsection~\ref{sec:setting}
and let 
$
  (n, d)
  \in
  \left(
    \{ 2, 3, \dots \}
    \times \{ 2 \}
  \right)
  \cup
  \left\{
    ( 2, 3 )
  \right\}
$.
Then there exists
an up to 
indistinguishability unique 
stochastic process
$ 
  : \!\! ( V )^n \!\! : \;
  \colon
  \R \times \Omega
  \to 
  \cap_{
    \beta \in ( - \infty , 
      2 - d 
    )
  }
  \mathcal{C}^{
    \beta
  }_{
    \mathcal{P}
  }(
    [0, 2 \pi]^d,
    \R
  )
$
with continuous sample paths which
satisfies
for every $ T, p \in (0,\infty) $,
$ \alpha \in ( 0, \frac{ 4 - d }{ 4 } ) $,
$ \beta \in \R $
with 
$ 
  2 \alpha + \beta < 
  2 - d
$
that
\begin{equation}
  \left\|
    : \! 
    ( 
      V^{ \varphi } 
    )^n
    \! :
    -
    : \!
    ( 
      V
    )^n
    \! :
  \right\|_{
    L^p(
      \Omega;
      C^{ \alpha }
      (
        [ - T , T ],
        \mathcal{C}_{ \mathcal{P} 
        }^{ \beta  
        }(
          [0, 2 \pi]^d,
          \mathbb{R})
      )
    )
  }
  \rightarrow 0
\qquad
  \text{as}
\qquad
    \Phi_{ 0, \leq 1 } 
    \ni \varphi \rightarrow 1 
  .
\end{equation}
\end{prop}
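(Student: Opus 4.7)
The plan is to apply Lemma~\ref{lem:hypercontractivity} to the differences $X := \,:\!(V^\varphi)^n\!:\,-\,:\!(V^\psi)^n\!:$ for $\varphi, \psi \in \Phi_{0,\leq 1}$, show that the right-hand side of~\eqref{eq:hypercontractivity_main} tends to zero as $\varphi, \psi \to 1$ in the Fréchet metric on $\Phi$, and deduce that $\{:\!(V^\varphi)^n\!:\}_{\varphi \in \Phi_{0,\leq 1}}$ is Cauchy in the Banach space $L^p(\Omega; C^\alpha([-T,T], \mathcal{C}^\beta_{\mathcal{P}}([0,2\pi]^d, \R)))$ for each admissible $(T, p, \alpha, \beta)$. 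The limit defines $:\!(V)^n\!:\,$, and an a.s.-convergent subsequence argument in a fixed such space yields continuous sample paths into $\cap_{\beta < 2-d}\mathcal{C}^\beta_{\mathcal{P}}$ together with uniqueness up to indistinguishability.

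The first key step is to compute the Fourier-space correlations of $X$. Since the right-hand side of~\eqref{eq:lem_correlation_main} depends on the pair $(\varphi^{(1)}, \varphi^{(2)}) \in \Phi_0 \times \Phi_0$ only through the products $\prod_i \varphi^{(1)}_{l_i}\varphi^{(2)}_{l_i}$, bilinearity and a polarization identity yield
\begin{equation*}
  \frac{\E\!\left[\overline{\langle g_v, X_{t_1}\rangle_H}\,\langle g_w, X_{t_2}\rangle_H\right]}{n!\,(2\pi)^{2d}} = \delta_{v,w}\!\!\sum_{\substack{l_1,\dots,l_n\in\Z^d \\ l_1+\cdots+l_n=v}}\!\Big(\prod_{i=1}^n\varphi_{l_i} - \prod_{i=1}^n\psi_{l_i}\Big)^{\!2}\prod_{i=1}^n\frac{e^{-\lambda_{l_i}|t_1-t_2|}}{\lambda_{l_i}}
\end{equation*}
for $v, w \in \Z^d$ and $t_1, t_2 \in \R$. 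The increment correlation $\E[\overline{\langle g_v, X_{t_1}-X_{t_2}\rangle_H}\langle g_v, X_{t_1}-X_{t_2}\rangle_H]$, obtained by subtracting two such expressions, carries a factor $1-e^{-\sum_i\lambda_{l_i}|t_1-t_2|}$; I will bound this by $|t_1-t_2|^{2\hat\alpha}\sum_i\lambda_{l_i}^{2\hat\alpha}$ (using $1-e^{-x}\leq x^{2\hat\alpha}$ and subadditivity $(\sum a_i)^{2\hat\alpha}\leq\sum a_i^{2\hat\alpha}$ for $2\hat\alpha\in[0,1]$) and, by symmetry in the $l_i$, reduce the estimate to the $n$-fold discrete convolution $\sum_{l_1+\cdots+l_n=v}\lambda_{l_1}^{-(1-2\hat\alpha)}\prod_{i\geq 2}\lambda_{l_i}^{-1}$.

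The second step is to apply Corollary~\ref{cor:discrete_conv} iteratively to bound this convolution by $C\lambda_v^{-\gamma}$ for any $\gamma$ strictly less than $(4-d)/2 - 2\hat\alpha$: for $d=2$, induction on $n$ works because each convolution with $\lambda^{-1}$ preserves an exponent close to $1-2\hat\alpha$; for $(n,d)=(2,3)$, a single application suffices. With the weight $(\lambda_v)^{2\hat\beta}$ from Lemma~\ref{lem:hypercontractivity}, the outer sum in $v$ converges once $2\hat\beta + 2\hat\alpha < 2-d$. The hypotheses $\alpha < (4-d)/4$ and $2\alpha + \beta < 2-d$ then leave room to choose $\hat\alpha \in (\alpha, (4-d)/4)$ and $\hat\beta \in (\beta/2, (2-d)/2 - \hat\alpha)$ satisfying all constraints of Lemma~\ref{lem:hypercontractivity}.

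Finally, to pass to the limit: since $\varphi, \psi \leq 1$, we have the uniform bound $(\prod_i\varphi_{l_i}-\prod_i\psi_{l_i})^2 \leq 1$, so each summand is dominated by its $\varphi=\psi=1$ counterpart, which is summable by the above. As $\varphi, \psi \to 1$ in the Fréchet metric each factor $\varphi_l, \psi_l$ tends to $1$, so $(\prod_i\varphi_{l_i}-\prod_i\psi_{l_i})^2 \to 0$ pointwise, and dominated convergence forces the right-hand side of~\eqref{eq:hypercontractivity_main} to zero, yielding the Cauchy property. The main obstacle is the parameter bookkeeping in the iterated convolution estimate: each induction step must satisfy the hypothesis $\alpha + \beta > d/2$ of Corollary~\ref{cor:discrete_conv} while preserving an exponent close to $(4-d)/2 - 2\hat\alpha$, and the restriction $(n,d) \in (\{2,3,\dots\}\times\{2\})\cup\{(2,3)\}$ is precisely what prevents this threshold from being crossed (in particular it excludes $(n,d)=(3,3)$, where the second convolution would force the running exponent below $d/2 - 1 = 1/2$, which is, reassuringly, consistent with the non-existence statement in Lemma~\ref{lem:limitation}).
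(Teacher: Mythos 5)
Your proposal is correct and follows essentially the same route as the paper's proof: Fourier-space correlations of the difference via Lemma~\ref{lem:correlation} with the factor $\bigl(\prod_i \varphi_{l_i} - \prod_i \psi_{l_i}\bigr)^2$, the bounds $1 - e^{-x} \leq x^{2\hat{\alpha}}$ and $\bigl(\sum_i \lambda_{l_i}\bigr)^{2\hat{\alpha}} \leq \sum_i (\lambda_{l_i})^{2\hat{\alpha}}$ for the time increments, iterated use of Corollary~\ref{cor:discrete_conv} for the summability that encodes the restriction on $(n,d)$, dominated convergence as $(\varphi,\psi) \to (1,1)$, and Lemma~\ref{lem:hypercontractivity} to pass to the $L^p(\Omega; C^{\alpha}([-T,T], \mathcal{C}^{\beta}_{\mathcal{P}}))$-norm. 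The only deviations are cosmetic (a single two-time correlation formula instead of the paper's separate same-time and increment computations, and the domination $\leq 1$ instead of the telescoping bound $\leq n^2$), so no gap to report.
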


\begin{proof}[Proof
of 
Proposition~\ref{prop:regularity_Wick}]
We apply 
Lemma~\ref{lem:correlation}
four times to obtain that
\begin{equation}
\label{eq:L_inf_part_000}
\begin{split}
&
      \E\!\left[
        \overline{
          \left< 
            g_{ k_1 } ,
            : \!
            \big(
              V_t^{ 
                \varphi 
              }
            \big)^n
            \!\! :
            \,
            -
            \,
            : \! \big(
              V_t^{ \psi }
            \big)^n
            \!\! :
          \right>_{ \! H } 
        }
          \left< 
            g_{ k_2 } ,
            : \!
            \big(
              V_t^{ 
                \varphi
              }
            \big)^n
            \!\! :
            \,
            -
            \,
            : \! \big(
              V_t^{ 
                \psi
              }
            \big)^n
            \!\! :
          \right>_{ \! H } 
      \right]
\\ & = 
  n!
  \left( 2 \pi \right)^{ 2 d }
  \delta_{ k_1, k_2 }
  \left[
    \sum_{
      \substack{
        l_1, \ldots, l_n
        \in \Z^d
      \\
        l_1 + \ldots + l_n 
        = k_1
      }
    }
    \left\{
      \prod_{
        i = 1 
      }^n
      \frac{ 
        \left[
          \varphi_{ l_i }
        \right]^2
      }{
        \lambda_{ l_i }
      }
      - 
      2
      \prod_{
        i = 1 
      }^n
      \frac{ 
        \varphi_{ l_i }
        \psi_{ l_i }
      }{
        \lambda_{ l_i }
      }
      +
      \prod_{
        i = 1 
      }^n
      \frac{ 
        \left[
          \psi_{ l_i }
        \right]^2
      }{
        \lambda_{ l_i }
      }
    \right\}
  \right]
\\ & = 
  n!
  \left( 2 \pi \right)^{ 2 d }
  \delta_{ k_1, k_2 }
  \left[
    \sum_{
      \substack{
        l_1, \ldots, l_n
        \in \Z^d
      \\
        l_1 + \ldots + l_n 
        = k_1
      }
    }
    \frac{
      \left(
        \prod_{ i = 1 }^n
        \varphi_{ l_i } 
        -
        \prod_{ i = 1 }^n
        \psi_{ l_i } 
      \right)^2
    }{
      \left(
        \prod_{ i = 1 }^n
        \lambda_{ l_i } 
      \right)
    }
  \right]
\end{split}
\end{equation}
for all $ t \in \R $,
$ \varphi, \psi \in \Phi_0 $
and all $ k_1, k_2 \in \Z^d $.
Next observe that
\begin{equation}
\label{eq:phipsi_est}
\begin{split} 
& 
  \left|
  \prod_{ i = 1 }^n
  \varphi_{ l_i }
  -
  \prod_{ i = 1 }^n
  \psi_{ l_i }
  \right|
  =
  \left|
  \sum_{
    i = 1
  }^n
    \left[
      \prod_{ j = 1 }^{ i }
      \varphi_{ l_i }
    \right]
    \left[
      \prod_{ j = i + 1 }^n
      \psi_{ l_i }
    \right]
    -
    \left[
      \prod_{ j = 1 }^{ i - 1 }
      \varphi_{ l_i }
    \right]
    \left[
      \prod_{ j = i }^n
      \psi_{ l_i }
    \right]
  \right|
\\ & =
  \sum_{
    i = 1
  }^n
  \underbrace{
    \left[
      \prod_{ j = 1 }^{ i - 1 }
      \varphi_{ l_i }
    \right]
  }_{ \leq 1 }
  \underbrace{
    \left[
      \prod_{ j = i + 1 }^n
      \psi_{ l_i }
    \right]
  }_{ \leq 1 }
  \left|
    \varphi_{ l_i }
    -
    \psi_{ l_i }
  \right|
\leq
  \underbrace{
  \sum_{
    i = 1
  }^n
  \left|
    \varphi_{ l_i }
    -
    \psi_{ l_i }
  \right|
  }_{
    \leq n
  }
\end{split}
\end{equation}
for all $ t \in \R $,
$ l_1, \dots, l_n \in \Z^d $
and all
$ \varphi, \psi \in \Phi_{ 0, \leq 1 } $.
Combining \eqref{eq:L_inf_part_000}
and \eqref{eq:phipsi_est}
implies that
\begin{equation}
\label{eq:L_inf_part_00}
\begin{split} 
&
    \sup_{
      t
      \in
      \R
    }
    \sum_{
      k_1, k_2
      \in
      \Z^d
    }
  \tfrac{
    \left|
      \E\left[
        \overline{
          \left< 
            g_{ k_1 } ,
            : 
            \left(
              V_t^{ 
                \varphi 
              }
            \right)^n
            :
            \,
            -
            \,
            : \left(
              V_t^{ \psi }
            \right)^n
            :
          \right>_{ \! H } 
        }
          \left< 
            g_{ k_2 } ,
            :
            \left(
              V_t^{ 
                \varphi
              }
            \right)^n
            :
            \,
            -
            \,
            : \left(
              V_t^{ 
                \psi
              }
            \right)^n
            :
          \right>_{ \! H } 
      \right]
    \right|
  }{
    \left(
      \lambda_{ k_1 }
      \lambda_{ k_2 }
    \right)^{ - \beta }
  }
\\ & \leq
      n!
      \left( 2 \pi \right)^{ 2 d }
  \left[
    \sum_{
      k
      \in
      \Z^d
    }
    \sum_{
      \substack{
        l_1, \ldots, l_n
        \in \Z^d
      \\
        l_1 + \ldots + l_n 
        = k
      }
    }
    \frac{
      \left(
        \lambda_k
      \right)^{ 2 \beta }
      \left(
        \sum_{ i = 1 }^n
        \left| 
          \varphi_{ l_i } 
          -
          \psi_{ l_i }
        \right|
      \right)^2
    }{
      \left(
        \prod_{ i = 1 }^n
        \lambda_{ l_i } 
      \right)
    }
  \right]
\end{split}
\end{equation}
for all $ \beta \in \R $
and all
$ \varphi, \psi \in \Phi_{ 0, \leq 1 } $.
Morever,
combining the identity
\begin{equation}
\begin{split}
&
  \sum_{
    \substack{
      l_1 , \ldots, l_n \in \Z^d
    \\
      l_1 + \ldots + l_n = k
    }
  }
  \frac{
    1
  }{
    \left(
      \prod_{ i = 1 }^n
      \lambda_{ l_i }
    \right)
  }
=
  \sum_{
    l_1 \in \Z^d
  }
  \frac{ 1 }{ 
    \lambda_{ l_1 } 
  }
  \left[
  \sum_{
    l_2 \in \Z^d
  }
  \frac{ 1 }{ 
    \lambda_{ l_2 } 
  }
  \left[
  \dots
  \left[
  \sum_{
    l_{ n - 1 } \in \Z^d
  }
  \frac{
    1
  }{
    \lambda_{ l_{ n - 1 } }
    \cdot
    \lambda_{
      ( k - l_1 - \ldots - l_{ n - 1 } ) 
    }
  }
  \right]
  \right]
  \right]
\end{split}
\end{equation}
for all $ k \in \Z^d $
with 
Corollary~\ref{cor:discrete_conv}
and 
with the assumption
$
  (n, d)
  \in
  \left(
    \{ 2, 3, \dots \}
    \times \{ 2 \}
  \right)
  \cup
  \left\{
    (2, 3)
  \right\}
$
proves that
$
  \sup_{ k \in \Z^d }
  \left[
  \sum_{
    \substack{
      l_1 , \ldots, l_n \in \Z^d
    \\
      l_1 + \ldots + l_n = k
    }
  }
  \frac{
    \left(
      \lambda_k
    \right)^{ \beta }
  }{
    \left(
      \prod_{ i = 1 }^n
      \lambda_{ l_i }
    \right)
  }
  \right]
  < \infty
$
for all 
$ 
  \beta \in 
  ( - \infty , 2 - \frac{ d }{ 2 } ) 
$.
This implies that
\begin{equation}
\label{eq:summability_WP_2}
  \sum_{ k \in \Z^d }
  \sum_{
    \substack{
      l_1 , \ldots, l_n \in \Z^d
    \\
      l_1 + \ldots + l_n = k
    }
  }
  \frac{
    \left(
      \lambda_k
    \right)^{ 2 \beta }
  }{
    \left(
      \prod_{ i = 1 }^n
      \lambda_{ l_i }
    \right)
  }
  < \infty
\end{equation}
for all 
$ 
  \beta \in 
  ( - \infty , \frac{ 2 - d }{ 2 } ) 
$.
Dominated convergence and
\eqref{eq:L_inf_part_00}
therefore show 
for every
$ 
  \beta \in 
  ( - \infty , \frac{ 2 - d }{ 2 } ) 
$
that
\begin{equation}
\label{eq:WP_first_limit}
\begin{split} 
&
    \sup_{
      t
      \in
      \R
    }
    \sum_{
      k_1, k_2
      \in
      \Z^d
    }
  \tfrac{
    \left|
      \E\left[
        \overline{
          \left< 
            g_{ k_1 } ,
            : 
            \left(
              V_t^{ 
                \varphi 
              }
            \right)^n
            :
            \,
            -
            \,
            : \left(
              V_t^{ \psi }
            \right)^n
            :
          \right>_{ \! H } 
        }
          \left< 
            g_{ k_2 } ,
            :
            \left(
              V_t^{ 
                \varphi
              }
            \right)^n
            :
            \,
            -
            \,
            : \left(
              V_t^{ 
                \psi
              }
            \right)^n
            :
          \right>_{ \! H } 
      \right]
    \right|
  }{
    \left(
      \lambda_{ k_1 }
      \lambda_{ k_2 }
    \right)^{ - \beta }
  }
  \to 0
\quad  
  \text{as}
\quad
      ( \Phi_{ 0, \leq 1 } )^2 
      \ni
      (
        \varphi,
        \psi
      )
      \rightarrow (1,1)
      .
\end{split}
\end{equation}
Next observe that
Lemma~\ref{lem:correlation}
shows that
\begin{equation}
\label{eq:proof_lem_time_diff}
\begin{split} 
& 
          \E\!\left[
          \begin{split}
          &
            \left<
              g_{ - k_1 } 
            ,
              \left[
                : \! (
                  V_{ t_1 }^{
                    \varphi
                  }
                )^n
                \! :
                -
                : \! (
                  V_{ t_1 }^{ 
                    \psi
                  }
                )^n
                \! :
              \right]
              -
              \left[
                : \! (
                  V_{ t_2 }^{
                    \varphi
                  }
                )^n
                \! :
                -
                : \! (
                  V_{ t_2 }^{ 
                    \psi
                  }
                )^n
                \! :
              \right]
            \right>_{ \! H }
          \\ & \;
            \cdot
            \left<
              g_{ k_2 } 
              ,
              \left[
                : \! (
                  V_{ t_1 }^{
                    \varphi
                  }
                )^n
                \! :
                -
                : \! (
                  V_{ t_1 }^{ 
                    \psi
                  }
                )^n
                \! :
              \right]
              -
              \left[
                : \! (
                  V_{ t_2 }^{
                    \varphi
                  }
                )^n
                \! :
                -
                : \! (
                  V_{ t_2 }^{ 
                    \psi
                  }
                )^n
                \! :
              \right]
            \right>_{ \! H }
          \end{split}
          \right]
\\ & =
  n!
  \left( 2 \pi \right)^{ 2 d }
  \delta_{
    k_1, k_2
  }
  \!\!\!\!\!\!\!\!\!
    \sum_{
      \substack{
        l_1, \ldots, l_n
        \in \Z^d
      \\
        l_1 + \ldots + l_n 
        = k_1
      }
    }
  \!\!\!\!\!\!\!\!
    \tfrac{
      \left(
        2
        \prod_{ i = 1 }^n
        \left( \varphi_{ l_i } \right)^2
        -
        4
        \prod_{ i = 1 }^n
        \varphi_{ l_i } 
        \psi_{ l_i } 
        +
        2
        \prod_{ i = 1 }^n
        \left( \psi_{ l_i } \right)^2
      \right)
      \left(
        1 - 
        e^{ 
          - \sum_{ i = 1 }^n \lambda_{ l_i }
          | t_1 - t_2 | 
        }
      \right)
    }{
      \left(
        \prod_{ i = 1 }^n
        \lambda_{ l_i } 
      \right)
    }
\\ & =
  2 \,
  n!
  \left( 2 \pi \right)^{ 2 d }
  \delta_{
    k_1, k_2
  }
  \left[
    \sum_{
      \substack{
        l_1, \ldots, l_n
        \in \Z^d
      \\
        l_1 + \ldots + l_n 
        = k_1
      }
    }
    \frac{
      \left(
        \prod_{ i = 1 }^n
        \varphi_{ l_i } 
        -
        \prod_{ i = 1 }^n
        \psi_{ l_i } 
      \right)^2
      \left(
        1 - 
        e^{ 
          - \sum_{ i = 1 }^n \lambda_{ l_i }
          | t_1 - t_2 | 
        }
      \right)
    }{
      \left(
        \prod_{ i = 1 }^n
        \lambda_{ l_i } 
      \right)
    }
  \right]
\\ & \leq 
  n!
  \left( 2 \pi \right)^{ 4 d }
  \delta_{
    k_1, k_2
  }
  \left[
    \sum_{
      \substack{
        l_1, \ldots, l_n
        \in \Z^d
      \\
        l_1 + \ldots + l_n 
        = k_1
      }
    }
    \frac{
      \left(
        \sum_{ i = 1 }^n
        \left|
        \varphi_{ l_i } 
        -
        \psi_{ l_i } 
        \right|
      \right)^2
      \left(
        \sum_{ i = 1 }^n
        \lambda_{ l_i }
      \right)^{ 2 \alpha }
    }{
      \left(
        \prod_{ i = 1 }^n
        \lambda_{ l_i } 
      \right)
    }
  \right]
  \left|
    t_1 - t_2 
  \right|^{ 2 \alpha }
\\ & \leq
  n!
  \left( 2 \pi \right)^{ 4 d }
  \delta_{
    k_1, k_2
  }
  \left[
    \sum_{
      \substack{
        l_1, \ldots, l_n
        \in \Z^d
      \\
        l_1 + \ldots + l_n 
        = k_1
      }
    }
    \frac{
      \left(
        \sum_{ i = 1 }^n
        \left|
        \varphi_{ l_i } 
        -
        \psi_{ l_i } 
        \right|
      \right)^2
      \left(
        \sum_{ i = 1 }^n
        \left(
          \lambda_{ l_i }
        \right)^{ 2 \alpha }
      \right)
    }{
      \left(
        \prod_{ i = 1 }^n
        \lambda_{ l_i } 
      \right)
    }
  \right]
  \left|
    t_1 - t_2 
  \right|^{ 2 \alpha }
\end{split}
\end{equation}
for all $ t_1, t_2 \in \R $,
$ k_1, k_2 \in \Z^d $,
$ \alpha \in ( 0, \frac{ 1 }{ 2 } ] $
and all
$ \varphi, \psi \in \Phi_{ 0, \leq 1 } $
where we used 
$
  1 - e^{ - x }
  \leq 
  x^{ 2 \alpha }
$ 
for all 
$ 
  \alpha \in [0, \frac{ 1 }{ 2 } ]
$ 
and all
$ 
  x \in [0,\infty)
$
and 
$
  \prod_{ i = 1 }^n
  \varphi_{ l_i }
  -
  \prod_{ i = 1 }^n
  \psi_{ l_i }
  \leq
      \sum_{ i = 1 }^n
      \left|
        \varphi_{ l_i } 
        -
        \psi_{ l_i } 
      \right|
$
for all
$ l_1, \dots, l_n \in \Z^d $
and all
$ \varphi, \psi \in \Phi_{ 0, \leq 1 } $
(cf.\ \eqref{eq:phipsi_est})
in the last but one line of 
\eqref{eq:proof_lem_time_diff}
and where we used
$
  \left(
    \sum_{ i = 1 }^n
    \lambda_{ l_i }
  \right)^{
    \!
    2 \alpha
  }
  \leq
  \sum_{ i = 1 }^n
  \left(
    \lambda_{ l_i }
  \right)^{ 2 \alpha }
$
for all $ \alpha \in [0, \frac{1}{2} ] $
in the last line of 
\eqref{eq:proof_lem_time_diff}.
Moreover, 
Corollary~\ref{cor:discrete_conv}
proves that 
\begin{equation}
\label{eq:55}
  \sum_{
    k \in \Z^d
  }
  \left[
    \sum_{
      \substack{
        l_1, \ldots, l_n
        \in \Z^d
      \\
        l_1 + \ldots + l_n 
        = k_1
      }
    }
    \frac{
      \left(
        \lambda_{ l_n }
      \right)^{ 2 \alpha }
      \left(
        \lambda_k
      \right)^{ 2 \beta }
    }{
      \left(
        \prod_{ i = 1 }^n
        \lambda_{ l_i } 
      \right)
    }
  \right]
  < \infty
\end{equation}
for all 
$ \alpha \in ( 0, \frac{ 4 - d }{ 4 } ) $,
$ \beta \in \R $
with 
$ 
  \alpha + \beta < 
  \frac{ 
    2 - d
  }{
    2
  }
$.
Dominated convergence
and \eqref{eq:proof_lem_time_diff}
therefore show for every 
$ \alpha \in ( 0, \frac{ 4 - d }{ 4 } ) $,
$ \beta \in \R $
with 
$ 
  \alpha + \beta < 
  \frac{ 
    2 - d
  }{
    2
  }
$
that
\begin{equation}
\label{eq:lem_time_diff}
  \sup_{ 
    \substack{
      t_1, t_2 \in \R 
    \\
      t_1 \neq t_2
    }
  }
  \left[
    \!\!\!
    \begin{array}{c}
      \sum\limits_{ 
        k_1, k_2 
        \in \Z^d
      } 
      \frac{
        \tiny{
        \left|
          \E\!\left[
          \begin{split}
          &
            \left<
              g_{ - k_1 } 
            ,
              \left[
                : (
                  V_{ t_1 }^{
                    \varphi
                  }
                )^n
                \! :
                -
                : (
                  V_{ t_1 }^{ 
                    \psi
                  }
                )^n
                \! :
              \right]
              -
              \left[
                : (
                  V_{ t_2 }^{
                    \varphi
                  }
                )^n
                \! :
                -
                : (
                  V_{ t_2 }^{ 
                    \psi
                  }
                )^n
                \! :
              \right]
            \right>_{ \! H }
          \\ & \;
            \cdot
            \left<
              g_{ k_2 } 
              ,
              \left[
                : (
                  V_{ t_1 }^{
                    \varphi
                  }
                )^n
                \! :
                -
                : (
                  V_{ t_1 }^{ 
                    \psi
                  }
                )^n
                \! :
              \right]
              -
              \left[
                : (
                  V_{ t_2 }^{
                    \varphi
                  }
                )^n
                \! :
                -
                : (
                  V_{ t_2 }^{ 
                    \psi
                  }
                )^n
                \! :
              \right]
            \right>_{ \! H }
          \end{split}
          \right]
        \right|
        }
      }{ 
        \left( 
          \lambda_{ k_1 }
          \lambda_{ k_2 }
        \right)^{ - \beta }
        \left|
          t_1 - t_2
        \right|^{ 2 \alpha }
      }
    \end{array}
    \!\!\!
  \right]
  \to 0
\;\;
  \text{as}
\;\;
      ( \Phi_{ 0, \leq 1 } )^2 
      \ni
      (
        \varphi,
        \psi
      )
      \rightarrow (1,1) .
\end{equation}
Combining \eqref{eq:WP_first_limit}
and \eqref{eq:lem_time_diff}
with Lemma~\ref{lem:hypercontractivity}
completes the proof
of 
Proposition~\ref{prop:regularity_Wick}.
\end{proof}

The next proposition is 
well known
in the literature 
(see, for instance,
Da Prato \& 
Zabczyk~\cite{dz92}
for related results and 
references)
and its proof is
therefore omitted.

\begin{prop}[Ornstein-Uhlenbeck processes]
\label{prop:regularity_OE}
Assume the setting of 
Subsection~\ref{sec:setting}
and let 
$ d \in \N $.
Then there exists
an up to 
indistinguishability unique 
stochastic process
$ 
  V
  \colon
  \R \times \Omega
  \to 
  \cap_{
    \beta \in ( - \infty , 
      \frac{ 2 - d }{ 2 }
    )
  }
$
$
  \mathcal{C}^{
    \beta
  }_{
    \mathcal{P}
  }(
    [0, 2 \pi]^d,
    \R
  )
$
with continuous sample paths which
satisfies
for every $ T, p \in (0,\infty) $,
$ \alpha \in ( 0, \frac{ 1 }{ 2 } ) $,
$ \beta \in \R $
with 
$ 
  2 \alpha + \beta < 
  \frac{ 
    2 - d
  }{
    2
  }
$
that
\begin{equation}
  \left\|
      V^{ \varphi } 
    -
      V
  \right\|_{
    L^p(
      \Omega;
      C^{ \alpha }
      (
        [ - T , T ],
        \mathcal{C}_{ \mathcal{P} 
        }^{ \beta  
        }(
          [0, 2 \pi]^d,
          \mathbb{R})
      )
    )
  }
  \rightarrow 0
\qquad
  \text{as}
\qquad
    \Phi_{ 0, \leq 1 } 
    \ni \varphi \to 1 .
\end{equation}
\end{prop}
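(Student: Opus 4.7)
My plan is to mirror the argument for Proposition~\ref{prop:regularity_Wick}, specialized to the first Wiener chaos. Since $V^{\varphi}$ is trivially its own first Wick power, we have $V^{\varphi}_t(x) \in \mathcal{W}_1$ for every $\varphi \in \Phi_0$, $t \in \R$, and $x \in [0,2\pi]^d$, so Lemma~\ref{lem:hypercontractivity} with $n=1$ reduces the task of bounding
$\|V^{\varphi}-V^{\psi}\|_{L^p(\Omega; C^{\alpha}([-T,T], \mathcal{C}^{\beta}_{\mathcal{P}}))}$
to controlling two Fourier-space correlation sums. The key simplification over the higher-$n$ case is that both correlation sums turn out to be exactly diagonal in Fourier space.

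Indeed, equation~\eqref{eq:covariance} (equivalently Lemma~\ref{lem:correlation} with $n_1=n_2=1$) yields
\begin{equation*}
  \frac{1}{(2\pi)^{2d}} \, \E\!\left[
    \overline{\langle g_{k_1}, V^{\varphi}_t - V^{\psi}_t \rangle_{H}}
    \, \langle g_{k_2}, V^{\varphi}_t - V^{\psi}_t \rangle_{H}
  \right]
  = \delta_{k_1, k_2} \, \frac{(\varphi_{k_1}-\psi_{k_1})^2}{\lambda_{k_1}} ,
\end{equation*}
and the analogous formula for the time-increment differs only by a factor $2(1-e^{-\lambda_{k_1} |t_1-t_2|})$, which is bounded by $2(\lambda_{k_1})^{2\hat{\alpha}} |t_1-t_2|^{2\hat{\alpha}}$ for any $\hat{\alpha} \in [0, \tfrac{1}{2}]$ via $1-e^{-x} \le x^{2\hat{\alpha}}$. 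In the notation of Lemma~\ref{lem:hypercontractivity} (whose spatial index is $2\hat{\beta}_L$), the diagonality collapses the double Fourier sum to the single sum
\begin{equation*}
  \sum_{v \in \Z^d} (\varphi_v - \psi_v)^2 \, (\lambda_v)^{2\hat{\alpha} + 2\hat{\beta}_L - 1} .
\end{equation*}

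By Lemma~\ref{lem:sum_finiteness}, the dominating series $\sum_{v \in \Z^d} (\lambda_v)^{2\hat{\alpha} + 2\hat{\beta}_L - 1}$ is finite iff $2\hat{\alpha} + 2\hat{\beta}_L < \tfrac{2-d}{2}$, which in the proposition's spatial convention $\hat{\beta} = 2\hat{\beta}_L$ reads $2\hat{\alpha} + \hat{\beta} < \tfrac{2-d}{2}$. The strict hypothesis $2\alpha + \beta < \tfrac{2-d}{2}$ therefore leaves room to pick $\hat{\alpha} \in (\alpha, 1)$ and $\hat{\beta} \in (\beta, \infty)$ for which this summability holds; dominated convergence (using $|\varphi_v - \psi_v| \le 2$ pointwise and $\varphi_v, \psi_v \to 1$ on $\Z^d$) then forces this weighted sum to tend to zero as $(\Phi_{0,\le 1})^2 \ni (\varphi, \psi) \to (1, 1)$.

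It follows that $(V^{\varphi})_{\varphi \in \Phi_{0,\le 1}}$ is Cauchy in the complete space $L^p(\Omega; C^{\alpha}([-T,T], \mathcal{C}^{\beta}_{\mathcal{P}}))$, yielding a limit $V$ in this space. A countable exhaustion of $(T,p,\alpha,\beta)$, an almost-sure subsequential extraction, and the usual gluing of versions via continuity produce a single process $V$ with continuous sample paths in $\cap_{r < (2-d)/2} \mathcal{C}^{r}_{\mathcal{P}}([0,2\pi]^d, \R)$ that is unique up to indistinguishability. There is no genuine analytic obstacle here: the entire argument is the $n=1$ specialization of the proof of Proposition~\ref{prop:regularity_Wick}. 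The only step requiring care is the bookkeeping between the parameter convention $\mathcal{C}^{2\beta_L}$ of Lemma~\ref{lem:hypercontractivity} and the convention $\mathcal{C}^{\beta}$ of the proposition, which accounts for the factor of two discrepancy between the summability threshold here ($2\alpha+\beta<\tfrac{2-d}{2}$) and the one in Proposition~\ref{prop:regularity_Wick} ($2\alpha+\beta<2-d$).
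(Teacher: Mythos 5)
Your proposal is correct, but note that the paper itself gives no proof of Proposition~\ref{prop:regularity_OE}: it declares the statement well known and defers to the literature (Da Prato \& Zabczyk~\cite{dz92}). What you have written is the natural self-contained alternative, namely the exact $n=1$ specialization of the paper's own proof of Proposition~\ref{prop:regularity_Wick}: the covariance identity \eqref{eq:covariance} makes both correlation sums diagonal with entries $(\varphi_k-\psi_k)^2/\lambda_k$, the increment term carries the extra factor $2\bigl(1-e^{-\lambda_k|t_1-t_2|}\bigr)\leq 2\,(\lambda_k)^{2\hat\alpha}|t_1-t_2|^{2\hat\alpha}$ for $\hat\alpha\in[0,\tfrac12]$, Lemma~\ref{lem:sum_finiteness} gives summability of the dominating series precisely when $2\hat\alpha+2\hat\beta_L<\tfrac{2-d}{2}$ (your bookkeeping between the lemma's $\mathcal{C}^{2\beta}$ convention and the proposition's $\mathcal{C}^{\beta}$ convention is right and explains the halved threshold relative to Proposition~\ref{prop:regularity_Wick}), dominated convergence kills the weighted sum as $(\varphi,\psi)\to(1,1)$, and Lemma~\ref{lem:hypercontractivity} (valid for all $p\in(0,\infty)$) converts this into the Cauchy property in $L^p(\Omega;C^{\alpha}([-T,T],\mathcal{C}^{\beta}_{\mathcal{P}}))$, after which completeness, a countable exhaustion in $(T,p,\alpha,\beta)$ and the usual gluing give the limit process and uniqueness up to indistinguishability. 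The only point to make explicit is that your choice of $\hat\alpha$ must also satisfy $\hat\alpha\leq\tfrac12$ so that the bound $1-e^{-x}\leq x^{2\hat\alpha}$ applies; this is always possible because the hypothesis imposes $\alpha<\tfrac12$ strictly, and with that remark the argument is complete. Compared with the paper's citation, your route buys a uniform treatment of $V$ and its Wick powers by the same hypercontractivity-plus-Fourier-correlation machinery, at the cost of no new ideas beyond Proposition~\ref{prop:regularity_Wick}.
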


Proposition~\ref{prop:regularity_Wick}
shows convergence of Wick
powers in the case
$
  (n, d)
  \in
  (
    \{ 2, 3, \dots \}
    \times \{ 2 \}
  )
  \cup
  \{
    ( 2, 3 )
  \}
$.
In the case
$
  (n, d)
  \in
  (
    \{
      3, 4, \dots
    \}
    \times \{ 3 \}
  )
  \cup
  (
    \{ 2, 3, \dots \}
    \times \{ 4, 5, \dots \}
  )
$,
Wick powers do not
converge anymore.
This is the subject of 
the next lemma.
In the case $ d = n = 3 $,
a statement similar to the next lemma
has been formulated in Section~7
in Da Prato \& Tubaro~\cite{DaPratoTubaro2007}.

\begin{lem}[Divergence of
Wick powers]
\label{lem:limitation}
Assume the setting of 
Subsection~\ref{sec:setting},
let 
$ 
  d \in \{ 3, 4, \dots \} 
$, 
$ n \in \{ 2, 3, \dots \}
$
be natural numbers with
$ d + n \geq 6 $
and let
$ 
  C_0, C_1, \dots, C_{ n - 1 } \colon
  \Phi_0
  \to \R
$
be arbitrary functions.
Then it holds for every
$ v \in \Z^d $ and every
$ t \in \R $ that
\begin{equation}
  \E\!
  \left[
    \left|
      \left<
        g_v ,
        \left( 
          V^{ \varphi }_t 
        \right)^n
        -
        \sum_{ k = 0 }^{ n - 1 }
          C_k( \varphi ) 
          \cdot
          \left( 
            V^{ \varphi }_t 
          \right)^k
      \right>_{ \! \! H }
    \right|^2
  \right] \to \infty
  \qquad
  \text{as}
  \qquad
      \Phi_0 \ni \varphi
      \rightarrow 1 .
\end{equation}
\end{lem}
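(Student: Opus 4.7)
The plan is to exploit Wiener chaos orthogonality to reduce the divergence claim to a deterministic statement about an iterated discrete convolution, and then to establish that statement by combining the two-sided bounds of Lemma~\ref{lem:twosided_dc} with the elementary growth rates of Lemmas~\ref{lem:sum_finiteness} and~\ref{lem:smallvalues}. First, by the pointwise Hermite expansion, $(V_t^\varphi(x))^n$ differs from $:\!(V_t^\varphi(x))^n\!:$ by a polynomial of degree at most $n-1$ in $V_t^\varphi(x)$; hence the random variable $\langle g_v, \sum_{k=0}^{n-1} C_k(\varphi)(V_t^\varphi)^k\rangle_H$ has real and imaginary parts lying in $\mathcal{W}_{n-1}$, whereas $\langle g_v, :\!(V_t^\varphi)^n\!:\rangle_H$ has real and imaginary parts in the pure $n$-th Wiener chaos $\mathcal{H}_n$, which is orthogonal to $\mathcal{W}_{n-1}$. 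Applying Pythagoras' theorem separately to real and imaginary parts yields
\begin{equation*}
  \E\Bigl[\Bigl|\bigl\langle g_v, (V_t^\varphi)^n - {\textstyle\sum_{k=0}^{n-1}} C_k(\varphi)(V_t^\varphi)^k \bigr\rangle_H\Bigr|^2\Bigr]
  \;\geq\; \E\bigl[|\langle g_v, :\!(V_t^\varphi)^n\!:\rangle_H|^2\bigr],
\end{equation*}
which removes the dependence on the chosen functions $C_0,\ldots,C_{n-1}$.

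Next I would apply Lemma~\ref{lem:correlation} with $k_1 = k_2 = v$, $t_1 = t_2 = t$, $\varphi^{(1)} = \varphi^{(2)} = \varphi$, and $n_1 = n_2 = n$ to obtain the explicit expression $(2\pi)^{2d}\,n! \sum_{l_1 + \cdots + l_n = v} \prod_{i=1}^n (\varphi_{l_i})^2/\lambda_{l_i}$ for the right-hand side. Since $\varphi \to 1$ in the Fr\'echet metric on $\Phi_0$ entails $\varphi_k \to 1$ for every fixed $k \in \Z^d$, Fatou's lemma gives
\begin{equation*}
  \liminf_{\Phi_0 \ni \varphi \to 1}\; \E\bigl[|\langle g_v, :\!(V_t^\varphi)^n\!:\rangle_H|^2\bigr]
  \;\geq\; (2\pi)^{2d}\,n!\,S_n(v),\qquad
  S_n(v) := \!\!\!\sum_{l_1 + \cdots + l_n = v}\!\!\! \prod_{i=1}^n \frac{1}{\lambda_{l_i}},
\end{equation*}
and the proof reduces to the deterministic claim that $S_n(v) = +\infty$ under the hypotheses $d \geq 3$, $n \geq 2$, and $d + n \geq 6$.

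I would prove the deterministic claim by induction on $n$ starting from two base cases. If $d \geq 4$ and $n = 2$, then $S_2(v) = \sum_{l \in \Z^d} 1/(\lambda_l \lambda_{v-l}) = +\infty$ directly from Lemma~\ref{lem:discrete_conv_finite}, since $1 + 1 = 2 \leq d/2$. If $d = 3$, the hypothesis forces $n \geq 3$, and the base case is $n = 3$: writing $S_3(v) = \sum_m S_2(v-m)/\lambda_m$, the lower bound \eqref{eq:twosided_second} in Lemma~\ref{lem:twosided_dc} combined with the three-dimensional asymptotic $\sum_{\|k\|_{\R^d} \leq R} 1/\lambda_k \asymp R$ yields $S_2(w) \gtrsim \lambda_w^{-1/2}$ for $\|w\|_{\R^d}$ large. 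Restricting the outer sum to $\|m\|_{\R^d} > 2\|v\|_{\R^d}$, on which range $\lambda_{v-m} \asymp \lambda_m$, produces $S_3(v) \gtrsim \sum_{\|m\|_{\R^d} > 2\|v\|_{\R^d}} \lambda_m^{-3/2}$, which diverges by Lemma~\ref{lem:sum_finiteness} because $3/2 = d/2$ is the critical exponent in dimension three. The induction step is trivial: from $S_n(v) = \sum_l S_{n-1}(v-l)/\lambda_l$ and $1/\lambda_l > 0$, the identity $S_{n-1} \equiv +\infty$ immediately propagates to $S_n$.

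The main obstacle is the borderline case $d = 3$, $n = 3$: the divergence is merely logarithmic in the integral picture, so a crude power-count fails and one must use the sharp two-sided convolution estimates of Lemma~\ref{lem:twosided_dc} together with the precise small-ball growth rate of Lemma~\ref{lem:smallvalues} to hit the critical exponent exactly. The regime $d \geq 4$, by contrast, is a one-line consequence of Lemma~\ref{lem:discrete_conv_finite}.
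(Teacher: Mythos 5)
Your proposal is correct and follows essentially the same route as the paper: both reduce the claim, via orthogonality of Wiener chaoses (the paper does this through the $\hat{C}_k$--Hermite re-expansion and the $\delta_{n_1,n_2}$ factor in Lemma~\ref{lem:correlation}), to the lower bound $\E\big[|\langle g_v, :\!(V_t^{\varphi})^n\!:\rangle_H|^2\big] = n!\,(2\pi)^{2d}\sum_{l_1+\cdots+l_n=v}\prod_i (\varphi_{l_i})^2/\lambda_{l_i}$, and then show the limiting convolution sum diverges using the discrete-convolution lemmas. Your treatment is if anything slightly more careful in two spots the paper leaves implicit (the Fatou passage as $\Phi_0\ni\varphi\to 1$, and the organization of the critical case $d=3$, $n=3$ as an induction base with the bound $S_2(w)\gtrsim(\lambda_w)^{-1/2}$, where the paper instead writes down a direct two-variable estimate), but these are presentational variants of the same argument.
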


\begin{proof}[Proof of Lemma~\ref{lem:limitation}]
Throughout this proof
let 
$ 
  \hat{C}_0, \hat{C}_1, \dots
  \hat{C}_n \colon \Phi_0
  \to \R
$
be the unique functions satisfying
$
  \hat{C}_0( 0 )
  =
  - C_0( 0 )
$,
$
  \hat{C}_1( 0 )
  =
  - C_1( 0 )
$,
$ \dots $,
$
  \hat{C}_{ n - 1 }( 0 )
  =
  - C_{ n - 1 }( 0 )
$,
$
  \hat{C}_n( 0 ) 
  = 1
$
and
\begin{equation}
\label{eq:lem_limitation_firstdispl}
\begin{split}
&
  x^n
  -
  \sum_{ k = 0 }^{ n - 1 }
  C_k( \varphi )
  \cdot
  x^k
=
  \sum_{ k = 0 }^n
  \hat{C}_k( \varphi )
  \cdot
  \left[
      \sum_{ v \in \Z^d }
      \tfrac{ ( \varphi_v )^2 }{
        \lambda_v
      }
  \right]^{
    \! \frac{ k }{ 2 }
  }
  \cdot
  H_k\!\left(
  \frac{
    x
  }{
    \sqrt{
      \sum_{ v \in \Z^d }
      \frac{ ( \varphi_v )^2 }{
        \lambda_v
      }
    }
  }
  \right)
\end{split}
\end{equation}
for all 
$ x \in \R $,
$ 
  \varphi =
  ( \varphi_v )_{ v \in \Z^d }
  \in \Phi_0 \backslash \{ 0 \} 
$
and all $ t \in \R $.
This ensures that
$ \hat{C}_n( \varphi ) = 1 $
and
\begin{equation}
  \left( 
    V^{ \varphi }_t 
  \right)^n
  -
  \sum_{ k = 0 }^{ n - 1 }
  C_k( \varphi )
  \cdot
  \left(
    V^{ \varphi }_t
  \right)^k
  =
  \sum_{ k = 0 }^n
  \hat{C}_k( \varphi )
  \cdot
  \left(
  : \!
    \left(
      V_t^{ \varphi }
    \right)^k
  \!\! :
  \right)
\end{equation}
for all $ \varphi \in \Phi_0 $
and all $ t \in \R $.
Lemma~\ref{lem:correlation}
hence implies that
\begin{equation}
\label{eq:limitation_main_estimate}
\begin{split}
&
  \E\!\left[
    \left|
      \left<
        g_v ,
        \left( V_t^{ \varphi }
        \right)^n
        -
        \sum_{ k = 0 }^{ n - 1 }
        C_k( \varphi ) \cdot
        ( V_t^{ \varphi })^k
      \right>_{ \!\! H }
    \right|^2
  \right]
  =
  \E\!\left[
    \left|
      \sum_{ k = 0 }^n
      \left<
        g_v ,
        \hat{C}_k( \varphi ) 
        \left(
          : \!
          ( V_t^{ \varphi }
          )^k
          \! :
        \right)
      \right>_{ \! H }
    \right|^2
  \right]
\\ & =
  \sum_{ k, l = 0 }^n
  \hat{C}_k( \varphi ) 
  \cdot
  \hat{C}_l( \varphi ) 
  \cdot
  \E\!\left[
    \overline{
      \left<
        g_v ,
        : \!
          ( V_t^{ \varphi }
          )^k
        \! :
      \right>_{ \! H }
    }
    \left<
      g_v ,
      : \!
        ( V_t^{ \varphi }
        )^l
      \! :
    \right>_{ \! H }
  \right]
\\ & =
  \sum_{ k = 0 }^n
  \left|
    \hat{C}_k( \varphi ) 
  \right|^2
  \E\!\left[
    \left|
      \left<
        g_v ,
          : \!
          ( V_t^{ \varphi }
          )^k
          \! :
      \right>_{ H }
    \right|^2
  \right]
\geq
  \left|
    \hat{C}_n( \varphi ) 
  \right|^2
  \E\!\left[
    \left|
      \left<
        g_v ,
          : \!
          ( V_t^{ \varphi }
          )^n
          \! :
      \right>_{ H }
    \right|^2
  \right]
\\ & =
  n!
  \left( 2 \pi \right)^{ 2 d }
  \left[
  \sum_{
    \substack{
      l_{ 1 },
      \dots,
      l_{ n }
      \in
      \Z^d
    \\
      l_1 
      +
      \ldots
      +
      l_n =
      v
    }
  }
  \left\{
  \prod_{ i = 1 }^n
  \frac{
    \left(
      \varphi_{
        l_i
      }
    \right)^2
  }{
    \lambda_{ l_i }
  }
  \right\}
  \right]
\geq
  \sum_{
    \substack{
      l_{ 1 },
      \dots,
      l_{ n }
      \in
      \Z^d
    \\
      l_1 
      +
      \ldots
      +
      l_n =
      v
    }
  }
  \frac{
    \left(
      \varphi_{
        l_1
      }
    \right)^2
    \cdot
    \ldots
    \cdot
    \left(
      \varphi_{
        l_n
      }
    \right)^2
  }{
    \lambda_{ l_1 }
    \cdot
    \ldots
    \cdot
    \lambda_{ l_n }
  }
\end{split}
\end{equation}
for all $ v \in \Z^d $
and all $ \varphi \in \Phi_0 $.
Next note that the estimate
\begin{equation}
  \sum_{ 
    l_1, l_2 \in \Z^3 
  }
  \frac{ 1 }{
    \lambda_{ l_1 }
    \lambda_{ l_2 }
    \lambda_{ ( v - l_1 - l_2 ) }
  }
  \geq
  \sum_{ 
    l_1, l_2 \in \Z^3 
  }
  \tfrac{ 1 }{
    3
    \left(
      1 + \left\| l_1 \right\|_{ \R^3 }^2
    \right)
    \left(
      1 + \left\| l_2 \right\|_{ \R^3 }^2
    \right)
    \left(
      1 
      + 
      \left\| l_1 \right\|_{ \R^3 }^2
      + 
      \left\| l_2 \right\|_{ \R^3 }^2
      +
      \left\| v \right\|_{ \R^3 }^2
    \right)
  }
  = \infty
\end{equation}
for all $ v \in \Z^d $
together with
the assumptions
$ d \geq 3 $,
$ n \geq 2 $
and 
$
  d + n \geq 6
$
and 
Lemma~\ref{lem:discrete_conv_finite}
implies that
\begin{equation}
  \sum_{
    \substack{
      l_{ 1 },
      \dots,
      l_{ n }
      \in
      \Z^d
    \\
      l_1 
      +
      \ldots
      +
      l_n =
      v
    }
  }
  \frac{
    1
  }{
    \lambda_{ l_1 }
    \cdot
    \ldots
    \cdot
    \lambda_{ l_n }
  }
  =
  \sum_{
    l_{ 1 },
    \dots,
    l_{ n - 1 }
    \in
    \Z^d
  }
  \frac{
    1
  }{
    \lambda_{ l_1 }
    \cdot
    \ldots
    \cdot
    \lambda_{ l_{ n - 1 } }
    \cdot
    \lambda_{ 
      ( v - l_1 - \ldots - l_{ n - 1 } ) 
    }
  }
  = \infty
\end{equation}
for all $ v \in \Z^d $.
Combining this with
\eqref{eq:limitation_main_estimate}
completes the proof of
Lemma~\ref{lem:limitation}.
\end{proof}

\subsection{Averaged Wick powers
of Ornstein-Uhlenbeck processes}

In the previous subsection 
it has been proved
in the case $ d = 3 $ 
that
for every $ t \in \R $
the family
$ 
  : \! 
    ( V^{ \varphi }_t )^3
  \! :
$,
$ 
  \varphi \in \Phi_{ 0, \leq 1 }
$,
\emph{does not converge} as
$
  \Phi_{ 0, \leq 1 }
  \ni
  \varphi 
  \to 1
  \in \Phi_{ 0, \leq 1 }
$
(see Lemma~\ref{lem:limitation}).
In this subsection
we prove 
in the case $ d = 3 $
that for every 
$ 
  (t_0, t) \in 
  \{ (s_0, s) \in \R^2 \colon
  s_0 \leq s \} 
$
the family
$ 
  \circ
    ( V^{ \varphi }_{ t_0, t } )^3
  \circ
  =
  \int_{ t_0 }^t
  : \!
    ( V^{ \varphi }_s )^3
  \! :
  ds
$,
$ 
  \varphi \in \Phi_{ 0, \leq 1 }
$,
\emph{does converge} as
$
  \Phi_{ 0, \leq 1 }
  \ni
  \varphi 
  \to 1
  \in \Phi_{ 0, \leq 1 }
$
(see 
Proposition~\ref{prop:regularity_WickAver}).

\begin{lem}[Correlation of
averaged Wick powers of 
$ V^{ \varphi } $, $ \varphi \in \Phi_0 $, 
in
Fourier space]
\label{lem:correlationAver}
Assume the setting of 
Subsection~\ref{sec:setting}.
Then
\begin{equation}
\begin{split} 
& 
  \frac{
    1
  }{
    \left( 2 \pi \right)^{ 2 d }
  }
  \,
  \E\Bigg[
    \overline{
    \left<
      g_{ k_1 }
      ,
      \circ
      \big( 
        V_{ t_0, t_1 }^{ \varphi^{ (1) } }
      \big)^{n_1}
      \circ 
    \right>_{ \! H }
    }
    \left<
      g_{ k_2 } 
      ,
      \circ
      \big( 
        V_{ t_0, t_2 
        }^{
          \varphi^{ (2) }
        }
      \big)^{n_2}
      \circ
    \right>_{ \! H }
  \Bigg]
\\ & = 
\begin{cases}
  n_1 !
  \, 
  \delta_{ n_1, n_2 }
  \,
  \delta_{ k_1, k_2 }
  \!\!\!\!\!\!\!
  \sum\limits_{
    \substack{
      l_{ 1 },
      \dots,
      l_{ n_1 }
      \in
      \Z^d
    \\
      l_1 
      +
      \ldots
      +
      l_{ n_1 } =
      k_1
    }
  }
  \!\!\!\!
  \tfrac{
      \prod_{ i = 1 }^{ n_1 }
      \varphi^{ (1) }_{
        l_i
      }
      \,
      \varphi^{ (2) }_{
        l_i
      }
  }{
    \prod_{ i = 1 }^{ n_1 }
    \lambda_{ l_i }
  }
  \int_{ t_0 }^{ t_1 }
  \int_{ t_0 }^{ t_2 }
    e^{ 
      - 
      \left(
        \sum_{ i = 1 }^n
        \lambda_{ l_i } 
      \right)
      \left| s_1 - s_2 \right|
    }
  \, 
  ds_2 \, ds_1
&
\colon
n_1 n_2 \neq 0
\\
  \delta_{ n_1, n_2 }
  \,
  \delta_{ k_1, k_2 }
&
\colon
n_1 n_2 = 0
\end{cases}
\end{split}
\end{equation}
for all 
$ 
  k_1, k_2 \in \mathbb{Z}^d 
$,
$ 
  \varphi^{ (1) }, \varphi^{(2)} \in \Phi_0 
$,
$ n_1, n_2 \in \N_0 $
and all
$
  t_0, t_1, t_2 
  \in \R
$
with
$
  t_0 \leq \min( t_1 , t_2 )
$.
\end{lem}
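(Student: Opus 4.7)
\medskip

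\noindent\textbf{Proof plan.}
The strategy is to reduce this correlation formula for averaged Wick powers to the already-established correlation formula for ordinary Wick powers (Lemma~\ref{lem:correlation}) by interchanging integration in time with the inner product against $g_k$ and with expectation. First, using the definition
\[
  \circ\,(V^{\varphi}_{t_0,t})^n\circ
  \;=\;
  \int_{t_0}^{t}
    : \! ( V^{ \varphi }_s )^n \! :
  \,ds
\]
and the continuity of $v \mapsto \langle g_k, v\rangle_H$ on $C_{\mathcal{P}}([0,2\pi]^d,\R)$, together with the continuity of sample paths of $\,:\!(V^\varphi_s)^n\!:$, I would conclude that
\[
  \left<
    g_k,
    \circ\,(V^{\varphi}_{t_0,t})^n\circ
  \right>_{\!H}
  \;=\;
  \int_{t_0}^{t}
  \left<
    g_k,
    : \! ( V^{ \varphi }_s )^n \! :
  \right>_{\!H}
  ds
\]
for every $k\in\Z^d$ and every $t_0\leq t$.

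\medskip

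\noindent
Next, I would use this representation to rewrite the left-hand side of the claimed identity as
\[
  \E\!\left[
    \int_{t_0}^{t_1}\!\!
    \int_{t_0}^{t_2}
    \overline{
      \left<
        g_{k_1},
        : \! (V_{s_1}^{\varphi^{(1)}})^{n_1} \! :
      \right>_{\!H}
    }
    \left<
      g_{k_2},
      : \! (V_{s_2}^{\varphi^{(2)}})^{n_2} \! :
    \right>_{\!H}
    ds_2\,ds_1
  \right].
\]
Since the inner products are random variables in the fixed Wiener chaos $\mathcal{W}_{n_1}$ and $\mathcal{W}_{n_2}$ respectively and since $(s_1,s_2)\mapsto \E[\,\overline{\,\cdot\,}\,\cdot\,]$ is bounded by hypercontractivity combined with \eqref{eq:covariance} on the compact rectangle $[t_0,t_1]\times[t_0,t_2]$, Fubini's theorem allows me to pull the expectation inside the double time integral.

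\medskip

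\noindent
The final step is to substitute Lemma~\ref{lem:correlation} for the integrand. In the case $n_1 n_2 \neq 0$ this produces
\[
  (2\pi)^{2d}\, n_1!\, \delta_{n_1,n_2}\, \delta_{k_1,k_2}
  \!\!\!\!\!\!\!\!
  \sum_{
    \substack{
      l_1,\ldots,l_{n_1}\in\Z^d\\
      l_1+\ldots+l_{n_1}=k_1
    }
  }
  \!\!\!
  \frac{\prod_{i=1}^{n_1} \varphi^{(1)}_{l_i}\varphi^{(2)}_{l_i}}{\prod_{i=1}^{n_1}\lambda_{l_i}}
  \, e^{-(\sum_{i=1}^{n_1}\lambda_{l_i})|s_1-s_2|},
\]
and the exponential decay in $|s_1-s_2|$ together with nonnegativity of the summands justifies (by Tonelli) exchanging the $\sum_{l_1,\dots,l_{n_1}}$ with the $\int_{t_0}^{t_1}\!\int_{t_0}^{t_2}$. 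After dividing by $(2\pi)^{2d}$ this yields exactly the asserted expression in the $n_1 n_2 \neq 0$ case. In the case $n_1 n_2 = 0$, Lemma~\ref{lem:correlation} gives a delta-type factor that survives the time integration trivially, producing the second branch of the formula.

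\medskip

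\noindent
There is no real obstacle here: the content of the lemma is essentially a Fubini swap followed by direct substitution of Lemma~\ref{lem:correlation}. The only mildly technical point is justifying the interchange of the inner product $\langle g_k, \cdot\rangle_H$ with the Bochner-type integral defining $\circ\,(V^\varphi_{t_0,t})^n\circ$, and then applying Fubini to pull $\E$ through the double time integral; both follow from the continuity of sample paths and from finiteness of second moments of $\langle g_k,:\!(V_s^\varphi)^n\!:\rangle_H$ via hypercontractivity.
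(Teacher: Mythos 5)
Your proposal is correct and is exactly the route the paper intends: the paper omits the proof, stating that Lemma~\ref{lem:correlationAver} is an immediate consequence of Lemma~\ref{lem:correlation}, and your argument simply makes this explicit (interchange of $\langle g_k,\cdot\rangle_H$ with the pathwise time integral, Fubini for the expectation, substitution of Lemma~\ref{lem:correlation}, and the harmless sum--integral swap, which is in fact a finite sum since $\varphi^{(1)},\varphi^{(2)}\in\Phi_0$). The only point to state more carefully is the degenerate branch: for $n_1=n_2=0$ the same computation actually produces $(t_1-t_0)(t_2-t_0)\,\delta_{k_1,0}\,\delta_{k_2,0}$ rather than literally $\delta_{n_1,n_2}\delta_{k_1,k_2}$, a discrepancy inherited from the paper's statement and immaterial for its later use.
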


Lemma~\ref{lem:correlationAver}
is an immediate consequence
of
Lemma \ref{lem:correlation}
and the proof
of 
Lemma~\ref{lem:correlationAver}
is therefore omitted.

\begin{lem}[Time integrals
for averaged Wick powers]
\label{lem:time_integral}
Assume the setting of 
Subsection~\ref{sec:setting}.
Then
\begin{equation}
\label{eq:identity_time_integral}
\begin{split}
&
  \int_{ t_0 }^{ t_1 }
  \int_{ t_0 }^{ t_2 }
    e^{ 
      - 
      c
      \left| s_1 - s_2 \right|
    }
  \, ds_2 \, ds_1
  =
  \tfrac{
    2
    \left(
      \min( t_1, t_2 ) - t_0
    \right)
  }{
    c
  }
  +
  \tfrac{
    \left(
      \left[
        \sum_{ j = 1 }^2
        e^{
          -
          c
          \left( t_j - t_0 \right)
        }
      \right]
      - 1
      - 
      e^{
        - 
        c
        \left( t_2 - t_1 \right)
      }
    \right)
  }{
    c^2
  }
\end{split}
\end{equation}
and
\begin{equation}
\label{eq:identity_time_integral2}
\begin{split}
&
  \tfrac{ 
    \left( t_1 - t_0 \right) 
    \left(
      1 - 
      e^{
        - \frac{ c ( t_1 - t_0 ) }{ 2 }
      }
    \right)
  }{ c }
  \leq
  \int_{ t_0 }^{ t_1 }
  \int_{ t_0 }^{ t_1 }
    e^{ 
      - 
      c
      \left| s_1 - s_2 \right|
    }
  \, 
  ds_2 \, ds_1
=
  \frac{
    2
  }{
    c^2
  }
    \left(
      e^{
        -
        c
        \left( t_1 - t_0 \right)
      }
      - 
      \left[
        1
        -
        \left(
          t_1 - t_0
        \right) 
        c
      \right]
    \right)
\leq
  \frac{
    2
    \left( t_1 - t_0 \right)^{
      \theta
    }
  }{
    c^{
      \left( 2 - \theta \right)
    }
  }
\end{split}
\end{equation}
for all 
$ c \in ( 0, \infty ) $,
$ \theta \in [1, 2] $
and all
$ 
  t_0, t_1, t_2
  \in \R 
$
with
$ t_0 \leq \min( t_1, t_2 ) $.
\end{lem}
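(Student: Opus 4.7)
The entire lemma is a calculus exercise; the only task is the bookkeeping of the absolute value in the integrand $e^{-c|s_1-s_2|}$, which depends only on $|s_1 - s_2|$, together with one mild elementary inequality. The plan is to establish the identity~\eqref{eq:identity_time_integral} by splitting the rectangle $[t_0,t_1]\times[t_0,t_2]$ along the diagonal $s_1 = s_2$, and then to derive~\eqref{eq:identity_time_integral2} by specializing to $t_1 = t_2$ and invoking standard bounds for $e^{-x} - 1 + x$.

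For \eqref{eq:identity_time_integral}, by symmetry of the integrand under $s_1 \leftrightarrow s_2$ together with Fubini I may assume without loss of generality that $t_1 \leq t_2$. Splitting the inner $s_2$-integral at $s_2 = s_1$ yields
\[
  \int_{t_0}^{t_2} e^{-c|s_1-s_2|}\, ds_2
  = \int_{t_0}^{s_1} e^{-c(s_1-s_2)}\, ds_2 + \int_{s_1}^{t_2} e^{-c(s_2-s_1)}\, ds_2
  = c^{-1}\bigl( 2 - e^{-c(s_1-t_0)} - e^{-c(t_2-s_1)} \bigr),
\]
and a second elementary integration over $s_1 \in [t_0, t_1]$ produces precisely the four terms appearing on the right-hand side of \eqref{eq:identity_time_integral}, with $\min(t_1,t_2) = t_1$.

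Specializing $t_1 = t_2$ in the identity just proved gives the middle equality of~\eqref{eq:identity_time_integral2} immediately. For the lower bound I would first rewrite the double integral, exploiting symmetry, as
\[
  \int_{t_0}^{t_1}\!\!\int_{t_0}^{t_1} e^{-c|s_1-s_2|}\, ds_2\, ds_1
  = \frac{2}{c}\int_{t_0}^{t_1}\bigl( 1 - e^{-c(s_1-t_0)} \bigr)\, ds_1,
\]
restrict $s_1$ to $[(t_0+t_1)/2, t_1]$, and use monotonicity of $x \mapsto 1 - e^{-x}$ to pull out the factor $1 - e^{-c(t_1-t_0)/2}$. For the upper bound, setting $x := c(t_1-t_0)$ reduces the claim to $e^{-x} - 1 + x \leq x^{\theta}$ for $x \geq 0$ and $\theta \in [1,2]$, which follows from $e^{-x} - 1 + x \leq x$ (since $e^{-x} \leq 1$) together with the Taylor bound $e^{-x} - 1 + x \leq x^2/2$ (verified by showing that $1 - x + x^2/2 - e^{-x}$ has nonnegative derivative on $[0,\infty)$), by distinguishing $x \leq 1$ (where $x^2 \leq x^{\theta}$ since $\theta \leq 2$) and $x \geq 1$ (where $x \leq x^{\theta}$ since $\theta \geq 1$). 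No step is a real obstacle; the mildest subtlety is the $\theta$-interpolation in the upper bound, cleanly handled by the two-case argument above.
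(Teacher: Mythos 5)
Your proof is correct and follows essentially the same route as the paper: compute the double integral by splitting at the diagonal $s_1=s_2$ (the paper splits the rectangle $[t_0,t_1]\times[t_0,t_2]$ into three pieces, you split the inner integral at $s_2=s_1$, which amounts to the same calculation), then specialize $t_1=t_2$ and control $e^{-x}-1+x$ with $x=c(t_1-t_0)$ by elementary means. The only cosmetic differences are in the final estimates: the paper obtains $e^{y}-1-y\le |y|^{1+\theta}$ and the lower bound through a chain of integral comparisons, whereas you use $e^{-x}-1+x\le\min(x,x^2/2)$ with a case split at $x=1$ and a domain-restriction argument for the lower bound; and your reduction to $t_1\le t_2$ is consistent with the paper, whose own proof likewise only treats the ordered case $t_0\le t_1\le t_2$.
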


\begin{proof}[Proof
of Lemma~\ref{lem:time_integral}]
Note that
\begin{equation}
\label{eq:first_time_integral}
\begin{split}
&
  \int_{ t_0 }^{ t_1 }
  \int_{ t_0 }^{ t_2 }
  1_{
    \left\{
      ( u_1 , u_2 )
      \in \R^2
      \colon
      u_2 \leq u_1 
    \right\} 
  }
  ( s_1 , s_2 )
  \cdot 
  e^{
    - 
    c
    \left| s_1 - s_2 \right|
  }
  \, ds_2 \, ds_1
\\& =
  \int_{ t_0 }^{ t_1 }
  \int_{ t_0 }^{ s_1 }
  e^{
    - 
    c
    \left( s_1 - s_2 \right)
  }
  \, ds_2 \, ds_1
  =
  \int_{ t_0 }^{ t_1 }
  \frac{
    \left(
      1
      -
      e^{
        -
        c
        \left( s_1 - t_0 \right)
    }
  \right)
  }{
    c
  }
  \, ds_1
  =
  \frac{
    \left( t_1 - t_0 \right)
  }{
    c
  }
  +
  \frac{
    \left(
      e^{
        -
        c
        \left( t_1 - t_0 \right)
      }
      - 1
    \right)
  }{
    c^2
  }
\end{split}
\end{equation}
and hence
\begin{equation}
\begin{split}
&
  \int_{ t_0 }^{ t_1 }
  \int_{ t_0 }^{ t_2 }
  1_{
    \left\{ 
      ( u_1 , u_2 ) \in \R^2
      \colon u_1 \leq u_2 \leq t_1
    \right\} 
  }
  ( s_1 , s_2 )
  \cdot 
  e^{
    - c
    \left| s_1 - s_2 \right|
  }
  \, ds_1 \, ds_2
\\ & =
  \int_{ t_0 }^{ t_1 }
  \int_{ t_0 }^{ s_2 }
  e^{
    - c
    \left( s_2 - s_1 \right)
  }
  \, ds_1 \, ds_2
  =
  \frac{
    \left( t_1 - t_0 \right)
  }{
    c
  }
  +
  \frac{
    \left(
      e^{
        -
        c
        \left( t_1 - t_0 \right)
      }
      - 1
    \right)
  }{
    c^2
  }
\end{split}
\end{equation}
for all 
$ c \in ( 0, \infty ) $
and all
$
  t_0, t_1, t_2 
  \in \R
$
with
$
  t_0 \leq t_1 \leq t_2
$.
Furthermore, 
observe that
\begin{equation}
\label{eq:last_time_integral}
\begin{split}
&
  \int_{ t_0 }^{ t_1 }
  \int_{ t_0 }^{ t_2 }
  1_{
    \left\{ 
      ( u_1 , u_2 ) \in \R^2
      \colon 
      u_1 \leq t_1 \leq u_2 
    \right\} 
  }
  ( s_1 , s_2 )
  \cdot 
  e^{
    - c
    \left| s_1 - s_2 \right|
  }
  \, ds_2 \, ds_1
\\ & =
  \int_{ t_1 }^{ t_2 }
  \int_{ t_0 }^{ t_1 }
  e^{
    - c
    \left( s_2 - s_1 \right)
  }
  \, ds_1 \, ds_2
  =
  \int_{ t_1 }^{ t_2 }
  \frac{
    \left(
      e^{
        - c
        \left( s_2 - t_1 \right)
      }
      -
      e^{
        - c
        \left( s_2 - t_0 \right)
      }
    \right)
  }{
    c
  }
  \,
  ds_2
\\ & =
  \frac{ 
    -
    \left(
      e^{
        - c
        \left( t_2 - t_1 \right)
      }
      - 1
    \right)
    +
    \left(
      e^{
        - c
        \left( t_2 - t_0 \right)
      }
      -
      e^{
        - c
        \left( t_1 - t_0 \right)
      }
    \right)
  }{
    c^2
  }
  =
  \frac{ 
    1 
    +
      e^{
        - c
        \left( t_2 - t_0 \right)
      }
    -
      e^{
        - c
        \left( t_2 - t_1 \right)
      }
      -
      e^{
        - c
        \left( t_1 - t_0 \right)
      }
  }{
    c^2
  }
\end{split}
\end{equation}
for all 
$ 
  c \in ( 0, \infty )
$
and all
$
  t_0, t_1, t_2 
  \in \R
$
with
$
  t_0 \leq t_1 \leq t_2
$.
Combining
\eqref{eq:first_time_integral}--\eqref{eq:last_time_integral}
results in
\begin{equation}
\begin{split}
&
  \int_{ t_0 }^{ t_1 }
  \int_{ t_0 }^{ t_2 }
  e^{
    - c
    \left| s_1 - s_2 \right|
  }
  \, ds_2 \, ds_1
\\ & =
  \frac{
    2 
    \left( t_1 - t_0 \right)
  }{
    c
  }
  +
  \frac{
    2
    \left(
      e^{
        -
        c
        \left( t_1 - t_0 \right)
      }
      - 1
    \right)
  }{
    c^2
  }
  +
  \frac{ 
    \left(
    1 
    +
      e^{
        - c
        \left( t_2 - t_0 \right)
      }
    -
      e^{
        - c
        \left( t_2 - t_1 \right)
      }
      -
      e^{
        - c
        \left( t_1 - t_0 \right)
      }  
    \right)
  }{
    c^2
  }
\\ & =
  \frac{
    2 
    \left( t_1 - t_0 \right)
  }{
    c
  }
  +
  \frac{ 
    \left(
      e^{
        -
        c
        \left( t_1 - t_0 \right)
      }
      +
      e^{
        - c
        \left( t_2 - t_0 \right)
      }
      - 1
      -
      e^{
        - c
        \left( t_2 - t_1 \right)
      }
    \right)
  }{
    c^2
  }
\end{split}
\label{eq:exact_time_integral}
\end{equation}
for all 
$ 
  c \in ( 0, \infty )
$
and all
$
  t_0, t_1, t_2 
  \in \R
$
with
$
  t_0 \leq t_1 \leq t_2
$.
In addition, observe that
\begin{equation}
\begin{split}
&
  \frac{ 
    \left| y \right| 
    \left(
      1 - e^{
        \frac{ y }{ 2 }
      }
    \right)
  }{ 2 }
=
  \int_y^{
    \frac{ y }{ 2 }
  }
  1 - e^{
    \frac{ y }{ 2 }
  }
  \, ds
\leq
  \int_y^{
    \frac{ y }{ 2 }
  }
  1 - e^s
  \, ds
\leq
  \int_y^0
  1 - e^s
  \, ds
\\ & =
  e^y
  -
  \left( 
    1 + y 
  \right)
=
  \int_y^0
  1 - e^s
  \, ds
\leq
  \int_y^0
  \left[
    1 - e^s
  \right]^{ \theta }
  ds
\leq
  \int_y^0
  \left[
    \int_s^0
    e^u \, du
  \right]^{ \theta }
  ds
\\ & \leq
  \int_y^0
  \left| 
    s
  \right|^{ \theta }
  ds
=
  \int_0^{ - y }
    s^{ \theta }
  \, ds
=
  \frac{
    \left| y 
    \right|^{ 
      \left( 1 + \theta \right) 
    }
  }{
    \left( 1 + \theta \right)
  }
\leq
  \left| y 
  \right|^{ 
    \left( 1 + \theta \right) 
  }
\end{split}
\end{equation}
for all $ y \in ( - \infty, 0 ] $
and all $ \theta \in [0, 1] $.
Combining this with
\eqref{eq:exact_time_integral}
completes the proof of 
Lemma~\ref{lem:time_integral}.
\end{proof}

The next result,
Proposition~\ref{prop:regularity_WickAver},
establishes convergence of 
averaged Wick powers 
under the assumption
that
$ n, d \in \{ 2, 3, \dots \} $
with
$
  \frac{ n + 1 }{ n - 1 }
  > 
  \frac{ d }{ 2 }
$.
The proof of
Proposition~\ref{prop:regularity_WickAver}
exploits
Lemma~\ref{lem:hypercontractivity},
Lemma~\ref{lem:correlationAver}
and 
Lemma~\ref{lem:time_integral}.

\begin{prop}[Convergence of 
averaged Wick powers]
\label{prop:regularity_WickAver}
Assume the setting of 
Subsection~\ref{sec:setting},
let $ t_0 \in \R $
and let 
$ n, d \in \{ 2, 3, \dots \} $
with
$
  \frac{ n + 1 }{ n - 1 }
  > 
  \frac{ d }{ 2 }
$.
Then there exists
an up to 
indistinguishability unique 
stochastic process
\begin{equation} 
  \circ 
  \left( 
    V_{ t_0, ( \cdot ) } 
  \right)^n \circ
  \colon
  [ t_0, \infty ) \times \Omega
  \to 
  \cap_{
    \beta \in 
    (
      - \infty,
      1 
      + \frac{ 1 }{ n }
      - \frac{ d }{ 2 }
      + 
      ( n - 1 ) 
      \min(
        1 
        + 
        \frac{ 1 }{ n }
        - \frac{ d }{ 2 } , 0
      )
    )
  }
  \mathcal{C}^{
    \beta
  }_{
    \mathcal{P}
  }(
    [0, 2 \pi]^d,
    \R
  )
\end{equation}
with continuous sample paths which
satisfies for every
$ T \in (t_0,\infty) $, 
$ p \in (0,\infty) $,
$ \alpha \in ( 0, 1 ) $
and every
$   
    \beta \in 
    (
      - \infty,
      1 
      + \frac{ 1 }{ n }
      - \frac{ d }{ 2 }
      + 
      ( n - 1 ) 
      \min(
        1 
        + 
        \frac{ 1 }{ n }
        - \frac{ d }{ 2 } , 0
      )
    )
$
that
\begin{equation}
  \|
    \circ \!
    ( 
      V^{ \varphi }_{ t_0, ( \cdot ) }
    )^n
    \!
    \circ
    -
    \circ
    ( 
      V_{ t_0, ( \cdot ) }
    )^n \!
    \circ
  \|_{
    L^p(
      \Omega;
      C^{ \alpha }
      (
        [ t_0 , T ],
        \mathcal{C}_{ \mathcal{P} 
        }^{ \beta  
        }(
          [0, 2 \pi]^d,
          \mathbb{R})
      )
    )
  }
  \rightarrow 0
\qquad
  \text{as}
\qquad
  \Phi_{ 0, \leq 1 } 
  \ni \varphi 
  \rightarrow 1 .
\end{equation}
\end{prop}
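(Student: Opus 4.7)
\medskip

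\noindent\textbf{Proof plan.} The plan is to follow the template of the proof of Proposition~\ref{prop:regularity_Wick} and to show that the net $\{\circ(V^{\varphi}_{t_0,\cdot})^n\circ\}_{\varphi\in\Phi_{0,\le 1}}$ is Cauchy in
$L^p(\Omega;C^{\alpha}([t_0,T],\mathcal{C}^{\beta}_{\mathcal{P}}([0,2\pi]^d,\R)))$. Since, for every $s\in[t_0,T]$ and $x\in[0,2\pi]^d$, the random variable
$(\circ(V^{\varphi}_{t_0,s}-V^{\psi}_{t_0,s})^n\circ)(x)$ lies in $\mathcal{W}_n$, Lemma~\ref{lem:hypercontractivity} reduces the task to bounding
\[
  \sup_{t\in[t_0,T]}\sum_{v_1,v_2\in\Z^d}
  \tfrac{\big|\E\big[\overline{\langle g_{v_1},\circ(V^{\varphi}_{t_0,t})^n\circ-\circ(V^{\psi}_{t_0,t})^n\circ\rangle_H}\langle g_{v_2},\cdots\rangle_H\big]\big|}{(\lambda_{v_1}\lambda_{v_2})^{-\hat{\beta}}}
\]
together with the analogous quantity for time-increments divided by $|t_1-t_2|^{2\hat{\alpha}}$, and showing that each tends to $0$ as $(\varphi,\psi)\to(1,1)$ in $\Phi_{0,\le 1}$.

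\smallskip

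\noindent First, I apply Lemma~\ref{lem:correlationAver} to expand the covariance. Thanks to the Kronecker delta $\delta_{k_1,k_2}$, the double Fourier sum collapses to a single sum, and the multilinear expansion gives the factor $\prod_{i=1}^{n}\varphi^{(1)}_{l_i}\varphi^{(2)}_{l_i}-\prod_{i=1}^{n}\psi^{(1)}_{l_i}\psi^{(2)}_{l_i}$ times the double time integral
$\int_{t_0}^{t_1}\!\int_{t_0}^{t_2}e^{-(\sum_i\lambda_{l_i})|s_1-s_2|}\,ds_2\,ds_1$. By Lemma~\ref{lem:time_integral}, this integral is bounded by $2(T-t_0)^{\theta}(\sum_i\lambda_{l_i})^{-(2-\theta)}$ for any $\theta\in[1,2]$ (and by a factor of $(t_2-t_1)^{\theta}$ with the same $\lambda$-gain when instead the integral runs only over $[t_1,t_2]$, which is what one needs for the Hölder-in-time estimate). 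Combined with the elementary bound $|\prod_i\varphi_{l_i}-\prod_i\psi_{l_i}|\le\sum_{i=1}^{n}|\varphi_{l_i}-\psi_{l_i}|$ on $\Phi_{0,\le 1}$ (exactly as in \eqref{eq:phipsi_est}), this reduces the problem to verifying finiteness of the dominating \emph{deterministic} sum
\[
  \mathcal{S}(\hat\beta,\theta)
  :=\sum_{l_1,\ldots,l_n\in\Z^d}
  \frac{(\lambda_{l_1+\cdots+l_n})^{2\hat{\beta}}}{\Big(\prod_{i=1}^n\lambda_{l_i}\Big)\Big(\sum_{i=1}^n\lambda_{l_i}\Big)^{2-\theta}},
\]
which is purely a question about discrete convolutions.

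\smallskip

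\noindent The key point is that the factor $(\sum_i\lambda_{l_i})^{-(2-\theta)}$ is precisely the extra regularity gained over the non-averaged Wick power (compare with \eqref{eq:summability_WP_2}). Using the AM--GM-type estimate $\sum_i\lambda_{l_i}\ge n\,\big(\prod_i\lambda_{l_i}\big)^{1/n}$, one obtains
\[
  \mathcal{S}(\hat\beta,\theta)
  \le n^{-(2-\theta)}\sum_{l_1,\ldots,l_n\in\Z^d}
  \frac{(\lambda_{l_1+\cdots+l_n})^{2\hat{\beta}}}{\big(\prod_{i=1}^n\lambda_{l_i}\big)^{1+(2-\theta)/n}},
\]
and iterated application of Corollary~\ref{cor:discrete_conv} (peeling off one of the variables $l_n,\ldots,l_2$ at a time) produces finiteness precisely in the range $2\hat\beta<2+\tfrac{2(2-\theta)}{n}-d+(n-1)\min(2+\tfrac{2(2-\theta)}{n}-d,0)$; taking $\theta\downarrow 1$ (which is permitted whenever $\hat\alpha<\tfrac{1}{2}$, and otherwise one optimises $\theta$ against $\hat\alpha$) recovers the stated window
$\beta<1+\tfrac{1}{n}-\tfrac{d}{2}+(n-1)\min(1+\tfrac{1}{n}-\tfrac{d}{2},0)$. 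The hypothesis $\tfrac{n+1}{n-1}>\tfrac{d}{2}$ is exactly what guarantees that this window is non-empty. Dominated convergence, using pointwise convergence $\varphi_v\to 1$, then sends the prefactor $\sum_i|\varphi_{l_i}-\psi_{l_i}|$ to zero, so the Cauchy property follows and existence, uniqueness up to indistinguishability, and the claimed regularity are established.

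\smallskip

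\noindent The main obstacle is the sharp bookkeeping in the last paragraph: extracting the optimal exponent $\beta$ from $\mathcal{S}(\hat\beta,\theta)$ requires choosing the exponent $\theta$ carefully in terms of $\hat\alpha$, iterating Corollary~\ref{cor:discrete_conv} with a different role for each successive summation, and verifying that in the three-dimensional regime $1+\tfrac{1}{n}-\tfrac{d}{2}\le 0$ each of the $n-1$ intermediate summations forces one to pay an additional factor $1+\tfrac{1}{n}-\tfrac{d}{2}$ (rather than $0$) from the $\min$, which is precisely the origin of the somewhat unusual term $(n-1)\min(\ldots,0)$ in the statement.
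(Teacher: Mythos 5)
Your route is essentially the paper's: the same chain Lemma~\ref{lem:correlationAver} + Lemma~\ref{lem:time_integral}, the observation that a time increment of $\circ(V^{\varphi}_{t_0,\cdot})^n\circ$ is itself an averaged Wick power over the subinterval $[t_1,t_2]$, the bound $|\prod_i\varphi_{l_i}-\prod_i\psi_{l_i}|\le\sum_i|\varphi_{l_i}-\psi_{l_i}|$, the estimate $\sum_i\lambda_{l_i}\ge(\prod_i\lambda_{l_i})^{1/n}$ followed by iterated discrete-convolution estimates (Corollary~\ref{cor:discrete_conv}), dominated convergence, and finally Lemma~\ref{lem:hypercontractivity}. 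So the architecture is correct and matches the paper; what is off is the exponent bookkeeping, by a factor of two. At $\theta=1$ the sum you obtain after the AM--GM step, namely $\sum_{l_1,\dots,l_n}(\lambda_{l_1+\dots+l_n})^{2\hat\beta}\,(\prod_i\lambda_{l_i})^{-(1+1/n)}$, is finite only for $2\hat\beta<1+\frac1n-\frac d2+(n-1)\min\bigl(1+\frac1n-\frac d2,0\bigr)$, i.e.\ for $2\hat\beta$ below the quantity appearing in the statement, not below twice that quantity as you claim. Concretely, for $d=n=2$ your sum is $\sum_{l_1,l_2}(\lambda_{l_1+l_2})^{2\hat\beta}(\lambda_{l_1}\lambda_{l_2})^{-3/2}$; the inner convolution decays like $(\lambda_k)^{-3/2}$ (the regime where one variable stays bounded dominates), so summability over $k\in\Z^2$ forces $2\hat\beta<\frac12$, whereas your stated range would allow $2\hat\beta$ up to $1$. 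This slip is then silently compensated by a second one when you pass back to H\"older spaces: Lemma~\ref{lem:hypercontractivity} controls the $\mathcal{C}^{2\beta}_{\mathcal P}$-norm using Fourier weights $(\lambda_{k_1}\lambda_{k_2})^{\hat\beta}$ with $\hat\beta>\beta$, so what is actually needed is finiteness of the collapsed sum with weight exponent $2\hat\beta$ just above the target regularity exponent — exactly the (correct, halved) range above. With the bookkeeping repaired your argument is the paper's argument and yields the stated window.

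One further caveat, which you partially notice: at $\theta=1$ the increment bound is linear in $|t_1-t_2|$, i.e.\ $\hat\alpha=\frac12$ in the notation of Lemma~\ref{lem:hypercontractivity}, so this argument delivers $\alpha<\frac12$; choosing $\theta>1$ does raise the admissible time exponent, but only at the price of shrinking the $\beta$-window (the gain $(\sum_i\lambda_{l_i})^{-(2-\theta)}$ degrades), so the optimisation you allude to does not produce the full $\beta$-window simultaneously with every $\alpha\in(0,1)$. The paper's own proof has exactly the same feature (it only establishes the $|t_1-t_2|^1$ bound), so this is not a defect of your proposal relative to the paper, but you should not suggest that tuning $\theta$ recovers the full claim without loss.
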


\begin{proof}[Proof
of 
Proposition~\ref{prop:regularity_WickAver}]
Lemma~\ref{lem:correlationAver}
and 
Lemma~\ref{lem:time_integral}
imply
\begin{equation}
\label{eq:AWP_L_inf_part_00}
\begin{split}
&
  \frac{ 1 }{
    \left( 2 \pi \right)^{ 2 d }
  }
  \,
      \E\!\left[
        \overline{
          \left< 
            g_{ k_1 } ,
            \circ
            \big(
              V_{ \hat{t}, t}^{ 
                \varphi 
              }
            \big)^n
            \circ
            \,
            -
            \,
            \circ\big(
              V_{ \hat{t}, t}^{ \psi }
            \big)^n
            \circ
          \right>_{ \! H } 
        }
          \left< 
            g_{ k_2 } ,
            \circ
            \big(
              V_{ \hat{t}, t}^{ 
                \varphi
              }
            \big)^n
            \circ
            \,
            -
            \,
            \circ\big(
              V_{ \hat{t}, t}^{ 
                \psi
              }
            \big)^n
            \circ
          \right>_{ \! H } 
      \right]
\\ & = 
  n!
  \,
  \delta_{ k_1, k_2 }
    \sum_{
      \substack{
        l_1, \ldots, l_n
        \in \Z^d
      \\
        l_1 + \ldots + l_n 
        = k_1
      }
    }
  \left[
  \begin{array}{c}
    \left\{
      \prod_{
        i = 1 
      }^n
      \frac{ 
        \left[
          \varphi_{ l_i }
        \right]^2
      }{
        \lambda_{ l_i }
      }
      - 
      2
      \prod_{
        i = 1 
      }^n
      \frac{ 
        \varphi_{ l_i }
        \psi_{ l_i }
      }{
        \lambda_{ l_i }
      }
      +
      \prod_{
        i = 1 
      }^n
      \frac{ 
        \left[
          \psi_{ l_i }
        \right]^2
      }{
        \lambda_{ l_i }
      }
    \right\}
  \\ \cdot
  \int_{ \hat{t} }^{ t }
  \int_{ \hat{t} }^{ t }
    e^{ 
      - 
      \left[ 
        \sum_{ i = 1 }^n 
        \lambda_{ l_i }
      \right]
      \left| s_2 - s_1 \right|
    }
  \, ds_2 \, ds_1
  \end{array}
  \right]
\\ & = 
  n!
  \,
  \delta_{ k_1, k_2 }
    \sum_{
      \substack{
        l_1, \ldots, l_n
        \in \Z^d
      \\
        l_1 + \ldots + l_n 
        = k_1
      }
    }
      \frac{ 
        \left(
          \prod_{ i = 1 }^n
          \varphi_{ l_i }
          -
          \prod_{ i = 1 }^n
          \psi_{ l_i }
        \right)^2
      }{
        \left(
          \prod_{ i = 1 }^n
          \lambda_{ l_i }
        \right)
      }
  \int_{ \hat{t} }^{ t }
  \int_{ \hat{t} }^{ t }
    e^{ 
      - 
      \left[ 
        \sum_{ i = 1 }^n 
        \lambda_{ l_i }
      \right]
      \left| s_2 - s_1 \right|
    }
  \, ds_2 \, ds_1
\\ & \leq
  n!
  \,
  2
  \,
  \delta_{ k_1, k_2 }
  \sum_{
    \substack{
      l_1, \ldots, l_n
      \in \Z^d
    \\
      l_1 + \ldots + l_n 
      = k_1
    }
  }
  \frac{ 
    \left(
      \prod_{ i = 1 }^n
      \varphi_{ l_i }
      -
      \prod_{ i = 1 }^n
      \psi_{ l_i }
    \right)^2
    \left( t - \hat{t} \right)
  }{
    \left(
      \prod_{ i = 1 }^n
      \lambda_{ l_i }
    \right)
    \left(
      \sum_{ i = 1 }^n
      \lambda_{ l_i }
    \right)
  }
\\ & \leq
  n!
  \,
  2
  \,
  \delta_{ k_1, k_2 }
  \sum_{
    \substack{
      l_1, \ldots, l_n
      \in \Z^d
    \\
      l_1 + \ldots + l_n 
      = k_1
    }
  }
  \frac{ 
    \left(
      \prod_{ i = 1 }^n
      \varphi_{ l_i }
      -
      \prod_{ i = 1 }^n
      \psi_{ l_i }
    \right)^2
    \left( t - \hat{t} \right)
  }{
    \left(
      \prod_{ i = 1 }^n
      \left(
        \lambda_{ l_i }
      \right)^{
        \left( 1 + 1 / n \right)
      }
    \right)
  }
\end{split}
\end{equation}
for all 
$ k_1, k_2 \in \Z^d $
and all $ \hat{t}, t \in \R $
with $ \hat{t} \leq t $.
Next note that
Corollary~\ref{cor:discrete_conv}
ensures that
\begin{equation}
  \sup_{
    k_1 \in \Z^d
  }
  \left[
    \sum_{
      \substack{
        l_1, \ldots, l_n
        \in \Z^d
      \\
        l_1 + \ldots + l_n 
        = k_1
      }
    }
    \frac{
      \left(
        \lambda_{ k_1 }
      \right)^{
        \gamma
      }
    }{
      \left(
        \prod_{ i = 1 }^n
        \left(
          \lambda_{ l_i }
        \right)^{
          \left( 1 + 1 / n \right)
        } 
      \right)
    }
  \right]
  < \infty
\end{equation}
for all
$
  \gamma \in
  \big(
    0, 
    1 + \frac{ 1 }{ n }
    + \left( n - 1 \right)
    \min\!\left(
      1 + \frac{ 1 }{ n }
      - \frac{ d }{ 2 } , 0
    \right)
  \big)
$
and therefore, we obtain that
\begin{equation}
\label{eq:summability_AWP_2}
  \sum_{
    k_1 \in \Z^d
  }
    \sum_{
      \substack{
        l_1, \ldots, l_n
        \in \Z^d
      \\
        l_1 + \ldots + l_n 
        = k_1
      }
    }
    \frac{
      \left(
        \lambda_{k_1}
      \right)^{ 2 \beta }
    }{
      \left(
        \prod_{ i = 1 }^n
        \left(
          \lambda_{ l_i }
        \right)^{
          \left( 1 + 1 / n \right)
        } 
      \right)
    }
  <
  \infty
\end{equation}
for all
$
  \beta \in 
  \big(
    - \infty,
    \frac{ 1 }{ 2 } 
    + \frac{ 1 }{ 2 n }
    + 
    \left( n - 1 \right) 
    \min\!\left(
      \frac{ 1 }{ 2 } 
      + 
      \frac{ 1 }{ 2 n }
      - \frac{ d }{ 4 } , 0
    \right)
    - \frac{ d }{ 4 }
  \big)
$.
Combining this,
\eqref{eq:AWP_L_inf_part_00}
and dominated convergence 
implies for every
$
  \beta \in 
  \big(
    - \infty,
    \frac{ 1 }{ 2 } 
    + \frac{ 1 }{ 2 n }
    - \frac{ d }{ 4 }
    + 
    \left( n - 1 \right) 
    \min\!\left(
      \frac{ 1 }{ 2 } 
      + 
      \frac{ 1 }{ 2 n }
      - \frac{ d }{ 4 } , 0
    \right)
  \big)
$
that
\begin{equation}
\label{eq:L_inf_partAver}
    \sup_{
      \substack{
        \hat{t}, t
        \in
        \R ,
      \\
        \hat{t} < t
      }
    }
    \sum_{
      k_1, k_2
      \in
      \Z^d
    }
  \tfrac{
    \left|
      \E\left[
        \overline{
          \left< 
            g_{ k_1 } ,
            \circ
            \left(
              V_{\hat{t},t}^{ 
                \varphi 
              }
            \right)^n
            \circ
            \,
            -
            \,
            \circ\left(
              V_{\hat{t},t}^{ \psi }
            \right)^n
            \circ
          \right>_{ \! H } 
        }
          \left< 
            g_{ k_2 } ,
            \circ
            \left(
              V_{\hat{t},t}^{ 
                \varphi
              }
            \right)^n
            \circ
            \,
            -
            \,
            \circ\left(
              V_{\hat{t},t}^{ 
                \psi
              }
            \right)^n
            \circ
          \right>_{ \! H } 
      \right]
    \right|
  }{
    \left(
      t - \hat{t}
    \right)
    \cdot
    \left(
      \lambda_{ k_1 }
      \lambda_{ k_2 }
    \right)^{ - \beta }
  }
 \to 0
\end{equation}
as
$
      ( \Phi_{ 0, \leq 1 } )^2 
      \ni
      (
        \varphi,
        \psi
      )
      \rightarrow (1,1) 
$.
In the next step
observe that
Definition~\eqref{eq:def_AWP}
implies that
\begin{equation}
\begin{split} 
& 
          \E\!\left[
          \begin{split}
          &
            \left<
              g_{ - k_1 } 
            ,
              \left[
                \circ(
                  V_{ \hat{t}, t_1 }^{
                    \varphi
                  }
                )^n
                \circ
                -
                \circ(
                  V_{ \hat{t}, t_1 }^{ 
                    \psi
                  }
                )^n
                \circ
              \right]
              -
              \left[
                \circ(
                  V_{ \hat{t}, t_2 }^{
                    \varphi
                  }
                )^n
                \circ
                -
                \circ(
                  V_{ \hat{t}, t_2 }^{ 
                    \psi
                  }
                )^n
                \circ
              \right]
            \right>_{ \! H }
          \\ & \;
            \cdot
            \left<
              g_{ k_2 } 
              ,
              \left[
                \circ(
                  V_{ \hat{t}, t_1 }^{
                    \varphi
                  }
                )^n
                \circ
                -
                \circ(
                  V_{ \hat{t}, t_1 }^{ 
                    \psi
                  }
                )^n
                \circ
              \right]
              -
              \left[
                \circ(
                  V_{ \hat{t}, t_2 }^{
                    \varphi
                  }
                )^n
                \circ
                -
                \circ(
                  V_{ \hat{t}, t_2 }^{ 
                    \psi
                  }
                )^n
                \circ
              \right]
            \right>_{ \! H }
          \end{split}
          \right]
\\ & =
          \E\!\left[
          \begin{split}
          &
            \left<
              g_{ - k_1 } 
            ,
                \circ(
                  V_{ t_1, t_2 }^{
                    \varphi
                  }
                )^n
                \circ
                -
                \circ(
                  V_{ t_1, t_2 }^{ 
                    \psi
                  }
                )^n
                \circ
            \right>_{ \! H }
            \left<
              g_{ k_2 } 
              ,
                \circ(
                  V_{ t_1, t_2 }^{
                    \varphi
                  }
                )^n
                \circ
                -
                \circ(
                  V_{ t_1, t_2 }^{ 
                    \psi
                  }
                )^n
                \circ
            \right>_{ \! H }
          \end{split}
          \right]
\end{split}
\end{equation}
for all 
$ k_1, k_2 \in \Z^d $,
$ \varphi, \psi \in \Phi_{ 0, \leq 1 } $
and all
$ t_1, t_2 \in \R $
with $ \hat{t} \leq t_1 \leq t_2 $.
Combining this with
\eqref{eq:L_inf_partAver}
shows
for every
$
  \beta \in 
  \big(
    - \infty,
    \frac{ 1 }{ 2 } 
    + \frac{ 1 }{ 2 n }
    - \frac{ d }{ 4 }
    + 
    \left( n - 1 \right) 
    \min\!\left(
      \frac{ 1 }{ 2 } 
      + 
      \frac{ 1 }{ 2 n }
      - \frac{ d }{ 4 } , 0
    \right)
  \big)
$
that
\begin{equation}
\label{eq:lem_time_diffAver}
  \sup_{ 
    \substack{
      \hat{t} \in \R ,
    \\
       t_1, t_2 \in [ \hat{t}, \infty ) ,
    \\
      t_1 \neq t_2
    }
  }
  \left[
    \!\!\!
    \begin{array}{c}
      \sum\limits_{ 
        k_1, k_2 
        \in \Z^d
      } 
      \frac{
        \tiny{
        \left|
          \E\!\left[
          \begin{split}
          &
            \left<
              g_{ - k_1 } 
            ,
              \left[
                \circ(
                  V_{ \hat{t}, t_1 }^{
                    \varphi
                  }
                )^n
                \circ
                -
                \circ(
                  V_{ \hat{t}, t_1 }^{ 
                    \psi
                  }
                )^n
                \circ
              \right]
              -
              \left[
                \circ(
                  V_{ \hat{t}, t_2 }^{
                    \varphi
                  }
                )^n
                \circ
                -
                \circ(
                  V_{ \hat{t}, t_2 }^{ 
                    \psi
                  }
                )^n
                \circ
              \right]
            \right>_{ \! H }
          \\ & \;
            \cdot
            \left<
              g_{ k_2 } 
              ,
              \left[
                \circ(
                  V_{ \hat{t}, t_1 }^{
                    \varphi
                  }
                )^n
                \circ
                -
                \circ(
                  V_{ \hat{t}, t_1 }^{ 
                    \psi
                  }
                )^n
                \circ
              \right]
              -
              \left[
                \circ(
                  V_{ \hat{t}, t_2 }^{
                    \varphi
                  }
                )^n
                \circ
                -
                \circ(
                  V_{ \hat{t}, t_2 }^{ 
                    \psi
                  }
                )^n
                \circ
              \right]
            \right>_{ \! H }
          \end{split}
          \right]
        \right|
        }
      }{ 
        \left( 
          \lambda_{ k_1 }
          \lambda_{ k_2 }
        \right)^{ - \beta }
        \left|
          t_1 - t_2
        \right|
      }
    \end{array}
    \!\!\!
  \right]
  \to 0
\end{equation}
as
$
  ( \Phi_{ 0, \leq 1 } )^2 
  \ni
  (
    \varphi,
    \psi
  )
  \rightarrow (1,1) 
$.
Combining \eqref{eq:L_inf_partAver} and
\eqref{eq:lem_time_diffAver} with
Lemma~\ref{lem:hypercontractivity}
completes the proof
of 
Proposition~\ref{prop:regularity_WickAver}.
\end{proof}

Proposition~\ref{prop:regularity_WickAver}
shows convergence of 
averaged Wick powers 
under the assumption
that
$ n, d \in \{ 2, 3, \dots \} $
with
$
  \frac{ n + 1 }{ n - 1 }
  > 
  \frac{ d }{ 2 }
$.
Lemma~\ref{lem:limitationAWP}
below, in particular, 
proves that
averaged Wick powers 
fail to converge
if
$ n, d \in \{ 2, 3, \dots \} $
with
$
  \frac{ n + 1 }{ n - 1 }
  \leq
  \frac{ d }{ 2 }
$.
In the proof of
Lemma~\ref{lem:limitationAWP}
the following lemma is used.

\begin{lem}
\label{lem:sum_tool}
Assume the setting of 
Subsection~\ref{sec:setting}
and let 
$ n, d \in \{ 2, 3, \dots \} $
with
$
  \frac{ 
    ( n + 1 ) 
  }{
    ( n - 1 )
  }
  \leq
  \frac{ d }{ 2 }
$.
Then
$
  \sum_{
    \substack{
      l_{ 1 },
      \dots,
      l_{ n }
      \in
      \Z^d
    \\
      l_1 
      +
      \ldots
      +
      l_n =
      v
    }
  }
  \frac{
    1
  }{
    \left(
      \prod_{ i = 1 }^n
      \lambda_{ l_i }
    \right)
    \left(
      \lambda_v +
      \sum_{ i = 1 }^n  
      \lambda_{ l_i } 
    \right)
  }
  = \infty
$
for all $ v \in \Z^d $.
\end{lem}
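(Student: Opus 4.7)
The plan is to lower bound the sum by restricting to a family of dyadic annular regions in $(\Z^d)^{n-1}$ (using $l_n := v - l_1 - \ldots - l_{n-1}$ as the dependent variable) and to show that each region contributes at least a positive constant under the hypothesis $\tfrac{n+1}{n-1} \le \tfrac{d}{2}$. Summing over infinitely many such disjoint regions then forces the sum to equal $+\infty$. This is structurally analogous to the divergence argument appearing in the proof of Lemma~\ref{lem:limitation}, but with the extra factor $(\lambda_v + \sum_i \lambda_{l_i})^{-1}$ taken into account.

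Concretely, fix $v \in \Z^d$, and for each $k \in \N$ with $2^k \ge 4 \|v\|_{\R^d}$ define
\[
  S_k := \Big\{(l_1,\ldots,l_{n-1}) \in (\Z^d)^{n-1} :
    \tfrac{2^k}{4n} \le \|l_i\|_{\R^d} \le \tfrac{2^k}{2n}
    \text{ for every } i \in \{1, \ldots, n-1\} \Big\}.
\]
For each such tuple, put $l_n := v - l_1 - \ldots - l_{n-1}$. A standard lattice-point count in annular shells shows $|S_k| \ge c \cdot 2^{k(n-1)d}$ for a positive constant $c$ depending only on $n, d$, and the triangle inequality forces $\|l_n\|_{\R^d} \le \tfrac{n-1}{2n} 2^k + \|v\|_{\R^d} \le 2^k$. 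Consequently $\lambda_{l_i} \le C \cdot 4^k$ uniformly for every $i \in \{1, \ldots, n\}$ and every tuple in $S_k$, which yields the pointwise lower bound
\[
  \frac{1}{\big(\prod_{i=1}^n \lambda_{l_i}\big)\big(\lambda_v + \sum_{i=1}^n \lambda_{l_i}\big)}
  \ge \frac{1}{C^n \cdot 4^{kn} \cdot (n+1) C \cdot 4^k} \gtrsim 4^{-k(n+1)}
\]
on $S_k$. Hence the shell contribution to the full sum is at least a positive constant multiple of $2^{k(n-1)d} \cdot 4^{-k(n+1)} = 2^{k[(n-1)d - 2(n+1)]}$.

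Since $\tfrac{n+1}{n-1} \le \tfrac{d}{2}$ is equivalent to $(n-1)d - 2(n+1) \ge 0$, each shell contributes at least a positive constant independent of $k$. Passing to an arithmetic subsequence such as $k \in \{k_0, k_0+2, k_0+4, \ldots\}$ makes the annuli $[2^k/(4n), 2^k/(2n)]$ pairwise disjoint, so that the sets $S_k$ are pairwise disjoint subsets of $(\Z^d)^{n-1}$; summing their contributions therefore produces $+\infty$. The only slightly delicate step is the elementary lattice-point count $|S_k| \gtrsim 2^{k(n-1)d}$, which reduces to counting $\Z^d$-points in a thin annulus of inner radius $\sim 2^k$ — a routine Riemann-sum estimate gives $\gtrsim 2^{kd}$ points per coordinate, and the $(n-1)$-fold product of independent choices then yields the claimed bound. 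No deeper obstacle arises.
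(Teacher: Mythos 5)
Your argument is correct: restricting to the dyadic blocks $S_k$ (with $l_n := v - l_1 - \ldots - l_{n-1}$), the pointwise bound $\lambda_{l_i} \lesssim_n 4^k$ for all $i \in \{1,\dots,n\}$ and $\lambda_v \lesssim 4^k$ (valid once $2^k \geq 4 \| v \|_{\R^d}$) gives a summand of size $\gtrsim 4^{-k(n+1)}$, the lattice count $|S_k| \gtrsim 2^{k(n-1)d}$ is indeed routine (e.g., the box $\{ l \in \Z^d \colon 2^k/(4n\sqrt{d}) \leq l_j \leq 2^k/(2n\sqrt{d}) \}$ sits inside each annulus), and the hypothesis $\frac{n+1}{n-1} \leq \frac{d}{2}$ is exactly $(n-1)d - 2(n+1) \geq 0$, so infinitely many pairwise disjoint blocks each contribute a uniform constant. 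This is, however, a genuinely different route from the paper's proof. The paper argues by a chain of global pointwise comparisons: first $\prod_{i=1}^n \lambda_{l_i} \leq (\lambda_v + \sum_{i=1}^n \lambda_{l_i})^n$ bounds each term below by $(\lambda_v + \sum_{i=1}^n \lambda_{l_i})^{-(n+1)}$, then $\lambda_{l_n} = \lambda_{(v - l_1 - \ldots - l_{n-1})} \lesssim_n \lambda_v + \sum_{i=1}^{n-1} \lambda_{l_i}$ eliminates the constrained variable, then $\lambda_v + x \leq 2 \lambda_v x$ factors out the fixed $\lambda_v$, reducing everything to the divergence of $\sum_{k \in \Z^{d(n-1)}} (1 + \|k\|^2_{\R^{d(n-1)}})^{-(n+1)}$, which is the standard criterion of Lemma~\ref{lem:sum_finiteness} in dimension $d(n-1)$ with exponent $n+1 \leq \frac{d(n-1)}{2}$. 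The paper's version is shorter because that summability criterion is already available as a black box; yours is more self-contained (no appeal to Lemma~\ref{lem:sum_finiteness}) and makes the threshold visible as a scaling computation, with each block contributing $\sim 2^{k[(n-1)d - 2(n+1)]}$, which would also yield a quantitative divergence rate under a frequency cutoff. The only point to state explicitly in a write-up is the injectivity of $(l_1,\dots,l_{n-1}) \mapsto (l_1,\dots,l_{n-1}, v - l_1 - \ldots - l_{n-1})$ into the constrained index set together with nonnegativity of the summand, which justifies the restriction; otherwise no gap.
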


\begin{proof}[Proof
of Lemma~\ref{lem:sum_tool}]
Note that
\begin{equation}
\begin{split}
&
  \sum_{
    \substack{
      l_{ 1 },
      \dots,
      l_{ n }
      \in
      \Z^d
    \\
      l_1 
      +
      \ldots
      +
      l_n =
      v
    }
  }
  \frac{
    1
  }{
    \left(
      \prod_{ i = 1 }^n
      \lambda_{ l_i }
    \right)
    \left(
      \lambda_v +
      \sum_{ i = 1 }^n  
      \lambda_{ l_i } 
    \right)
  }
\\ & \geq
  \sum_{
    \substack{
      l_{ 1 },
      \dots,
      l_{ n }
      \in
      \Z^d
    \\
      l_1 
      +
      \ldots
      +
      l_n =
      v
    }
  }
  \frac{
    1
  }{
    \left(
      \lambda_v +
      \sum_{ i = 1 }^n  
      \lambda_{ l_i } 
    \right)^{ ( n + 1 ) }
  }
\geq
  \sum_{
    \substack{
      l_{ 1 },
      \dots,
      l_{ n }
      \in
      \Z^d
    \\
      l_1 
      +
      \ldots
      +
      l_n =
      v
    }
  }
  \frac{
    1
  }{
    \left(
      \lambda_v 
      +
      \lambda_{ ( v - l_1 - \ldots - l_n ) }
      +
      \sum_{ i = 1 }^{ n - 1 }  
      \lambda_{ l_i } 
    \right)^{ ( n + 1 ) }
  }
\\ & \geq
  \frac{ 1 }{
    \left(
      n + 1
    \right)^{
      \left( n + 1 \right)
    }
  }
  \sum_{
    \substack{
      l_{ 1 },
      \dots,
      l_{ n }
      \in
      \Z^d
    \\
      l_1 
      +
      \ldots
      +
      l_n =
      v
    }
  }
  \frac{
    1
  }{
    \left(
      \lambda_v 
      +
      \sum_{ i = 1 }^{ n - 1 }  
      \lambda_{ l_i } 
    \right)^{ ( n + 1 ) }
  }
\\ & \geq
  \frac{ 1 }{
    \left(
      2 \lambda_v 
      \left( n + 1 \right)
    \right)^{
      \left( n + 1 \right)
    }
  }
  \left[
  \sum_{
    \substack{
      l_{ 1 },
      \dots,
      l_{ n }
      \in
      \Z^d
    \\
      l_1 
      +
      \ldots
      +
      l_n =
      v
    }
  }
  \frac{
    1
  }{
    \left(
      \sum_{ i = 1 }^{ n - 1 }  
      \lambda_{ l_i } 
    \right)^{ ( n + 1 ) }
  }
  \right]
\\ & =
  \frac{ 1 }{
    \left(
      2 \lambda_v 
      \left( n + 1 \right)
    \right)^{
      \left( n + 1 \right)
    }
  }
  \left[
  \sum_{
    k \in \Z^{ d ( n - 1 ) }
  }
  \frac{
    1
  }{
    \left(
      1 + \left\| k \right\|^2_{ 
        \R^{ d ( n - 1 ) }
      }
    \right)^{ ( n + 1 ) }
  }
  \right]
  = \infty
\end{split}
\end{equation}
for all $ v \in \Z^d $.
The proof of Lemma~\ref{lem:sum_tool}
is thus completed.
\end{proof}

\begin{lem}[Divergence of
averaged Wick powers]
\label{lem:limitationAWP}
Assume the setting of 
Subsection~\ref{sec:setting},
let 
$ n, d \in \{ 2, 3, \dots \} $
with
$
  \frac{ 
    ( n + 1 ) 
  }{
    ( n - 1 )
  }
  \leq
  \frac{ d }{ 2 }
$
and let
$ 
  C_0, C_1, \dots, C_{ n - 1 } \colon
  \Phi_0
  \to \R
$
be arbitrary functions.
Then it holds for every
$ v \in \Z^d $
and every $ t_0, t \in \R $
with $ t_0 < t $ that
\begin{equation}
  \E\!
  \left[
    \left|
      \left<
        g_v ,
        \int_{ t_0 }^t
        \left(
          \left( 
            V^{ \varphi }_s
          \right)^n
          -
          \sum_{ k = 0 }^{ n - 1 }
          C_k( \varphi ) 
          \cdot
          \left( 
            V^{ \varphi }_s 
          \right)^k
        \right)
        ds
      \right>_{ \! \! H }
    \right|^2
  \right] 
  \to \infty
  \qquad
  \text{as}
  \qquad
  \Phi_0 \ni \varphi
  \rightarrow 1
  .
\end{equation}
\end{lem}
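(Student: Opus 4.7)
The approach mirrors the proof of Lemma~\ref{lem:limitation} almost verbatim, with the key new ingredient being the use of Lemma~\ref{lem:correlationAver} and Lemma~\ref{lem:time_integral} in place of Lemma~\ref{lem:correlation} to obtain a sum that Lemma~\ref{lem:sum_tool} identifies as infinite. The plan is to reduce to the variance of the top-order averaged Wick power, compute it exactly, lower bound the resulting double time integral, and then witness the divergence as $\varphi \to 1$.

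First I would introduce the same auxiliary functions $\hat{C}_0, \dots, \hat{C}_n \colon \Phi_0 \to \R$ as in the proof of Lemma~\ref{lem:limitation}, namely the unique collection with $\hat{C}_n(\varphi) = 1$ and
\begin{equation*}
  x^n - \sum_{k=0}^{n-1} C_k(\varphi) \cdot x^k
  =
  \sum_{k=0}^{n} \hat{C}_k(\varphi)
  \Big[ {\textstyle\sum_{v \in \Z^d}} \tfrac{(\varphi_v)^2}{\lambda_v} \Big]^{k/2}
  H_k\!\Big( \tfrac{x}{\sqrt{\sum_v (\varphi_v)^2 / \lambda_v}} \Big) .
\end{equation*}
Evaluating at $V^\varphi_s(x)$ and integrating from $t_0$ to $t$ then produces the identity
\begin{equation*}
  \int_{t_0}^t \!\Big( ( V^\varphi_s )^n - {\textstyle\sum_{k=0}^{n-1}} C_k(\varphi)\,( V^\varphi_s )^k \Big)\, ds
  = \sum_{k=0}^{n} \hat{C}_k(\varphi) \, \circ\!( V^\varphi_{t_0,t} )^k\!\circ .
\end{equation*}

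Next I would use Lemma~\ref{lem:correlationAver} with $\varphi^{(1)} = \varphi^{(2)} = \varphi$, $t_1 = t_2 = t$, $k_1 = k_2 = v$ to deduce that the random variables $\langle g_v, \circ( V^\varphi_{t_0, t} )^k\! \circ\rangle_H$, $k \in \{0,1,\dots,n\}$, are pairwise orthogonal in $L^2(\Omega;\C)$ (this is the content of the factor $\delta_{n_1, n_2}$). Combining this with $\hat{C}_n(\varphi) = 1$ yields the chain of inequalities
\begin{equation*}
  \E\!\left[ \big| \langle g_v, {\textstyle\sum_k} \hat{C}_k(\varphi) \circ\!( V^\varphi_{t_0,t} )^k \!\circ \rangle_H \big|^2 \right]
  = \sum_{k=0}^{n} |\hat{C}_k(\varphi)|^2 \, \E\!\left[ \big| \langle g_v, \circ\!( V^\varphi_{t_0,t} )^k\!\circ \rangle_H \big|^2 \right]
  \geq \E\!\left[ \big| \langle g_v, \circ\!( V^\varphi_{t_0,t} )^n\!\circ \rangle_H \big|^2 \right] ,
\end{equation*}
reducing matters to a lower bound on the variance of the top-order averaged Wick power.

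Applying the explicit formula in Lemma~\ref{lem:correlationAver} and the lower bound \eqref{eq:identity_time_integral2} from Lemma~\ref{lem:time_integral} (together with $\sum_i \lambda_{l_i} \geq n$, which provides a uniform positive lower bound on the factor $1 - e^{-(\sum_i \lambda_{l_i}) (t - t_0)/2}$), I would obtain
\begin{equation*}
  \E\!\left[ \big| \langle g_v, \circ\!( V^\varphi_{t_0,t} )^n\!\circ \rangle_H \big|^2 \right]
  \geq \kappa \sum_{\substack{l_1, \dots, l_n \in \Z^d \\ l_1 + \dots + l_n = v}}
  \frac{ \prod_{i=1}^n ( \varphi_{l_i} )^2 }{ \big(\prod_{i=1}^n \lambda_{l_i} \big) \big(\sum_{i=1}^n \lambda_{l_i}\big) }
\end{equation*}
with $\kappa = \kappa(n, d, t - t_0) \in (0, \infty)$. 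Since for each fixed tuple $(l_1, \dots, l_n)$ we have $\prod_i (\varphi_{l_i})^2 \to 1$ as $\Phi_0 \ni \varphi \to 1$ in the Fréchet topology, truncating the sum to any finite set of tuples and applying Lemma~\ref{lem:sum_tool} (noting that $\sum_i \lambda_{l_i} \leq \lambda_v + \sum_i \lambda_{l_i}$ so divergence of the Lemma~\ref{lem:sum_tool} sum forces divergence of ours) produces arbitrarily large partial sums in the limit, proving the claim.

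The main obstacle is essentially bookkeeping: ensuring that the $c$-dependent factor $(1 - e^{-c(t-t_0)/2})$ in \eqref{eq:identity_time_integral2} can be replaced by a single positive constant independent of the summation indices, so that the divergence of the Lemma~\ref{lem:sum_tool} sum translates directly into divergence of the variance. No deeper obstacle appears, since orthogonality from Lemma~\ref{lem:correlationAver} sweeps away the lower-order Wick terms exactly as in the proof of Lemma~\ref{lem:limitation}.
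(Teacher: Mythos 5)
Your proposal is correct and follows essentially the same route as the paper: the same reduction via the auxiliary functions $\hat{C}_0,\dots,\hat{C}_n$ and the orthogonality ($\delta_{n_1,n_2}$ factor) from Lemma~\ref{lem:correlationAver}, the same lower bound on the double time integral via \eqref{eq:identity_time_integral2} in Lemma~\ref{lem:time_integral} with the index-dependent factor $1-e^{-c(t-t_0)/2}$ bounded below by a constant since $\sum_i \lambda_{l_i}\geq n\geq 1$, and the same comparison $\sum_i\lambda_{l_i}\leq \lambda_v+\sum_i\lambda_{l_i}$ to invoke Lemma~\ref{lem:sum_tool}. Your closing truncation argument just makes explicit the final limiting step that the paper leaves implicit in "combining this with Lemma~\ref{lem:sum_tool}".
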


\begin{proof}[Proof of Lemma~\ref{lem:limitationAWP}]
Throughout this proof
let 
$ 
  \hat{C}_0, \hat{C}_1, \dots
  \hat{C}_n \colon \Phi_0
  \to \R
$
be the unique functions satisfying
$
  \hat{C}_0( 0 )
  =
  - C_0( 0 )
$,
$
  \hat{C}_1( 0 )
  =
  - C_1( 0 )
$,
$ \dots $,
$
  \hat{C}_{ n - 1 }( 0 )
  =
  - C_{ n - 1 }( 0 )
$,
$
  \hat{C}_n( 0 ) 
  = 1
$
and
\begin{equation}
\begin{split}
&
  x^n
  -
  \sum_{ k = 0 }^{ n - 1 }
  C_k( \varphi )
  \cdot
  x^k
=
  \sum_{ k = 0 }^n
  \hat{C}_k( \varphi )
  \cdot
  \left[
      \sum_{ v \in \Z^d }
      \tfrac{ ( \varphi_v )^2 }{
        \lambda_v
      }
  \right]^{
    \! \frac{ k }{ 2 }
  }
  \cdot
  H_k\!\left(
  \frac{
    x
  }{
    \sqrt{
      \sum_{ v \in \Z^d }
      \frac{ ( \varphi_v )^2 }{
        \lambda_v
      }
    }
  }
  \right)
\end{split}
\end{equation}
for all 
$ x \in \R $,
$ 
  \varphi \in \Phi_0 \backslash \{ 0 \} 
$
and all $ t \in \R $
(cf.\ \eqref{eq:lem_limitation_firstdispl}).
Then Lemma~\ref{lem:correlationAver}
and Lemma~\ref{lem:time_integral}
imply that
\begin{equation}
\label{eq:limitation_main_estimateAver}
\begin{split}
&
  \E\!\left[
    \left|
      \left<
        g_v ,
        \int_{ t_0 }^t
        \left(
          \left( V_s^{ \varphi }
          \right)^n
          -
          \sum_{ k = 0 }^{ n - 1 }
          C_k( \varphi ) \cdot
          ( V_s^{ \varphi } )^k
        \right)
        ds
      \right>_{ \!\!\! H }
    \right|^2
  \right]
=
  \E\!\left[
    \left|
      \sum_{ k = 0 }^n
      \left<
        g_v ,
        \int_{ t_0 }^t
        \hat{C}_k( \varphi ) 
        \left(
          : \!
          ( V_s^{ \varphi }
          )^k
          \! :
        \right)
        ds
      \right>_{ \!\! H }
    \right|^2
  \right]
\\ & =
  \sum_{ k, l = 0 }^n
  \hat{C}_k( \varphi ) 
  \cdot
  \hat{C}_l( \varphi ) 
  \cdot
  \E\!\left[
    \overline{
      \left<
        g_v ,
        \circ
          ( V_{ t_0, t }^{ \varphi }
          )^k
        \circ
      \right>_{ \! H }
    }
    \left<
      g_v ,
      \circ
        ( V_{ t_0, t }^{ \varphi }
        )^l
      \circ
    \right>_{ \! H }
  \right]
\\ & =
  \sum_{ k = 0 }^n
  \left|
    \hat{C}_k( \varphi ) 
  \right|^2
  \E\!\left[
    \left|
      \left<
        g_v ,
          \circ
          ( V_{ t_0, t }^{ \varphi }
          )^k
          \circ
      \right>_{ H }
    \right|^2
  \right]
\geq
  \left|
    \hat{C}_n( \varphi ) 
  \right|^2
  \E\!\left[
    \left|
      \left<
        g_v ,
          \circ
          ( V_{ t_0, t }^{ \varphi }
          )^n
          \circ
      \right>_{ H }
    \right|^2
  \right]
\\ & =
  n!
  \left( 2 \pi \right)^{ 2 d }
  \left[
  \sum_{
    \substack{
      l_{ 1 },
      \dots,
      l_{ n }
      \in
      \Z^d
    \\
      l_1 
      +
      \ldots
      +
      l_n =
      v
    }
  }
  \left\{
  \prod_{ i = 1 }^n
  \frac{
    \left(
      \varphi_{
        l_i
      }
    \right)^2
  }{
    \lambda_{ l_i }
  }
  \right\}
  \left\{
    \int_{ t_0 }^{ t }
    \int_{ t_0 }^{ t }
      e^{ 
        - 
        \left(
          \sum_{ i = 1 }^n
          \lambda_{ l_i } 
        \right)
        \left| s_1 - s_2 \right|
      }
    \, 
    ds_2 \, ds_1
  \right\}
  \right]
\\ & \geq
  n!
  \left( 2 \pi \right)^{ 2 d }
  \left[
  \sum_{
    \substack{
      l_1 ,
      \dots,
      l_n
      \in
      \Z^d
    \\
      l_1 
      +
      \ldots
      +
      l_n =
      v
    }
  }
  \left(
  \prod_{ i = 1 }^n
  \frac{
    \left( \varphi_{ l_i } \right)^2
  }{
    \lambda_{ l_i }
  }
  \right)
  \frac{
    \left(
      t - t_0
    \right)
    \left(
      1 -
      e^{
        - \frac{ \left( t - t_0 \right) }{2}
      }
    \right)
  }{
    \left(
      \sum_{ i = 1 }^n
      \lambda_{ l_i }
    \right)
  }
  \right]
\\ & \geq
  n!
  \left( 2 \pi \right)^{ 2 d }
  \left[
  \sum_{
    \substack{
      l_1 ,
      \dots,
      l_n
      \in
      \Z^d
    \\
      l_1 
      +
      \ldots
      +
      l_n =
      v
    }
  }
  \frac{
    \left(
      \prod_{ i = 1 }^n
      \left( \varphi_{ l_i } \right)^2
    \right)
    \left(
      t - t_0
    \right)
    \left(
      1 -
      e^{
        - \frac{ \left( t - t_0 \right) }{2}
      }
    \right)
  }{
    \left(
      \prod_{ i = 1 }^n
      \lambda_{ l_i }
    \right)
    \left(
      \lambda_v +
      \sum_{ i = 1 }^n
      \lambda_{ l_i }
    \right)
  }
  \right]
\end{split}
\end{equation}
for all 
$ v \in \Z^d $,
$ t_0, t \in \R $
with $ t_0 \leq t $
and all $ \varphi \in \Phi_0 $.
Combining this with
Lemma~\ref{lem:sum_tool}
completes the proof of
Lemma~\ref{lem:limitationAWP}.
\end{proof}

\subsection{Convolutional 
Wick powers of Ornstein-Uhlenbeck
processes}
\label{sec:CWP}

\begin{lem}[Correlation of
convolutional Wick powers of 
$ V^{ \varphi } $, $ \varphi \in \Phi_0 $, 
in
Fourier space]
\label{lem:correlationConv}
Assume the setting of 
Subsection~\ref{sec:setting}.
Then
\begin{equation}
\begin{split}
&
  \frac{ 1 }{
    \left( 2 \pi \right)^{ 2 d }
  }
  \,
  \E\!\left[
  \overline{
    \left< g_{k_1}, 
      \bullet
        ( V_{ t_1 }^{ \varphi_1 } )^{n_1}
      \bullet
    \right>_H
  }
    \left< g_{ k_2 }, 
      \bullet
        ( V_{ t_2 }^{ \varphi_2 } )^{n_2}
      \bullet
    \right>_H
  \right]
\\ & =
\begin{cases}
  \begin{array}{c}
  n_1 ! 
  \, 
  \delta_{ n_1 , n_2 }
  \,
  \delta_{ k_1 , k_2 }
  \sum_{ 
    \substack{
      l_1, \dots, l_{ n_1 } \in \Z^d
    \\
      l_1 + \ldots + l_{ n_1 } = k_1
    }
  }
  \bigg[
  \tfrac{
    \left(
      \prod_{ i = 1 }^{ n_1 }
        \varphi^{ (1) }_{
          l_i
        }
        \,
        \varphi^{ (2) }_{
          l_i
        }
    \right)
  }{
    \left(
      \prod_{ i = 1 }^{ n_1 }
      \lambda_{ l_i }
    \right)
  }
\\ 
  \cdot
  \int_{ - \infty }^{ t_1 }
  \int_{ - \infty }^{ t_2 }
    e^{ 
      - \lambda_{k_1}
      \left( t_1 - s_1 + t_2 - s_2 \right)
      - 
      \left(
        \sum_{ i = 1 }^{ n_1 } 
        \lambda_{ l_i }
      \right)
      \left| s_1 - s_2 \right|
    }
  \, ds_2 \, ds_1
  \bigg]
\end{array}
&
  \colon
  n_1 n_2 \neq 0
\\
  \delta_{ n_1 , n_2 }
  \,
  \delta_{ k_1 , k_2 }
  \,
  \delta_{ k_1 , 0 }
& 
  \colon
  n_1 n_2 = 0
\end{cases}
\end{split}
\end{equation}
for all 
$
  t_1, t_2 
  \in \R
$,
$ 
  k_1, k_2 \in \mathbb{Z}^d 
$,
$ 
  \varphi^{ (1) }, \varphi^{(2)} \in \Phi_0 
$ 
and all
$ n_1, n_2 \in \N_0 $.
\end{lem}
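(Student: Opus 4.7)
The plan is to reduce the statement to Lemma~\ref{lem:correlation} by exploiting the fact that the exponentials $ g_k $, $ k \in \Z^d $, diagonalize the semigroup $ e^{ \mathcal{A}_d \tau } $, $ \tau \in [0,\infty) $, so that the convolution in the definition of $ \bullet ( V^{ \varphi }_t )^n \bullet $ becomes a simple scalar time integral in Fourier space. Concretely, since $ \mathcal{A}_d g_k = - \lambda_k g_k $ for all $ k \in \Z^d $, first I would show that for every $ \varphi \in \Phi_0 $, every $ n \in \N_0 $, every $ k \in \Z^d $ and every $ t \in \R $ it holds that
\begin{equation}
  \left\langle g_k , \bullet \, ( V^{ \varphi }_t )^n \bullet \right\rangle_{ \! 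H }
  =
  \int_{ - \infty }^t
  e^{ - \lambda_k ( t - s ) }
  \left\langle g_k , : \! ( V^{ \varphi }_s )^n \! : \right\rangle_{ \! H }
  ds ,
\end{equation}
which follows from the definition of $ \bullet \, ( V^{ \varphi }_t )^n \bullet $ together with the eigenrelation $ \langle g_k , e^{ \mathcal{A}_d \tau } v \rangle_H = e^{ - \lambda_k \tau } \langle g_k , v \rangle_H $ for all $ v \in C_{ \mathcal{P} }( [0,2\pi]^d, \R ) $.

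Next, I would apply Fubini's theorem (justified by taking absolute values and invoking Tonelli together with the absolute summability/integrability provided by Lemma~\ref{lem:correlation} and the exponential decay factors) to obtain, for $ n_1, n_2 \in \N $, that
\begin{equation}
\begin{split}
&
  \E\!\left[
    \overline{ \left\langle g_{ k_1 } , \bullet \, ( V^{ \varphi^{ (1) } }_{ t_1 } )^{ n_1 } \bullet \right\rangle_{ \! H } }
    \left\langle g_{ k_2 } , \bullet \, ( V^{ \varphi^{ (2) } }_{ t_2 } )^{ n_2 } \bullet \right\rangle_{ \! H }
  \right]
\\ & =
  \int_{ - \infty }^{ t_1 } \int_{ - \infty }^{ t_2 }
  e^{ - \lambda_{ k_1 } ( t_1 - s_1 ) - \lambda_{ k_2 } ( t_2 - s_2 ) }
  \,
  \E\!\left[
    \overline{ \left\langle g_{ k_1 } , : \! ( V^{ \varphi^{ (1) } }_{ s_1 } )^{ n_1 } \! : \right\rangle_{ \! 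H } }
    \left\langle g_{ k_2 } , : \! ( V^{ \varphi^{ (2) } }_{ s_2 } )^{ n_2 } \! : \right\rangle_{ \! H }
  \right]
  ds_2 \, ds_1 .
\end{split}
\end{equation}
I would then insert the explicit formula for the inner expectation supplied by Lemma~\ref{lem:correlation}, observe that the resulting Kronecker factors $ \delta_{ n_1, n_2 } \delta_{ k_1, k_2 } $ make the two exponents $ \lambda_{ k_1 } $ and $ \lambda_{ k_2 } $ equal, and commute the sum over $ ( l_1, \dots, l_{ n_1 } ) $ with the time integrals (again by Tonelli, using the non-negativity of the integrand). This produces exactly the stated expression for the case $ n_1 n_2 \neq 0 $.

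For the degenerate case $ n_1 n_2 = 0 $, I would treat it separately. If both $ n_1 $ and $ n_2 $ are zero, then $ : \! ( V^{ \varphi }_s )^0 \! : \; = 1 $ and a direct computation using $ \mathcal{A}_d \cdot 1 = - 1 $, $ \lambda_0 = 1 $ (or equivalently $ \langle g_k , 1 \rangle_H = ( 2 \pi )^d \delta_{ k, 0 } $ combined with the diagonalization above) yields $ \bullet \, ( V^{ \varphi }_t )^0 \bullet \; = 1 $ and hence the product of Fourier coefficients equals $ ( 2 \pi )^{ 2 d } \delta_{ k_1, 0 } \delta_{ k_2, 0 } = ( 2 \pi )^{ 2 d } \delta_{ k_1, k_2 } \delta_{ k_1, 0 } $. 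If exactly one of $ n_1, n_2 $ is zero, the corresponding convolutional Wick power is deterministic, can be pulled out of the expectation, and the remaining expectation vanishes since $ \E[ : \! Z^n \! : \,] = 0 $ for every centered Gaussian $ Z $ and every $ n \in \N $, which agrees with the factor $ \delta_{ n_1, n_2 } $ on the right-hand side.

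The only genuine subtleties I foresee are verifying that Fubini is applicable and that the sum over $ ( l_1, \dots, l_{ n_1 } ) $ may be interchanged with the double time integral; both are routine consequences of the positivity of the integrand in the $ \varphi^{ (1) } = \varphi^{ (2) } $ case (which dominates the general case via Cauchy--Schwarz) together with the finiteness of the double integral $ \int_{ - \infty }^{ t_1 } \int_{ - \infty }^{ t_2 } e^{ - \lambda_{ k_1 }( t_1 - s_1 + t_2 - s_2 ) - ( \sum_i \lambda_{ l_i } ) | s_1 - s_2 | } \, ds_2 \, ds_1 $, which is the expected main bookkeeping obstacle rather than a conceptual one.
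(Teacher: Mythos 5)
Your proposal is correct and follows essentially the same route as the paper: the paper likewise diagonalizes the semigroup in Fourier space to write the Fourier coefficient of the convolutional Wick power as a one-dimensional exponential time integral of the Fourier coefficient of the Wick power, exchanges expectation with the double time integral, and inserts the correlation formula of Lemma~\ref{lem:correlation}, after which the Kronecker deltas collapse the exponents and the sum--integral interchange gives the stated expression. Your additional care with Fubini/Tonelli and the degenerate case $n_1 n_2 = 0$ is consistent with (and slightly more detailed than) the paper's argument.
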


\begin{proof}[Proof
of Lemma~\ref{lem:correlationConv}]
Combining the identity
\begin{equation}
\begin{split}
&
  \frac{ 1 }{
    \left( 2 \pi \right)^{ 2 d }
  }
  \,
  \E\!\left[
  \overline{
    \left< g_{ k_1 }, 
      \bullet    
        \big( 
          V_{ t_1 }^{ \varphi^{ (1) } } 
        \big)^{ n_1 }
      \bullet
    \right>_H
  }
    \left< g_{ k_2 }, 
      \bullet
        \big( 
          V_{ t_2 }^{ \varphi^{ (2) } } 
        \big)^{ n_2 }
      \bullet
    \right>_H
  \right]
\\& =
  \int_{ - \infty }^{ t_1 }
  \int_{ - \infty }^{ t_2 }
  e^{
    - \lambda_{ k_1 }
    \left( t_1 - s_1 \right)
  }
  \,
  e^{
    - \lambda_{ k_2 }
    \left( t_2 - s_2 \right)
  }
  \,
  \frac{
  \E\!\left[
  \overline{
    \left< g_{k_1}, 
      : \!    
        \big(
          V_{ s_1 }^{ \varphi^{ (1) } } 
        \big)^{ n_1 }
      \!\! :
    \right>_H
  }
    \left< g_{ k_2 }, 
      : \!
        \big( 
          V_{ s_2 }^{ \varphi^{ (2) } } 
        \big)^{ n_2 }
      \!\! :
    \right>_H
  \right]
  }{
    \left( 2 \pi \right)^{ 2 d }
  }
  \, ds_1 \, ds_2
\end{split}
\end{equation}
for all 
$   
  \varphi^{ (1) }, \varphi^{ (2) } 
  \in \Phi_0
$,
$ k_1, k_2 \in \Z^d $,
$ n_1, n_2 \in \N_0 $
and all
$ t_1, t_2 \in \R $
with $ t_1 \leq t_2 $ 
with
Lemma~\ref{lem:correlation}
completes the proof
of Lemma~\ref{lem:correlationConv}.
\end{proof}

\begin{lem}[Time integrals 
for convolutional 
Wick powers]
\label{lem:time_integralConv}
Assume the setting of 
Subsection~\ref{sec:setting}.
Then
\begin{equation}
\begin{split}
&
  \int_{ - \infty }^{ t_1 }
  \int_{ - \infty }^{ t_2 }
    e^{ 
      - a
      \left( t_1 - s_1 + t_2 - s_2 \right)
      - 
      b
      \left| s_1 - s_2 \right|
    }
  \, ds_2 \, ds_1
  =
  \frac{
    e^{
        - a ( t_2 - t_1 )
      }
  }{
    a
    \left(
      a + b
    \right)
  }
  +
\begin{cases}
  \frac{
    \left(
      e^{
        - b 
        (t_2 - t_1)
      }
      -
      e^{
        - a
        (t_2 - t_1) 
      }
    \right)
  }{
    \left(
      a - b 
    \right)
    \left(
      a + b
    \right)
  }
&
  \colon
  a \neq b
\\[2ex]
  \frac{
    \left( t_2 - t_1 \right)
     e^{
      - a
      ( t_2 -  t_1) 
    }
  }{
    \left(
      a + b
    \right)
  }
&
  \colon
  a = b
\end{cases}
\end{split}
\end{equation}
and
\begin{equation}
  \int_{ -\infty }^{ t }
  \int_{ -\infty }^{ t }
    e^{ 
      - a
      \left( 2 t - s_1 - s_2 \right)
      - 
      b
      \left| s_1 - s_2 \right|
    }
  \, ds_2 \, ds_1   =
  \frac{
    1
  }{
    a
    \left(
      a + b
    \right)
  }
\end{equation}
for all 
$  
  a, b \in (0, \infty) 
$
and all
$
  t, t_1, t_2 
  \in \R
$
with $ t_1 \leq t_2 $.
\end{lem}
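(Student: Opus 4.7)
The plan is to reduce both identities to elementary exponential integrals by shifting the time variables so that everything lives on $[0,\infty)$, and then to split the domain of integration according to the sign of the expression inside the absolute value. I expect no conceptual obstacle; the only slightly delicate point is bookkeeping in the case $a=b$, where a naive partial-fraction step becomes singular.

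For the first identity, I would first substitute $u_i := t_i - s_i$ (so that $u_i \in [0,\infty)$ and $ds_i = -du_i$) and write $\Delta := t_2 - t_1 \geq 0$. A short computation shows that $s_1 - s_2 = -\Delta + (u_2 - u_1)$, so that the integral under consideration becomes
\begin{equation*}
  I(a,b,\Delta)
  =
  \int_0^\infty \int_0^\infty
    e^{ -a(u_1+u_2) }\,
    e^{ -b|\Delta + u_1 - u_2| }
  \, du_2\, du_1 .
\end{equation*}
Next I would split the inner integral at $u_2 = u_1 + \Delta$ to remove the absolute value, yielding $I = I_A + I_B$ where $I_A$ is the piece with $u_2 \leq u_1+\Delta$ (so the exponent is $-(a+b)u_1 - (a-b)u_2 - b\Delta$) and $I_B$ is the piece with $u_2 > u_1 + \Delta$ (so the exponent is $-(a-b)u_1 - (a+b)u_2 + b\Delta$). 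In the sub-case $a \neq b$ both pieces factor into standard one-dimensional exponential integrals, and a direct computation gives
\begin{equation*}
  I_A = \frac{e^{-b\Delta}}{(a-b)(a+b)} - \frac{e^{-a\Delta}}{2a(a-b)},
  \qquad
  I_B = \frac{e^{-a\Delta}}{2a(a+b)},
\end{equation*}
whose sum, after combining the $e^{-a\Delta}$ terms over the common denominator $(a-b)(a+b)$, coincides with the formula $\frac{e^{-a\Delta}}{a(a+b)} + \frac{e^{-b\Delta} - e^{-a\Delta}}{(a-b)(a+b)}$ claimed by the lemma.

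For the sub-case $a = b$, the cleanest route is to take the limit $a \to b$ in the formula above, applying l'Hôpital to the factor $(e^{-b\Delta} - e^{-a\Delta})/(a - b)$, which produces exactly the additional term $\Delta e^{-a\Delta}/(a+b)$; alternatively one redoes the same split with $a=b$, noting that the inner integral in $I_A$ is $\int_0^{u_1+\Delta} 1\, du_2 = u_1+\Delta$ and the inner integral in $I_B$ is elementary, and both contributions sum to $\frac{e^{-a\Delta}}{a(a+b)} + \frac{\Delta e^{-a\Delta}}{a+b}$ as claimed. The second identity is then a direct corollary: the same substitution $u_i := t - s_i$ with the two time arguments equal corresponds to $\Delta = 0$, and evaluating either branch of the first formula at $\Delta = 0$ immediately gives the value $\frac{1}{a(a+b)}$. (Alternatively, by the symmetry $u_1 \leftrightarrow u_2$ this integral equals $2 \int_0^\infty \int_0^{u_1} e^{-(a-b)u_1 - (a+b)u_2}\, du_2\, du_1$, which evaluates to $\frac{1}{a(a+b)}$ by two elementary integrations.) This completes the proof.
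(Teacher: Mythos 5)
Your proof is correct and follows essentially the same elementary route as the paper: split the domain according to the sign of the argument of the absolute value and evaluate iterated exponential integrals (your change of variables $u_i = t_i - s_i$ merely reorganizes the paper's three-region split $\{s_2\le s_1\}$, $\{s_1\le s_2\le t_1\}$, $\{s_1\le t_1\le s_2\}$ into two regions, and you obtain the second identity as the specialization $\Delta=0$, with both your direct $a=b$ computation and the dominated-convergence/l'H\^opital limit being fine). One small slip: in your parenthetical alternative for the second identity the exponents are swapped --- over $\{u_2\le u_1\}$ the integrand is $e^{-(a+b)u_1-(a-b)u_2}$, whereas the expression you wrote, $2\int_0^\infty\!\int_0^{u_1}e^{-(a-b)u_1-(a+b)u_2}\,du_2\,du_1$, evaluates to $\tfrac{1}{a(a-b)}$ (and diverges for $a\le b$); this does not affect your main argument, which already yields $\tfrac{1}{a(a+b)}$.
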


\begin{proof}[Proof
of 
Lemma~\ref{lem:time_integralConv}]
First of all, note that
\begin{equation}
\label{eq:integral_first}
\begin{split}
&
  \int_{ - \infty }^{ t_1 }
  \int_{ - \infty }^{ t_2 }
    1_{
      \left\{
        ( u_1, u_2 ) \in \R^2
        \colon
        u_2 \leq u_1 
      \right\} 
    }(s_1, s_2)
    \cdot
    e^{ 
      - a
      \left( t_1 - s_1 + t_2 - s_2 \right)
      - 
      b
      \left| s_1 - s_2 \right|
    }
  \, ds_2 \, ds_1
\\ & =
  \int_{ - \infty }^{ t_1 }
  \int_{ - \infty }^{ t_1 }
    1_{
      \left\{
        ( u_1, u_2 ) \in \R^2
        \colon
        u_2 \leq u_1 
      \right\} 
    }(s_1, s_2)
    \cdot
    e^{ 
      - a
      \left( t_1 - s_1 + t_2 - s_2 \right)
      - 
      b
      \left| s_1 - s_2 \right|
    }
  \, ds_2 \, ds_1
\\ & =
  \int_{ - \infty }^{ t_1 }
  \int_{ - \infty }^{ s_1 }
    e^{ 
      - a
      \left( t_1 - s_1 + t_2 - s_2 \right)
      - 
      b
      \left( s_1 - s_2 \right)
    }
  \, ds_2 \, ds_1
\\ & =
  \frac{
    \int_{ - \infty }^{ t_1 }
      e^{ - a ( t_1 + t_2 - 2 s_1 ) }
    \, ds_1
  }{
    \left( a + b \right)
  }
=
  \frac{
    e^{
      - a ( t_2 -  t_1)
    }
    -
    e^{
      - a ( t_1 + t_2 - 2 t_0 )
    }
  }{
    2 a
    \left(
      a + b
    \right)
  } 
  ,
\end{split}
\end{equation}
that
\begin{equation}
\label{eq:integral_second}
\begin{split}
& 
  \int_{ - \infty }^{ t_1 }
  \int_{ - \infty }^{ t_2 }
    1_{
      \left\{
        ( u_1 , u_2 ) \in \R^2
        \colon
        u_1 \leq u_2 \leq t_1
      \right\} 
    }(s_1, s_2)
    \cdot
    e^{ 
      - a
      \left( t_1 - s_1 + t_2 - s_2 \right)
      - b
      \left| s_1 - s_2 \right|
    }
  \, ds_2 \, ds_1
\\ & =
  \int_{ - \infty }^{ t_1 }
  \int_{ - \infty }^{ t_1 }
    1_{
      \left\{
        ( u_1 , u_2 ) \in \R^2
        \colon
        u_1 \leq u_2 
      \right\} 
    }(s_1, s_2)
    \cdot
    e^{ 
      - a
      \left( t_1 - s_1 + t_2 - s_2 \right)
      - b
      \left| s_1 - s_2 \right|
    }
  \, ds_2 \, ds_1
\\ & =
  \int_{ - \infty }^{ t_1 }
  \int_{ - \infty }^{ t_1 }
    1_{
      \left\{
        ( u_1 , u_2 ) \in \R^2
        \colon
        u_2 \leq u_1
      \right\} 
    }(s_1, s_2)
    \cdot
    e^{ 
      - a
      \left( t_1 - s_1 + t_2 - s_2 \right)
      - b
      \left| s_1 - s_2 \right|
    }
  \, ds_2 \, ds_1
\\ & =
  \frac{
    e^{
      - a ( t_2 -  t_1)
    }
    -
    e^{
      - a ( t_1 + t_2 - 2 t_0 )
    }
  }{
    2 a
    \left(
      a + b
    \right)
  } 
\end{split}
\end{equation}
and that
\begin{equation}
\label{eq:integral_third}
\begin{split}
& 
  \int_{ - \infty }^{ t_1 }
  \int_{ - \infty }^{ t_2 }
    1_{
      \left\{
        ( u_1 , u_2 ) \in \R^2
        \colon
        u_1 \leq t_1 \leq u_2
      \right\} 
    }(s_1, s_2)
    \cdot
    e^{ 
      - a
      \left( t_1 - s_1 + t_2 - s_2 \right)
      - 
      b
      \left| s_1 - s_2 \right|
    }
  \, ds_2 \, ds_1
\\ & =
  \int_{ t_1 }^{ t_2 }
  \int_{ - \infty }^{ t_1 }
    e^{ 
      - a
      \left( t_1 - s_1 + t_2 - s_2 \right)
      - b
      \left( s_2 - s_1 \right)
    }
  \, ds_1 \, ds_2
\\ & =
  \frac{
    1
  }{
    \left( a + b \right)
  }
  \int_{ t_1 }^{ t_2 }
  \left(
    e^{
      - a
      ( t_2 -  s_2 )
      - b
      (s_2 - t_1)
    }
  \right)
  ds_2
=
\begin{cases}
  \frac{
    e^{ 
      - b  
      (t_2 - t_1)
    }
    -
    e^{
      - a 
      (t_2 - t_1) 
    }
  }{
    \left(
      a - b
    \right)
    \left(
      a + b
    \right)
  }
&
  \colon
    a \neq b
\\[2ex]
  \frac{
    \left( t_2 - t_1 \right)
    e^{
      - a (t_2 -  t_1)
    }
  }{
    \left(
      a + b
    \right)
  }
& \colon
  a = b
\end{cases}
\end{split}
\end{equation}
for all $ t_1, t_2 \in \R $
with $ t_1 \leq t_2 $.
Combining 
\eqref{eq:integral_first}--\eqref{eq:integral_third}
proves that
\begin{equation}
\begin{split}
&
  \int_{ - \infty }^{ t_1 }
  \int_{ - \infty }^{ t_2 }
    e^{ 
      - a
      \left( t_1 - s_1 + t_2 - s_2 \right)
      - 
      b
      \left| s_2 - s_1 \right|
    }
  \, ds_2 \, ds_1
  =
  \frac{
    e^{
      - a (t_2 - t_1)
    }
  }{
    a
    \left(
      a + b
    \right)
  }
  +
\begin{cases}
  \frac{
    e^{
      - a  
      ( t_2 - t_1 )
    }
    -
    e^{
      - b 
      ( t_2 - t_1 ) 
    }
  }{
    \left(
      a - b
    \right)
    \left(
      a + b
    \right)
  }
& 
\colon
  a \neq b
\\[2ex]
  \frac{
    \left( t_2 - t_1 \right)
    e^{
      - a
      ( t_2 -  t_1) 
    }
  }{
    \left(
      a + b
    \right)
  }
&
  \colon
  a = b
\end{cases}
\end{split}
\end{equation}  
for all $ t_1, t_2 \in \R $
with $ t_1 \leq t_2 $.
The proof of 
Lemma~\ref{lem:time_integralConv}
is thus completed.
\end{proof}

The next proposition
proves convergence of
convolutional Wick powers 
under the assumption
that
$ n, d \in \{ 2, 3, \dots \} $
with
$
  \frac{ n + 1 }{ n - 1 }
  > 
  \frac{ d }{ 2 }
$.
Its proof uses
Lemma~\ref{lem:correlationConv},
Lemma~\ref{lem:time_integralConv}
and
Lemma~\ref{lem:hypercontractivity}.

\begin{prop}[Convergence of 
convolutional Wick powers]
\label{prop:regularity_WickConv}
Assume the setting of 
Subsection~\ref{sec:setting}
and let 
$
  n, d \in \{ 2, 3, \dots \}
$
with
$
  \frac{ n + 1 }{ n - 1 }
  > 
  \frac{ d }{ 2 }
$.
Then there exists
an up to 
indistinguishability unique 
stochastic process
\begin{equation} 
  \bullet ( V )^n \bullet
  \colon
  \R \times \Omega
  \to 
  \cap_{
    \beta \in ( - \infty , 
      2 + \frac{ n ( 2 - d ) }{ 2 }
    )
  }
  \mathcal{C}^{
    \beta
  }_{
    \mathcal{P}
  }(
    [0, 2 \pi]^d,
    \R
  )
\end{equation}
with continuous sample 
paths which satisfies
for every $ T, p \in (0,\infty) $
and every
$ \alpha \in ( 0, \frac{ 1 }{ 2 } ) $,
$ \beta \in \R $
with 
$ 
  2 \alpha + \beta 
  < 
  2 +
  \frac{ n ( 2 - d ) }{ 2 }
$
that
\begin{equation}
  \left\|
    \bullet
    ( 
      V^{ \varphi } 
    )^n
    \bullet
    -
    \bullet
    ( 
      V
    )^n
    \bullet
  \right\|_{
    L^p(
      \Omega;
      \mathcal{C}^{ \alpha }
      (
        [ - T , T ],
        \mathcal{C}_{ \mathcal{P} 
        }^{ \beta  
        }(
          [0, 2 \pi]^d,
          \mathbb{R})
      )
    )
  }
  \to 0
  \qquad
  \text{as}
  \qquad
  \Phi_{ 0, \leq 1 } 
  \ni \varphi \to 1 .
\end{equation}
\end{prop}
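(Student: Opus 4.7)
The strategy closely mirrors the proofs of Proposition~\ref{prop:regularity_Wick} and Proposition~\ref{prop:regularity_WickAver}: I would express the Fourier-mode correlations of $\bullet(V^\varphi)^n\bullet-\bullet(V^\psi)^n\bullet$ via Lemma~\ref{lem:correlationConv}, collapse the resulting double time integrals using Lemma~\ref{lem:time_integralConv}, verify uniform summability of the lattice sums against the Sobolev weights $(\lambda_k)^{2\hat\beta}$ demanded by Lemma~\ref{lem:hypercontractivity}, pass to the limit $(\varphi,\psi)\to(1,1)$ by dominated convergence, and invoke Lemma~\ref{lem:hypercontractivity} to convert the Fourier-side bound into the stated $L^p(\Omega;C^\alpha([-T,T],\mathcal{C}_{\mathcal{P}}^{2\beta}))$-Cauchy estimate. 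Existence and indistinguishable uniqueness of $\bullet(V)^n\bullet$ then follow from completeness of the target Banach space and the Cauchy property of $\{\bullet(V^\varphi)^n\bullet\}_{\varphi\in\Phi_{0,\leq 1}}$.

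For the on-diagonal part ($t_1=t_2=t$), Lemma~\ref{lem:correlationConv} combined with the second identity of Lemma~\ref{lem:time_integralConv} gives
\begin{equation*}
  \tfrac{1}{(2\pi)^{2d}}\,\E\!\left[\left|\left\langle g_k,\bullet(V_t^\varphi)^n\bullet-\bullet(V_t^\psi)^n\bullet\right\rangle_{\!H}\right|^2\right]=n!\sum_{\substack{l_1,\dots,l_n\in\Z^d\\ l_1+\dots+l_n=k}}\tfrac{\left(\prod_i\varphi_{l_i}-\prod_i\psi_{l_i}\right)^{2}}{\left(\prod_i\lambda_{l_i}\right)\lambda_k\!\left(\lambda_k+\sum_i\lambda_{l_i}\right)},
\end{equation*}
with $(\prod_i\varphi_{l_i}-\prod_i\psi_{l_i})^{2}\leq(\sum_i|\varphi_{l_i}-\psi_{l_i}|)^{2}$ (by \eqref{eq:phipsi_est}) carrying the $(\varphi,\psi)$-dependence. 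The new feature compared to the averaged case is the additional factor $\lambda_k(\lambda_k+\sum_i\lambda_{l_i})^{-1}$ from the convolution; using the interpolation $\lambda_k(\lambda_k+\sum_i\lambda_{l_i})\geq n^\theta\lambda_k^{2-\theta}\prod_i\lambda_{l_i}^{\theta/n}$ for a parameter $\theta\in(0,1]$ tuned against $(n,d)$, and iterating Corollary~\ref{cor:discrete_conv} $(n-1)$ times, one obtains
\begin{equation*}
  \sum_{\substack{l_1,\dots,l_n\in\Z^d\\ l_1+\dots+l_n=k}}\tfrac{1}{(\prod_i\lambda_{l_i})\,\lambda_k(\lambda_k+\sum_i\lambda_{l_i})}\;\lesssim\;\lambda_k^{-(n+2-nd/2)}
\end{equation*}
(with a logarithmic correction in the borderline case $d=2$), so that $\sum_k(\lambda_k)^{2\hat\beta}\cdot(\text{inner sum})<\infty$ precisely for $2\hat\beta<2+\tfrac{n(2-d)}{2}$, matching the spatial regularity claimed. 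For the time-increment piece, the same lemmas produce an analogous sum in which Lemma~\ref{lem:time_integralConv}'s first identity introduces correction terms of the form $(1-e^{-\lambda_k|t_1-t_2|})$ and $(1-e^{-(\sum_i\lambda_{l_i})|t_1-t_2|})$; bounding $1-e^{-x}\leq x^{2\hat\alpha}$ for $\hat\alpha\in[0,\tfrac12]$ converts these into $|t_1-t_2|^{2\hat\alpha}$ multiplied by $\lambda_k^{2\hat\alpha}$ or $\sum_i\lambda_{l_i}^{2\hat\alpha}$, consuming exactly $2\hat\alpha$ of the spatial-regularity budget and yielding the combined constraint $2\hat\alpha+2\hat\beta<2+\tfrac{n(2-d)}{2}$ required by the proposition.

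The main obstacle I anticipate is choosing the interpolation exponent $\theta$ so that simultaneously (i) the iterated application of Corollary~\ref{cor:discrete_conv} with $a=1+\theta/n$ remains valid at every step (which roughly imposes $\theta>(n-1)d/2-n$ at the last step) and (ii) the resulting $\lambda_k$-decay rate $n+2-nd/2$ is $\theta$-independent, allowing one to push the regularity arbitrarily close to the critical exponent $2+\tfrac{n(2-d)}{2}$. Under the standing assumption $\tfrac{n+1}{n-1}>\tfrac{d}{2}$ such a $\theta$ always exists, whereas in the borderline or super-critical regime an analogue of Lemma~\ref{lem:sum_tool} forces the inner sum to diverge, consistent with the negative counterpart announced in Lemma~\ref{lem:limitationCWP}.
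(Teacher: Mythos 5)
Your proposal is correct in outline and follows essentially the same route as the paper's proof: Lemma~\ref{lem:correlationConv} and Lemma~\ref{lem:time_integralConv} for the Fourier-space correlations, an interpolation of the smoothing factor $\lambda_{k}(\lambda_{k}+\sum_{i}\lambda_{l_i})$ against $\prod_i\lambda_{l_i}$ (the paper fixes your $\theta$ as $\min(d-2,1)$, producing the prefactor $(\lambda_{k_1})^{-\max(4-d,1)}$), Corollary~\ref{cor:discrete_conv} for the constrained lattice sums, the bound $1-e^{-x}\leq x^{2\alpha}$ for $\alpha\in[0,\tfrac{1}{2}]$ on the time increments, dominated convergence in $(\varphi,\psi)$, and finally Lemma~\ref{lem:hypercontractivity}. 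The only flaw is an arithmetic slip in the intermediate decay rate: carrying out your own $\theta$-interpolation correctly gives the inner sum $\lesssim\lambda_k^{-\gamma}$ for every $\gamma<2+\tfrac{d}{2}+\tfrac{n(2-d)}{2}$, not $\lambda_k^{-(n+2-nd/2)}$ as stated; with your stated exponent the summation against $(\lambda_k)^{2\hat\beta}$ would only reach $2\hat\beta<2+\tfrac{n(2-d)}{2}-\tfrac{d}{2}$, whereas the corrected exponent yields precisely the threshold $2\hat\beta<2+\tfrac{n(2-d)}{2}$ that you and the proposition claim.
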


\begin{proof}[Proof
of Proposition~\ref{prop:regularity_WickConv}]

Lemma~\ref{lem:correlationConv}
and 
Lemma~\ref{lem:time_integralConv}
imply
\begin{equation}
\label{eq:L_inf_part_CWP}
\begin{split}
&
  \frac{ 1 }{
    \left( 2 \pi \right)^{ 2 d }
  }
  \,
      \E\!\left[
        \overline{
          \left< 
            g_{ k_1 } ,
            \bullet
            \big(
              V_t^{ 
                \varphi 
              }
            \big)^n
            \bullet
            \,
            -
            \,
            \bullet\big(
              V_t^{ \psi }
            \big)^n
            \bullet
          \right>_{ \! H } 
        }
          \left< 
            g_{ k_2 } ,
            \bullet
            \big(
              V_t^{ 
                \varphi
              }
            \big)^n
            \bullet
            \,
            -
            \,
            \bullet\big(
              V_t^{ 
                \psi
              }
            \big)^n
            \bullet
          \right>_{ \! H } 
      \right]
\\ & = 
  n!
  \,
  \delta_{ k_1, k_2 }
    \sum_{
      \substack{
        l_1, \ldots, l_n
        \in \Z^d
      \\
        l_1 + \ldots + l_n 
        = k_1
      }
    }
  \left[
  \begin{array}{c}
    \left\{
      \prod_{
        i = 1 
      }^n
      \frac{ 
        \left[
          \varphi_{ l_i }
        \right]^2
      }{
        \lambda_{ l_i }
      }
      - 
      2
      \prod_{
        i = 1 
      }^n
      \frac{ 
        \varphi_{ l_i }
        \psi_{ l_i }
      }{
        \lambda_{ l_i }
      }
      +
      \prod_{
        i = 1 
      }^n
      \frac{ 
        \left[
          \psi_{ l_i }
        \right]^2
      }{
        \lambda_{ l_i }
      }
    \right\}
  \\ \cdot
  \int_{ - \infty }^{ t }
  \int_{ - \infty }^{ t }
    e^{ 
      - \lambda_{k_1}
      \left( 2 t - s_1 - s_2 \right)
      - 
      \left( 
       \sum_{i=1}^n \lambda_{l_i}
      \right)
      \left| s_2 - s_1 \right|
    }
  \, ds_2 \, ds_1
  \end{array}
  \right]
\\ & =
  n!
  \,
  \delta_{ k_1, k_2 }
  \sum_{
    \substack{
      l_1, \ldots, l_n
      \in \Z^d
    \\
      l_1 + \ldots + l_n 
      = k_1
    }
  }
  \frac{ 
    \left(
      \prod_{ i = 1 }^n
      \varphi_{ l_i }
      -
      \prod_{ i = 1 }^n
      \psi_{ l_i }
    \right)^2
  }{
    \left(
      \prod_{ i = 1 }^n
      \lambda_{ l_i }
    \right)
    \lambda_{ k_1 }
    \left(
      \lambda_{ k_1 } 
      +
      \sum_{ i = 1 }^n  
      \lambda_{ l_i } 
    \right)
  }
\\ & \leq
  \frac{
    n! \, \delta_{ k_1, k_2 }
  }{
    \left( 
      \lambda_{ k_1 }
    \right)^{
      \max( 4 - d, 1 )
    }
  }
  \left[
  \sum_{
    \substack{
      l_1, \ldots, l_n
      \in \Z^d
    \\
      l_1 + \ldots + l_n 
      = k_1
    }
  }
  \frac{ 
    \left(
      \prod_{ i = 1 }^n
      \varphi_{ l_i }
      -
      \prod_{ i = 1 }^n
      \psi_{ l_i }
    \right)^2
  }{
    \left(
      \prod_{ i = 1 }^n
      \left(
        \lambda_{ l_i }
      \right)^{
        \left(
          1 +
          \frac{ \min( d - 2 , 1 ) }{ n }
        \right)
      }
    \right)
  }
  \right]
\end{split}
\end{equation}
for all 
$ k_1, k_2 \in \Z^d $,
$ t \in \R $
and all
$
  \varphi, \psi \in \Phi_0
$.
Next note that
Corollary~\ref{cor:discrete_conv}
ensures that
\begin{equation}
  \sup_{
    k_1 \in \Z^d
  }
  \left[
    \sum_{
      \substack{
        l_1, \ldots, l_n
        \in \Z^d
      \\
        l_1 + \ldots + l_n 
        = k_1
      }
    }
    \frac{
      \left(
        \lambda_{ k_1 }
      \right)^{
        \gamma
      }
    }{
      \left(
        \prod_{ i = 1 }^n
        \left(
          \lambda_{ l_i }
        \right)^{
          \left(
            1 +
            \frac{ \min( d - 2 , 1 ) }{ n }
          \right)
        }
      \right)
    }
  \right]
  < \infty
\end{equation}
for all 
$ 
  \gamma \in 
  \big(
    - \infty ,
    \frac{ d }{ 2 } 
    +
    \min( d - 2, 1 )
    +
    n 
    \left( 1 - \frac{ d }{ 2 } \right)
  \big)
$.
Therefore,
we obtain that
\begin{equation}
\label{eq:summability_CWP_2}
  \sum_{ 
    k_1 \in \Z^d 
  }
  \sum_{
    \substack{
      l_1, \ldots, l_n
      \in \Z^d
    \\
      l_1 + \ldots + l_n 
      = k_1
    }
  }
    \frac{
      \left(
        \lambda_{ k_1 }
      \right)^{
        \left( 
          2 \beta 
          -
          \max( 4 - d, 1 )
        \right)
      }
    }{
      \left(
        \prod_{ i = 1 }^n
        \left(
          \lambda_{ l_i }
        \right)^{
          \left(
            1 +
            \frac{ \min( d - 2 , 1 ) }{ n }
          \right)
        }
      \right)
    }
  <
  \infty
\end{equation}
for all 
$ 
  \beta \in 
  \big(
    - \infty ,
    \frac{ 
      4 -
      \left( d - 2 \right) n
    }{ 4 } 
  \big)
$.
Combining this
with \eqref{eq:L_inf_part_CWP}
and dominated convergence 
shows for every
$ 
  \beta \in 
  \big(
    - \infty ,
    1 -
    \frac{ 
      \left( d - 2 \right) n
    }{ 4 } 
  \big)
$ 
that
\begin{equation}
\label{eq:CWP_firstlimit}
  \sup_{
    t
    \in
    \R
  }
    \sum_{
      k_1, k_2
      \in
      \Z^d
    }
  \tfrac{
    \left|
      \E\left[
        \overline{
          \left< 
            g_{ k_1 } ,
            \bullet
            \left(
              V_t^{ 
                \varphi 
              }
            \right)^n
            \bullet
            \,
            -
            \,
            \bullet\left(
              V_t^{ \psi }
            \right)^n
            \bullet
          \right>_{ \! H } 
        }
          \left< 
            g_{ k_2 } ,
            \bullet
            \left(
              V_t^{ 
                \varphi
              }
            \right)^n
            \bullet
            \,
            -
            \,
            \bullet\left(
              V_t^{ 
                \psi
              }
            \right)^n
            \bullet
          \right>_{ \! H } 
      \right]
    \right|
  }{
    \left(
      \lambda_{ k_1 }
      \lambda_{ k_2 }
    \right)^{ - \beta }
  }
  \to 0
  \quad
  \text{as}
  \quad
      ( \Phi_{ 0, \leq 1 } )^2 
      \ni
      (
        \varphi,
        \psi
      )
      \rightarrow (1,1) .
\end{equation}
In the next step
let 
$ 
  h_{ k_1, l_1, \dots, l_n } 
  \colon \R^2 \to \R 
$,
$
  k_1, l_1, \dots, l_n \in \Z^d
$,
be functions defined
through
\begin{equation}
  h_{
    k_1, l_1, \dots, l_n
  }( t_1, t_2 )
:=
  \int_{ - \infty }^{ t_1 }
  \int_{ - \infty }^{ t_2 }
    e^{ 
      - \lambda_{k_1}
      \left( t_1 - s_1 + t_2 - s_2 \right)
      - 
      \left(
        \sum_{ i = 1 }^{ n } 
        \lambda_{ l_i }
      \right)
      \left| s_1 - s_2 \right|
    }
  \, ds_2 \, ds_1
\end{equation}
for all
$ t_1, t_2 \in \R $
and all
$ k_1, l_1, \dots, l_n \in \Z^d $.
Then observe that 
Lemma~\ref{lem:time_integralConv}
implies that
\begin{equation}
\begin{split}
&
  h_{ k_1, l_1, \dots, l_n }(t_1, t_1)
  - 2 h_{ k_1, l_1, \dots, l_n }(t_1, t_2)
  + h_{ k_1, l_1, \dots, l_n }(t_2, t_2)
\\[2ex] & =
  \frac{
    2
    \left(
      1 -
      e^{
        - \lambda_{ k_1 } ( t_2 - t_1 )
      }
    \right)
  }{
    \lambda_{ k_1 }
    \left(
      \lambda_{k_1} 
      +
      \sum_{ i = 1 }^n 
      \lambda_{ l_i } 
    \right)
  }
  - 2 \cdot
\begin{cases}
  \frac{
    e^{
      - 
      \left(
        \sum_{ i = 1 }^n  \lambda_{ l_i } 
      \right)  
      ( t_2 - t_1 )
    }
    -
    e^{
      - \lambda_{ k_1 } 
      ( t_2 - t_1 ) 
    }
  }{
    \left(
      \lambda_{ k_1 } - 
      \sum_{ i = 1}^n 
      \lambda_{ l_i }  
    \right)
    \left(
      \lambda_{ k_1 } + 
      \sum_{ i = 1 }^n
      \lambda_{ l_i } 
    \right)
  }
&
\colon
  \lambda_{ k_1 } 
  \neq
  \sum_{ i = 1 }^n \lambda_{l_i}  
\\[2ex]
  \frac{
    ( t_2 - t_1 )
    e^{
      - \lambda_{ k_1 } 
      ( t_2 -  t_1) 
    }
  }{
    \left(
      \lambda_{k_1} + 
      \left[ 
        \sum_{ i = 1 }^n \lambda_{ l_i } 
      \right] 
    \right)
  }
&
\colon
  \lambda_{ k_1 } 
  =
  \sum_{ i = 1 }^n  \lambda_{ l_i } 
\end{cases}
\\[2ex] & \leq
  \frac{
    2
    \left(
      1 -
      e^{
        - \lambda_{ k_1 } ( t_2 - t_1 )
      }
    \right)
  }{
    \lambda_{ k_1 }
    \left(
      \lambda_{ k_1 } 
      +
      \sum_{ i = 1 }^n  \lambda_{ l_i } 
    \right)
  }
\leq
  \frac{
    2 
    \left(
      \lambda_{k_1}
    \right)^{ 2 \alpha }
    \left( 
      t_2 - t_1 
    \right)^{ 2 \alpha }
  }{
    \lambda_{k_1}
    \left(
      \lambda_{k_1} +
      \sum_{ i = 1 }^n  
      \lambda_{ l_i } 
    \right)
  }
=
  \frac{
    2 
    \left(
      \lambda_{ k_1 }
    \right)^{ \left( 2 \alpha - 1 \right) }
    \left( 
      t_2 - t_1 
    \right)^{ 2 \alpha }
  }{
    \left(
      \lambda_{ k_1 } +
      \sum_{ i = 1 }^n  
      \lambda_{ l_i } 
    \right)
  }
\end{split}
\end{equation}  
for all 
$ k_1, l_1, \dots, l_n \in \Z^d $,
$ 
  \alpha \in [ 0, \frac{ 1 }{ 2 } ]
$
and all
$ t_1, t_2 \in \R $
with $ t_1 \leq t_2 $.
Lemma~\ref{lem:correlationConv}
hence shows that
\begin{equation}
\label{eq:proof_lem_time_diffAver}
\begin{split} 
& 
  \frac{ 1 }{
    \left( 2 \pi \right)^{ 2 d }
  }
  \,
          \E\!\left[
          \begin{split}
          &
            \left<
              g_{ - k_1 } 
            ,
              \left[
                \bullet(
                  V_{ t_1 }^{
                    \varphi
                  }
                )^n
                \bullet
                -
                \bullet(
                  V_{ t_1 }^{ 
                    \psi
                  }
                )^n
                \bullet
              \right]
              -
              \left[
                \bullet(
                  V_{ t_2 }^{
                    \varphi
                  }
                )^n
                \bullet
                -
                \bullet(
                  V_{ t_2 }^{ 
                    \psi
                  }
                )^n
                \bullet
              \right]
            \right>_{ \! H }
          \\ & \;
            \cdot
            \left<
              g_{ k_2 } 
              ,
              \left[
                \bullet(
                  V_{ t_1 }^{
                    \varphi
                  }
                )^n
                \bullet
                -
                \bullet(
                  V_{ t_1 }^{ 
                    \psi
                  }
                )^n
                \bullet
              \right]
              -
              \left[
                \bullet(
                  V_{ t_2 }^{
                    \varphi
                  }
                )^n
                \bullet
                -
                \bullet(
                  V_{ t_2 }^{ 
                    \psi
                  }
                )^n
                \bullet
              \right]
            \right>_{ \! H }
          \end{split}
          \right]
\\ & =
  n! \,
  \delta_{
    k_1, k_2
  }
  \!\!\!\!\!\!\!\!\!
    \sum_{
      \substack{
        l_1, \ldots, l_n
        \in \Z^d
      \\
        l_1 + \ldots + l_n 
        = k_1
      }
    }
  \!\!\!\!\!\!\!\!
    \tfrac{
      \left(
        \prod_{ i = 1 }^n
        \left( \varphi_{ l_i } \right)^2
        - 2
        \prod_{ i = 1 }^n
        \varphi_{ l_i } 
        \psi_{ l_i } 
        +
        \prod_{ i = 1 }^n
        \left( \psi_{ l_i } \right)^2
      \right)
      \left(
        h_{ k_1, l_1, \dots, l_n }(t_1, t_1)
        - 2 h_{ k_1, l_1, \dots, l_n }(t_1, t_2)
        + h_{ k_1, l_1, \dots, l_n }(t_2, t_2)
      \right)
    }{
      \left(
        \prod_{ i = 1 }^n
        \lambda_{ l_i } 
      \right)
    }
\\ & =
  n! \,
  \delta_{
    k_1, k_2
  }
  \!\!\!\!\!\!\!\!\!
    \sum_{
      \substack{
        l_1, \ldots, l_n
        \in \Z^d
      \\
        l_1 + \ldots + l_n 
        = k_1
      }
    }
  \!\!\!\!\!\!\!\!
    \tfrac{
      \left(
        \prod_{ i = 1 }^n
        \varphi_{ l_i } 
        - 
        \prod_{ i = 1 }^n
        \psi_{ l_i } 
      \right)^2
      \left(
        h_{ k_1, l_1, \dots, l_n }(t_1, t_1)
        - 2 h_{ k_1, l_1, \dots, l_n }(t_1, t_2)
        + h_{ k_1, l_1, \dots, l_n }(t_2, t_2)
      \right)
    }{
      \left(
        \prod_{ i = 1 }^n
        \lambda_{ l_i } 
      \right)
    }
\\ & \leq
  \left( n + 1 \right) ! 
  \,
  \delta_{
    k_1, k_2
  }
  \left[
    \sum_{
      \substack{
        l_1, \ldots, l_n
        \in \Z^d
      \\
        l_1 + \ldots + l_n 
        = k_1
      }
    }
    \frac{
      \left(
        \prod_{ i = 1 }^n
        \varphi_{ l_i } 
        - 
        \prod_{ i = 1 }^n
        \psi_{ l_i } 
      \right)^2
      \left(
        \lambda_{ k_1 }
      \right)^{ \left( 2 \alpha - 1 \right) }
    }{
      \left(
        \prod_{ i = 1 }^n
        \lambda_{ l_i } 
      \right)
      \left(
        \lambda_{ k_1 } +
        \sum_{ i = 1 }^n  
        \lambda_{ l_i } 
      \right)
    }
  \right]
  \left( 
    t_2 - t_1 
  \right)^{ 2 \alpha }
\\ & \leq
  \frac{
    \left( n + 1 \right) ! 
    \,
    \delta_{
      k_1, k_2
    }
  }{
    \left( 
      \lambda_{ k_1 }
    \right)^{
      \left(
        \max( 4 - d, 1 ) - 2 \alpha 
      \right)
    }
  }
  \left[
    \sum_{
      \substack{
        l_1, \ldots, l_n
        \in \Z^d
      \\
        l_1 + \ldots + l_n 
        = k_1
      }
    }
    \frac{
      \left(
        \prod_{ i = 1 }^n
        \varphi_{ l_i } 
        - 
        \prod_{ i = 1 }^n
        \psi_{ l_i } 
      \right)^2
    }{
      \left(
        \prod_{ i = 1 }^n
        \left(
          \lambda_{ l_i } 
        \right)^{
          \left(
            1 +
            \frac{ \min( d - 2 , 1 ) }{ n }
          \right)
        }
      \right)
    }
  \right]
  \left( 
    t_2 - t_1 
  \right)^{ 2 \alpha }
\end{split}
\end{equation}
for all 
$ k_1, k_2 \in \Z^d $,
$ \varphi, \psi \in \Phi_{ 0, \leq 1 } $,
$
  \alpha \in [ 0, \frac{ 1 }{ 2 } ]
$
and all
$ t_1, t_2 \in \R $
with $ t_1 \leq t_2 $.
In addition,
Corollary~\ref{cor:discrete_conv}
ensures that
\begin{equation}
  \sup_{
    k_1 \in \Z^d
  }
  \left[
    \sum_{
      \substack{
        l_1, \ldots, l_n
        \in \Z^d
      \\
        l_1 + \ldots + l_n 
        = k_1
      }
    }
    \frac{
      \left(
        \lambda_{ k_1 }
      \right)^{
        \gamma
      }
    }{
      \left(
        \prod_{ i = 1 }^n
        \left(
          \lambda_{ l_i }
        \right)^{
          \left(
            1 +
            \frac{ \min( d - 2 , 1 ) }{ n }
          \right)
        }
      \right)
    }
  \right]
  < \infty
\end{equation}
for all 
$ 
  \gamma \in 
  \big(
    - \infty ,
    \frac{ d }{ 2 } 
    +
    \min( d - 2, 1 )
    +
    n 
    \left( 1 - \frac{ d }{ 2 } \right)
  \big)
$.
Therefore,
we obtain that
\begin{equation}
  \sum_{ 
    k_1 \in \Z^d 
  }
  \sum_{
    \substack{
      l_1, \ldots, l_n
      \in \Z^d
    \\
      l_1 + \ldots + l_n 
      = k_1
    }
  }
    \frac{
      \left(
        \lambda_{ k_1 }
      \right)^{
        \left( 
          2 \alpha + 2 \beta 
          -
          \max( 4 - d, 1 )
        \right)
      }
    }{
      \left(
        \prod_{ i = 1 }^n
        \left(
          \lambda_{ l_i }
        \right)^{
          \left(
            1 +
            \frac{ \min( d - 2 , 1 ) }{ n }
          \right)
        }
      \right)
    }
  <
  \infty
\end{equation}
for all 
$ 
  \alpha,
  \beta \in \R 
$
with
$
  \alpha + \beta
  <
  1 -
  \frac{ 
    \left( d - 2 \right) n
  }{ 4 } 
$.
Combining this
with \eqref{eq:proof_lem_time_diffAver}
and dominated convergence 
implies for every
$ \alpha \in [ 0, \frac{ 1 }{ 2 } ] $,
$ \beta \in \R $
with 
$ 
  \alpha + \beta 
  < 
  1 - 
  \frac{ ( d - 2 ) n }{ 4 }
$
that
\begin{equation}
\label{eq:lem_time_diffConv}
  \sup_{ 
    \substack{
      t_1, t_2 \in \R 
    \\
      t_1 \neq t_2
    }
  }
  \left[
    \!\!\!
    \begin{array}{c}
      \sum\limits_{ 
        k_1, k_2 
        \in \Z^d
      } 
      \frac{
        \tiny{
        \left|
          \E\!\left[
          \begin{split}
          &
            \left<
              g_{ - k_1 } 
              ,
              \left[
                \bullet(
                  V_{ t_1 }^{
                    \varphi
                  }
                )^n
                \bullet
                -
                \bullet(
                  V_{ t_1 }^{ 
                    \psi
                  }
                )^n
                \bullet
              \right]
              -
              \left[
                \bullet(
                  V_{ t_2 }^{
                    \varphi
                  }
                )^n
                \bullet
                -
                \bullet(
                  V_{ t_2 }^{ 
                    \psi
                  }
                )^n
                \bullet
              \right]
            \right>_{ \! H }
          \\ & \;
            \cdot
            \left<
              g_{ k_2 } 
              ,
              \left[
                \bullet(
                  V_{ t_1 }^{
                    \varphi
                  }
                )^n
                \bullet
                -
                \bullet(
                  V_{ t_1 }^{ 
                    \psi
                  }
                )^n
                \bullet
              \right]
              -
              \left[
                \bullet(
                  V_{ t_2 }^{
                    \varphi
                  }
                )^n
                \bullet
                -
                \bullet(
                  V_{ t_2 }^{ 
                    \psi
                  }
                )^n
                \bullet
              \right]
            \right>_{ \! H }
          \end{split}
          \right]
        \right|
        }
      }{ 
        \left( 
          \lambda_{ k_1 }
          \lambda_{ k_2 }
        \right)^{ - \beta }
        \left|
          t_1 - t_2
        \right|^{ 2 \alpha }
      }
    \end{array}
    \!\!\!
  \right]
  \to 0
  \;\;
  \text{as}
  \;\;
      ( \Phi_{ 0, \leq 1 } )^2 
      \ni
      (
        \varphi,
        \psi
      )
      \rightarrow (1,1) .
\end{equation}
Combining
\eqref{eq:CWP_firstlimit}
and
\eqref{eq:lem_time_diffConv}
with Lemma~\ref{lem:hypercontractivity}
completes the proof of 
Proposition~\ref{prop:regularity_WickConv}.
\end{proof}

Proposition~\ref{prop:regularity_WickConv}
shows convergence of 
convolutional Wick powers 
under the assumption
that
$ n, d \in \{ 2, 3, \dots \} $
with
$
  \frac{ n + 1 }{ n - 1 }
  > 
  \frac{ d }{ 2 }
$.
In the case
$ n, d \in \{ 2, 3, \dots \} $
with
$
  \frac{ n + 1 }{ n - 1 }
  \leq
  \frac{ d }{ 2 }
$,
convolutional Wick
powers fail to converge.
This is the subject of the
next lemma.

\begin{lem}[Divergence of
convolutional Wick powers]
\label{lem:limitationCWP}
Assume the setting of 
Subsection~\ref{sec:setting},
let 
$ n, d \in \{ 2, 3, \dots \} $
with
$
  \frac{ n + 1 }{ n - 1 }
  \leq
  \frac{ d }{ 2 }
$
and let
$ 
  C_0, C_1, \dots, C_{ n - 1 } \colon
  \Phi_0
  \to \R
$
be arbitrary functions.
Then it holds for every
$ v \in \Z^d $
and every $ t \in \R $ that
\begin{equation}
  \E\!
  \left[
    \left|
      \left<
        g_v ,
        \int_{ - \infty }^t
        e^{ A ( t - s ) }
        \left(
          \left( 
            V^{ \varphi }_s
          \right)^n
          -
          \sum_{ k = 0 }^{ n - 1 }
          C_k( \varphi ) 
          \cdot
          \left( 
            V^{ \varphi }_s 
          \right)^k
        \right)
        ds
      \right>_{ \! \! H }
    \right|^2
  \right] 
  \to \infty
  \qquad
  \text{as}
  \qquad
  \Phi_0 \ni \varphi
  \rightarrow 1 .
\end{equation}
\end{lem}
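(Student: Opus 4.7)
The plan is to mirror the proofs of Lemma~\ref{lem:limitation} and Lemma~\ref{lem:limitationAWP}, substituting Lemma~\ref{lem:correlationConv} for Lemma~\ref{lem:correlation}/Lemma~\ref{lem:correlationAver}, and Lemma~\ref{lem:time_integralConv} for Lemma~\ref{lem:time_integral}. The key algebraic fact is unchanged: the polynomial $x^n-\sum_{k=0}^{n-1}C_k(\varphi)x^k$ can be rewritten as a linear combination $\sum_{k=0}^n \hat{C}_k(\varphi)[\sum_w(\varphi_w)^2/\lambda_w]^{k/2}H_k(\cdot)$ with leading coefficient $\hat{C}_n(\varphi)=1$, where $\hat{C}_0,\dots,\hat{C}_n\colon\Phi_0\to\R$ are introduced exactly as in the proof of Lemma~\ref{lem:limitationAWP}. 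Since $s\mapsto e^{\mathcal{A}_d(t-s)}$ is linear, swapping the finite sum with the integral $\int_{-\infty}^t e^{\mathcal{A}_d(t-s)}(\cdot)\,ds$ identifies the expression inside the inner product with $\sum_{k=0}^n \hat{C}_k(\varphi)\,\bullet\!(V_t^{\varphi})^k\bullet$.

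Expanding the squared norm and invoking Lemma~\ref{lem:correlationConv} kills all off-diagonal terms (the factor $\delta_{n_1,n_2}$ expresses orthogonality of convolutional Wick powers of distinct degrees), so the expectation reduces to the non-negative diagonal sum $\sum_{k=0}^n |\hat{C}_k(\varphi)|^2\,\E[|\langle g_v,\bullet(V_t^{\varphi})^k\bullet\rangle_H|^2]$, which I would bound from below by retaining only the $k=n$ term. Applying Lemma~\ref{lem:correlationConv} once more with $n_1=n_2=n$ and $k_1=k_2=v$, together with the second identity of Lemma~\ref{lem:time_integralConv}, which collapses the double integral $\int_{-\infty}^t\!\int_{-\infty}^t e^{-\lambda_v(2t-s_1-s_2)-(\sum_i\lambda_{l_i})|s_1-s_2|}\,ds_1\,ds_2$ to $\tfrac{1}{\lambda_v(\lambda_v+\sum_i\lambda_{l_i})}$, produces the explicit identity
\begin{equation*}
  \E\Bigl[\bigl|\langle g_v,\bullet(V_t^{\varphi})^n\bullet\rangle_H\bigr|^2\Bigr]
  = \tfrac{n!(2\pi)^{2d}}{\lambda_v}\sum_{\substack{l_1,\dots,l_n\in\Z^d \\ l_1+\dots+l_n=v}}\tfrac{\prod_{i=1}^n(\varphi_{l_i})^2}{\bigl(\prod_{i=1}^n\lambda_{l_i}\bigr)\bigl(\lambda_v+\sum_{i=1}^n\lambda_{l_i}\bigr)}.
\end{equation*}

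Finally, I would apply Fatou's lemma (valid because every summand is non-negative and $(\varphi_{l_i})^2\to 1$ pointwise as $\Phi_{0,\leq 1}\ni\varphi\to 1$) to deduce
\begin{equation*}
  \liminf_{\varphi\to 1}\E\Bigl[\bigl|\langle g_v,\bullet(V_t^{\varphi})^n\bullet\rangle_H\bigr|^2\Bigr]
  \geq \tfrac{n!(2\pi)^{2d}}{\lambda_v}\sum_{\substack{l_1,\dots,l_n\in\Z^d \\ l_1+\dots+l_n=v}}\tfrac{1}{\bigl(\prod_{i=1}^n\lambda_{l_i}\bigr)\bigl(\lambda_v+\sum_{i=1}^n\lambda_{l_i}\bigr)},
\end{equation*}
and then conclude via Lemma~\ref{lem:sum_tool}, which asserts precisely that this series diverges under the hypothesis $\tfrac{n+1}{n-1}\leq\tfrac{d}{2}$. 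There is no genuine obstacle here beyond correctly setting up the orthogonality reduction: the additional smoothing factor $1/\lambda_v$ generated by the convolution semigroup (in contrast to the $(t-t_0)$ factor produced by Lemma~\ref{lem:time_integral} in the averaged case) is exactly the factor that Lemma~\ref{lem:sum_tool} was engineered to absorb, so the divergence mechanism is structurally identical to that of Lemma~\ref{lem:limitationAWP}.
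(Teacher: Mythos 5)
Your proposal is correct and follows essentially the same route as the paper: the Hermite/$\hat{C}_k$ decomposition with $\hat{C}_n(\varphi)=1$, orthogonality of convolutional Wick powers via Lemma~\ref{lem:correlationConv}, the explicit double-time-integral identity from Lemma~\ref{lem:time_integralConv} yielding the lower bound $\tfrac{n!(2\pi)^{2d}}{\lambda_v}\sum_{l_1+\ldots+l_n=v}\tfrac{\prod_i(\varphi_{l_i})^2}{(\prod_i\lambda_{l_i})(\lambda_v+\sum_i\lambda_{l_i})}$, and finally Lemma~\ref{lem:sum_tool}. The only difference is that you make the passage to the limit explicit via Fatou's lemma (using pointwise convergence $\varphi\to 1$ in the Fr\'echet metric), a step the paper leaves implicit.
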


\begin{proof}[Proof of Lemma~\ref{lem:limitationCWP}]
Throughout this proof
let 
$ 
  \hat{C}_0, \hat{C}_1, \dots
  \hat{C}_n \colon \Phi_0
  \to \R
$
be the unique functions satisfying
$
  \hat{C}_0( 0 )
  =
  - C_0( 0 )
$,
$
  \hat{C}_1( 0 )
  =
  - C_1( 0 )
$,
$ \dots $,
$
  \hat{C}_{ n - 1 }( 0 )
  =
  - C_{ n - 1 }( 0 )
$,
$
  \hat{C}_n( 0 ) 
  = 1
$
and
\begin{equation}
\begin{split}
&
  x^n
  -
  \sum_{ k = 0 }^{ n - 1 }
  C_k( \varphi )
  \cdot
  x^k
=
  \sum_{ k = 0 }^n
  \hat{C}_k( \varphi )
  \cdot
  \left[
      \sum_{ v \in \Z^d }
      \tfrac{ ( \varphi_v )^2 }{
        \lambda_v
      }
  \right]^{
    \! \frac{ k }{ 2 }
  }
  \cdot
  H_k\!\left(
  \frac{
    x
  }{
    \sqrt{
      \sum_{ v \in \Z^d }
      \frac{ ( \varphi_v )^2 }{
        \lambda_v
      }
    }
  }
  \right)
\end{split}
\end{equation}
for all 
$ x \in \R $,
$ 
  \varphi \in \Phi_0 \backslash \{ 0 \} 
$
and all $ t \in \R $
(cf.\ \eqref{eq:lem_limitation_firstdispl}).
Then Lemma~\ref{lem:correlationConv}
implies that
\begin{equation}
\label{eq:limitation_main_estimateConv}
\begin{split}
&
  \E\!\left[
    \left|
      \left<
        g_v ,
        \int_{ - \infty }^t
        e^{ A ( t - s ) }
        \left(
          \left( V_s^{ \varphi }
          \right)^n
          -
          \sum_{ k = 0 }^{ n - 1 }
          C_k( \varphi ) \cdot
          ( V_s^{ \varphi } )^k
        \right)
        ds
      \right>_{ \!\! H }
    \right|^2
  \right]
\\ &  =
  \E\!\left[
    \left|
      \sum_{ k = 0 }^n
      \left<
        g_v ,
        \int_{ - \infty }^t
        e^{ A (t - s ) }
        \left[
          \hat{C}_k( \varphi ) 
          \left(
            : \!
            ( V_t^{ \varphi }
            )^k
            \! :
          \right)
        \right]
        ds
      \right>_{ \! H }
    \right|^2
  \right]
\\ & =
  \sum_{ k, l = 0 }^n
  \hat{C}_k( \varphi ) 
  \cdot
  \hat{C}_l( \varphi ) 
  \cdot
  \E\!\left[
    \overline{
      \left<
        g_v ,
        \bullet
          ( V_t^{ \varphi }
          )^k
        \bullet
      \right>_{ \! H }
    }
    \left<
      g_v ,
      \bullet
        ( V_t^{ \varphi }
        )^l
      \bullet
    \right>_{ \! H }
  \right]
\\ & =
  \sum_{ k = 0 }^n
  \left|
    \hat{C}_k( \varphi ) 
  \right|^2
  \E\!\left[
    \left|
      \left<
        g_v ,
          \bullet
          ( V_t^{ \varphi }
          )^k
          \bullet
      \right>_{ H }
    \right|^2
  \right]
\geq
  \left|
    \hat{C}_n( \varphi ) 
  \right|^2
  \E\!\left[
    \left|
      \left<
        g_v ,
          \bullet
          ( V_t^{ \varphi }
          )^n
          \bullet
      \right>_{ H }
    \right|^2
  \right]
\\ & =
  \frac{ 
    n!
    \left( 2 \pi \right)^{ 2 d }
  }{
    \lambda_v
  }
  \left[
  \sum_{
    \substack{
      l_{ 1 },
      \dots,
      l_{ n }
      \in
      \Z^d
    \\
      l_1 
      +
      \ldots
      +
      l_n =
      v
    }
  }
  \frac{
    \left(
      \prod_{ i = 1 }^n
      \left(
        \varphi_{
          l_i
        }
      \right)^2
    \right)
  }{
    \left(
      \prod_{ i = 1 }^n
      \lambda_{ l_i }
    \right)
    \left(
      \lambda_v 
      +
      \sum_{ i = 1 }^n  
      \lambda_{ l_i } 
    \right)
  }
  \right]
\end{split}
\end{equation}
for all 
$ t \in \R $,
$ v \in \Z^d $
and all $ \varphi \in \Phi_0 $.
Combining this with
Lemma~\ref{lem:sum_tool}
completes the proof of
Lemma~\ref{lem:limitationCWP}.
\end{proof}

\subsection{Summary}
\label{sec:summary}

The following table 
briefly summarizes
the results of
Proposition~\ref{prop:regularity_Wick},
Proposition~\ref{prop:regularity_WickAver}
and Proposition~\ref{prop:regularity_WickConv}
and of 
Lemma~\ref{lem:limitation},
Lemma~\ref{lem:limitationAWP}
and 
Lemma~\ref{lem:limitationCWP}.
Recall that the main arguments for the results 
from Propositions~\ref{prop:regularity_Wick},
\ref{prop:regularity_WickAver}
and \ref{prop:regularity_WickConv}
presented in the table are certain 
summability properities;
see \eqref{eq:L_inf_part_000} and 
\eqref{eq:summability_WP_2} in the case of 
Wick powers,
\eqref{eq:AWP_L_inf_part_00} and 
\eqref{eq:summability_AWP_2} 
in the 
case of averaged Wick powers
and \eqref{eq:L_inf_part_CWP} and \eqref{eq:summability_CWP_2}
in the case of convolutional
Wick powers.
In the table $ \varepsilon
\in ( 0, \infty) $
is an arbitrarily small
positive real number,
$
  \mathcal{C}^{ \alpha }_{
    \mathcal{P}
  } 
$
is an abbreviation for
$
  \mathcal{C}^{ \alpha }_{
    \mathcal{P}
  }( [0, 2 \pi]^d, \R )
$
where $ \alpha \in \R $
and $ d \in \N $
and the expressions \emph{WP}, 
\emph{AWP} and 
\emph{CWP} 
are abbreviations 
for \emph{Wick powers}, 
\emph{averaged Wick
powers} and 
\emph{convolutional Wick powers}
respectively.

\begin{center}
%
%
%
%
\begin{tabular}[c]{|| l || c | c | c | c | c | c ||}
\hline
\hline
\,\,\,\,\,\,
  \vdots
&
  \vdots
&
  \vdots
&
  \vdots
&
  \vdots
&
  \vdots
&
\\  
\hline
  \parbox[t]{0.9cm}{
      \ 
    \\
      \ 
    \\
      \ 
    \\
      n = 5
  }
&
  \, 
  \parbox[t]{1cm}{
    {\bf WP:}
  \\
    $
      \mathcal{C}^{ - \varepsilon 
      }_{ 
        \mathcal{P} 
      }
    $
  \\
    \ 
  \\
    {\bf AWP:}
  \\
    $
      \mathcal{C}^{ 
        1 / 5 - \varepsilon 
      }_{ \mathcal{P} 
      }
    $
  \\
  \ 
  \\
    {\bf CWP:}
    $
      \mathcal{C}^{ 2 - \varepsilon }_{ 
        \mathcal{P} 
      }
    $
  }  
  \,
&
  \parbox[t]{1.5cm}{
  \ 
  \\
    No WP
  \\
  \ 
  \\
  \ 
  \\
    No AWP
  \\
  \ 
  \\
  \ 
  \\
    No CWP
  }  
&
  \parbox[t]{1.5cm}{
  \ 
  \\
    No WP
  \\
  \ 
  \\
  \ 
  \\
    No AWP
  \\
  \ 
  \\
  \ 
  \\
    No CWP
  }  
&
  \parbox[t]{1.5cm}{
  \ 
  \\
    No WP
  \\
  \ 
  \\
  \ 
  \\
    No AWP
  \\
  \ 
  \\
  \ 
  \\
    No CWP
  }  
&
  \parbox[t]{1.5cm}{
  \ 
  \\
    No WP
  \\
  \ 
  \\
  \ 
  \\
    No AWP
  \\
  \ 
  \\
  \ 
  \\
    No CWP
  }  
& 
  \parbox[t]{0.6cm}{
      \ 
    \\
      \ 
    \\
      \ 
    \\
      \dots
  }
\\  
\hline
  \parbox[t]{1cm}{
      \ 
    \\
      \ 
    \\
      \ 
    \\
      n = 4
  }
&
  \parbox[t]{1cm}{
    {\bf WP:}
    $
      \mathcal{C}^{ - \varepsilon }_{ 
        \mathcal{P} 
      }
    $
  \\
    \ 
  \\
    {\bf AWP:}
    $
      \mathcal{C}^{ 
        1 / 4 - \varepsilon 
      }_{ \mathcal{P} 
      } 
    $
  \\
    \ 
  \\
    {\bf CWP:}
    $
      \mathcal{C}^{ 2 - \varepsilon }_{ 
        \mathcal{P} 
      } 
    $
  }  
& 
  \parbox[t]{1.2cm}{
   \ 
  \\
    No WP
  \\
    \ 
  \\
    {\bf AWP:}
    $
      \mathcal{C}^{ 
        - 1 - \varepsilon 
      }_{ \mathcal{P} 
      } 
    $
  \\
    \ 
  \\
    {\bf CWP:}
    $
      \mathcal{C}^{ - \varepsilon }_{ 
        \mathcal{P} 
      } 
    $
  }  
& 
  \parbox[t]{1.5cm}{
  \ 
  \\
    No WP
  \\
  \ 
  \\
  \ 
  \\
    No AWP
  \\
  \ 
  \\
  \ 
  \\
    No CWP
  }  
& 
  \parbox[t]{1.5cm}{
  \ 
  \\
    No WP
  \\
  \ 
  \\
  \ 
  \\
    No AWP
  \\
  \ 
  \\
  \ 
  \\
    No CWP
  }  
& 
  \parbox[t]{1.5cm}{
  \ 
  \\
    No WP
  \\
  \ 
  \\
  \ 
  \\
    No AWP
  \\
  \ 
  \\
  \ 
  \\
    No CWP
  }  
&
  \parbox[t]{0.6cm}{
      \ 
    \\
      \ 
    \\
      \ 
    \\
      \dots
  }

\\  
\hline
  \parbox[t]{1cm}{
      \ 
    \\
      \ 
    \\
      \ 
    \\
      n = 3
  }
&
  \parbox[t]{1cm}{
    {\bf WP:}
    $
      \mathcal{C}^{ - \varepsilon }_{ 
        \mathcal{P} 
      }
    $
  \\
    \ 
  \\
    {\bf AWP:}
    $
      \mathcal{C}^{ 
        1 / 3 - \varepsilon 
      }_{ 
        \mathcal{P} 
      } 
    $
  \\
  \ 
  \\
    {\bf CWP:}
    $
      \mathcal{C}^{ 2 - \varepsilon }_{ 
        \mathcal{P}
      }
    $
  }  
&
  \parbox[t]{1.2cm}{
  \ 
  \\
    No WP
  \\
    \ 
  \\
    {\bf AWP:}
    $
      \mathcal{C}^{ 
        - 1 / 2 - \varepsilon 
      }_{ 
        \mathcal{P} 
      } 
    $
  \\
    \ 
  \\
    {\bf CWP:}
    $
      \mathcal{C}^{ 1/2 - \varepsilon }_{ 
        \mathcal{P} 
      }
    $
  }  
&
  \parbox[t]{1.5cm}{
  \ 
  \\
    No WP
  \\
  \ 
  \\
  \ 
  \\
    No AWP
  \\
  \ 
  \\
  \ 
  \\
    No CWP
  }  
&
  \parbox[t]{1.5cm}{
  \ 
  \\
    No WP
  \\
  \ 
  \\
  \ 
  \\
    No AWP
  \\
  \ 
  \\
  \ 
  \\
    No CWP
  }  
&
  \parbox[t]{1.5cm}{
  \ 
  \\
    No WP
  \\
  \ 
  \\
  \ 
  \\
    No AWP
  \\
  \ 
  \\
  \ 
  \\
    No CWP
  }  
&
  \parbox[t]{0.6cm}{
      \ 
    \\
      \ 
    \\
      \ 
    \\
      \dots
  }

\\  
\hline
  \parbox[t]{1cm}{
      \ 
    \\
      \ 
    \\
      \ 
    \\
      n = 2
  }
&
  \parbox[t]{1cm}{
    {\bf WP:}
  \\
    $
      \mathcal{C}^{ - \varepsilon }_{ 
        \mathcal{P} 
      }
    $
  \\
    \ 
  \\
    {\bf AWP:}
    $
      \mathcal{C}^{ 
        1 / 2 - \varepsilon 
      }_{ 
        \mathcal{P} 
      } 
    $
  \\
    \ 
  \\
    {\bf CWP:}
    $
      \mathcal{C}^{ 2 - \varepsilon }_{ 
        \mathcal{P} 
      }
    $
  }  
&
  \parbox[t]{1.5cm}{
    {\bf WP:}
    $
      \mathcal{C}^{ - 1 - \varepsilon }_{ 
        \mathcal{P} 
      }
    $
  \\
    \ 
  \\
    {\bf AWP:}
    $
      \mathcal{C}^{ 
        - \varepsilon 
      }_{ 
        \mathcal{P} 
      } 
    $
  \\
    \ 
  \\
    {\bf CWP:}
    $
      \mathcal{C}^{ 1 - \varepsilon }_{ 
        \mathcal{P} 
      }
    $
  }  
&
  \parbox[t]{1.3cm}{
  \ 
  \\
    No WP
  \\
    \ 
  \\
    {\bf AWP:}
    $
      \mathcal{C}^{ 
        - 1 - \varepsilon 
      }_{ 
        \mathcal{P} 
      } 
    $
  \\
    \ 
  \\
    {\bf CWP:}
    $
      \mathcal{C}^{ - \varepsilon }_{ 
        \mathcal{P} 
      }
    $
  }  
&
  \parbox[t]{1.5cm}{
    \ 
  \\
    No WP
  \\
    \ 
  \\
    {\bf AWP:}
    $
      \mathcal{C}^{ 
        - 2 - \varepsilon 
      }_{ 
        \mathcal{P} 
      } 
    $
  \\
    \ 
  \\
    {\bf CWP:}
    $
      \mathcal{C}^{ - 1 - \varepsilon }_{ 
        \mathcal{P} 
      }
    $
  }  
&
  \parbox[t]{1.5cm}{
    \ 
  \\
    No WP 
  \\
    \ 
  \\
    \ 
  \\
    No AWP
  \\
    \ 
  \\
    \ 
  \\
    No CWP
  }  
&
  \parbox[t]{0.6cm}{
      \ 
    \\
      \ 
    \\
      \ 
    \\
      \dots
  }
\\ \hline \hline
&
  d = 2
&
  d = 3
& 
  d = 4
&
  d = 5
&
  d = 6
&
  \dots
\\ 
  \hline
  \hline
\end{tabular}
\end{center}

\bigskip

\section{Stochastic
partial differential equations
(SPDEs)}
\label{sec:SPDEs}

\subsection{Local existence
and uniqueness of mild 
solutions of deterministic nonautonomous
partial differential equations}
\label{sec:det_existence}

This subsection investigates
local existence and uniqueness
questions
for mild solutions of
deterministic nonautonomous
evolution equations
of the form
\begin{equation}
\label{eq:PDEs}
  \frac{ \partial }{ \partial t }
  x(t)
=
  A x(t)
  +
  \sum_{ i = 1 }^n
  F_i( t, x(t) ) 
\end{equation}
on a 
real Banach space
$
  \left( 
    U, \left\| \cdot \right\|_U
  \right)
$
for $ t \in [t_0,T] $
where
$ t_0, T \in \R $
are real numbers
with $ t_0 < T $,
where
$  
  A \colon D(A) \subset U 
  \to U
$
is a negative generator of
a strongly continuous analytic 
semigroup,
where $ n \in \N $
is a natural number
and where
$ 
  F_1, \dots, F_n
$
are suitable functions that are 
locally Lipschitz continuous 
on appropriate spaces.

To investigate these questions, 
we impose the following setting.
Throughout this subsection,
let 
$ 
  \left( 
    U, \left\| \cdot \right\|_{ U } 
  \right) 
$
be a real Banach space,
let 
$ A \colon D(A) \subset U \to U $
be a negative generator of a strongly continuous 
analytic semigroup on $ U $
and let
$ 
  \left( 
    U_r, 
    \left\| \cdot \right\|_{ U_r } 
  \right) 
  :=
  \left( 
    D( (- A)^r ) ,
    \left\|
      ( - A )^r
      ( \cdot )
    \right\|_U
  \right) 
$
for all $ r \in \R $.
Next define
\begin{equation}
\label{eq:Fnorm}
\begin{split}
&
  \left\|
    F
  \right\|_{
    \mathcal{C}^n_{ \alpha, \beta, \gamma, \delta }( [t_0, T] )
  }
  :=
\\ & 
  \sup_{ t \in [t_0,T] }
  \sum_{ i = 1 }^n
  \Bigg[
    \left\| F_i( t, 0 ) \right\|_{
      U_{ \alpha_i }
    }
    +
    \sup_{ 
      \substack{
        x, y \in 
        U_{ 
          \max( \beta_i, \gamma_i ) 
        } 
      \\
        x \neq y
      }
    }
    \frac{
      \left\|
        F_i( t, x ) - F_i( t, y )
      \right\|_{
        U_{ \alpha_i }
      }
    }{
      \big(
        1 
        + 
        \left\| x \right\|_{ U_{ \beta_i } }^{ \delta_i }
        + 
        \left\| y \right\|_{ U_{ \beta_i } }^{ \delta_i }
        \!
      \big)
      \left\| x - y \right\|_{ U_{ \gamma_i } }
    }
  \Bigg]
  \in [0,\infty]
\end{split}
\end{equation}
for all
$
  F = 
  ( F_1, \dots, F_n )
  \in 
  C( [t_0,T] \times U_{ \max( \beta_1, \gamma_1 ) }, U_{ \alpha_1 } )
    \times
  \ldots
    \times
  C( [t_0,T] \times U_{ \max( \beta_n, \gamma_n ) }, U_{ \alpha_n } )
$,
$
  \alpha = (\alpha_1, \dots, \alpha_n), 
  \beta = (\beta_1, \dots, \beta_n), 
  \gamma = (\gamma_1, \dots, \gamma_n) 
  \in \mathbb{R}^n
$,
$
  \delta = (\delta_1, \dots, \delta_n) 
  \in [0,\infty)^n
$,
$ t_0 \in ( - \infty, T ) $,
$ T \in \R $
and all
$ n \in \mathbb{N} $.
Furthermore, define
\begin{multline}
\label{eq:Fspace}
  \mathcal{C}^n_{ \alpha, \beta, \gamma, \delta }( [t_0, T] )
  :=
  \bigg\{
    F \in 
    \Big(
    C( [t_0,T] \times U_{ \max( \beta_1, \gamma_1 ) }, U_{ \alpha_1 } )
    \times
    \ldots
\\
    \times
    C( [t_0,T] \times U_{ \max( \beta_n, \gamma_n ) }, U_{ \alpha_n } )
    \Big)
    \colon
    \left\| F \right\|_{
      \mathcal{C}^n_{ \alpha, \beta, \gamma, \delta }( [t_0, T] )
    }
    < \infty
  \bigg\}
\end{multline}
for all
$
  \alpha = (\alpha_1, \dots, \alpha_n), 
  \beta = (\beta_1, \dots, \beta_n), 
  \gamma = (\gamma_1, \dots, \gamma_n) 
  \in \mathbb{R}^n
$,
$
  \delta = (\delta_1, \dots, \delta_n) 
$
$
  \in [0,\infty)^n
$,
$ t_0 \in (-\infty,T) $,
$ T \in \R $
and all
$ n \in \mathbb{N} $.
Observe that the pairs
$
  \big(
    \mathcal{C}^n_{ \alpha, \beta, \gamma, \delta }( [t_0, T] ) ,
$
$
    \left\|
      \cdot
    \right\|_{
      \mathcal{C}^n_{ \alpha, \beta, \gamma, \delta }( [t_0, T] )
    }
    \!
  \big)
$
for
$
  \alpha, \beta, \gamma \in \mathbb{R}^n
$,
$
  \delta \in [0,\infty)^n
$,
$ t_0 \in (-\infty,T) $,
$ T \in \R $
and
$ n \in \mathbb{N} $
are normed real vector spaces.
In the next step define 
\begin{multline}
\label{eq:Fspace_inf}
  \mathcal{C}^n_{ \alpha, \beta, \gamma, \delta }( [t_0, \infty) )
  :=
  \bigg\{
    ( F_1, \dots, F_n ) 
    \in 
    \Big(
    C( [t_0,\infty) \times U_{ \max( \beta_1, \gamma_1 ) }, U_{ \alpha_1 } )
    \times
    \ldots
\\
    \times
    C( [t_0,\infty) \times U_{ \max( \beta_n, \gamma_n ) }, U_{ \alpha_n } )
    \Big)
    \colon
   \Big(
    \forall \, T \in (t_0,\infty) 
    \colon
\\
    \| 
      ( 
        F_1|_{ 
          [t_0,T] \times U_{ \max( \beta_1, \gamma_1 ) } 
        }, 
        \dots,
        F_n|_{ 
          [t_0,T] \times U_{ \max( \beta_n, \gamma_n ) } 
        }
      )
    \|_{
      \mathcal{C}^n_{ \alpha, \beta, \gamma, \delta }( [t_0, T] )
    }
    < \infty
    \Big)
  \bigg\}
\end{multline}
for all
$
  \alpha = (\alpha_1, \dots, \alpha_n), 
  \beta = (\beta_1, \dots, \beta_n), 
  \gamma = (\gamma_1, \dots, \gamma_n) 
  \in \mathbb{R}^n
$,
$
  \delta = (\delta_1, \dots, \delta_n) 
$
$
  \in [0,\infty)^n
$,
$ t_0 \in \R $
and all
$ n \in \mathbb{N} $.
Moreover, we equip
$
  \mathcal{C}^n_{ \alpha, \beta, \gamma, \delta }( [t_0, \infty) )
$
with the metric
$
  d_{
    \mathcal{C}^n_{ \alpha, \beta, \gamma, \delta }( [t_0, \infty) )
  } 
  \colon
$
$
  \mathcal{C}^n_{ \alpha, \beta, \gamma, \delta }( [t_0, \infty) )
  \times
  \mathcal{C}^n_{ \alpha, \beta, \gamma, \delta }( [t_0, \infty) )
  \to [0,\infty)
$
defined through
\begin{multline}
  d_{
    \mathcal{C}^n_{ \alpha, \beta, \gamma, \delta }( [t_0, \infty) )
  }( F, G )
  :=
  \sum_{ k = 1 }^{ \infty }
  \frac{ 1 }{ 2^k }
  \min\!\Big( 1 ,
  \big\|
    \big( 
      ( F_1 - G_1 )|_{
        [t_0, t_0 + k] 
        \times 
        U_{
          \max( \beta_1, \gamma_1 )
        }
      }
      ,
      \dots 
      ,
\\ 
      ( F_n - G_n )|_{
        [t_0, t_0 + k]
        \times 
        U_{
          \max( \beta_n, \gamma_n )
        }
      }
    \big)
  \big\|_{
    \mathcal{C}^n_{
      \alpha, \beta, \gamma, \delta
    }(
      [t_0, t_0 + k]
    )
  }
  \Big)
\end{multline}
for all
$
  F = (F_1, \dots, F_n )
$,
$
  G = ( G_1, \dots, G_n )
  \in
  \mathcal{C}^n_{ \alpha, \beta, \gamma, \delta }( [t_0, \infty) )
$,
$
  \alpha = (\alpha_1, \dots, \alpha_n)
$,
$
  \beta = (\beta_1, \dots, \beta_n)
$, 
$
  \gamma = (\gamma_1, \dots, \gamma_n)
$
$
  \in \mathbb{R}^n
$,
$
  \delta = (\delta_1, \dots, \delta_n) 
$
$
  \in [0,\infty)^n
$,
$ t_0 \in \R $
and all
$ n \in \mathbb{N} $.
Finally, note that
the triangle inequality and
the definition of
$
  \left\| F \right\|_{
    \mathcal{C}^n_{ \alpha, \beta, \gamma, \delta }( [t_0, T] )
  }
$
imply that
\begin{equation}
\label{eq:linear_growth_bound}
\begin{split}
&
  \left\|
    F_i( t, x )
  \right\|_{
    U_{ \alpha_i }
  }
\leq
  \left\|
    F_i( t, x ) -
    F_i( t, 0 )
  \right\|_{
    U_{ \alpha_i }
  }
  +
  \left\|
    F_i( t, 0 )
  \right\|_{
    U_{ \alpha_i }
  }
\\ & \leq
  \left[
  \sup_{ 
    y \in U_{ \max( \beta_i, \gamma_i ) \backslash \{ 0 \} } 
  }
  \tfrac{
    \left\|
      F_i( t, y ) -
      F_i( t, 0 )
    \right\|_{
      U_{ \alpha_i } 
    }
  }{
    \big(
      1 + \| y \|_{ U_{ \beta_i } }^{ \delta_i }
    \big)
    \|
      y
    \|_{
      U_{ \gamma_i }
    }
  }
  +
  \left\|
    F_i( t, 0 )
  \right\|_{
    U_{ \alpha_i }
  }
  \right]
  \!
  \big(
    1 + \| x \|_{ U_{ \beta_i } }^{ \delta_i }
  \big)
  \big(
    1 +
    \|
      x
    \|_{
      U_{ \gamma_i }
    }
  \big)
\\ & \leq
  \left\| F \right\|_{
    \mathcal{C}^n_{ \alpha, \beta, \gamma, \delta }( [t_0, T] )
  }
  \big(
    1 + \| x \|_{ U_{ \beta_i } }^{ \delta_i }
  \big)
  \,
  \big(
    1 +
    \|
      x
    \|_{
      U_{ \gamma_i }
    }
  \big)
\end{split}
\end{equation}
for all 
$ t \in [t_0,T] 
$,
$
  x \in U_{
    \max( \beta_i, \gamma_i )
  }
$,
$ 
  i \in \{ 1, 2, \dots, n \}
$,
$
  F = (F_1, \dots, F_n) 
  \in 
$
$
  \mathcal{C}^n_{ \alpha, \beta, \gamma, \delta }( [t_0, T] )
$,
$
  \alpha = (\alpha_1, \dots, \alpha_n), 
  \beta = (\beta_1, \dots, \beta_n), 
  \gamma = (\gamma_1, \dots, \gamma_n) 
  \in \mathbb{R}^n
$,
$
  \delta = (\delta_1, \dots, \delta_n) 
  \in [0,\infty)^n
$,
$ t_0 \in (-\infty,T) $,
$ T \in \R $
and all
$ n \in \mathbb{N} $.

\begin{lem}[Local existence and uniqueness
of mild solutions]
\label{lem:existence}
Assume the setting in the beginning 
of Subsection~\ref{sec:det_existence},
let 
$ r_0, t_0 \in \R $,
$ T \in (t_0, \infty) $,
$ v \in U_{ r_0 } $,
$ n \in \N $,
$ 
  \alpha = ( \alpha_1, \dots, \alpha_n ) \in \R^n
$,
$
  \beta = ( \beta_1, \dots, \beta_n ) ,
  \gamma = ( \gamma_1, \dots, \gamma_n ) 
  \in [r_0, \infty)^n
$,
$
  \delta = ( \delta_1, \dots, \delta_n ) 
  \in [0,\infty)^n
$,
$
  r_1
  \in
  \left[  
    \max( \beta_1, \dots, \beta_n, \gamma_1, \dots, \gamma_n ) ,
    1 +
    \min( \alpha_1, \dots, \alpha_n )
  \right)
$
with
$
  \max_{ i \in \{1,\dots,n\} }
  [
    \gamma_i
    - \min( \alpha_i , r_0 ) 
    + 
    \delta_i
    ( \beta_i - r_0 ) 
  ]
  < 
  1 
$
and let
$
  F = ( F_1, \dots, F_n ) \in 
  \mathcal{C}^n_{ \alpha, \beta, \gamma, \delta }( [t_0, T] ) 
$.
Then there exist a real number
$ \tau \in (t_0,T] $
such that there exists
a unique continuous function
$
  x \colon [t_0,\tau] \to
  U_{ r_0 }
$
satisfying
$
  x|_{ (t_0,\tau ] }
  \in
  C(
    (t_0,\tau ], U_{ r_1 }
  )
$,
$
  \sup_{
    s \in (t_0, \tau ]
  }
  \left( s - t_0 \right)^{
    \left( r_1 - r_0 \right)
  }
  \| 
    x(s)
  \|_{ U_{ r_1 } }
  < \infty
$
and
$
  x(t)
=
  e^{ A ( t - t_0 ) } \, v
+
  \sum_{ i = 1 }^n
  \int_{ t_0 }^t
  e^{ A ( t - s ) }
  \,
  F_i( s, x(s) ) \, ds
$
for all 
$ t \in [t_0,\tau] $.
\end{lem}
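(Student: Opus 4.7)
The plan is to apply Banach's fixed point theorem to the integral (mild) reformulation on an appropriately weighted complete metric space. Concretely, for $\tau \in (t_0, T]$ and $R \in (0, \infty)$ set
\begin{equation*}
  E_\tau :=
  \Big\{
    x \in C([t_0,\tau], U_{r_0})
    \colon
    x|_{(t_0,\tau]} \in C((t_0,\tau], U_{r_1}),\;
    \sup_{s \in (t_0,\tau]} (s-t_0)^{r_1-r_0} \|x(s)\|_{U_{r_1}} < \infty
  \Big\}
\end{equation*}
equipped with the norm
$
  \|x\|_{E_\tau}
  :=
  \sup_{t \in [t_0,\tau]} \|x(t)\|_{U_{r_0}}
  +
  \sup_{s \in (t_0,\tau]} (s-t_0)^{r_1-r_0} \|x(s)\|_{U_{r_1}}
$,
and let $B_R \subset E_\tau$ be the closed ball of radius $R$ around the function $t \mapsto e^{A(t-t_0)} v$. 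Define the Picard map $\Phi \colon B_R \to C([t_0,\tau], U_{r_0})$ by
$
  \Phi(x)(t)
  :=
  e^{A(t-t_0)} v
  +
  \sum_{i=1}^n \int_{t_0}^t e^{A(t-s)} F_i(s, x(s)) \, ds
$.

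First I would establish, using the analytic semigroup estimates $\|(-A)^\sigma e^{At}\|_{L(U)} \leq C_\sigma t^{-\sigma}$ for $\sigma \geq 0$ and the interpolation inequality $\|y\|_{U_\beta} \leq \|y\|_{U_{r_0}}^{(r_1-\beta)/(r_1-r_0)} \|y\|_{U_{r_1}}^{(\beta-r_0)/(r_1-r_0)}$ for $\beta \in [r_0,r_1]$, the pointwise bound
\begin{equation*}
  \|x(s)\|_{U_{\beta_i}} + \|x(s)\|_{U_{\gamma_i}}
  \leq
  2\,(s-t_0)^{-(\max(\beta_i,\gamma_i)-r_0)}
  \,\|x\|_{E_\tau}
\end{equation*}
valid on $(t_0,\tau]$ for every $x \in E_\tau$. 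Combining this with the linear growth bound~\eqref{eq:linear_growth_bound} yields
$
  \|F_i(s, x(s))\|_{U_{\alpha_i}}
  \leq
  C (1 + \|x\|_{E_\tau}^{\delta_i+1})
  (s-t_0)^{-\delta_i(\beta_i - r_0) - (\gamma_i - r_0)}
$.
Applying $(-A)^{r_j}$ to $\Phi(x)(t) - e^{A(t-t_0)} v$ and using $\|(-A)^{r_j-\alpha_i} e^{A(t-s)}\|_{L(U)} \leq C_{i,j}(t-s)^{-(r_j-\alpha_i)_+}$, followed by Beta function identities, gives
\begin{equation*}
  (t-t_0)^{r_1-r_0}\|\Phi(x)(t) - e^{A(t-t_0)}v\|_{U_{r_1}}
  +
  \|\Phi(x)(t) - e^{A(t-t_0)}v\|_{U_{r_0}}
  \leq
  C\sum_{i=1}^n (t-t_0)^{\theta_i} \bigl(1 + \|x\|_{E_\tau}^{\delta_i+1}\bigr)
\end{equation*}
with strictly positive exponents
$
  \theta_i
  =
  1 + \min(\alpha_i,r_0) - \gamma_i - \delta_i(\beta_i - r_0) > 0
$
thanks exactly to the standing assumption
$
  \gamma_i - \min(\alpha_i,r_0) + \delta_i(\beta_i - r_0) < 1
$.
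The analogous computation for $\Phi(x) - \Phi(y)$, using
$
  \|F_i(s,x(s)) - F_i(s,y(s))\|_{U_{\alpha_i}}
  \leq
  C(1 + \|x\|_{E_\tau}^{\delta_i} + \|y\|_{E_\tau}^{\delta_i})
  (s-t_0)^{-\delta_i(\beta_i-r_0) - (\gamma_i-r_0)} \|x-y\|_{E_\tau}
$,
shows
$
  \|\Phi(x) - \Phi(y)\|_{E_\tau}
  \leq
  C\sum_{i=1}^n \tau^{\theta_i} (1 + \|x\|_{E_\tau}^{\delta_i} + \|y\|_{E_\tau}^{\delta_i}) \|x-y\|_{E_\tau}
$.

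Choosing $R$ sufficiently large (depending only on $\|v\|_{U_{r_0}}$ via $\sup_t \|e^{A(t-t_0)}v\|_{E_\tau}$) and then $\tau \in (t_0,T]$ sufficiently small (depending on $R$, $F$, $A$), both the self-map property $\Phi(B_R) \subset B_R$ and the contraction estimate with constant $< 1/2$ hold. Banach's fixed point theorem then produces a unique $x \in B_R$ with $\Phi(x) = x$; continuity of $x$ at $t_0$ in $U_{r_0}$ follows because $e^{A(t-t_0)} v \to v$ strongly in $U_{r_0}$ and the integral term vanishes as $t \downarrow t_0$ at rate $(t-t_0)^{\min_i \theta_i}$. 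Uniqueness in the full class stated in the lemma is obtained by a standard argument: for any two solutions $x, y$ with the stated regularity, applying the contraction estimate with $R := \|x\|_{E_\tau} + \|y\|_{E_\tau} + 1$ and possibly shrinking $\tau$ forces $x = y$ on $[t_0, \tau]$. The main obstacle is the bookkeeping of the two-parameter weighted norm together with the polynomial growth factor $\|x\|^{\delta_i}$: one must verify that the exponents $\theta_i$ remain strictly positive in \emph{both} the $U_{r_0}$ estimate (where the semigroup factor is $(t-s)^{-(r_0-\alpha_i)_+}$) and the $U_{r_1}$ estimate (where it is $(t-s)^{-(r_1-\alpha_i)}$ and one pays an additional weight $(t-t_0)^{r_1-r_0}$), and the hypothesis $\gamma_i - \min(\alpha_i, r_0) + \delta_i(\beta_i - r_0) < 1$ together with $r_1 < 1 + \min_i \alpha_i$ is precisely what makes both cases work simultaneously.
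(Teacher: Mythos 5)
Your proposal is correct and follows essentially the same route as the paper's proof: a Banach fixed point argument for the Picard map on the weighted space $\{x \in C([t_0,\tau],U_{r_0}) : \sup_{s}(s-t_0)^{r_1-r_0}\|x(s)\|_{U_{r_1}}<\infty\}$, using the linear growth and local Lipschitz bounds from \eqref{eq:linear_growth_bound}, interpolation between $U_{r_0}$ and $U_{r_1}$, analytic semigroup smoothing, and Beta-function estimates, with the hypothesis $\gamma_i-\min(\alpha_i,r_0)+\delta_i(\beta_i-r_0)<1$ providing the strictly positive time exponents that yield the self-map and contraction properties for small $\tau-t_0$. Your remark on uniqueness beyond the fixed-point ball is a welcome (and standard) addition that the paper itself leaves implicit.
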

Observe that all integrals appearing in Lemma~\ref{lem:existence}
are well-defined. Indeed, under the assumptions of
Lemma~\ref{lem:existence} it holds that if
$ \tau \in (t_0,T] $
and if
$
  x \colon [t_0,\tau] \to
  U_{ r_0 }
$
is a continuous function which satisfies
$
  x|_{ (t_0,\tau ] }
  \in
  C(
    (t_0,\tau ], U_{ r_1 }
  )
$
and
$
  \sup_{
    s \in (t_0, \tau]
  }
  \left( s - t_0 \right)^{
    \left( r_1 - r_0 \right)
  }
  \left\| 
    x(s)
  \right\|_{ U_{ r_1 } }
$
$
  < \infty
$,
then 
\eqref{eq:linear_growth_bound}
and interpolation 
(see, e.g., 
Theorem~37.6 in Sell \& You~\cite{sy02})
imply that
\begin{equation}
\label{eq:integral_welldefined}
\begin{split}
&
  \int_{ t_0 }^t
  \big\|
  e^{ A ( t - s ) }
  \,
  F_i( s, x(s) ) 
  \big\|_{
    U_{ r_1 }
  }
  ds
\leq
  \int_{ t_0 }^t
  \|
    e^{ A ( t - s ) }
  \|_{
    L( U_{ \alpha_i }, U_{ r_1 } )
  }
  \,
  \|
    F_i( s, x(s) ) 
  \|_{
    U_{ \alpha_i }
  }
  ds
\\ & \leq
  \left\| F \right\|_{
    \mathcal{C}^n_{ \alpha, \beta, \gamma, \delta }( [t_0, T] )
  }
  \left[
    \sup_{ s \in (0,T-t_0] }
    \!\!\!
    \tfrac{
      \|
        e^{ A s }
      \|_{
        L( U_{ \alpha_i }, U_{ r_1 } )
      }
    }{
      s^{
        \min( \alpha_i - r_1 , 0 )
      }
    }
  \right]
  \int_{ t_0 }^t
  \frac{
  \big(
    1 + \| x(s) \|_{ U_{ \beta_i } }^{ \delta_i }
  \big)
  \,
  \big(
    1 +
    \|
      x(s)
    \|_{
      U_{ \gamma_i }
    }
  \big)
  }{
    \left(
      t - s
    \right)^{
      \max( r_1 - \alpha_i , 0 )
    }
  }
  \, ds
\\ & \leq
  \left\| F \right\|_{
    \mathcal{C}^n_{ \alpha, \beta, \gamma, \delta }( [t_0, T] )
  }
  \left[
    \sup_{ s \in (0,T-t_0] }
    \!\!\!
    \tfrac{
      \|
        e^{ A s }
      \|_{
        L( U_{ \alpha_i }, U_{ r_1 } )
      }
    }{
      s^{
        \min( \alpha_i - r_1 , 0 )
      }
    }
  \right]
  \left[
    \sup_{
      s \in (t_0,\tau]
    }
    \frac{
      (
      1 +
      \|
        x(s)
      \|_{
        U_{ \gamma_i }
      }
      )
    }{
      \left( s - t_0 \right)^{
        ( r_0 - \gamma_i )
      }
    }
  \right]
\\ & 
  \cdot
  \left[
    \sup_{
      s \in (t_0,\tau]
    }
    \frac{
      (
      1 +
      \|
        x(s)
      \|_{
        U_{ \beta_i }
      }^{ \delta_i }
      )
    }{
      \left( s - t_0 \right)^{
        \delta_i ( r_0 - \beta_i )
      }
    }
  \right]
  \int_{ t_0 }^t
  \frac{
    1
  }{
    \left(
      t - s
    \right)^{
      \max( r_1 - \alpha_i , 0 )
    }
    \left( s - t_0 \right)^{
      \left(
        \gamma_i - r_0 + \delta_i ( \beta_i - r_0 )
      \right)
    }
  }
  \, ds
  < \infty
\end{split}
\end{equation}
for all 
$ t \in [t_0,\tau] $
and all
$ i \in \{ 1, 2, \dots, n \} $
where we used 
$
  r_1 
  < 
  1 + \min_{ j \in \{ 1, \dots, n \} } \alpha_j
  \leq 1 + \alpha_i
$
and 
$
  \gamma_i - r_0 + \delta_i ( \beta_i - r_0 )
  < 1
$
for all $ i \in \{ 1, 2, \dots, n \} $
in the last line of \eqref{eq:integral_welldefined}.
We now present the proof
of Lemma~\ref{lem:existence}.
\begin{proof}[Proof
of Lemma~\ref{lem:existence}]
Lemma~\ref{lem:existence}
follows from an application
of the Banach fixed point theorem.
For this several preparations are
needed.
First, let $ \kappa \in [0,\infty) $
be a real number 
defined through
\begin{equation}
\begin{split}
  \kappa
  :=
&
  \left[
    2 + r_1 - r_0 + T + 
    \left\| F \right\|_{
      \mathcal{C}^n_{ \alpha, \beta, \gamma, \delta }( [t_0, T] )
    }
    + 
    \sum_{ i = 1 }^n \delta_i
  \right]^{
    \left( 
      4  
      + \left| r_0 \right|
      + \left| r_1 \right|
      +
      \max_{ i \in \{ 1, \dots, n \} }
      \left| \alpha_i \right|
    \right)
  }
\\ &
  +
  \sum_{ j = 0 }^1
  \sum_{ i = 1 }^n
  \left[
  \frac{ 
    1
  }{
    \min\!\left( 
      1 
      + 
      \alpha_i 
      - r_j, 1
    \right)
  }
  +
  B_{
    \left(
      1 + \min( \alpha_i - r_j , 0 ) ,
      1 + r_0 - \gamma_i +
      \delta_i
      \left( r_0 - \beta_i \right) 
    \right)
  }
  \right]
\\ &
  +
  \max_{ j \in \{ 0, 1 \} }
  \max_{ 
    \theta 
    \in 
    \{
      r_0, r_1, \alpha_1, \dots, \alpha_n
    \}
  }
  \sup_{ t \in (t_0,T] }
  \left[
    \left( t - t_0 
    \right)^{
      \max\left( r_j - \theta , 0 \right)
    } 
  \big\|
    e^{ 
      A \left( t - t_0 \right) 
    }
  \big\|_{ 
    L( U_{ \theta }, U_{ r_j } ) 
  }
  \right]
\\ & 
  +
  \max_{
  \substack{
    \theta \in 
    \{ 
      \beta_1, \dots, \beta_n 
    \}
  \\
    \cup
    \{
      \gamma_1, \dots, \gamma_n 
    \} 
  }
  }
  \sup_{ 
    \substack{
      v \in U_{ r_1 } 
    \\
      v \neq 0
    }
  }
  \left[
  1 +
  \frac{
    \left\|  
      v
    \right\|_{ U_{ \theta } }
  }{
    \left\|
      v
    \right\|_{ U_{ r_1 } 
    }^{
      \frac{
        ( \theta - r_0 )
      }{
        ( r_1 - r_0 )
      }
    }
    \left\|
      v
    \right\|_{ U_{ r_0 } }^{
      \frac{
        ( r_1 - \theta )
      }{
        ( r_1 - r_0 )
      }
    } 
  }
  \right]^{
    \!
    ( 1 + \sum_{ i = 1 }^n \delta_i )
  }
  < \infty 
\end{split}
\end{equation}
where
$
  B \colon (0,\infty)^2 \to (0,\infty)
$
is the Beta function 
defined through
$
  B_{ (x, y) }
:=
  \int_0^1
  \left( 1 - s \right)^{
    \left( x - 1 \right)
  }
$
$
  s^{ \left( y - 1 \right) }
  \, ds
$
for all $ x, y \in (0,\infty) $.
Observe that the quantity $ \kappa $ is indeed finite;
see, e.g., Theorems~37.5 and 37.6 
in Sell \& You~\cite{sy02}.
Next define real vector spaces
$
  \mathcal{E}_{ [t_0, \tau] }
  \subset
  C( [t_0,\tau], U_{ r_0 } )
$,
$
  \tau \in (t_0,T]
$,
through
\begin{equation}
  \mathcal{E}_{ [t_0, \tau] }
  :=
  \left\{
    x \in 
    C( [t_0,\tau], U_{ r_0 } )
    \colon
    \left(
    \begin{array}{c}
    x|_{
      (t_0,\tau]
    }
    \in
    C( (t_0,\tau], U_{ r_1 } )
    \text{    and}
    \\[1ex]
    \sup_{ t \in (t_0,\tau] }
      \left( t - t_0 \right)^{ \left( r_1 - r_0 \right) }
      \left\| 
        x(t) 
      \right\|_{
        U_{ r_1 }
      }
    < \infty
    \end{array}
    \right)
  \right\} 
\end{equation}
for all $ \tau \in (t_0,T] $,
define norms
$
  \left\|
    \cdot
  \right\|_{
    \mathcal{E}_{ [t_0, \tau] }
  }
  \colon
  \mathcal{E}_{ [t_0, \tau] }
  \to 
  [0,\infty)
$,
$ \tau \in (t_0,T] $,
through
\begin{equation}
  \left\|
    x
  \right\|_{
    \mathcal{E}_{ [t_0, \tau] }
  }
  :=
  \sum_{ j = 0 }^1 
  \left[
  \sup_{ t \in (t_0,\tau] }
  \left[
    \left( t - t_0 \right)^{
      \left( r_j - r_0 \right)
    }
    \left\| 
      x(t) 
    \right\|_{
      U_{ r_j }
    }
  \right] 
  \right]
\end{equation}
for all $ \tau \in (t_0,T] $,
define
sets
$
  \mathcal{E}_{ [t_0, \tau], v }
  \subset
  \mathcal{E}_{ [t_0, \tau] }
$,
$
  \tau \in (t_0,T]
$,
$
  v \in U_{ r_0 }
$,
through
\begin{equation}
\label{eq:defE}
  \mathcal{E}_{ [t_0, \tau], v }
  :=
  \left\{
    x \in 
    \mathcal{E}_{ [t_0, \tau] }
    \colon
    \| 
      x 
    \|_{
      \mathcal{E}_{ [t_0, \tau] }
    }
    \leq 
    \kappa^{ 7 }
    \left(
      1 +
      \|
        v
      \|_{ U_{ r_0 } }
    \right)
  \right\}
\end{equation}
for all
$
  \tau \in (t_0,T]
$
and all
$
  v \in U_{ r_0 }
$
and define mappings
$ 
  \Phi_{ 
    [t_0, \tau], v
  } 
  \colon 
  \mathcal{E}_{ 
    [ t_0, \tau ]
  }
  \to 
  \mathcal{E}_{ 
    [ t_0, \tau ]
  }
$,
$
  \tau \in (t_0,T]
$,
$
  v \in U_{ r_0 }
$,
through
\begin{equation}
\label{eq:defEnorm}
  (
    \Phi_{ [t_0, \tau], v } 
    x
  )( t )
:=
  e^{ A ( t - t_0 ) } \, v
  +
  \sum_{ i = 1 }^n
  \int_{ t_0 }^t 
  e^{ A ( t - s ) }
  \,
  F_i( s, x(s) )
  \,
  ds
\end{equation}
for all 
$ t \in [t_0,\tau] $,
$ 
  x \in 
  \mathcal{E}_{ [t_0, \tau] }
$,
$
  \tau \in (t_0, T]  
$
and all
$ 
  v \in U_{ r_0 }
$.
Note that 
\eqref{eq:integral_welldefined}
ensures that
the mappings 
$ 
  \Phi_{ [t_0, \tau], v } 
$,
$
  \tau \in (t_0,T]
$,
$
  v \in U_{ r_0 }
$,
are well-defined.
We now establish
a few estimates for 
the mappings
$ 
  \Phi_{ [t_0, \tau], v } 
$,
$ \tau \in (t_0,T] $,
$ v \in U_{ r_0 } $.
First, observe that
\begin{equation}
\begin{split}
&
  \left\|
    ( \Phi_{ [t_0, \tau], v } 0 )(t)
  \right\|_{ U_{ r_j } }
\leq
  \big\|
    e^{ A ( t - t_0 ) } v
  \big\|_{ U_{ r_j } }
  +
  \sum_{ i = 1 }^n
  \int_{ t_0 }^t 
  \big\|
    e^{ A ( t - s ) } 
  \big\|_{ 
    L( U_{ \alpha_i }, U_{ r_j } )
  } 
  \,
  \big\|
    F_i( s, 0 )
  \big\|_{ U_{ \alpha_i } }
  ds
\\ & \leq
  \big\|
    e^{ A ( t - t_0 ) }
  \big\|_{ 
    L( U_{ r_0 } , U_{ r_j } )
  }
  \,
  \|
    v
  \|_{ U_{ r_0 } }
  +
  \kappa^2
  \sum_{ i = 1 }^n
  \int_{ t_0 }^t 
  \left( t - s \right)^{
    \min( 
      \alpha_i - r_j, 0
    )
  }
  ds
\\ & \leq
  \kappa 
  \left( t - t_0 \right)^{
    \left( r_0 - r_j \right)
  }
  \|
    v
  \|_{ U_{ r_0 } }
  +
  \kappa^2
  \sum_{ i = 1 }^n
  \frac{
    \left( t - t_0 \right)^{
      \min\left( 
        1 + \alpha_i - r_j, 1
      \right)
    }
  }{
    \min\!\left( 1 + \alpha_i - r_j , 1 \right)
  }
\\ & \leq
  \kappa^5
  \left( t - t_0 \right)^{
    \left( r_0 - r_j \right)
  }
  \left(
    1 +
    \|
      v
    \|_{ U_{ r_0 } }
  \right)
\end{split}
\end{equation}
for all 
$ j \in \{ 0, 1 \} $,
$ t \in (t_0,\tau] $,
$ \tau \in (t_0,T] $,
$ v \in U_{ r_0 } $
and hence
\begin{equation}
  \left\|
    \Phi_{ [t_0, \tau], v }( 0 )
  \right\|_{
    \mathcal{E}_{ [t_0, \tau] }
  }
\leq
  \kappa^{ 6 }
  \left(
    1 +
    \|
      v
    \|_{ U_{ r_0 } }
  \right)
\label{eq:local_zero}
\end{equation}
for all 
$ \tau \in (t_0,T] $,
$ v \in U_{ r_0 } $.
In the next step observe that
\begin{equation}
\begin{split}
&
  \left\|
    ( \Phi_{ [t_0, \tau], v } x )(t)
    -
    ( \Phi_{ [t_0, \tau], v } y )(t)
  \right\|_{ U_{ r_j } }
\leq
  \sum_{ i = 1 }^n
  \int_{ t_0 }^t 
  \big\|
    e^{ A ( t - s ) }
    \big[ 
      F_i( s, x(s) ) - F_i( s, y(s) )
    \big] 
  \big\|_{ U_{ r_j } }
  ds
\\ & \leq
  \sum_{ i = 1 }^n
  \int_{ t_0 }^t 
  \big\|
    e^{ A ( t - s ) }
  \big\|_{
    L( U_{ \alpha_i }, U_{ r_j } )
  }
  \,
  \big\|
    F_i( s, x(s) ) - F_i( s, y(s) )
  \big\|_{ U_{ \alpha_i } }
  ds
\\ & \leq
  \kappa
  \sum_{ i = 1 }^n
  \int_{ t_0 }^t 
  \left( t - s \right)^{
    \min( \alpha_i - r_j , 0 )
  }
  \big\|
    F_i( s, x(s) ) - F_i( s, y(s) )
  \big\|_{ U_{ \alpha_i } }
  ds
\\ & \leq
  \kappa^2
  \sum_{ i = 1 }^n
  \int_{ t_0 }^t 
  \left( t - s \right)^{
    \min( \alpha_i - r_j , 0 )
  }
  \left[
    1 
    +
    \left\|
      x(s) 
    \right\|_{ U_{ \beta_i } 
    }^{ \delta_i }
    +
    \left\|
      y(s) 
    \right\|_{ U_{ \beta_i } 
    }^{ \delta_i }
  \right]
  \left\|
    x(s) - y(s)
  \right\|_{ U_{ \gamma_i } }
  ds
\\ & \leq
  \kappa^{ 
    3
  }
  \sum_{ i = 1 }^n
  \int_{ t_0 }^t 
  \left( t - s \right)^{
    \min( \alpha_i - r_j , 0 )
  }
  \left\|
    x(s) - y(s)
  \right\|_{ U_{ r_1 } 
  }^{
    \frac{
      ( \gamma_i - r_0 )
    }{
      ( r_1 - r_0 )
    }
  }
  \left\|
    x(s) - y(s)
  \right\|_{ U_{ r_0 } }^{
    \frac{
      ( r_1 - \gamma_i )
    }{
      ( r_1 - r_0 )
    }
  }
\\ & \quad
  \cdot
  \left[
    1 
    +
    \left\|
      x(s) 
    \right\|_{ U_{ r_1 } }^{
      \frac{  
        ( \beta_i - r_0 ) \delta_i
      }{
        ( r_1 - r_0 )
      }
    }
    \left\|
      x(s) 
    \right\|_{ U_{ r_0 } }^{
      \frac{  
        ( r_1 - \beta_i ) \delta_i
      }{
        ( r_1 - r_0 )
      }
    }
    +
    \left\|
      y(s) 
    \right\|_{ U_{ r_1 } }^{
      \frac{  
        ( \beta_i - r_0 ) \delta_i
      }{
        ( r_1 - r_0 )
      }
    }
    \left\|
      y(s) 
    \right\|_{ U_{ r_0 } }^{
      \frac{  
        ( r_1 - \beta_i ) \delta_i
      }{
        ( r_1 - r_0 )
      }
    }
  \right]
  ds
\end{split}
\end{equation}
and therefore 
\begin{equation}
\begin{split}
&
  \left(
    t - t_0
  \right)^{
    \left( r_j - r_0 \right)
  }
  \left\|
    (\Phi_{ [t_0, \tau], v } x)(t)
    -
    (\Phi_{ [t_0, \tau], v } y)(t)
  \right\|_{ U_{ r_j } }
\\ & 
  \leq
  \kappa^{ 
    3
  }
  \left(
    t - t_0
  \right)^{
    \left( r_j - r_0 \right)
  }
  \left\|
    x - y
  \right\|_{ 
    \mathcal{E}_{ [t_0, t] }
  }
  \sum_{ i = 1 }^n
  \int_{ t_0 }^t 
  \left( t - s \right)^{
    \min( \alpha_i - r_j , 0 )
  }
\\ & 
  \cdot
  \left(
    \left( s - t_0 \right)^{ 
      \left( 
        r_0 - \gamma_i 
      \right)
    }
    +
    \left( s - t_0 \right)^{ 
      \left( 
        r_0 - \gamma_i 
        +
        \delta_i
        \left( r_0 - \beta_i \right) 
      \right)
    }
    \left[
    \left\|
      x 
    \right\|_{ 
      \mathcal{E}_{ [t_0, t] } 
    }^{
      \delta_i
    }
    +
    \left\|
      y
    \right\|_{ \mathcal{E}_{ [t_0, t] } 
    }^{
      \delta_i
    }
    \right]
  \right)
  ds
\\ & \leq
  \kappa^{ 
    5
  }
  \left(
    t - t_0
  \right)^{
    \left( r_j - r_0 \right)
  }
  \left[ 
    1 +
    \left\|
      x 
    \right\|_{ \mathcal{E}_{ [t_0, t] } }
    +
    \left\|
      y
    \right\|_{ \mathcal{E}_{ [t_0, t] } }
  \right]^{
    \left( 
      1 + \sum_{ i = 1 }^n \delta_i
    \right)
  }
  \left\|
    x - y
  \right\|_{ 
    \mathcal{E}_{ [t_0, t] }
  }
\\ & 
  \cdot
  \sum_{ i = 1 }^n
  \int_{ 0 }^{ ( t - t_0 ) } 
  \left( t - t_0 - s \right)^{
    \min( \alpha_i - r_j , 0 )
  }
  s^{ 
    \left( 
      r_0 - \gamma_i +
      \delta_i
      \left( r_0 - \beta_i \right) 
    \right)
  }
  \, ds
\\ & =
  \kappa^{ 
    5
  }
  \left[ 
    1 +
    \left\|
      x 
    \right\|_{ \mathcal{E}_{ [t_0, t] } }
    +
    \left\|
      y
    \right\|_{ \mathcal{E}_{ [t_0, t] } }
  \right]^{
    \left( 
      1 + \sum_{ i = 1 }^n \delta_i
    \right)
  }
  \left\|
    x - y
  \right\|_{ 
    \mathcal{E}_{ [t_0, t] }
  }
\\ & \cdot
  \sum_{ i = 1 }^n
  \left( t - t_0 \right)^{ 
    \left(
      1 + \min( \alpha_i , r_j ) 
      - \gamma_i 
      +
      \delta_i \left( r_0 - \beta_i \right) 
    \right)
  }
  B_{
    \left(
      1 + \min( \alpha_i - r_j , 0 ) ,
      1 + r_0 - \gamma_i +
      \delta_i \left( r_0 - \beta_i \right) 
    \right)
  }
\end{split}
\end{equation}
and hence
\begin{equation}
\begin{split}
&
  \left(
    t - t_0
  \right)^{
    \left( r_j - r_0 \right)
  }
  \left\|
    (\Phi_{ [t_0, \tau], v } x)(t)
    -
    (\Phi_{ [t_0, \tau], v } y)(t)
  \right\|_{ U_{ r_j } }
\\ & \leq
  \kappa^{ 
    6
  }
  \left( t - t_0 \right)^{ 
    \min_{
      i \in \{ 1, \dots, n \}
    }
    \left[
      1 
      -
      \left( 
        \gamma_i
        - \min( \alpha_i , r_j ) 
        + 
        \delta_i 
        \left( \beta_i - r_0 \right) 
      \right)
    \right]
  }
  \left[ 
    1 +
    \left\|
      x 
    \right\|_{ \mathcal{E}_{ [t_0, t] } }
    +
    \left\|
      y
    \right\|_{ \mathcal{E}_{ [t_0, t] } }
  \right]^{
    \left( 
      1 + \sum_{ i = 1 }^n \delta_i
    \right)
  }
\\ & \quad
  \cdot
  \left\|
    x - y
  \right\|_{ 
    \mathcal{E}_{ [t_0, t] }
  }
  \left[
  \sum_{ i = 1 }^n
  B_{
    \left(
      1 + \min( \alpha_i - r_j , 0 ) ,
      1 + r_0 - \gamma_i 
      +
      \delta_i 
      \left( 
        r_0 - \beta_i 
      \right) 
    \right)
  }
  \right]
\\ & \leq
  \kappa^{ 
    7
  }
  \left( t - t_0 \right)^{ 
    \left[
      1 
      -
      \max_{
        i \in \{ 1, \dots, n \}
      }
      \left( 
        \gamma_i
        - \min( \alpha_i , r_j ) 
        +
        \delta_i 
        \left( \beta_i - r_0 
        \right) 
      \right)
    \right]
  }
  \left[ 
    1 +
    \left\|
      x 
    \right\|_{ \mathcal{E}_{ [t_0, t] } }
    +
    \left\|
      y
    \right\|_{ \mathcal{E}_{ [t_0, t] } }
  \right]^{
    \left( 
      1 + \sum_{ i = 1 }^n \delta_i
    \right)
  }
\\ & \quad \cdot
  \left\|
    x - y
  \right\|_{ 
    \mathcal{E}_{ [t_0, t] }
  }
\end{split}
\end{equation}
for all
$ j \in \{ 0, 1 \} $,
$ t \in ( t_0, \tau ] $,
$ x, y \in \mathcal{E}_{ [t_0, \tau] } $,
$ \tau \in (t_0,T] $,
$ v \in U_{ r_0 } $.
Hence, we get
\begin{equation}
\label{eq:local_Lipschitz}
\begin{split}
&
  \left\|
    \Phi_{ [t_0, \tau], v }( x ) -
    \Phi_{ [t_0, \tau], v }( y )
  \right\|_{ 
    \mathcal{E}_{ [t_0, \tau] }
  }
\\ & \leq
  \kappa^{ 
    8
  }
  \left( \tau - t_0 \right)^{ 
    \left[
      1 
      -
      \max_{
        i \in \{ 1, \dots, n \}
      }
      \left( 
        \gamma_i
        - \min( \alpha_i , r_0 ) 
        +
        \delta_i 
        ( \beta_i - r_0 ) 
      \right)
    \right]
  }
\\ & \quad 
  \cdot
  \left[ 
    1 +
    \left\|
      x 
    \right\|_{ \mathcal{E}_{ [t_0, \tau] } }
    +
    \left\|
      y
    \right\|_{ \mathcal{E}_{ [t_0, \tau] } }
  \right]^{
    \kappa
  }
  \left\|
    x - y
  \right\|_{ 
    \mathcal{E}_{ [t_0, \tau] }
  }
\end{split}
\end{equation}
for all
$ x, y \in \mathcal{E}_{ [t_0, \tau] } $,
$ v \in U_{ r_0 } $,
$ \tau \in (t_0,T] $.
Combining \eqref{eq:local_zero}
and \eqref{eq:local_Lipschitz}
results in
\begin{equation}
\label{eq:local_lineargrowth}
\begin{split}
&
  \left\|
    \Phi_{ [t_0, \tau], v }( x )
  \right\|_{ 
    \mathcal{E}_{ [t_0, \tau] }
  }
\leq
  \left\|
    \Phi_{ [t_0, \tau], v }( x ) -
    \Phi_{ [t_0, \tau], v }( 0 )
  \right\|_{ 
    \mathcal{E}_{ [t_0, \tau] }
  }
  +
  \left\|
    \Phi_{ [t_0, \tau], v }( 0 )
  \right\|_{ 
    \mathcal{E}_{ [t_0, \tau] }
  }
\\ & \leq
  \kappa^{ 
    8
  }
  \left( \tau - t_0 \right)^{ 
    \left[
      1 
      -
      \max_{
        i \in \{ 1, \dots, n \}
      }
      \left( 
        \gamma_i
        - \min( \alpha_i , r_0 ) 
        + 
        \delta_i ( \beta_i - r_0 ) 
      \right)
    \right]
  }
  \left[ 
    1 +
    \left\|
      x 
    \right\|_{ \mathcal{E}_{ [t_0, \tau] } }
  \right]^{
    \kappa
  }
  \left\|
    x 
  \right\|_{ 
    \mathcal{E}_{ [t_0, \tau] }
  }
\\ & \quad
  +
  \kappa^{ 
    6
  }
  \left(
    1 +
    \|
      v
    \|_{ U_{ r_0 } }
  \right)
\end{split}
\end{equation}
for all
$ x \in \mathcal{E}_{ [t_0, \tau] } $,
$ v \in U_{ r_0 } $,
$ \tau \in (t_0,T] $.
The assumption
\begin{equation}
  \max_{ i \in \{1,\dots,n\} }
  \left[
    \gamma_i
    - \min( \alpha_i , r_0 ) 
    + 
    \delta_i
    \left( \beta_i - r_0 \right) 
  \right]
  < 
  1 
\end{equation}
together with inequalities
\eqref{eq:local_Lipschitz}
and 
\eqref{eq:local_lineargrowth}
implies that
there exists a mapping
$
  \rho \colon U_{ r_0 } \to
  (t_0,T]
$
such that
\begin{equation}
\begin{split}
  \left\|
    \Phi_{ [t_0, \rho(v)], v }( x )
  \right\|_{ 
    \mathcal{E}_{ [t_0, \rho(v)] }
  }
& \leq
  1
  +
  \kappa^{ 
    6
  }
  \left(
    1 +
    \|
      v
    \|_{ U_{ r_0 } }
  \right) 
  ,
\\
  \left\|
    \Phi_{ [t_0, \rho(v)], v }( x ) -
    \Phi_{ [t_0, \rho(v)], v }( y )
  \right\|_{ 
    \mathcal{E}_{ [t_0, \rho(v)] }
  }
& \leq
  \frac{ 1 }{ 2 }
  \left\|
    x - y
  \right\|_{ 
    \mathcal{E}_{ [t_0, \rho(v)] }
  }
\end{split}
\end{equation}
for all
$ x, y \in \mathcal{E}_{ [t_0, \rho(v)], v } $,
$ v \in U_{ r_0 } $.
This ensures that
$
  \Phi_{ [t_0, \rho(v)], v }\!\left( 
    \mathcal{E}_{ [t_0, \rho(v)], v }
  \right)
  \subset
  \mathcal{E}_{ [t_0, \rho(v)], v }
$
for all
$ v \in U_{ r_0 } $.
The Banach fixed
point theorem hence
proves 
that there exist 
unique functions
$ 
  x_v \in 
  \mathcal{E}_{ [t_0, \rho(v)], v }
$, 
$ v \in U_{ r_0 } $,
such that
$
  \Phi_{ [t_0, \rho(v)], v }( x_v )
  =
  x_v
$
for all
$ 
  v \in U_{ r_0 }
$.
This completes the
proof of Lemma~\ref{lem:existence}.
\end{proof}

Lemma~\ref{lem:existence} shows, under
suitable assumptions, that
there exists a unique local mild solution
of \eqref{eq:PDEs}.
This solution can
be extended to a maximal interval of
definition. This is the subject of the next
corollary. It follows directly from
Lemma~\ref{lem:existence} and a standard
argument from the ordinary 
differential equations literature
and its proof is therefore omitted.

\begin{cor}[Maximal mild solutions]
\label{cor:maximal}
Assume the setting 
in the beginning 
of Subsection~\ref{sec:det_existence},
let 
$ r_0, t_0 \in \R $,
$ T \in (t_0, \infty) $,
$ v \in U_{ r_0 } $,
$ n \in \N $,
$ 
  \alpha = ( \alpha_1, \dots, \alpha_n ) \in \R^n
$,
$
  \beta = ( \beta_1, \dots, \beta_n ) ,
  \gamma = ( \gamma_1, \dots, \gamma_n ) 
  \in [r_0, \infty)^n
$,
$
  \delta = ( \delta_1, \dots, \delta_n ) 
  \in [0,\infty)^n
$,
$
  r_1
  \in
  \left[  
    \max( \beta_1, \dots, \beta_n, \gamma_1, \dots, \gamma_n ) ,
    1 +
    \min( \alpha_1, \dots, \alpha_n )
  \right)
$
with
$
  \max_{ i \in \{1,\dots,n\} }
  [
  \gamma_i
  - \min( \alpha_i , r_0 ) 
  + 
    \delta_i
    ( \beta_i - r_0 ) 
  ]
  < 
  1 
$
and 
let
$
  F = ( F_1, \dots, F_n ) \in 
  \mathcal{C}^n_{ \alpha, \beta, \gamma, \delta }( [t_0, T] ) 
$.
Then there exist a unique real number
$ \tau \in (t_0,T] $
and a unique continuous function
$
  x \colon [t_0,\tau) \to
  U_{ r_0 }
$
satisfying
$
  x|_{ (t_0,\tau ) }
  \in
  C(
    (t_0,\tau ), U_{ r_1 }
  )
$,
$
  \sup_{
    s \in (t_0, t ]
  }
  \left( s - t_0 \right)^{
    \left( r_1 - r_0 \right)
  }
  \left\| 
    x(s)
  \right\|_{ U_{ r_1 } }
  < \infty
$,
$
  \lim_{ s \nearrow \tau }
  \big[
    \frac{ 1 }{ 
      ( T - s ) 
    }
    +
    \| 
      x(s)
    \|_{
      U_{ r_1 }
    }
  \big]
  = \infty
$
and
$
  x(t)
=
  e^{ A ( t - t_0 ) } \, v
+
  \sum_{ i = 1 }^n
  \int_{ t_0 }^t
  e^{ A ( t - s ) }
  \,
  F_i( s, x(s) ) \, ds
$
for all
$ t \in ( t_0, \tau ) $.
\end{cor}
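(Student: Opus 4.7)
The plan is to follow the classical maximal-interval construction from ODE theory, using Lemma~\ref{lem:existence} as the local-existence engine and the contraction-based uniqueness from its proof. First I would introduce the family $\mathcal{S}$ of all pairs $(\sigma, y)$ where $\sigma \in (t_0, T]$ and $y \colon [t_0, \sigma) \to U_{r_0}$ is continuous with $y|_{(t_0, \sigma)} \in C((t_0, \sigma), U_{r_1})$, $\sup_{s \in (t_0, t]}(s - t_0)^{r_1 - r_0}\|y(s)\|_{U_{r_1}} < \infty$ for every $t \in (t_0, \sigma)$, $y(t_0) = v$, and $y$ satisfies the mild integral equation on $[t_0, \sigma)$. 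Lemma~\ref{lem:existence} applied with initial value $v$ furnishes at least one such element, so $\mathcal{S}$ is nonempty.

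Next I would establish pointwise uniqueness on overlaps: if $(\sigma_1, y_1), (\sigma_2, y_2) \in \mathcal{S}$, then $y_1 = y_2$ on $[t_0, \min(\sigma_1, \sigma_2))$. The set $\{ t \in [t_0, \min(\sigma_1, \sigma_2)) \colon y_1|_{[t_0, t]} = y_2|_{[t_0, t]}\}$ is nonempty (contains $t_0$), closed by continuity, and open by the Banach fixed-point uniqueness established inside the proof of Lemma~\ref{lem:existence} (restart the contraction argument at any common point using $y_1(t) = y_2(t) \in U_{r_0}$ as fresh initial datum on a small interval). Connectedness of $[t_0, \min(\sigma_1, \sigma_2))$ then gives the claim. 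Setting $\tau := \sup\{\sigma \colon (\sigma, y) \in \mathcal{S}\} \in (t_0, T]$ and $x(t) := y(t)$ for any $(\sigma, y) \in \mathcal{S}$ with $t < \sigma$ produces a well-defined continuous extremal mild solution $x \colon [t_0, \tau) \to U_{r_0}$ enjoying the claimed $U_{r_1}$-regularity on every $[t_0, t] \subset [t_0, \tau)$.

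The core step, and the one requiring the most care, is the blow-up alternative $\lim_{s \nearrow \tau}[\frac{1}{T - s} + \|x(s)\|_{U_{r_1}}] = \infty$. I would argue by contradiction: assume $\tau < T$ and $\liminf_{s \nearrow \tau}\|x(s)\|_{U_{r_1}} < \infty$. Pick $s_0 \in (t_0, \tau)$ with $\|x(s_0)\|_{U_{r_1}} \leq M < \infty$; since $r_1 \geq \max(\beta_i, \gamma_i) \geq r_0$ by hypothesis, $U_{r_1} \hookrightarrow U_{r_0}$ continuously, so $\|x(s_0)\|_{U_{r_0}} \leq cM$. Applying Lemma~\ref{lem:existence} on the interval $[s_0, T]$ with initial value $x(s_0) \in U_{r_0}$ and the shifted data $F(\cdot + 0, \cdot) \in \mathcal{C}^n_{\alpha,\beta,\gamma,\delta}([s_0, T])$ yields a local mild solution $\tilde x$ on $[s_0, s_0 + h]$ for some $h > 0$ depending only on $\|F\|_{\mathcal{C}^n_{\alpha,\beta,\gamma,\delta}([t_0,T])}$, $T - s_0$ and $cM$. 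Choosing $s_0$ close enough to $\tau$ that $s_0 + h > \tau$ and concatenating $x|_{[t_0, s_0]}$ with $\tilde x$ (the concatenation lies in $\mathcal{S}$ by the uniqueness step and the semigroup property of the mild formulation), we contradict the definition of $\tau$.

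The main obstacle is the quantitative bound on the local existence time $h$ in the restart step: one must verify that $h$ depends on $\|x(s_0)\|_{U_{r_0}}$ only through a locally bounded function, and that the restricted nonlinearity retains its norm in $\mathcal{C}^n_{\alpha,\beta,\gamma,\delta}([s_0, T])$. Both are transparent from \eqref{eq:Fnorm} and from inspection of the $\kappa$-estimate in the proof of Lemma~\ref{lem:existence}; the only genuinely delicate point is verifying that the concatenated function still satisfies $\sup_{s \in (t_0, t]}(s-t_0)^{r_1 - r_0}\|x(s)\|_{U_{r_1}} < \infty$ for $t$ slightly beyond $\tau$, which follows because the singular weight is active only near $t_0$ while the extension on $[s_0, s_0 + h]$ is uniformly bounded in $U_{r_1}$ away from $t_0$.
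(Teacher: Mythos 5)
Your proposal is correct and is precisely the argument the paper intends: the paper omits the proof of this corollary, stating that it follows from Lemma~\ref{lem:existence} together with the standard maximal-extension argument from the ODE literature, which is exactly what you carry out (overlap uniqueness via restarted contractions and connectedness, supremum of existence times, and the restart-and-concatenate contradiction for the blow-up alternative, with the local existence time bounded below uniformly for initial data of bounded $U_{r_0}$-norm). Your closing remarks on the uniformity of the restart time and on the weighted $U_{r_1}$-bound of the concatenation address the only delicate points, so no gap remains.
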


The next result shows, under suitable
assumptions, that the unique 
maximal mild solution of \eqref{eq:PDEs} 
enjoys a bit more regularity 
than the regularity asserted 
in Corollary~\ref{cor:maximal}.

\begin{cor}[More regularity for maximal mild solutions]
\label{cor:maximal2}
Assume the setting 
in the beginning 
of Subsection~\ref{sec:det_existence},
let 
$ r_0, t_0 \in \R $,
$ T \in (t_0, \infty) $,
$ v \in U_{ r_0 } $,
$ n \in \N $,
$ 
  \alpha = ( \alpha_1, \dots, \alpha_n ) \in \R^n
$,
$
  \beta = ( \beta_1, \dots, \beta_n ) ,
  \gamma = ( \gamma_1, \dots, \gamma_n ) 
  \in [r_0, \infty)^n
$,
$
  \delta = ( \delta_1, \dots, \delta_n ) 
  \in [0,\infty)^n
$
with
$
  \max( \beta_1, \dots, \beta_n, \gamma_1, \dots, \gamma_n ) 
  <
  1 +
  \min( \alpha_1, \dots, \alpha_n )
$
and
$
  \max_{ i \in \{1,\dots,n\} }
  [
  \gamma_i
  - \min( \alpha_i , r_0 ) 
  + 
    \delta_i
    ( \beta_i - r_0 ) 
  ]
  < 
  1 
$
and 
let
$
  F = ( F_1, \dots, F_n ) \in 
  \mathcal{C}^n_{ \alpha, \beta, \gamma, \delta }( [t_0, T] ) 
$.
Then there exist a unique real number
$ \tau \in (t_0,T] $
and a unique continuous function
$
  x \colon [t_0,\tau) \to
  U_{ r_0 }
$
satisfying
$
  x|_{ (t_0,\tau ) }
  \in
  C(
    (t_0,\tau ), U_{ r_1 }
  )
$,
$
  \sup_{
    s \in (t_0, t ]
  }
  \left( s - t_0 \right)^{
    \left( r_1 - r_0 \right)
  }
  \left\| 
    x(s)
  \right\|_{ U_{ r_1 } }
  < \infty
$,
$
  \lim_{ s \nearrow \tau }
  \big[
    \| 
      x(s)
    \|_{
      U_{ 
        \max( \beta_1, \dots, \beta_n, \gamma_1, \dots, \gamma_n )
      }
    }
$
$
    +
    \frac{ 1 }{ 
      ( T - s ) 
    }
  \big]
  = \infty
$
and
$
  x(t)
=
  e^{ A ( t - t_0 ) } \, v
+
  \sum_{ i = 1 }^n
  \int_{ t_0 }^t
  e^{ A ( t - s ) }
  \,
  F_i( s, x(s) ) \, ds
$
for all
$ t \in ( t_0, \tau ) $
and all
$
  r_1
  \in
  \left[  
    r_0, 
    1 +
    \min( \alpha_1, \dots, \alpha_n )
  \right)
$.
\end{cor}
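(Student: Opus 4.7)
Set $m := \max(\beta_1, \dots, \beta_n, \gamma_1, \dots, \gamma_n)$, pick an auxiliary real number $r_1^{\ast} \in (m, 1 + \min(\alpha_1, \dots, \alpha_n))$, and apply Corollary~\ref{cor:maximal} with this choice of $r_1^{\ast}$ to obtain a unique maximal existence time $\tau \in (t_0, T]$ and a unique continuous function $x \colon [t_0, \tau) \to U_{r_0}$ with $x|_{(t_0, \tau)} \in C((t_0, \tau), U_{r_1^{\ast}})$, the weighted sup bound at $r_1^{\ast}$, the integral representation, and the blow-up criterion $\lim_{s \nearrow \tau} [1/(T - s) + \|x(s)\|_{U_{r_1^{\ast}}}] = \infty$. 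The core task is now twofold: first, strengthen the blow-up criterion by replacing $r_1^{\ast}$ with $m$, and second, show the regularity and weighted sup bound hold for every $r_1 \in [r_0, 1 + \min(\alpha_1, \dots, \alpha_n))$.

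For the strengthening of the blow-up criterion, I argue by contraposition. Suppose $\tau < T$ and $\sup_{s \in [t_0, \tau)} \|x(s)\|_{U_m} < \infty$. Then \eqref{eq:linear_growth_bound} together with $\beta_i, \gamma_i \leq m$ yields a uniform bound $\sup_{s \in [t_0, \tau)} \|F_i(s, x(s))\|_{U_{\alpha_i}} < \infty$ for each $i \in \{1, \dots, n\}$. Inserting this into the mild formulation
\begin{equation*}
  x(t) = e^{A(t - t_0)} v + \sum_{i = 1}^n \int_{t_0}^t e^{A(t - s)} F_i(s, x(s)) \, ds
\end{equation*}
and using the analytic-semigroup estimate $\|e^{A(t-s)}\|_{L(U_{\alpha_i}, U_{r_1^{\ast}})} \lesssim (t - s)^{\min(\alpha_i - r_1^{\ast}, 0)}$ together with $r_1^{\ast} < 1 + \min_i \alpha_i$ (so the convolution is integrable) and $\|e^{A(t - t_0)} v\|_{U_{r_1^{\ast}}} \lesssim (t - t_0)^{r_0 - r_1^{\ast}} \|v\|_{U_{r_0}}$, I obtain $\limsup_{t \nearrow \tau} \|x(t)\|_{U_{r_1^{\ast}}} < \infty$, contradicting the blow-up criterion at $r_1^{\ast}$. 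Thus $\lim_{s \nearrow \tau} [1/(T - s) + \|x(s)\|_{U_m}] = \infty$, as required.

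For the additional regularity, I fix arbitrary $r_1 \in [r_0, 1 + \min_i \alpha_i)$ and $t^{\ast} \in (t_0, \tau)$. On the compact interval $[t_0, t^{\ast}]$ the already established weighted bound at $r_1^{\ast}$ together with \eqref{eq:linear_growth_bound} yields $\sup_{s \in (t_0, t^{\ast}]} (s - t_0)^{\theta_i} \|F_i(s, x(s))\|_{U_{\alpha_i}} < \infty$ for appropriate exponents $\theta_i < 1$ arising (exactly as in the integral computation \eqref{eq:integral_welldefined}) from the hypothesis $\max_i [\gamma_i - \min(\alpha_i, r_0) + \delta_i(\beta_i - r_0)] < 1$. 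Substituting this into the mild formulation and applying the semigroup estimate $\|e^{A(t - s)}\|_{L(U_{\alpha_i}, U_{r_1})} \lesssim (t - s)^{\min(\alpha_i - r_1, 0)}$ together with a Beta-function convolution bound shows that $(t - t_0)^{r_1 - r_0} \|x(t)\|_{U_{r_1}}$ is bounded on $(t_0, t^{\ast}]$, and standard dominated-convergence arguments (applied to the integrand of the mild formulation in the $U_{r_1}$-norm) yield $x|_{(t_0, t^{\ast}]} \in C((t_0, t^{\ast}], U_{r_1})$. Since $t^{\ast} \in (t_0, \tau)$ was arbitrary, this establishes the claimed regularity and weighted sup bound for every $r_1 \in [r_0, 1 + \min_i \alpha_i)$. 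Uniqueness of the pair $(\tau, x)$ follows from Corollary~\ref{cor:maximal} applied at the reference level $r_1^{\ast}$, since any candidate satisfying the conditions of Corollary~\ref{cor:maximal2} in particular satisfies the conditions of Corollary~\ref{cor:maximal}.

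The main obstacle will be verifying that the bootstrap in the last paragraph actually improves the admissible range of $r_1$ all the way up to $1 + \min_i \alpha_i$; the dangerous regime is $r_1$ near this supremum, where the smoothing exponent $\min(\alpha_i - r_1, 0)$ is close to $-1$ and must be carefully combined with the near-$t_0$ singularity exponent $\theta_i < 1$ to keep the convolution integral finite. This amounts to checking, uniformly in $i$, that $1 + \min(\alpha_i - r_1, 0) + (1 - \theta_i) > 1$, which is exactly the content of the two standing hypotheses on $(\alpha, \beta, \gamma, \delta, r_0, r_1)$.
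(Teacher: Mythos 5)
Your overall strategy (reduce to Corollary~\ref{cor:maximal}, then upgrade the spatial regularity via \eqref{eq:linear_growth_bound}, interpolation and the semigroup smoothing, exactly as in \eqref{eq:integral_welldefined}) matches the paper, and the regularity-upgrade computation with the exponents $\theta_i<1$ and the Beta-function bound is essentially the paper's estimate; also your uniqueness argument (a candidate for Corollary~\ref{cor:maximal2} satisfies the hypotheses of Corollary~\ref{cor:maximal} at the higher level, since $\|\cdot\|_{U_m}\lesssim\|\cdot\|_{U_{r_1^{\ast}}}$) is fine. The genuine gap is in your strengthening of the blow-up criterion. Because you apply Corollary~\ref{cor:maximal} at a level $r_1^{\ast}$ strictly above $m=\max(\beta_1,\dots,\beta_n,\gamma_1,\dots,\gamma_n)$, you must transfer the criterion down to level $m$, and your contraposition only assumes $\sup_{s\in[t_0,\tau)}\|x(s)\|_{U_m}<\infty$ and derives a contradiction. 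When $\tau<T$ this shows $\limsup_{s\nearrow\tau}\|x(s)\|_{U_m}=\infty$, but the corollary asserts the full limit, $\lim_{s\nearrow\tau}\bigl[\|x(s)\|_{U_m}+\tfrac{1}{T-s}\bigr]=\infty$; your argument does not exclude the oscillatory scenario in which $\|x(s)\|_{U_m}$ has infinite limsup but finite liminf along some sequence $s_k\nearrow\tau$. To close this you would need an additional continuation argument: along such a sequence $\|x(s_k)\|_{U_{r_0}}$ is bounded, so Lemma~\ref{lem:existence} restarted at $(s_k,x(s_k))$ gives a local solution on an interval of length bounded below uniformly in $k$, which by uniqueness agrees with $x$ and, for large $k$, is bounded in $U_{r_1^{\ast}}$ up to and past $\tau$, contradicting the blow-up criterion at level $r_1^{\ast}$.

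The simpler repair, and the route the paper takes, is to note that the admissible range for $r_1$ in Corollary~\ref{cor:maximal} is the half-open interval $[\max(\beta_1,\dots,\gamma_n),\,1+\min(\alpha_1,\dots,\alpha_n))$, closed at the left endpoint, and the standing hypothesis of Corollary~\ref{cor:maximal2} guarantees this interval is nonempty; applying Corollary~\ref{cor:maximal} with $r_1=m$ directly yields the blow-up criterion at level $m$ as a genuine limit, and the whole contraposition step becomes unnecessary. One further minor point: for the continuity $x|_{(t_0,\tau)}\in C((t_0,\tau),U_{r_1})$ the paper does not argue by dominated convergence in the mild formula but re-invokes Lemma~\ref{lem:existence} (uniqueness in the $U_{r_1}$-class) after establishing the weighted bound; your appeal to ``standard dominated-convergence arguments'' is plausible but should be made precise or replaced by that cleaner identification.
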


\begin{proof}[Proof
of Corollary~\ref{cor:maximal2}]
First of all, Corollary~\ref{cor:maximal}
implies that there exists a unique real number
$ \tau \in (t_0,T] $
and a unique continuous function
$
  x \colon [t_0,\tau) \to
  U_{ r_0 }
$
satisfying
$
  x|_{ (t_0,\tau ) }
  \in
  C(
    (t_0,\tau ), 
    U_{ 
      \max( \beta_1, \dots, \beta_n, \gamma_1, \dots, \gamma_n ) 
    }
  )
$,
$
  \lim_{ s \nearrow \tau }
  \big[
    \frac{ 1 }{ 
      ( T - s ) 
    }
    +
    \| 
      x(s)
    \|_{
      U_{ 
        \max( \beta_1, \dots, \beta_n, \gamma_1, \dots, \gamma_n ) 
      }
    }
  \big]
  = 
  \infty
$
and
\begin{equation}
  \sup_{
    s \in (t_0, t ]
  }
  \left( s - t_0 \right)^{
    \left( 
      \max( \beta_1, \dots, \beta_n, \gamma_1, \dots, \gamma_n ) 
      - r_0 
    \right)
  }
  \left\| 
    x(s)
  \right\|_{ 
    U_{ 
      \max( \beta_1, \dots, \beta_n, \gamma_1, \dots, \gamma_n ) 
    } 
  }
  < \infty
\end{equation}
and
\begin{equation}
  x(t)
=
  e^{ A ( t - t_0 ) } \, v
+
  \sum_{ i = 1 }^n
  \int_{ t_0 }^t
  e^{ A ( t - s ) }
  \,
  F_i( s, x(s) ) \, ds
\end{equation}
for all
$ t \in ( t_0, \tau ) $.
Next we observe 
similar as in \eqref{eq:integral_welldefined} that
\eqref{eq:linear_growth_bound} 
and interpolation (see, e.g., 
Theorem~37.6 in Sell \& You~\cite{sy02})
imply that
\begin{equation}
\begin{split}
&
  \int_{ t_0 }^t
  \big\|
  e^{ A ( t - s ) }
  \,
  F_i( s, x(s) ) 
  \big\|_{
    U_{ r_1 }
  }
  ds
\leq
  \int_{ t_0 }^t
  \|
    e^{ A ( t - s ) }
  \|_{
    L( U_{ \alpha_i }, U_{ r_1 } )
  }
  \,
  \|
    F_i( s, x(s) ) 
  \|_{
    U_{ \alpha_i }
  }
  ds
\\ & \leq
  \left\| F \right\|_{
    \mathcal{C}^n_{ \alpha, \beta, \gamma, \delta }( [t_0, T] )
  }
  \left[
    \sup_{ s \in (0,T-t_0] }
    \!\!\!
    \tfrac{
      \|
        e^{ A s }
      \|_{
        L( U_{ \alpha_i }, U_{ r_1 } )
      }
    }{
      s^{
        \min( \alpha_i - r_1 , 0 )
      }
    }
  \right]
  \int_{ t_0 }^t
  \frac{
  \big(
    1 + \| x(s) \|_{ U_{ \beta_i } }^{ \delta_i }
  \big)
  \,
  \big(
    1 +
    \|
      x(s)
    \|_{
      U_{ \gamma_i }
    }
  \big)
  }{
    \left(
      t - s
    \right)^{
      \max( r_1 - \alpha_i , 0 )
    }
  }
  \, ds
\\ & \leq
  \left\| F \right\|_{
    \mathcal{C}^n_{ \alpha, \beta, \gamma, \delta }( [t_0, T] )
  }
  \left[
    \sup_{ s \in (0,T-t_0] }
    \!\!\!
    \tfrac{
      \|
        e^{ A s }
      \|_{
        L( U_{ \alpha_i }, U_{ r_1 } )
      }
    }{
      s^{
        \min( \alpha_i - r_1 , 0 )
      }
    }
  \right]
  \left[
    \sup_{
      s \in (t_0,t]
    }
    \frac{
      (
      1 +
      \|
        x(s)
      \|_{
        U_{ \gamma_i }
      }
      )
    }{
      \left( s - t_0 \right)^{
        ( r_0 - \gamma_i )
      }
    }
  \right]
\\ & 
  \cdot
  \left[
    \sup_{
      s \in (t_0,t]
    }
    \frac{
      (
      1 +
      \|
        x(s)
      \|_{
        U_{ \beta_i }
      }^{ \delta_i }
      )
    }{
      \left( s - t_0 \right)^{
        \delta_i ( r_0 - \beta_i )
      }
    }
  \right]
  \int_{ t_0 }^t
  \frac{
    1
  }{
    \left(
      t - s
    \right)^{
      \max( r_1 - \alpha_i , 0 )
    }
    \left( s - t_0 \right)^{
      \left(
        \gamma_i - r_0 + \delta_i ( \beta_i - r_0 )
      \right)
    }
  }
  \, ds
  < \infty
\end{split}
\label{eq:integral_morereg}
\end{equation}
for all 
$ t \in [t_0,\tau) $,
$ i \in \{ 1, 2, \dots, n \} $
and all
$
  r_1
  \in 
  ( 
    - \infty ,
    1 +
    \min( \alpha_1, \dots, \alpha_n )
  )
$
where we used 
$
  \gamma_i - r_0 + \delta_i ( \beta_i - r_0 )
  < 1
$
for all $ i \in \{ 1, 2, \dots, n \} $
in the last line of \eqref{eq:integral_morereg}.
This proves that
$ 
  x(t) \in U_{ r_1 }
$
for all
$ t \in ( t_0, \tau ) $
and all
$
  r_1 
  \in 
  ( 
    - \infty, 
    1 + \min( \alpha_1, \dots, \alpha_n ) 
  )
$
and that
\begin{equation}
  \sup_{
    s \in ( t_0, t ]
  }
  \left(
    s - t_0
  \right)^{
    \left( r_1 - r_0 \right)
  }
  \left\|
    x(s)
  \right\|_{
    U_{ r_1 }
  }
  < 
  \infty
\end{equation}
for all
$ t \in ( t_0, \tau ) $
and 
all
$
  r_1 
  \in 
  [
    r_0 ,
    1 + \min( \alpha_1, \dots, \alpha_n ) 
  )
$.
Applying Lemma~\ref{lem:existence}
then proves that
$
  x|_{
    (t_0, \tau)
  }
  \in
  C( (t_0, \tau), U_{ r_1 } )
$
for all
$
  r_1 
  \in 
  [
    r_0 ,
    1 + \min( \alpha_1, \dots, \alpha_n ) 
  )
$.
This completes the proof
of Lemma~\ref{cor:maximal2}.
\end{proof}

We now present and prove the main result of this subsection.
It shows, under suitable assumptions, that
the unique local mild solutions of \eqref{eq:PDEs} 
depend continuously in an appropriate sense on the possibly 
nonlinear vector fields in \eqref{eq:PDEs}.

\begin{thm}[Continuous dependence on the data
on bounded time intervals]
\label{thm:continuity}
Assume the setting in the beginning 
of 
Subsection~\ref{sec:det_existence}
and let 
$ r_0 \in \R $,
$ n \in \N $,
$ 
  \alpha = ( \alpha_1, \dots, \alpha_n ) \in \R^n
$,
$
  \beta = ( \beta_1, \dots, \beta_n ), 
  \gamma = ( \gamma_1, \dots, \gamma_n )
  \in [r_0, \infty)^n
$,
$
  \delta = ( \delta_1, \dots, \delta_n ) 
  \in [0,\infty)^n
$
with
$
  \max( \beta_1, \dots, \beta_n, \gamma_1, \dots, \gamma_n ) 
  <
  1 +
  \min( \alpha_1, \dots, \alpha_n )
$
and
$
  \max_{ i \in \{ 1, \dots, n \} }
  \big[
  \gamma_i - \min( \alpha_i , r_0 ) 
  + ( \beta_i - r_0 ) \delta_i
  \big]
  < 1 
$.
Then 
there exist unique lower
semicontinuous functions
$ 
  \tau^{ t_0, T } 
  \colon 
  \mathcal{C}^n_{ \alpha, \beta, \gamma, \delta }( [t_0, T] ) 
  \times
  U_{ r_0 } 
  \to (t_0,T] 
$,
$ t_0, T \in \R $ with $ t_0 < T $,
and unique functions
$
  x^{ t_0, T } 
  \colon 
  \mathcal{C}^n_{ \alpha, \beta, \gamma, \delta }( [t_0, T] ) 
$
$
  \times
  U_{ r_0 } 
  \to
  \cup_{ s \in (t_0,T] } 
  C( [t_0,s), 
$
$
  U_{ r_0 } )
$,
$ t_0, T \in \R $ with $ t_0 < T $,
which satisfy
$
  x^{ t_0, T }_{ F, v }
  \in
  C( [t_0, \tau^{ t_0, T }_{ F, v } ), U_{ r_0 } )
$,
$
  x^{ t_0, T }_{ F, v }|_{ ( t_0, \tau^{ t_0, T }_{ F, v } ) }
  \in
  C(
    (t_0,
$
$
    \tau^{ t_0, T }_{ F, v } ), U_{ r_1 }
  )
$,
$
  \sup_{
    s \in (t_0,t]
  }
  ( s - t_0 )^{
    ( r_1 - r_0 )
  }
  \,
  \| 
    x^{ t_0, T }_{ F, v }(s)
  \|_{ U_{ r_1 } }
  < \infty 
$
and
\begin{equation}
  \lim_{ 
    s \nearrow \tau^{ t_0, T }_{ F, v }
  }
  \left[
    \tfrac{ 1
    }{
      ( T - s )
    }
    +
    \| 
      x^{ t_0, T }_{ F, v }(s) 
    \|_{ 
      U_{ 
        \max( \beta_1, \dots, \beta_n, \gamma_1, \dots, \gamma_n )
      } 
    }
  \right]
  = \infty 
\end{equation}
and
\begin{equation}
  x^{ t_0, T }_{ F, v }(t)
=
  e^{ A ( t - t_0 ) } \, v
+
  \sum_{ i = 1 }^n
  \int_{ t_0 }^t
  e^{ A ( t - s ) }
  \,
  F_i( s, x^{ t_0, T }_{ F, v }(s) ) \, ds
\end{equation}
for all 
$ t \in (t_0,\tau^{ t_0, T }_{ F, v } ) $,
$ v \in U_{ r_0 } $,
$
  r_1
  \in
  [  
    r_0 ,
    1 +
    \min( \alpha_1, \dots, 
$
$
    \alpha_n )
  )
$,
$
  F = ( F_1, \dots, F_n )
  \in 
  \mathcal{C}^n_{ \alpha, \beta, \gamma, \delta }( [t_0, T] )
$
and all
$ t_0, T \in \R $
with $ t_0 < T $.
In addition, it holds 
for every
$ t_0, T \in \R $ 
with $ t_0 < T $,
every
$ t \in (t_0, T] $
and every
$
  r_1
  \in
  \left[  
    r_0 ,
    1 +
    \min( \alpha_1, \dots, \alpha_n )
  \right)
$
that the function
\begin{equation}
\label{eq:x_mess}
  \mathcal{C}^n_{ \alpha, \beta, \gamma, \delta }( [t_0, T] )
  \times U_{ r_0 }
  \ni (F,v) \mapsto 
  \left\{
  \begin{array}{ll}
    x^{ t_0, T }_{ F, v }( t ) 
  &
    \colon
    t < \tau_{ F, v } 
  \\
    \infty 
  &
    \colon
    t \geq \tau_{ F, v }
  \end{array}
  \right\}
  \in U_{ r_1 } \cup \{ \infty \}
\end{equation}
is Borel measurable.
Moreover, it holds that
\begin{equation}
\label{eq:x_cont}
  \lim_{ N \to \infty }
  \sup_{ s \in (t_0, t] }
  \left[
  \begin{array}{c}
    \left( s - t_0 \right)^{
      \left( r_1 - r_0 \right)
    }
    \|
      x^{ t_0, T }_{ F_1, v_1 }(s)
      -
      x^{ t_0, T }_{ F_N, v_N }(s)
    \|_{
      U_{ r_1 }
    }
  \\[1ex]
    +
    \,
    \|
      x^{ t_0, T }_{ F_1, v_1 }(s)
      -
      x^{ t_0, T }_{ F_N, v_N }(s)
    \|_{
      U_{ r_0 }
    }
  \end{array}
  \right]
  = 0
\end{equation}
for all 
$ t \in ( t_0,\tau_{ F, v } ) $,
$
  r_1
  \in
  \left[  
    r_0 ,
    1 +
    \min( \alpha_1, \dots, \alpha_n )
  \right)
$,
$ 
  ( v_N )_{ N \in \N } 
$
$
  \subset U_{ r_0 } 
$,
$
  ( F_N )_{ N \in \N }
  \subset
  \mathcal{C}^n_{ \alpha, \beta, \gamma, \delta }( [t_0, T] )
$
with
$
  \lim_{ N \to \infty } 
  \left\|
    F_1 - F_N
  \right\|_{
    \mathcal{C}^n_{ \alpha, \beta, \gamma, \delta }( [ t_0, T ] )
  }
$
$
  =
$
$
  \lim_{ N \to \infty } 
  \|
    v_1 - v_N
  \|_{
    U_{ r_0 }
  }
$
$
  = 0
$
and all
$ t_0, T \in \R $
with $ t_0 < T $.
\end{thm}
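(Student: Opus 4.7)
For fixed $(F,v)$, the existence and uniqueness of $\tau^{t_0,T}_{F,v}$ and of $x^{t_0,T}_{F,v}$ with the stated regularity are an immediate consequence of Corollary~\ref{cor:maximal2}. What remains is the continuous dependence estimate \eqref{eq:x_cont}, the lower semicontinuity of $(F,v)\mapsto \tau^{t_0,T}_{F,v}$, and the Borel measurability of the map in \eqref{eq:x_mess}.

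The first substantive step is to prove a quantitative local continuous dependence estimate on a sufficiently short interval $[s_0, s_0+h]$ on which both fixed-point problems constructed in Lemma~\ref{lem:existence} admit solutions inside the ball $\mathcal{E}_{[s_0,s_0+h], \cdot}$ from \eqref{eq:defE}. Writing $\Phi^F_{[s_0,s_0+h], v}$ for the map in \eqref{eq:defEnorm} to emphasize its dependence on $F$, the triangle inequality gives
\begin{equation*}
\begin{split}
  \bigl\| x^F - x^G \bigr\|_{\mathcal{E}_{[s_0,s_0+h]}}
 &\leq
  \bigl\| \Phi^F_{[s_0,s_0+h],v}(x^F) - \Phi^F_{[s_0,s_0+h],v}(x^G) \bigr\|_{\mathcal{E}_{[s_0,s_0+h]}}
\\ & \quad
 + \bigl\| \Phi^F_{[s_0,s_0+h],v}(x^G) - \Phi^G_{[s_0,s_0+h],w}(x^G) \bigr\|_{\mathcal{E}_{[s_0,s_0+h]}}.
\end{split}
\end{equation*}
By the contraction estimate \eqref{eq:local_Lipschitz} the first summand is bounded by $\tfrac{1}{2}\|x^F - x^G\|_{\mathcal{E}}$ provided $h$ is small, while the second summand can be bounded by a constant times $\bigl(\|F-G\|_{\mathcal{C}^n_{\alpha,\beta,\gamma,\delta}([s_0,s_0+h])} + \|v-w\|_{U_{r_0}}\bigr)$ via exactly the semigroup and Beta-function estimates used to prove \eqref{eq:local_zero}. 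Rearranging yields a linear Lipschitz bound of $\|x^F - x^G\|_{\mathcal{E}_{[s_0,s_0+h]}}$ in terms of the data.

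The second step is to propagate this local estimate to the whole interval $[t_0, t]$ whenever $t < \tau^{t_0,T}_{F,v}$, by a continuation/bootstrap argument. Since $s \mapsto x^{F,v}(s)$ is continuous on $[t_0, t]$ with values in $U_{r_0}$ and $\sup_{s\in [t_0,t]} \|x^{F,v}(s)\|_{U_{\max(\beta_i,\gamma_i)}} < \infty$ (using the a priori blow-up criterion in Corollary~\ref{cor:maximal2} together with $t < \tau^{t_0,T}_{F,v}$), the local existence time $h = h(v)$ produced by Lemma~\ref{lem:existence} can be chosen uniformly along a neighborhood of the trajectory $\{x^{F,v}(s) : s \in [t_0, t]\}$. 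Covering $[t_0, t]$ by finitely many such subintervals and iterating the local Lipschitz estimate, I obtain that $t < \tau^{t_0,T}_{F_N, v_N}$ for all sufficiently large $N$ and that
\begin{equation*}
  \sup_{s\in[t_0,t]}\!\Big[(s - t_0)^{r_1 - r_0} \| x^{t_0,T}_{F,v}(s) - x^{t_0,T}_{F_N,v_N}(s)\|_{U_{r_1}} + \| x^{t_0,T}_{F,v}(s) - x^{t_0,T}_{F_N,v_N}(s)\|_{U_{r_0}}\Big] \to 0
\end{equation*}
as $N \to \infty$, which is \eqref{eq:x_cont}. In particular, $\liminf_{N\to\infty} \tau^{t_0,T}_{F_N,v_N} \geq \tau^{t_0,T}_{F,v}$, so $\tau^{t_0,T}$ is lower semicontinuous.

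The measurability statement \eqref{eq:x_mess} then follows from the preceding. Indeed, by lower semicontinuity of $\tau^{t_0,T}$ the set $\{(F,v) : t < \tau^{t_0,T}_{F,v}\}$ is open in $\mathcal{C}^n_{\alpha,\beta,\gamma,\delta}([t_0,T]) \times U_{r_0}$; on it the map $(F,v)\mapsto x^{t_0,T}_{F,v}(t) \in U_{r_1}$ is continuous by \eqref{eq:x_cont}, while on its complement the map equals the isolated point $\infty$ in the topology \eqref{eq:inf_topology} on $U_{r_1}\cup\{\infty\}$. The hardest part of the whole argument is the bootstrap from local to interval-wide continuous dependence: the local existence time $\rho(v)$ of Lemma~\ref{lem:existence} depends on $\|v\|_{U_{r_0}}$, so one has to carefully exploit continuity of the reference trajectory in $U_{r_0}$, joint continuity of the semigroup norms in the relevant parameter ranges, and a finite covering of $[t_0,t]$ to obtain a uniform lower bound on $\rho$ along a whole tube around $\{x^{t_0,T}_{F,v}(s) : s\in[t_0,t]\}$ that is stable under small perturbations of $(F,v)$.
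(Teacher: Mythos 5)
Your plan is correct in substance, but it propagates the local estimate by a different mechanism than the paper. The paper proves two quantitative difference estimates: an absorption estimate, \eqref{eq:continuity_first}, in the weighted norm $\|\cdot\|_{\mathcal{E}_{[t_0,t]}}$ for initial data measured in $U_{r_0}$, valid once $t-t_0$ is small enough that the absorbing prefactor can be pushed above $\tfrac12$, and a second estimate, \eqref{eq:continuity_second}, obtained from the singular Gronwall lemma (Lemma~7.1.1 in Henry~\cite{h81}), which controls $\sup_{s\in[\hat t,t]}\|x^{t_0,T}_{F,v}(s)-x^{t_0,T}_{\tilde F,\tilde v}(s)\|_{U_{r_1}}$ on the whole remaining interval in one shot at the price of measuring the initial-data difference in $U_{r_1}$; the two are glued by a single restart at an intermediate time $\hat t$, where the solutions have regularized into $U_{r_1}$, combined with the tube condition $\sup_u\|x^{t_0,T}_{F_N,v_N}(u)\|\le 2+\sup_u\|x^{t_0,T}_{F_1,v_1}(u)\|$, which simultaneously yields $\tau^{t_0,T}_{F_N,v_N}>t$ for large $N$. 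You instead iterate a contraction-based local estimate over a finite cover of $[t_0,t]$; this avoids the singular Gronwall lemma but shifts the burden onto exactly the bookkeeping you flag as hardest, and two points there need care. First, $\sup_{s\in[t_0,t]}\|x^{t_0,T}_{F,v}(s)\|_{U_{\max(\beta_1,\dots,\beta_n,\gamma_1,\dots,\gamma_n)}}<\infty$ is in general false at $s=t_0$ (the initial value lies only in $U_{r_0}$), so your covering must begin at some $\hat t>t_0$ reached by your step-one estimate anchored at $t_0$, which is in effect what the paper does. Second, each local estimate anchored at a restart time $s_k$ controls the $U_{r_1}$-difference only through the degenerate weight $(s-s_k)^{r_1-r_0}$, and Lemma~\ref{lem:existence} does not allow raising the base space beyond $\min_i\min(\beta_i,\gamma_i)$; hence to obtain the unweighted $U_{r_1}$-convergence in \eqref{eq:x_cont} on $[\hat t,t]$ you must either use overlapping subintervals so that every point lies a fixed distance to the right of some restart time, or prove a separate difference estimate with initial data in $U_{r_1}$ --- the latter being precisely the paper's \eqref{eq:continuity_second}. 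With either fix your bootstrap goes through, and your endgame (lower semicontinuity of $\tau^{t_0,T}$ from the convergence, and measurability of \eqref{eq:x_mess} from openness of $\{\tau^{t_0,T}>t\}$, continuity there, and the constant value $\infty$ on the complement) coincides with the paper's.
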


\begin{proof}[Proof
of Theorem~\ref{thm:continuity}]
First of all, observe that
Corollary~\ref{cor:maximal2} ensures that
there exist unique functions
$ 
  \tau^{ t_0, T } 
  \colon 
  \mathcal{C}^n_{ \alpha, \beta, \gamma, \delta }( [t_0, T] ) 
  \times
  U_{ r_0 } 
  \to (t_0,T] 
$,
$ t_0, T \in \R $
with $ t_0 < T $,
and 
$
  x^{ t_0, T } \colon 
$
$
  \mathcal{C}^n_{ \alpha, \beta, \gamma, \delta }( [t_0, T] ) 
  \times
  U_{ r_0 } 
  \to
  \cup_{ s \in (t_0,T] }
  C( [t_0,s), U_{ r_0 } )
$,
$ t_0, T \in \R $
with $ t_0 < T $,
satisfying
$
  x^{ t_0, T }_{ F, v }
  \in
  C( [t_0, \tau^{ t_0, T }_{ F, v } ), U_{ r_0 } )
$,
$
  x^{ t_0, T }_{ F, v }|_{ (t_0,\tau^{ t_0, T }_{ F, v } ) }
  \in
  C(
    (t_0,\tau^{ t_0, T }_{ F, v } ), U_{ r_1 }
  )
$,
$
  \sup_{
    s \in (t_0,t]
  }
  \left( s - t_0 \right)^{
    \left( r_1 - r_0 \right)
  }
  \| 
    x^{ t_0, T }_{ F, v }(s)
  \|_{ U_{ r_1 } }
  < 
  \infty
$
and
\begin{equation}
  \lim_{ 
    s \nearrow \tau_{ F, v }
  }
  \left[
    \frac{ 1
    }{
      ( T - s )
    }
    +
    \| 
      x^{ t_0, T }_{ F, v }(s) 
    \|_{ 
      U_{ 
        \max( \beta_1, \dots, \beta_n, \gamma_1, \dots, \gamma_n ) 
      } 
    }
  \right]
  = \infty
\end{equation}
and
\begin{equation}
  x^{ t_0, T }_{ F, v }(t)
=
  e^{ A ( t - t_0 ) } \, v
+
  \sum_{ i = 1 }^n
  \int_{ t_0 }^t
  e^{ A ( t - s ) }
  \,
  F_i( s, x^{ t_0, T }_{ F, v }(s) ) \, ds
\end{equation}
for all 
$ t \in (t_0,\tau^{ t_0, T }_{ F, v }) $,
$ v \in U_{ r_0 } $,
$
  F = ( F_1, \dots, F_n )
  \in 
  \mathcal{C}^n_{ \alpha, \beta, \gamma, \delta }( [t_0, T] )
$,
$ t_0, T \in \R $
with $ t_0 < T $
and all
$
  r_1
  \in
  \left[  
    r_0 ,
    1 +
    \min( \alpha_1, \dots, \alpha_n )
  \right)
$.
It thus remains to prove 
that $ \tau^{ t_0, T } $,
$ t_0, T \in \R $ with $ t_0 < T $,
are lower semicontinuous and
that \eqref{eq:x_mess} and \eqref{eq:x_cont} are fulfilled.

For this let
$
  r_1
  \in
$
$
  [  
    \max( \beta_1, \dots, \beta_n, \delta_1, \dots, \delta_n ) ,
$
$
    1 +
    \min( \alpha_1, \dots, \alpha_n )
  )
$
be an arbitrary real number 
and 
let $ \kappa_{ [t_0, T] } \in [0,\infty) $,
$ t_0, T \in \R $ with $ t_0 < T $,
be real numbers defined through
\begin{equation}
\begin{split}
&
  \kappa_{
    [t_0, T]
  }
  :=
  \sum_{ j = 0 }^1
  \sum_{ i = 1 }^n
  \left[
  \frac{ 
    1
  }{
    \left( 
      1 
      + 
      \alpha_i 
      - r_j
    \right)
  }
  +
  B_{
    \left(
      1 + \min( \alpha_i - r_j , 0 ) ,
      1 + r_0 - \gamma_i +
      \delta_i
      \left( r_0 - \beta_i \right) 
    \right)
  }
  \right]
\\ &
  +
  \left[
    2 + n + r_1 - r_0 
    + | T - t_0 | 
    + 
    \sum_{ i = 1 }^n \delta_i
  \right]^{
    \left( 
      4  
      + \left| r_0 \right|
      + \left| r_1 \right|
      +
      \max_{ i \in \{ 1, \dots, n \} }
      \left| \alpha_i \right|
    \right)
  }
\\ &
  +
  \max_{ j \in \{ 0, 1 \} }
  \max_{ 
    \theta 
    \in 
    \{
      r_0, r_1, \alpha_1, \dots, \alpha_n
    \}
  }
  \sup_{ t \in (t_0,T] }
  \left[
    \left( t - t_0 
    \right)^{
      \max\left( r_j - \theta , 0 \right)
    } 
  \|
    e^{ A \left( t - t_0 \right) }
  \|_{ 
    L( U_{ \theta }, U_{ r_j } ) 
  }
  \right]
\\ & 
  +
  \max_{
  \substack{
    \theta \in 
    \{ 
      \beta_1, \dots, \beta_n 
    \}
  \\
    \cup
    \{
      \gamma_1, \dots, \gamma_n 
    \} 
  }
  }
  \sup_{ 
    \substack{
      v \in U_{ r_1 } 
    \\
      v \neq 0
    }
  }
  \left[
  1 +
  \frac{
    \left\|  
      v
    \right\|_{ U_{ \theta } }
  }{
    \left\|
      v
    \right\|_{ U_{ r_1 } 
    }^{
      \frac{
        ( \theta - r_0 )
      }{
        ( r_1 - r_0 )
      }
    }
    \left\|
      v
    \right\|_{ U_{ r_0 } }^{
      \frac{
        ( r_1 - \theta )
      }{
        ( r_1 - r_0 )
      }
    } 
  }
  +
  \frac{
    \left\|  
      v
    \right\|_{ U_{ r_0 } }
  }{
    \left\|
      v
    \right\|_{ U_{ r_1 } 
    }
  }
  \right]^{
    \!
    ( 1 + \sum_{ i = 1 }^n \delta_i )
  }
  < \infty 
\end{split}
\end{equation}
for all $ t_0, T \in \R $
with $ t_0 < T $
where
$
  B \colon (0,\infty)^2 \to (0,\infty)
$
is the Beta function defined through
$
  B_{ (x, y) }
:=
  \int_0^1
  \left( 1 - s \right)^{
    \left( x - 1 \right)
  }
$
$
  s^{ \left( y - 1 \right) }
  \, ds
$
for all $ x, y \in (0,\infty) $.
Then observe that
\begin{equation}
\begin{split}
&
  \big\|
    x_{ F, v }^{ t_0, T }(t) 
    -
    x_{ \tilde{F}, \tilde{v} }^{ t_0, T }(t)
  \big\|_{ 
    U_{ r_j } 
  }
\leq 
  \big\|
    e^{ A ( t - t_0 ) } 
  \big\|_{
    L( U_{ r_k }, U_{ r_j } )
  }
  \left\|
    v - \tilde{v} 
  \right\|_{
    U_{ r_k }
  }
\\ & 
  +
  \sum_{ i = 1 }^n
  \int_{ t_0 }^t
  \big\|
    e^{ A ( t - s ) }
  \big\|_{
    L( U_{ \alpha_i }, U_{ r_j } )
  }
  \,
  \big\|
    F_i( s, x_{ F, v }^{ t_0, T }(s) ) -
    \tilde{F}_i( s, x_{ \tilde{F}, \tilde{v} }^{ t_0, T }(s) ) 
  \big\|_{
    U_{ \alpha_i }
  }
  ds
\\ & \leq 
  \kappa_{ [t_0, T] }
  \left( t - t_0 \right)^{
    \min\left( r_k - r_j , 0 \right)
  }
  \left\|
    v - \tilde{v} 
  \right\|_{
    U_{ r_k }
  }
\\ & 
  +
  \sum_{ i = 1 }^n
  \int_{ t_0 }^t
  \kappa_{ [t_0, T] }
  \left( t - t_0 \right)^{
    \min\left( \alpha_i - r_j, 0 \right)
  }
  \|
    F_i( s, x_{ F, v }^{ t_0, T }(s) ) -
    \tilde{F}_i( s, x_{ F, v }^{ t_0, T }(s) ) 
  \|_{
    U_{ \alpha_i }
  }
  \, ds
\\ &  
  +
  \sum_{ i = 1 }^n
  \int_{ t_0 }^t
  \kappa_{ [t_0, T] }
  \left( t - t_0 \right)^{
    \min\left( \alpha_i - r_j, 0 \right)
  }
  \|
    \tilde{F}_i( s, x_{ F, v }^{ t_0, T }(s) ) -
    \tilde{F}_i( s, x_{ \tilde{F}, \tilde{v} }^{ t_0, T }(s) ) 
  \|_{
    U_{ \alpha_i }
  }
  \, ds
\end{split}
\end{equation}
and inequality~\eqref{eq:linear_growth_bound}
therefore implies that
\begin{equation}
\begin{split}
&
  \big\|
    x^{ t_0, T }_{ F, v }(t) 
    -
    x^{ t_0, T }_{ \tilde{F}, \tilde{v} }(t)
  \big\|_{ 
    U_{ r_j } 
  }
  \leq 
  \kappa_{ [t_0, T] }
  \left( t - t_0 \right)^{
    \min\left( r_k - r_j , 0 \right)
  }
  \left\|
    v - \tilde{v} 
  \right\|_{
    U_{ r_k }
  }
\\ & 
  +
  \kappa_{ [t_0, T] }
  \,
  \| F - \tilde{F} \|_{
    \mathcal{C}^n_{ \alpha, \beta, \gamma, \delta }( [t_0,T] )
  }
\\ & \cdot
  \sum_{ i = 1 }^n
  \int_{ t_0 }^t
  \left( t - s \right)^{
    \min\left( \alpha_i - r_j, 0 \right)
  }
  \big(
    1 + 
    \| 
      x^{ t_0, T }_{ F, v }(s) 
    \|_{ \beta_i }^{ \delta_i }
  \big)
  \,
  \big(
    1 +
    \|
      x^{ t_0, T }_{ F, v }(s) 
    \|_{
      U_{ \gamma_i }
    }
  \big)
  \, ds
\\ &  
  +
  \kappa_{ [t_0, T] }
  \,
  \|
    \tilde{F}
  \|_{
    \mathcal{C}^n_{ \alpha, \beta, \gamma, \delta }( [t_0,T] )
  }
\\ & \cdot
  \sum_{ i = 1 }^n
  \int_{ t_0 }^t
  \left( t - s \right)^{
    \min\left( \alpha_i - r_j, 0 \right)
  }
  \left(
    1 
    +
    \|
      x^{ t_0, T }_{ F, v }(s) 
    \|_{
      U_{ \beta_i }
    }^{ \delta_i }
    +
    \|
      x^{ t_0, T }_{ \tilde{F}, \tilde{v} }(s) 
    \|_{
      U_{ \beta_i }
    }^{ \delta_i }
  \right)
\\ & \cdot
  \|
    x^{ t_0, T }_{ F, v }(s) -
    x^{ t_0, T }_{ \tilde{F}, \tilde{v} }(s) 
  \|_{
    U_{ \gamma_i }
  }
  \, ds
\end{split}
\end{equation}
and the definition 
of $ \kappa_{ [t_0,T] } $
hence shows that
\begin{equation}
\label{eq:main_estimate}
\begin{split}
&
  \big\|
    x^{ t_0, T }_{ F, v }(t) 
    -
    x^{ t_0, T }_{ \tilde{F}, \tilde{v} }(t)
  \big\|_{ 
    U_{ r_j } 
  }
  \leq 
  \kappa_{ [t_0, T] }
  \left( t - t_0 \right)^{
    \min\left( r_k - r_j , 0 \right)
  }
  \left\|
    v - \tilde{v} 
  \right\|_{
    U_{ r_k }
  }
\\ & 
  +
  \left[
    \kappa_{ [t_0, T] } 
  \right]^{
    3
  } 
  \| F - \tilde{F} \|_{
    \mathcal{C}^n_{ \alpha, \beta, \gamma, \delta }( [t_0,T] )
  }
  \sum_{ i = 1 }^n
  \int_{ t_0 }^t
  \left( t - s \right)^{
    \min\left( \alpha_i - r_j, 0 \right)
  }
\\ & 
  \cdot
  \left[
    1 + 
    \| 
      x^{ t_0, T }_{ F, v }(s) 
    \|_{ 
      U_{ r_0 } 
    }^{ 
      \frac{ 
        ( r_1 - \beta_i ) \delta_i 
      }{
        ( r_1 - r_0 )
      }
    }
    \| 
      x^{ t_0, T }_{ F, v }(s) 
    \|_{ U_{ r_1 } 
    }^{ 
      \frac{ 
        ( \beta_i - r_0 ) \delta_i 
      }{
        ( r_1 - r_0 )
      }
    }
  \right]
  \left[
    1 
    +
    \|
      x^{ t_0, T }_{ F, v }(s) 
    \|_{
      U_{ r_0 }
    }^{
      \frac{
        ( r_1 - \gamma_i )
      }{
        ( r_1 - r_0 )
      }
    }
    \|
      x^{ t_0, T }_{ F, v }(s) 
    \|_{
      U_{ r_1 }
    }^{
      \frac{
        ( \gamma_i - r_0 )
      }{
        ( r_1 - r_0 )
      }
    }
  \right]
  ds
\\ &
  +
  \left[
    \kappa_{ [t_0, T] }
  \right]^3
  \|
    \tilde{F}
  \|_{
    \mathcal{C}^n_{ \alpha, \beta, \gamma, \delta }( [t_0,T] )
  }
  \sum_{ i = 1 }^n
  \int_{ t_0 }^t
  \left( t - s \right)^{
    \min\left( \alpha_i - r_j, 0 \right)
  }
\\ & \cdot
  \left[
    1 
    +
    \|
      x^{ t_0, T }_{ F, v }(s) 
    \|_{
      U_{ r_0 }
    }^{ 
      \frac{ 
        ( r_1 - \beta_i ) 
        \delta_i 
      }{
        ( r_1 - r_0 )
      }
    }
    \|
      x^{ t_0, T }_{ F, v }(s) 
    \|_{
      U_{ r_1 }
    }^{ 
      \frac{ 
        ( \beta_i - r_0 ) 
        \delta_i 
      }{
        ( r_1 - r_0 )
      }
    }
    +
    \|
      x^{ t_0, T }_{ \tilde{F}, \tilde{v} }(s) 
    \|_{
      U_{ r_0 }
    }^{ 
      \frac{ 
        ( r_1 - \beta_i ) 
        \delta_i 
      }{
        ( r_1 - r_0 )
      }
    }
    \|
      x^{ t_0, T }_{ \tilde{F}, \tilde{v} }(s) 
    \|_{
      U_{ r_1 }
    }^{ 
      \frac{ 
        ( \beta_i - r_0 ) 
        \delta_i 
      }{
        ( r_1 - r_0 )
      }
    }
  \right]
\\ & \cdot
  \|
    x^{ t_0, T }_{ F, v }(s) -
    x^{ t_0, T }_{ \tilde{F}, \tilde{v} }(s) 
  \|_{
    U_{ r_0 }
  }^{
    \frac{
      ( r_1 - \gamma_i )
    }{
      ( r_1 - r_0 )
    }
  }
  \|
    x^{ t_0, T }_{ F, v }(s) -
    x^{ t_0, T }_{ \tilde{F}, \tilde{v} }(s) 
  \|_{
    U_{ r_1 }
  }^{
    \frac{
      ( \gamma_i - r_0 )
    }{
      ( r_1 - r_0 )
    }
  }
  \, ds
\end{split}
\end{equation}
for all 
$
  j, k \in \{ 0, 1 \}
$,
$ 
  t \in 
  (
    t_0, 
    \tau^{ t_0, T }_{ F, v } 
  )
  \cap 
  (  
    t_0, 
    \tau^{ t_0, T }_{ \tilde{F}, \tilde{v} } 
  )
$,
$
  v, \tilde{v} \in U_{ r_0 }
$,
$
  F, \tilde{F} \in
  \mathcal{C}^n_{ \alpha, \beta, \gamma, \delta }( [t_0, T] )
$
and all
$ t_0, T \in \R $
with $ t_0 < T $.
This, in particular, implies that
\begin{equation}
\begin{split}
&
  \big\|
    x^{ t_0, T }_{ F, v }(t) 
    -
    x^{ t_0, T }_{ \tilde{F}, \tilde{v} }(t)
  \big\|_{ 
    U_{ r_1 } 
  }
\leq 
  \kappa_{ [t_0, T] }
  \left\|
    v - \tilde{v} 
  \right\|_{
    U_{ r_1 }
  }
\\ & 
  +
  \left[
    \kappa_{ [t_0, T] } 
  \right]^{
    5
  } 
  \| F - \tilde{F} \|_{
    \mathcal{C}^n_{ \alpha, \beta, \gamma, \delta }( [t_0,T] )
  }
  \sum_{ i = 1 }^n
  \int_{ t_0 }^t
  \left( t - s \right)^{
    \min\left( \alpha_i - r_1, 0 \right)
  }
\\ & 
  \cdot
  \left[
    1 + 
    \| 
      x^{ t_0, T }_{ F, v }(s) 
    \|_{ U_{ r_1 } 
    }^{ 
      \delta_i
    }
  \right]
  \left[
    1 
    +
    \|
      x^{ t_0, T }_{ F, v }(s) 
    \|_{
      U_{ r_1 }
    }
  \right]
  ds
\\ &
  +
  \left[
    \kappa_{ [t_0, T] }
  \right]^5
  \|
    \tilde{F}
  \|_{
    \mathcal{C}^n_{ \alpha, \beta, \gamma, \delta }( [t_0,T] )
  }
  \sum_{ i = 1 }^n
  \int_{ t_0 }^t
  \left( t - s \right)^{
    \min\left( \alpha_i - r_1, 0 \right)
  }
\\ & \cdot
  \left[
    1 
    +
    \|
      x^{ t_0, T }_{ F, v }(s) 
    \|_{
      U_{ r_1 }
    }^{ 
      \delta_i 
    }
    +
    \|
      x^{ t_0, T }_{ \tilde{F}, \tilde{v} }(s) 
    \|_{
      U_{ r_1 }
    }^{ 
      \delta_i 
    }
  \right]
  \|
    x^{ t_0, T }_{ F, v }(s) -
    x^{ t_0, T }_{ \tilde{F}, \tilde{v} }(s) 
  \|_{
    U_{ r_1 }
  }
  \, ds
\end{split}
\end{equation}
and 
the estimates
$
  \big(
    1 + | x |^{ \delta_i }
  \big)
  \big(
    1 + | x |
  \big)
  \leq 
  \kappa 
  \left(
    1 + \left| x \right| 
  \right)^{
    ( 2 + \sum_{ j = 1 }^n \delta_j )
  }
$
and
$
  \big(
    1 + 
    | x |^{ \delta_i }
    + 
    | y |^{ \delta_i }
  \big)
  \leq 
  \kappa 
  \left(
    1 
    + \left| x \right| 
    + \left| y \right| 
  \right)^{
    ( 1 + \sum_{ j = 1 }^n \delta_j )
  }
$
for all 
$ x, y \in \R $
and all
$ i \in \{ 1, \dots, n \} $
hence give
\begin{equation}
\begin{split}
&
  \big\|
    x^{ t_0, T }_{ F, v }(t) 
    -
    x^{ t_0, T }_{ \tilde{F}, \tilde{v} }(t)
  \big\|_{ 
    U_{ r_1 } 
  }
\leq 
  \kappa_{ [t_0, T] }
  \left\|
    v - \tilde{v} 
  \right\|_{
    U_{ r_1 }
  }
\\ & 
  +
  \left[
    \kappa_{ [t_0, T] } 
  \right]^{
    6
  } 
  \| F - \tilde{F} \|_{
    \mathcal{C}^n_{ \alpha, \beta, \gamma, \delta }( [t_0,T] )
  }
  \left[
    1 +
    \sup_{ s \in [t_0, t] }
    \| 
      x^{ t_0, T }_{ F, v }( s ) 
    \|_{ 
      U_{ r_1 } 
    }
  \right]^{
    ( 2 + \sum_{ i = 1 }^n \delta_i )
  }
\\ & 
  \cdot
  \sum_{ i = 1 }^n
  \int_{ t_0 }^t
  \left( t - s \right)^{
    \min\left( \alpha_i - r_1, 0 \right)
  }
  ds
\\ &
  +
  \left[
    \kappa_{ [t_0, T] }
  \right]^6
  \!
  \|
    \tilde{F}
  \|_{
    \mathcal{C}^n_{ \alpha, \beta, \gamma, \delta }( [t_0,T] )
  }
  \!
  \left[
    1 
    +
    \sup_{ s \in [t_0, t] }
    \|
      x^{ t_0, T }_{ F, v }(s)
    \|_{ 
      U_{ r_1 } 
    }
    +
    \sup_{ s \in [t_0, t] }
    \|
      x^{ t_0, T }_{ \tilde{F}, \tilde{v} }(s)
    \|_{
      U_{ r_1 } 
    }
  \right]^{
    ( 
      1 
      + 
      \sum_{ i = 1 }^n \delta_i 
    )
  }
\\ & \cdot
  \sum_{ i = 1 }^n
  \int_{ t_0 }^t
  \left( t - s \right)^{
    \min\left( \alpha_i - r_1, 0 \right)
  }
  \|
    x^{ t_0, T }_{ F, v }(s) -
    x^{ t_0, T }_{ \tilde{F}, \tilde{v} }(s) 
  \|_{
    U_{ r_1 }
  }
  \, ds
\end{split}
\end{equation}
for all 
$ 
  t \in 
  (
    t_0, 
    \tau^{ t_0, T }_{ F, v } 
  )
  \cap 
  (  
    t_0, 
    \tau^{ t_0, T }_{ \tilde{F}, \tilde{v} } 
  )
$,
$
  v, \tilde{v} \in U_{ r_1 }
$,
$
  F, \tilde{F} \in
  \mathcal{C}^n_{ \alpha, \beta, \gamma, \delta }( [t_0, T] )
$
and all
$ t_0, T \in \R $
with $ t_0 < T $.
Therefore, we obtain that
\begin{equation}
\begin{split}
&
  \big\|
    x^{ t_0, T }_{ F, v }(t) 
    -
    x^{ t_0, T }_{ \tilde{F}, \tilde{v} }(t)
  \big\|_{ 
    U_{ r_1 } 
  }
\leq 
  \kappa_{ [t_0, T] }
  \left\|
    v - \tilde{v} 
  \right\|_{
    U_{ r_1 }
  }
\\ & 
  +
  \left[
    \kappa_{ [t_0, T] } 
  \right]^{
    8
  } 
  \| F - \tilde{F} \|_{
    \mathcal{C}^n_{ \alpha, \beta, \gamma, \delta }( [t_0,T] )
  }
  \left[
    1 + 
    \sup_{ s \in [t_0, t] }
    \| 
      x^{ t_0, T }_{ F, v }( s ) 
    \|_{ 
      U_{ r_1 } 
    }
  \right]^{
    ( 2 + \sum_{ i = 1 }^n \delta_i )
  }
\\ & 
  \cdot
  \int_{ t_0 }^t
  \left( t - s \right)^{
    \min\left( \alpha_1 - r_1, \dots, \alpha_n - r_1, 0 \right)
  }
  ds
\\ &
  +
  \left[
    \kappa_{ [t_0, T] }
  \right]^8
  \|
    \tilde{F}
  \|_{
    \mathcal{C}^n_{ \alpha, \beta, \gamma, \delta }( [t_0,T] )
  }
  \left[
    1 
    +
    \sup_{ s \in [t_0, t] }
    \|
      x^{ t_0, T }_{ F, v }(s)
    \|_{ 
      U_{ r_1 } 
    }
    +
    \sup_{ s \in [t_0, t] }
    \|
      x^{ t_0, T }_{ \tilde{F}, \tilde{v} }(s)
    \|_{
      U_{ r_1 }
    }
  \right]^{
    ( 
      1 
      + 
      \sum_{ i = 1 }^n \delta_i 
    )
  }
\\ & \cdot
  \int_{ t_0 }^t
  \left( t - s \right)^{
    \min\left( \alpha_1 - r_1, \dots, \alpha_n - r_1, 0 \right)
  }
  \|
    x^{ t_0, T }_{ F, v }(s) -
    x^{ t_0, T }_{ \tilde{F}, \tilde{v} }(s) 
  \|_{
    U_{ r_1 }
  }
  \, ds
\end{split}
\end{equation}
and hence
\begin{equation}
\begin{split}
&
  \big\|
    x^{ t_0, T }_{ F, v }(t) 
    -
    x^{ t_0, T }_{ \tilde{F}, \tilde{v} }(t)
  \big\|_{ 
    U_{ r_1 } 
  }
\\ & \leq
  \left[
    \kappa_{ [t_0, T] } 
  \right]^{
    10
  } 
  \left[ 
    \left\|
      v - \tilde{v} 
    \right\|_{
      U_{ r_1 }
    }
    +
    \| F - \tilde{F} \|_{
      \mathcal{C}^n_{ \alpha, \beta, \gamma, \delta }( [t_0,T] )
    }
  \right]
  \left[
    1 + 
    \sup_{ s \in [t_0, t] }
    \| 
      x^{ t_0, T }_{ F, v }(s) 
    \|_{
      U_{ r_1 }
    }
  \right]^{
    ( 2 + \sum_{ i = 1 }^n \delta_i )
  }
\\ &
  +
  \left[
    \kappa_{ [t_0, T] }
  \right]^8
  \|
    \tilde{F}
  \|_{
    \mathcal{C}^n_{ \alpha, \beta, \gamma, \delta }( [t_0,T] )
  }
  \left[
    1 
    +
    \sup_{ s \in [t_0, t] }
    \|
      x^{ t_0, T }_{ F, v }(s)
    \|_{ 
      U_{ r_1 } 
    }
    +
    \sup_{ s \in [t_0, t] }
    \|
      x^{ t_0, T }_{ \tilde{F}, \tilde{v} }(s)
    \|_{
      U_{ r_1 } 
    }
  \right]^{
    ( 
      1 
      + 
      \sum_{ i = 1 }^n \delta_i 
    )
  }
\\ & \cdot
  \int_{ t_0 }^t
  \left( t - s \right)^{
    \min\left( \alpha_1 - r_1, \dots, \alpha_n - r_1, 0 \right)
  }
  \|
    x^{ t_0, T }_{ F, v }(s) -
    x^{ t_0, T }_{ \tilde{F}, \tilde{v} }(s) 
  \|_{
    U_{ r_1 }
  }
  \, ds
\end{split}
\end{equation}
for all 
$ 
  t \in 
  (
    t_0, 
    \tau^{ t_0, T }_{ F, v } 
  )
  \cap 
  (  
    t_0, 
    \tau^{ t_0, T }_{ \tilde{F}, \tilde{v} } 
  )
$,
$
  v, \tilde{v} \in U_{ r_1 }
$,
$
  F, \tilde{F} \in
  \mathcal{C}^n_{ \alpha, \beta, \gamma, \delta }( [t_0, T] )
$
and all
$ t_0, T \in \R $
with $ t_0 < T $.
A generalization of Gronwall's lemma 
(see Lemma~7.1.1 in 
Henry~\cite{h81})
therefore implies
\begin{equation}
\label{eq:continuity_second}
\begin{split}
&
  \sup_{ 
    s \in [t_0,t]
  }
  \big\|
    x^{ t_0, T }_{ F, v }(s)
    -
    x^{ t_0, T }_{ \tilde{F}, \tilde{v} }(s)
  \big\|_{ 
    U_{ r_1 } 
  }
\leq
  E_{
    \min\left( \alpha_1 - r_1, \dots, \alpha_n - r_1, 0 \right)
  }
  \Bigg[ 
  \left[
    \kappa_{ [t_0, T] }
  \right]^{ 9 }
  \|
    \tilde{F}
  \|_{
    \mathcal{C}^n_{ \alpha, \beta, \gamma, \delta }( [t_0,T] )
  }
\\ &  
  \cdot
  \Big[
    1 
    +
    \sup_{ 
      s \in [t_0,t]
    }
    \|
      x^{ t_0, T }_{ F, v }(s)
    \|_{ 
      U_{ r_1 }
    }
    +
    \sup_{ 
      s \in [t_0,t]
    }
    \|
      x^{ t_0, T }_{ \tilde{F}, \tilde{v} }(s)
    \|_{
      U_{ r_1 }
    }
  \Big]^{
    ( 
      1 
      + 
      \sum_{ i = 1 }^n \delta_i 
    )
  }
  \Bigg]
  \left[
    \kappa_{ [t_0, T] } 
  \right]^{
    10
  } 
\\ & \cdot
  \left[ 
    \left\|
      v - \tilde{v} 
    \right\|_{
      U_{ r_1 }
    }
    +
    \| F - \tilde{F} \|_{
      \mathcal{C}^n_{ \alpha, \beta, \gamma, \delta }( [t_0,T] )
    }
  \right]
  \Big[
    1 + 
    \sup_{ s \in [t_0, t] }    
    \| 
      x^{ t_0, T }_{ F, v }(s) 
    \|_{ 
      U_{ r_1 }
    }
  \Big]^{
    ( 2 + \sum_{ i = 1 }^n \delta_i )
  }
\end{split}
\end{equation}
for all 
$ 
  t \in 
  (
    t_0, 
    \tau^{ t_0, T }_{ F, v } 
  )
  \cap 
  (  
    t_0, 
    \tau^{ t_0, T }_{ \tilde{F}, \tilde{v} } 
  )
$,
$
  v, \tilde{v} \in U_{ r_1 }
$,
$
  F, \tilde{F} \in
  \mathcal{C}^n_{ \alpha, \beta, \gamma, \delta }( [t_0, T] )
$
and all
$ t_0, T \in \R $
with $ t_0 < T $
where 
$ E_r \colon [0,\infty) \to [0,\infty) $,
$ r \in ( - 1, 0 ] $,
is a family of functions defined
through
$
  E_r( x )
  :=
  \sum_{ n = 0 }^{ \infty }
  \frac{
    (
      x \cdot
      \Gamma( r + 1 )
    )^{ 
      n
    }
  }{
    \Gamma( n ( r + 1 ) + 1 )
  }
$
for all 
$ x \in [0,\infty) $
and all
$ r \in ( - 1, 0 ] $.
As in \eqref{eq:defE} and \eqref{eq:defEnorm},
we now define sets
$ \mathcal{E}_{ [t_0,T] } $,
$ t_0, T \in \R $
with $ t_0 < T $,
and functions
$
  \left\| \cdot \right\|_{
    \mathcal{E}_{ [t_0,T] }
  }
  \colon
  \mathcal{E}_{ [t_0,T] }
  \to
  [0,\infty)
$,
$ t_0, T \in \R $
with $ t_0 < T $,
by
\begin{equation}
  \mathcal{E}_{ [t_0,T] }
  :=
  \left\{
    y \in 
    C( [t_0,T], U_{ r_0 } )
    \colon
    \left(
    \begin{array}{c}
      y|_{
        (t_0,T]
      }
      \in
      C( (t_0,T], U_{ r_1 } )
      \text{    and}
      \\
      \sup_{ t \in (t_0,T] }
        \left( t - t_0 \right)^{ \left( r_1 - r_0 \right) }
        \left\| 
          y(t) 
        \right\|_{
          U_{ r_1 }
        }
      < \infty
    \end{array}
    \right)
  \right\} 
\end{equation}
for all 
$ t_0, T \in \R $
with $ t_0 < T $
and by
$
  \left\| 
    y
  \right\|_{
    \mathcal{E}_{ [t_0,T] }
  }
  :=
  \sum_{ j = 0 }^1 
  \sup_{ t \in (t_0,\tau] }
    \left( t - t_0 \right)^{
      \left( r_j - r_0 \right)
    }
    \| 
      y(t) 
    \|_{
      U_{ r_j }
    }
$
for all 
$ y \in \mathcal{E}_{ [t_0, T] } $,
$ t_0, T \in \R $ with $ t_0 < T $.
Then we get
from \eqref{eq:main_estimate} that
\begin{equation}
\begin{split}
&
  \big\|
    x^{ t_0, T }_{ F, v }(t) 
    -
    x^{ t_0, T }_{ \tilde{F}, \tilde{v} }(t)
  \big\|_{ 
    U_{ r_j } 
  }
  \leq 
  \kappa_{ [t_0, T] }
  \left( t - t_0 \right)^{
    \left( r_0 - r_j \right)
  }
  \left\|
    v - \tilde{v} 
  \right\|_{
    U_{ r_0 }
  }
\\ & 
  +
  \left[
    \kappa_{ [t_0, T] } 
  \right]^{
    4
  } 
  \| F - \tilde{F} \|_{
    \mathcal{C}^n_{ \alpha, \beta, \gamma, \delta }( [t_0,T] )
  }
  \sum_{ i = 1 }^n
  \int_{ t_0 }^t
  \left( t - s \right)^{
    \min\left( \alpha_i - r_j, 0 \right)
  }
  \left(
    s - t_0
  \right)^{
    ( 
      r_0 - \gamma_i
      +
      \delta_i 
      \left(
        r_0 - \beta_i 
      \right)
    )
  }
\\ & 
  \cdot
  \left[
    1 + 
    \| 
      x^{ t_0, T }_{ F, v }(s) 
    \|_{ 
      U_{ r_0 } 
    }^{ 
      \frac{ 
        ( r_1 - \beta_i ) \delta_i 
      }{
        ( r_1 - r_0 )
      }
    }
    \left(
      s - t_0
    \right)^{
      ( \beta_i - r_0 ) \delta_i
    }
    \| 
      x^{ t_0, T }_{ F, v }(s) 
    \|_{ U_{ r_1 } 
    }^{ 
      \frac{ 
        ( \beta_i - r_0 ) \delta_i 
      }{
        ( r_1 - r_0 )
      }
    }
  \right]
\\ & \cdot
  \left[
    1 
    +
    \|
      x^{ t_0, T }_{ F, v }(s) 
    \|_{
      U_{ r_0 }
    }^{
      \frac{
        ( r_1 - \gamma_i )
      }{
        ( r_1 - r_0 )
      }
    }
    \left(
      s - t_0
    \right)^{
      ( \gamma_i - r_0 ) 
    }
    \|
      x^{ t_0, T }_{ F, v }(s) 
    \|_{
      U_{ r_1 }
    }^{
      \frac{
        ( \gamma_i - r_0 )
      }{
        ( r_1 - r_0 )
      }
    }
  \right]
  ds
\\ &
  +
  \left[
    \kappa_{ [t_0, T] }
  \right]^4
  \|
    \tilde{F}
  \|_{
    \mathcal{C}^n_{ \alpha, \beta, \gamma, \delta }( [t_0,T] )
  }
  \sum_{ i = 1 }^n
  \int_{ t_0 }^t
  \left( t - s \right)^{
    \min\left( \alpha_i - r_j, 0 \right)
  }
  \left(
    s - t_0
  \right)^{
    (
      r_0 - \gamma_i 
      + \delta_i ( r_0 - \beta_i )
    )
  }
\\ & \cdot
  \bigg[
    1 
    +
    \|
      x^{ t_0, T }_{ F, v }(s) 
    \|_{
      U_{ r_0 }
    }^{ 
      \frac{ 
        ( r_1 - \beta_i ) 
        \delta_i 
      }{
        ( r_1 - r_0 )
      }
    }
    \left(
      s - t_0
    \right)^{
      ( \beta_i - r_0 ) \delta_i
    }
    \|
      x^{ t_0, T }_{ F, v }(s) 
    \|_{
      U_{ r_1 }
    }^{ 
      \frac{ 
        ( \beta_i - r_0 ) 
        \delta_i 
      }{
        ( r_1 - r_0 )
      }
    }
\\ & 
    +
    \|
      x^{ t_0, T }_{ \tilde{F}, \tilde{v} }(s) 
    \|_{
      U_{ r_0 }
    }^{ 
      \frac{ 
        ( r_1 - \beta_i ) 
        \delta_i 
      }{
        ( r_1 - r_0 )
      }
    }
    \left(
      s - t_0
    \right)^{
      ( \beta_i - r_0 ) \delta_i
    }
    \|
      x^{ t_0, T }_{ \tilde{F}, \tilde{v} }(s) 
    \|_{
      U_{ r_1 }
    }^{ 
      \frac{ 
        ( \beta_i - r_0 ) 
        \delta_i 
      }{
        ( r_1 - r_0 )
      }
    }
  \bigg]
\\ & \cdot
  \|
    x^{ t_0, T }_{ F, v }(s) -
    x^{ t_0, T }_{ \tilde{F}, \tilde{v} }(s) 
  \|_{
    U_{ r_0 }
  }^{
    \frac{
      ( r_1 - \gamma_i )
    }{
      ( r_1 - r_0 )
    }
  }
  \left(
    s - t_0
  \right)^{
    ( \gamma_i - r_0 )
  }
  \|
    x^{ t_0, T }_{ F, v }(s) -
    x^{ t_0, T }_{ \tilde{F}, \tilde{v} }(s) 
  \|_{
    U_{ r_1 }
  }^{
    \frac{
      ( \gamma_i - r_0 )
    }{
      ( r_1 - r_0 )
    }
  }
  \, ds
\end{split}
\end{equation}
and therefore
\begin{equation}
\begin{split}
&
  \big\|
    x^{ t_0, T }_{ F, v }(t) 
    -
    x^{ t_0, T }_{ \tilde{F}, \tilde{v} }(t)
  \big\|_{ 
    U_{ r_j } 
  }
  \leq 
  \kappa_{ [t_0, T] }
  \left( t - t_0 \right)^{
    \left( r_0 - r_j \right)
  }
  \left\|
    v - \tilde{v} 
  \right\|_{
    U_{ r_0 }
  }
\\ & 
  +
  \left[
    \kappa_{ [t_0, T] } 
  \right]^{
    4
  } 
  \| F - \tilde{F} \|_{
    \mathcal{C}^n_{ \alpha, \beta, \gamma, \delta }( [t_0,T] )
  }
  \sum_{ i = 1 }^n
  \int_{ t_0 }^t
  \left( t - s \right)^{
    \min\left( \alpha_i - r_j, 0 \right)
  }
  \left(
    s - t_0
  \right)^{
    ( 
      r_0 - \gamma_i
      +
      \delta_i 
      \left(
        r_0 - \beta_i 
      \right)
    )
  }
  ds
\\ & 
  \cdot
  \left[
    1 + 
    \| 
      x^{ t_0, T }_{ F, v 
      }|_{
        [t_0, t]
      }
    \|_{ 
      \mathcal{E}_{ [t_0, t] } 
    }^{ \delta_i }
  \right]
  \left[
    1 
    +
    \|
      x^{ t_0, T }_{ F, v 
      }|_{
        [t_0, t]
      }
    \|_{
      \mathcal{E}_{ [t_0, t] }
    }
  \right]
\\ &
  +
  \left[
    \kappa_{ [t_0, T] }
  \right]^4
  \|
    \tilde{F}
  \|_{
    \mathcal{C}^n_{ \alpha, \beta, \gamma, \delta }( [t_0,T] )
  }
  \sum_{ i = 1 }^n
  \int_{ t_0 }^t
  \left( t - s \right)^{
    \min\left( \alpha_i - r_j, 0 \right)
  }
  \left(
    s - t_0
  \right)^{
    (
      r_0 - \gamma_i 
      + \delta_i ( r_0 - \beta_i )
    )
  }
  ds
\\ & \cdot
  \left[
    1 
    +
    \|
      x^{ t_0, T }_{ F, v 
      }|_{
        [t_0, t]
      }
    \|_{
      \mathcal{E}_{ [t_0,t] }
    }^{ \delta_i }
    +
    \|
      x^{ t_0, T }_{ \tilde{F}, \tilde{v} 
      }|_{
        [t_0, t]
      }
    \|_{
      \mathcal{E}_{ [t_0,t] }
    }^{ \delta_i }
  \right]
  \|
    (
      x^{ t_0, T }_{ F, v } -
      x^{ t_0, T }_{ \tilde{F}, \tilde{v} }
    )|_{
      [t_0, t]
    }
  \|_{
    \mathcal{E}_{ [t_0, t] }
  }
\end{split}
\end{equation}
and the estimates
$
  \big(
    1 + | x |^{ \delta_i }
  \big)
  \big(
    1 + | x |
  \big)
  \leq 
  \kappa 
  \left(
    1 + \left| x \right| 
  \right)^{
    ( 2 + \sum_{ j = 1 }^n \delta_j )
  }
$
and
$
  \big(
    1 + 
    | x |^{ \delta_i }
    + 
    | y |^{ \delta_i }
  \big)
  \leq 
  \kappa 
  \left(
    1 
    + \left| x \right| 
    + \left| y \right| 
  \right)^{
    ( 1 + \sum_{ j = 1 }^n \delta_j )
  }
$
for all 
$ x, y \in \R $
and all
$ i \in \{ 1, \dots, n \} $
hence show that
\begin{equation}
\begin{split}
&
  \big\|
    x^{ t_0, T }_{ F, v }(t) 
    -
    x^{ t_0, T }_{ \tilde{F}, \tilde{v} }(t)
  \big\|_{ 
    U_{ r_j } 
  }
  \leq 
  \kappa_{ [t_0, T] }
  \left( t - t_0 \right)^{
    \left( r_0 - r_j \right)
  }
  \left\|
    v - \tilde{v} 
  \right\|_{
    U_{ r_0 }
  }
\\ & 
  +
  \left[
    \kappa_{ [t_0, T] } 
  \right]^{
    5
  } 
  \| F - \tilde{F} \|_{
    \mathcal{C}^n_{ \alpha, \beta, \gamma, \delta }( [t_0,T] )
  }
  \left[
    1 + 
    \| 
      x^{ t_0, T }_{ F, v 
      }|_{
        [t_0, t]
      }
    \|_{ 
      \mathcal{E}_{ [t_0, t] } 
    }
  \right]^{ 
    ( 2 + \sum_{ i = 1 }^n \delta_i )
  }
\\ & 
  \cdot
  \sum_{ i = 1 }^n
  \int_{ t_0 }^t
  \left( t - s \right)^{
    \min\left( \alpha_i - r_j, 0 \right)
  }
  \left(
    s - t_0
  \right)^{
    ( 
      r_0 - \gamma_i
      +
      \delta_i 
      \left(
        r_0 - \beta_i 
      \right)
    )
  }
  ds
\\ &
  +
  \left[
    \kappa_{ [t_0, T] }
  \right]^5
  \|
    \tilde{F}
  \|_{
    \mathcal{C}^n_{ \alpha, \beta, \gamma, \delta }( [t_0,T] )
  }
  \left[
    1 
    +
    \|
      x^{ t_0, T }_{ F, v 
      }|_{
        [t_0, t]
      }
    \|_{
      \mathcal{E}_{ [t_0,t] }
    }
    +
    \|
      x^{ t_0, T }_{ \tilde{F}, \tilde{v} 
      }|_{
        [t_0, t]
      }
    \|_{
      \mathcal{E}_{ [t_0,t] }
    }
  \right]^{
    ( 
      1 + \sum_{ i = 1 }^n
      \delta_i
    )
  }
\\ & \cdot
  \|
    (
      x^{ t_0, T }_{ F, v } -
      x^{ t_0, T }_{ \tilde{F}, \tilde{v} }
    )|_{
      [t_0, t]
    }
  \|_{
    \mathcal{E}_{ [t_0, t] }
  }
  \sum_{ i = 1 }^n
  \int_{ t_0 }^t
  \left( t - s \right)^{
    \min\left( \alpha_i - r_j, 0 \right)
  }
  \left(
    s - t_0
  \right)^{
    (
      r_0 - \gamma_i 
      + \delta_i ( r_0 - \beta_i )
    )
  }
  ds
\end{split}
\end{equation}
for all 
$
  j \in \{ 0, 1 \}
$,
$ 
  t \in 
  (
    t_0, 
    \tau^{ t_0, T }_{ F, v } 
  )
  \cap 
  (  
    t_0, 
    \tau^{ t_0, T }_{ \tilde{F}, \tilde{v} } 
  )
$,
$
  v, \tilde{v} \in U_{ r_0 }
$,
$
  F, \tilde{F} \in
  \mathcal{C}^n_{ \alpha, \beta, \gamma, \delta }( [t_0, T] )
$
and all
$ t_0, T \in \R $
with $ t_0 < T $.
The estimate
\begin{equation}
\begin{split}
&
  \sum_{ i = 1 }^n
  \int_{ t_0 }^t
  \left( t - s \right)^{
    \min\left( \alpha_i - r_j, 0 \right)
  }
  \left(
    s - t_0
  \right)^{
    (
      r_0 - \gamma_i 
      + \delta_i ( r_0 - \beta_i )
    )
  }
  ds
\\ & =
  \sum_{ i = 1 }^n
  \left( t - t_0 \right)^{
    \left(
      1 +
      \min\left( \alpha_i - r_j, 0 \right)
      +
      r_0 - \gamma_i 
      + \delta_i ( r_0 - \beta_i )
    \right)
  }
  B_{
    (
      1 +
      \min\left( \alpha_i - r_j, 0 \right)
    ,
      1 +
      r_0 - \gamma_i 
      + \delta_i ( r_0 - \beta_i )
    )
  }
\\ & \leq
  \kappa_{ [t_0, T] } 
  \left( t - t_0 \right)^{
    \left[
      r_0 - r_j
      +
      \min_{ i \in \{ 1, \dots, n \} }
      \left(
        1 +
        \min\left( \alpha_i , r_0 \right)
        - \gamma_i 
        + \delta_i ( r_0 - \beta_i )
      \right)
    \right]
  }
\\ & \quad \cdot
  \sum_{ i = 1 }^n
  B_{
    (
      1 +
      \min\left( \alpha_i - r_j, 0 \right)
    ,
      1 +
      r_0 - \gamma_i 
      + \delta_i ( r_0 - \beta_i )
    )
  }
\\ & \leq
  \left[ 
    \kappa_{ [t_0, T] } 
  \right]^2
  \left( t - t_0 \right)^{
    \left[
      r_0 - r_j
      +
      \min_{ i \in \{ 1, \dots, n \} }
      \left(
        1 +
        \min\left( \alpha_i , r_0 \right)
        - \gamma_i 
        + \delta_i ( r_0 - \beta_i )
      \right)
    \right]
  }
\end{split}
\end{equation}
for all 
$
  j \in \{ 0, 1 \}
$,
$ 
  t \in 
  (t_0,T]
$
and all
$ t_0, T \in \R $
with $ t_0 < T $
therefore proves that
\begin{equation}
\begin{split}
&
  \left( t - t_0 \right)^{
    ( r_j - r_0 )
  }
  \big\|
    x^{ t_0, T }_{ F, v }(t) 
    -
    x^{ t_0, T }_{ \tilde{F}, \tilde{v} }(t)
  \big\|_{ 
    U_{ r_j } 
  }
  \leq 
  \kappa_{ [t_0, T] }
  \left\|
    v - \tilde{v} 
  \right\|_{
    U_{ r_0 }
  }
\\ & 
  +
  \left[
    \kappa_{ [t_0, T] } 
  \right]^{
    7
  } 
  \| F - \tilde{F} \|_{
    \mathcal{C}^n_{ \alpha, \beta, \gamma, \delta }( [t_0,T] )
  }
  \left[
    1 + 
    \| 
      x^{ t_0, T }_{ F, v 
      }|_{
        [t_0, t]
      }
    \|_{ 
      \mathcal{E}_{ [t_0, t] } 
    }
  \right]^{ 
    ( 2 + \sum_{ i = 1 }^n \delta_i )
  }
\\ & 
  \cdot
  \left( t - t_0 \right)^{
    \min_{ i \in \{ 1, \dots, n \} }
    \left(
      1 +
      \min\left( \alpha_i , r_0 \right)
      - \gamma_i 
      + \delta_i ( r_0 - \beta_i )
    \right)
  }
\\ &
  +
  \left[
    \kappa_{ [t_0, T] }
  \right]^7
  \|
    \tilde{F}
  \|_{
    \mathcal{C}^n_{ \alpha, \beta, \gamma, \delta }( [t_0,T] )
  }
  \left[
    1 
    +
    \|
      x^{ t_0, T }_{ F, v 
      }|_{
        [t_0, t]
      }
    \|_{
      \mathcal{E}_{ [t_0,t] }
    }
    +
    \|
      x^{ t_0, T }_{ \tilde{F}, \tilde{v} 
      }|_{
        [t_0, t]
      }
    \|_{
      \mathcal{E}_{ [t_0,t] }
    }
  \right]^{
    ( 
      1 +
      \sum_{ i = 1 }^n
      \delta_i
    )
  }
\\ & \cdot
  \|
    (
      x^{ t_0, T }_{ F, v } -
      x^{ t_0, T }_{ \tilde{F}, \tilde{v} }
    )|_{
      [t_0, t]
    }
  \|_{
    \mathcal{E}_{ [t_0, t] }
  }
  \left( t - t_0 \right)^{
    \min_{ i \in \{ 1, \dots, n \} }
    \left(
      1 +
      \min\left( \alpha_i , r_0 \right)
      - \gamma_i 
      + \delta_i ( r_0 - \beta_i )
    \right)
  }
\end{split}
\end{equation}
for all 
$
  j \in \{ 0, 1 \}
$,
$ 
  t \in 
  (
    t_0, 
    \tau^{ t_0, T }_{ F, v } 
  )
  \cap 
  (  
    t_0, 
    \tau^{ t_0, T }_{ \tilde{F}, \tilde{v} } 
  )
$,
$
  v, \tilde{v} \in U_{ r_0 }
$,
$
  F, \tilde{F} \in
  \mathcal{C}^n_{ \alpha, \beta, \gamma, \delta }( [t_0, T] )
$
and all
$ t_0, T \in \R $
with $ t_0 < T $.
Hence, we obtain
\begin{equation}
\begin{split}
&
  \|
    (
      x^{ t_0, T }_{ F, v }
      -
      x^{ t_0, T }_{ \tilde{F}, \tilde{v} }
    )|_{
      [t_0, t]
    }
  \|_{ 
    \mathcal{E}_{ [t_0,t] }
  }
  \leq 
  \left[
    \kappa_{ [t_0, T] }
  \right]^2
  \left\|
    v - \tilde{v} 
  \right\|_{
    U_{ r_0 }
  }
\\ & 
  +
  \left[
    \kappa_{ [t_0, T] } 
  \right]^{
    8
  } 
  \| F - \tilde{F} \|_{
    \mathcal{C}^n_{ \alpha, \beta, \gamma, \delta }( [t_0,T] )
  }
  \left[
    1 
    + 
    \| 
      x^{ t_0, T }_{ F, v 
      }|_{
        [t_0, t]
      }
    \|_{ 
      \mathcal{E}_{ [t_0, t] } 
    }
  \right]^{ 
    ( 2 + \sum_{ i = 1 }^n \delta_i )
  }
\\ & \cdot
  \left( t - t_0 \right)^{
    \min_{ i \in \{ 1, \dots, n \} }
    \left(
      1 +
      \min\left( \alpha_i , r_0 \right)
      - \gamma_i 
      + \delta_i ( r_0 - \beta_i )
    \right)
  }
\\ & +
  \left[
    \kappa_{ [t_0, T] } 
  \right]^{
    8
  } 
  \|
    \tilde{F}
  \|_{
    \mathcal{C}^n_{ \alpha, \beta, \gamma, \delta }( [t_0,T] )
  }
  \left[
    1 
    + 
    \| 
      x^{ t_0, T }_{ F, v 
      }|_{
        [t_0, t]
      }
    \|_{ 
      \mathcal{E}_{ [t_0, t] } 
    }
    + 
    \| 
      x^{ t_0, T }_{ \tilde{F}, \tilde{v} 
      }|_{
        [t_0, t]
      }
    \|_{ 
      \mathcal{E}_{ [t_0, t] } 
    }
  \right]^{ 
    ( 1 + \sum_{ i = 1 }^n \delta_i )
  }
\\ & 
  \cdot
  \|
    (
      x^{ t_0, T }_{ F, v } -
      x^{ t_0, T }_{ \tilde{F}, \tilde{v} }
    )|_{
      [t_0, t]
    }
  \|_{
    \mathcal{E}_{ [t_0, t] }
  }
  \left( t - t_0 \right)^{
    \min_{ i \in \{ 1, \dots, n \} }
    \left(
      1 +
      \min\left( \alpha_i , r_0 \right)
      - \gamma_i 
      + \delta_i ( r_0 - \beta_i )
    \right)
  }
\end{split}
\end{equation}
for all 
$ 
  t \in 
  (
    t_0, 
    \tau^{ t_0, T }_{ F, v } 
  )
  \cap 
  (  
    t_0, 
    \tau^{ t_0, T }_{ \tilde{F}, \tilde{v} } 
  )
$,
$
  v, \tilde{v} \in U_{ r_0 }
$,
$
  F, \tilde{F} \in
  \mathcal{C}^n_{ \alpha, \beta, \gamma, \delta }( [t_0, T] )
$
and all
$ t_0, T \in \R $
with $ t_0 < T $.
Rearranging finally results in
\begin{equation}
\begin{split}
&
  \|
    (
      x^{ t_0, T }_{ F, v }
      -
      x^{ t_0, T }_{ \tilde{F}, \tilde{v} }
    )|_{
      [t_0, t]
    }
  \|_{ 
    \mathcal{E}_{ [t_0,t] }
  }
  \bigg[
    1 -
    \left( t - t_0 \right)^{
      \min_{ i \in \{ 1, \dots, n \} }
      \left(
        1 +
        \min\left( \alpha_i , r_0 \right)
        - \gamma_i 
        + \delta_i ( r_0 - \beta_i )
      \right)
    }   
\\ & 
    \cdot
    \left[
      \kappa_{ [t_0, T] } 
    \right]^{
      8
    } 
    \|
      \tilde{F}
    \|_{
      \mathcal{C}^n_{ \alpha, \beta, \gamma, \delta }( [t_0,T] )
    }
    \left[
      1 
      + 
      \| 
        x^{ t_0, T }_{ F, v 
        }|_{
          [t_0, t] 
        }
      \|_{ 
        \mathcal{E}_{ [t_0, t] } 
      }
      + 
      \| 
        x^{ t_0, T }_{ \tilde{F}, \tilde{v} 
        }|_{
          [t_0, t] 
        }
      \|_{ 
        \mathcal{E}_{ [t_0, t] } 
      }
    \right]^{ 
      ( 1 + \sum_{ i = 1 }^n \delta_i )
    }
  \bigg]
\\ & \leq 
  \left[
    \kappa_{ [t_0, T] } 
  \right]^{
    9
  } 
  \left[
    \| F - \tilde{F} \|_{
      \mathcal{C}^n_{ \alpha, \beta, \gamma, \delta }( [t_0,T] )
    }
    +
    \left\|
      v - \tilde{v} 
    \right\|_{
      U_{ r_0 }
    }
  \right]
  \left[
    1 
    + 
    \| 
      x^{ t_0, T }_{ F, v 
      }|_{
        [t_0, t] 
      }
    \|_{ 
      \mathcal{E}_{ [t_0, t] } 
    }
  \right]^{ 
    ( 2 + \sum_{ i = 1 }^n \delta_i )
  }
\end{split}
\label{eq:continuity_first}
\end{equation}
for all 
$ 
  t \in 
  (
    t_0, 
    \tau^{ t_0, T }_{ F, v } 
  )
  \cap 
  (  
    t_0, 
    \tau^{ t_0, T }_{ \tilde{F}, \tilde{v} } 
  )
$,
$
  v, \tilde{v} \in U_{ r_0 }
$,
$
  F, \tilde{F} \in
  \mathcal{C}^n_{ \alpha, \beta, \gamma, \delta }( [t_0, T] )
$
and all
$ t_0, T \in \R $
with $ t_0 < T $.

We now use 
\eqref{eq:continuity_second}
and \eqref{eq:continuity_first}
to prove \eqref{eq:x_cont}.
For this let 
$ t_0, T \in \R $ 
be real numbers with $ t_0 < T $,
let $ \varepsilon \in (0,1] $
be a real number defined through
$
  \varepsilon
  := 
  \min_{ i \in \{ 1, \dots, n \} }
  \left(
    1 +
    \min( \alpha_i , r_0 )
    - \gamma_i 
    + \delta_i ( r_0 - \beta_i )
  \right)
$
and let
$ ( v_N )_{ N \in \N } \subset U_{ r_0 } $
and
$
  F_N = ( F_{ N, 1 }, \dots, F_{ N, n } ) 
  \in
  \mathcal{C}^n_{ \alpha, \beta, \gamma, \delta }( [t_0, T] )
$
$, N \in \N 
$,
be sequences with 
$
  \lim_{ N \to \infty } 
  \|
    v_1 - v_N
  \|_{
    U_{ r_0 }
  }
  =
  \lim_{ N \to \infty } 
  \|
    F_1 - F_N
  \|_{
    \mathcal{C}^n_{ \alpha, \beta, \gamma, \delta }( [ t_0, T ] )
  }
  = 0
$
and 
$
  \| F_1 - F_N \|_{ 
    \mathcal{C}^n_{ \alpha, \beta, \gamma, \delta }( [t_0, T] )
  } 
  \leq 1
$
for all 
$ 
  N \in \N 
$.
Then observe that
\eqref{eq:continuity_first}
ensures that
\begin{equation}
\begin{split}
&
  \big\|
    (
      x^{ t_0, T }_{ F_1, v_1 }
      -
      x^{ t_0, T }_{ F_N, v_N }
    )|_{
      [t_0, t]
    }
  \big\|_{ 
    \mathcal{E}_{ [t_0,t] }
  }
  \bigg[
    1 -
    \left[
      \kappa_{ [t_0, T] } 
    \right]^{
      8
    } 
    \left( t - t_0 \right)^{ 
      \varepsilon
    }   
    \Big[
      2
      + 
      \| 
        x^{ t_0, T }_{ F_1, v_1 
        }|_{
          [t_0, t]
        }
      \|_{ 
        \mathcal{E}_{ [t_0, t] } 
      }
\\ & 
      + 
      \| 
        x^{ t_0, T }_{ F_N, v_N 
        }|_{
          [t_0, t]
        }
      \|_{ 
        \mathcal{E}_{ [t_0, t] } 
      }
      +
      \|
        F_1
      \|_{
        \mathcal{C}^n_{ \alpha, \beta, \gamma, \delta }( [t_0,T] )
      }
    \Big]^{ 
      ( 2 + \sum_{ i = 1 }^n \delta_i )
    }
  \bigg]
  \leq 
  \left[
    \kappa_{ [t_0, T] } 
  \right]^{
    9
  } 
\\ & 
  \cdot
  \left[
    \| F_1 - F_N \|_{
      \mathcal{C}^n_{ \alpha, \beta, \gamma, \delta }( [t_0,T] )
    }
    +
    \left\|
      v_1 - v_N 
    \right\|_{
      U_{ r_0 }
    }
  \right]
  \left[
    1 
    + 
    \| 
      x^{ t_0, T }_{ F_1, v_1 
      }|_{
        [t_0, t]
      }
    \|_{ 
      \mathcal{E}_{ [t_0, t] } 
    }
  \right]^{ 
    ( 2 + \sum_{ i = 1 }^n \delta_i )
  }
\end{split}
\end{equation}
for all 
$ 
  t \in 
  (
    t_0, 
    \tau^{ t_0, T }_{ F_1, v_1 } 
  )
  \cap 
  (  
    t_0, 
    \tau^{ t_0, T }_{ F_N, v_N } 
  )
$
and all
$
  N \in \N
$.
This implies that
\begin{equation}
\begin{split}
&
  \|
    (
      x^{ t_0, T }_{ F_1, v_1 }
      -
      x^{ t_0, T }_{ F_N, v_N }
    )|_{
      [t_0, t]
    }
  \|_{ 
    \mathcal{E}_{ [t_0,t] }
  }
\\ & \cdot
  \bigg[
    1 -
    \left[
      \kappa_{ [t_0, T] } 
    \right]^{
      8
    } 
    \left( t - t_0 \right)^{ 
      \varepsilon
    }   
    \Big[
      4
      + 
      2 
      \,
      \| 
        x^{ t_0, T }_{ F_1, v_1 
        }|_{
          [t_0, t]
        }
      \|_{ 
        \mathcal{E}_{ [t_0, t] } 
      }
      +
      \|
        F_1
      \|_{
        \mathcal{C}^n_{ \alpha, \beta, \gamma, \delta }( [t_0,T] )
      }
    \Big]^{ 
      ( 2 + \sum_{ i = 1 }^n \delta_i )
    }
  \bigg]
\\ & \leq 
  \left[
    \kappa_{ [t_0, T] } 
  \right]^{
    9
  } 
  \left[
    \| F_1 - F_N \|_{
      \mathcal{C}^n_{ \alpha, \beta, \gamma, \delta }( [t_0,T] )
    }
    +
    \left\|
      v_1 - v_N 
    \right\|_{
      U_{ r_0 }
    }
  \right]
  \left[
    1 
    + 
    \| 
      x^{ t_0, T }_{ F_1, v_1 
      }|_{
        [t_0, t]
      }
    \|_{ 
      \mathcal{E}_{ [t_0, t] } 
    }
  \right]^{ 
    ( 2 + \sum_{ i = 1 }^n \delta_i )
  }
\end{split}
\label{eq:continuity_firstB}
\end{equation}
for all 
$ 
  t \in 
  \big\{
    s \in 
    (
      t_0, 
      \tau^{ t_0, T }_{ F_1, v_1 } 
    )
    \cap
    (  
      t_0, 
      \tau^{ t_0, T }_{ F_N, v_N } 
    )
    \colon
    \| 
      x^{ t_0, T }_{ F_N, v_N }|_{
        [t_0, t]
      }
    \|_{  
      \mathcal{E}_{ 
        [t_0, s] 
      }
    }
    \leq
    2 +
    \| 
      x^{ t_0, T }_{ F_1, v_1 
      }|_{
        [t_0, t]
      }
    \|_{  
      \mathcal{E}_{ 
        [t_0, s] 
      }
    }
  \big\}
$
and all
$
  N \in \N
$.
In the next step let 
$ 
  \hat{t} \in 
  ( t_0, \tau^{ t_0, T }_{ F_1, v_1 } )
$
and
$
  \hat{N} \in \N
$
be real numbers with the property that
\begin{equation}
    \left[
      \kappa_{ [t_0, T] } 
    \right]^{
      8
    } 
    \left( t - t_0 \right)^{ 
      \varepsilon
    }   
    \Big[
      4
      + 
      2 
      \,
      \| 
        x^{ t_0, T }_{ F_1, v_1 }|_{
          [t_0, t]
        }
      \|_{ 
        \mathcal{E}_{ [t_0, t] } 
      }
      +
      \|
        F_1
      \|_{
        \mathcal{C}^n_{ \alpha, \beta, \gamma, \delta }( [t_0,T] )
      }
    \Big]^{ 
      ( 2 + \sum_{ i = 1 }^n \delta_i )
    }
  \leq 
  \tfrac{ 1 }{ 2 }
\end{equation}
for all $ t \in (t_0, \hat{t}] $
and with the property that
\begin{equation}
\begin{split}
&
  \left[
    \kappa_{ [t_0, T] } 
  \right]^{
    9
  } 
  \left[
    \| F_1 - F_N \|_{
      \mathcal{C}^n_{ \alpha, \beta, \gamma, \delta }( [t_0,T] )
    }
    +
    \left\|
      v_1 - v_N 
    \right\|_{
      U_{ r_0 }
    }
  \right]
\\ & 
  \cdot
  \left[
    1 
    + 
    \| 
      x^{ t_0, T }_{ F_1, v_1 }|_{
        [t_0, t]
      }
    \|_{ 
      \mathcal{E}_{ [t_0, t] } 
    }
  \right]^{ 
    ( 2 + \sum_{ i = 1 }^n \delta_i )
  }
  \leq 
  \tfrac{ 1 }{ 2 }
\end{split}
\end{equation}
for all $ N \in \{ \hat{N}, \hat{N} + 1, \dots \} =: \hat{ \N } $.
Then we obtain from \eqref{eq:continuity_firstB}
that
\begin{equation}
\begin{split}
&
  \|
    (
      x^{ t_0, T }_{ F_1, v_1 }
      -
      x^{ t_0, T }_{ F_N, v_N }
    )|_{
      [ t_0, t ]
    }
  \|_{ 
    \mathcal{E}_{ [t_0,t] }
  }
\leq 
  2
  \left[
    \kappa_{ [t_0, T] } 
  \right]^{
    9
  } 
\\ & 
  \cdot
  \left[
    \| F_1 - F_N \|_{
      \mathcal{C}^n_{ \alpha, \beta, \gamma, \delta }( [t_0,T] )
    }
    +
    \left\|
      v_1 - v_N 
    \right\|_{
      U_{ r_0 }
    }
  \right]
  \left[
    1 
    + 
    \| 
      x^{ t_0, T }_{ F_1, v_1 }|_{
        [t_0, t]
      }
    \|_{ 
      \mathcal{E}_{ [t_0, t] } 
    }
  \right]^{ 
    ( 2 + \sum_{ i = 1 }^n \delta_i )
  }
  \leq 
  1
\end{split}
\end{equation}
for all 
$ 
  t \in 
  \big\{
    s \in 
    (
      t_0, 
      \hat{t}
    ]
    \cap
    (  
      t_0, 
      \tau^{ t_0, T }_{ F_N, v_N } 
    )
    \colon
    \| 
      x^{ t_0, T }_{ F_N, v_N }|_{
        [t_0, t]
      }
    \|_{  
      \mathcal{E}_{ 
        [t_0, s] 
      }
    }
    \leq
    2 +
    \| 
      x^{ t_0, T }_{ F_1, v_1 
      }|_{
        [t_0, t]
      }
    \|_{  
      \mathcal{E}_{ 
        [t_0, s] 
      }
    }
  \big\}
$
and all
$
  N \in 
  \hat{ \N }
$.
This implies that
\begin{equation}
\label{eq:continuity_firstC}
\begin{split}
&
  \|
    (
      x^{ t_0, T }_{ F_1, v_1 }
      -
      x^{ t_0, T }_{ F_N, v_N }
    )|_{
      [t_0, \hat{t}]
    }
  \|_{ 
    \mathcal{E}_{ [t_0, \hat{t}] }
  }
  \leq 
  2
  \left[
    \kappa_{ [t_0, T] } 
  \right]^{
    9
  } 
\\ & 
  \cdot
  \left[
    \| F_1 - F_N \|_{
      \mathcal{C}^n_{ \alpha, \beta, \gamma, \delta }( [t_0,T] )
    }
    +
    \left\|
      v_1 - v_N 
    \right\|_{
      U_{ r_0 }
    }
  \right]
  \left[
    1 
    + 
    \| 
      x^{ t_0, T }_{ F_1, v_1 
      }|_{
        [t_0, \hat{t}]
      }
    \|_{ 
      \mathcal{E}_{ [t_0, \hat{t}] } 
    }
  \right]^{ 
    ( 2 + \sum_{ i = 1 }^n \delta_i )
  }
\end{split}
\end{equation}
for all 
$
  N \in 
  \hat{ \N }
$
and we hence get
\begin{equation}
\label{eq:limithatt}
  \lim_{ N \to \infty }
  \|
    (
      x^{ t_0, T }_{ F_1, v_1 }
      -
      x^{ t_0, T }_{ F_N, v_N }
    )|_{
      [t_0, \hat{t}]
    }
  \|_{ 
    \mathcal{E}_{ [t_0, \hat{t}] }
  }
  = 0 .
\end{equation}
In the next step we define
$ \hat{v}_N \in U_{ r_1 } $,
$ 
  N \in 
  \hat{ \N } \cup \{ 1 \}
$,
through
$ 
  \hat{v}_N := x_{ F_N, v_N }^{ t_0, T }( \hat{t} )
$
for all $ N \in \hat{ \N } \cup \{ 1 \} $
and we define
$
  \hat{F}_N 
  \in
  C^n_{ \alpha, \beta, \gamma, \delta }( [ \hat{t}, T ] )
$,
$ 
  N \in 
  \hat{ \N } \cup \{ 1 \}
$,
through
$
  \hat{F}_{ N }
  :=
  ( 
    F_{ N, 1 }|_{
      [ \hat{t}, T ]
    }
    ,
    \dots,
    F_{ N, n }|_{
      [ \hat{t}, T ]
    }
  )
$
for all 
$ N \in \hat{ \N } \cup \{ 1 \} $.
Note that 
$ ( \hat{v}_N )_{ N \in \hat{ \N } \cup \{ 1 \} } $
is well-defined 
since
$ \hat{t} < \tau^{ t_0, T }_{ F_N, v_N } $
for all $ N \in \hat{ \N } \cup \{ 1 \} $.
Furthermore, we obtain from \eqref{eq:continuity_second} that
\begin{equation}
\begin{split}
&
  \sup_{ 
    s \in 
    [
      \hat{t}, t
    ]
  }
  \big\|
    x^{ \hat{ t }, T }_{ 
      \hat{F}_1, \hat{v}_1 
    }( s )
    -
    x^{ \hat{ t }, T }_{ \hat{ F }_N, \hat{ v }_N }( s )
  \big\|_{ 
    U_{ r_1 }
  }
\leq
  E_{
    \min\left( \alpha_1 - r_1, \dots, \alpha_n - r_1, 0 \right)
  }
  \Bigg[ 
  [
    \kappa_{ [ \hat{t}, T] }
  ]^{ 9 }
  \,
  \|
    \hat{F}_N
  \|_{
    \mathcal{C}^n_{ \alpha, \beta, \gamma, \delta }( [ \hat{t}, T ] )
  }
\\ &  
  \cdot
  \Big[
    1 
    +
    \sup_{ s \in [ \hat{t}, t ] }
    \|
      x^{ \hat{t}, T }_{ \hat{F}_1, \hat{v}_1 }(s)
    \|_{ 
      U_{ r_1 }
    }
    +
    \sup_{ s \in [ \hat{t}, t ] }
    \|
      x^{ \hat{t}, T }_{ 
        \hat{F}_N, \hat{v}_N 
      }(s)
    \|_{
      U_{ r_1 }
    }
  \Big]^{
    ( 
      1 
      + 
      \sum_{ i = 1 }^n \delta_i 
    )
  }
  \Bigg]
  [
    \kappa_{ [ \hat{t}, T] } 
  ]^{
    10
  } 
\\ & \cdot
  \left[ 
    \left\|
      \hat{v}_1 - \hat{v}_N 
    \right\|_{
      U_{ r_1 }
    }
    +
    \| 
      \hat{F}_1 - \hat{F}_N 
    \|_{
      \mathcal{C}^n_{ \alpha, \beta, \gamma, \delta }( [\hat{t}, T] )
    }
  \right]
  \left[
    1 + 
    \sup_{ 
      s \in [ \hat{t}, t ] 
    }
    \| 
      x^{ \hat{t}, T }_{ \hat{F}_1, \hat{v}_1 }( s )
    \|_{ 
      U_{ r_1 }
    }
  \right]^{
    ( 2 + \sum_{ i = 1 }^n \delta_i )
  }
\end{split}
\end{equation}
for all 
$ 
  t \in 
  (
    \hat{ t }, 
    \tau^{ \hat{t}, T }_{ \hat{F}_1, \hat{v}_1 } 
  )
  \cap 
  (  
    \hat{t}, 
    \tau^{ \hat{t}, T }_{ \hat{F}_N, \hat{v}_N } 
  )
$
and all
$ N \in \hat{ \N } $.
This implies that
\begin{equation}
\begin{split}
&
  \sup_{ 
    s \in 
    [
      \hat{t}, t
    ]
  }
  \big\|
    x^{ \hat{ t }, T }_{ 
      \hat{F}_1, \hat{v}_1 
    }( s )
    -
    x^{ \hat{ t }, T }_{ \hat{ F }_N, \hat{ v }_N }( s )
  \big\|_{ 
    U_{ r_1 }
  }
\leq
  [
    \kappa_{ [ t_0, T] }
  ]^{ 10 }
  \cdot
  E_{
    \min\left( \alpha_1 - r_1, \dots, \alpha_n - r_1, 0 \right)
  }
  \Bigg[ 
  [
    \kappa_{ [ t_0, T] }
  ]^{ 9 }
\\ &  
  \cdot
  \Big[
    2
    +
    \sup_{ 
      s \in [ \hat{t}, t ]
    }
    \|
      x^{ \hat{t}, T }_{ \hat{F}_1, \hat{v}_1 }( s )
    \|_{ 
      U_{ r_1 } 
    }
    +
    \sup_{ 
      s \in [ \hat{t}, t ]
    }
    \|
      x^{ \hat{t}, T }_{ 
        \hat{F}_N, \hat{v}_N 
      }( s )
    \|_{
      U_{ r_1 }
    }
    +
    \|
      F_1
    \|_{
      \mathcal{C}^n_{ \alpha, \beta, \gamma, \delta }( [ t_0, T ] )
    }
  \Big]^{
    ( 
      2
      + 
      \sum_{ i = 1 }^n \delta_i 
    )
  }
  \Bigg]
\\ & \cdot
  \left[ 
    \left\|
      \hat{v}_1 - \hat{v}_N 
    \right\|_{
      U_{ r_1 }
    }
    +
    \| 
      F_1 - F_N 
    \|_{
      \mathcal{C}^n_{ \alpha, \beta, \gamma, \delta }( [t_0, T] )
    }
  \right]
  \left[
    1 +
    \frac{
      \| 
        x^{ t_0, T }_{ F_1, v_1 
        }|_{
          [ t_0, t ]
        }
      \|_{ 
        \mathcal{E}_{ [t_0, t] }
      }
    }{
      ( 
        \hat{t} - t_0 
      )^{
        ( r_1 - r_0 )
      }
    }    
  \right]^{
    ( 2 + \sum_{ i = 1 }^n \delta_i )
  }
\end{split}
\end{equation}
and therefore
\begin{equation}
\label{eq:continuity_secondB}
\begin{split}
&
  \sup_{ 
    s \in 
    [
      \hat{t}, t
    ]
  }
  \big\|
    x^{ t_0, T }_{ 
      F_1, v_1 
    }( s )
    -
    x^{ t_0, T }_{ F_N, v_N }( s )
  \big\|_{ 
    U_{ r_1 }
  }
\leq
  [
    \kappa_{ [ t_0, T] }
  ]^{ 10 }
  \cdot
  E_{
    \min\left( \alpha_1 - r_1, \dots, \alpha_n - r_1, 0 \right)
  }
  \Bigg[ 
  [
    \kappa_{ [ t_0, T] }
  ]^{ 9 }
\\ &  
  \cdot
  \Big[
    4
    +
    2
    \sup_{ 
      s \in 
      [
        \hat{t}, t
      ]
    }
    \|
      x^{ t_0, T }_{ 
        F_1, v_1 
      }( s )
    \|_{ 
      U_{ r_1 }
    }
    +
    \|
      F_1
    \|_{
      \mathcal{C}^n_{ \alpha, \beta, \gamma, \delta }( [ t_0, T ] )
    }
  \Big]^{
    ( 
      2
      + 
      \sum_{ i = 1 }^n \delta_i 
    )
  }
  \Bigg]
\\ & 
  \cdot
  \left[ 
    \| \hat{v}_1 - \hat{v}_N \|_{
      U_{ r_1 }
    }
    +
    \| 
      F_1 - F_N 
    \|_{
      \mathcal{C}^n_{ \alpha, \beta, \gamma, \delta }( [t_0, T] )
    }
  \right]
  \left[
    1 +
    \frac{
      \| 
        x^{ t_0, T }_{ F_1, v_1 
        }|_{
          [ t_0, t ]
        }
      \|_{ 
        \mathcal{E}_{ [t_0, t] }
      }
    }{
      ( 
        \hat{t} - t_0 
      )^{
        ( r_1 - r_0 )
      }
    }    
  \right]^{
    ( 2 + \sum_{ i = 1 }^n \delta_i )
  }
\end{split}
\end{equation}
for all 
$ 
  t \in 
  \big\{
    s \in
    (
      \hat{ t }, 
      \tau^{ t_0, T }_{ F_1, v_1 } 
    )
    \,
    \cap 
    \,
    (  
      \hat{t}, 
      \tau^{ t_0, T }_{ F_N, v_N } 
    )
    \colon
    \sup_{
      u \in [ \hat{t} , s ]
    }
    \|
      x^{ t_0, T 
      }_{ F_N, v_N 
      }( u )
    \|_{
      U_{ r_1 } 
    }
    \leq
    2 
    +
    \sup_{
      u \in [ \hat{t} , s ]
    }
$
$
    \|
      x^{ 
        t_0, T 
      }_{ 
        F_1, v_1 
      }( u )
    \|_{
      U_{ r_1 }
    }
  \big\}
$
and all
$ N \in \hat{ \N } $.
In the next step we observe that
\eqref{eq:limithatt}
proves that there exists
a non-decreasing family 
$ N_t \in \hat{ \N } $,
$ t \in ( \hat{t}, \tau^{ t_0, T }_{ F_1, v_1 } ) $,
of natural numbers such that
\begin{equation}
\begin{split}
&
  E_{
    \min\left( \alpha_1 - r_1, \dots, \alpha_n - r_1, 0 \right)
  }
  \Bigg[ 
  \Big[
    4
    +
    2
    \sup_{ 
      s \in 
      [
        \hat{t}, t
      ]
    }
    \|
      x^{ t_0, T }_{ 
        F_1, v_1 
      }( s )
    \|_{ 
      U_{ r_1 }
    }
    +
    \|
      F_1
    \|_{
      \mathcal{C}^n_{ \alpha, \beta, \gamma, \delta }( [ t_0, T ] )
    }
  \Big]^{
    ( 
      2
      + 
      \sum_{ i = 1 }^n \delta_i 
    )
  }
\\ & 
  \cdot
    [
      \kappa_{ [ t_0, T] }
    ]^{ 9 }
  \Bigg]
  [
    \kappa_{ [ t_0, T] }
  ]^{ 10 }
  \left[ 
    \| 
      \hat{v}_1 - \hat{v}_N 
    \|_{
      U_{ r_1 }
    }
    +
    \| 
      F_1 - F_N 
    \|_{
      \mathcal{C}^n_{ \alpha, \beta, \gamma, \delta }( [t_0, T] )
    }
  \right]
\\ & 
  \cdot
  \left[
    1 +
    \frac{
      \| 
        x^{ t_0, T }_{ F_1, v_1 
        }|_{
          [ t_0, t ]
        }
      \|_{ 
        \mathcal{E}_{ [t_0, t] }
      }
    }{
      ( 
        \hat{t} - t_0 
      )^{
        ( r_1 - r_0 )
      }
    }    
  \right]^{
    ( 2 + \sum_{ i = 1 }^n \delta_i )
  }
  \leq
  1
\end{split}
\end{equation}
for all 
$ N \in \{ N_t, N_t + 1, \dots \} $
and all
$ 
  t \in 
  (
    \hat{ t }, 
    \tau^{ t_0, T }_{ F_1, v_1 } 
  )
$.
Combining this
with \eqref{eq:continuity_secondB}
results in
\begin{equation}
\begin{split}
&
  \sup_{
    s \in [ \hat{t}, t ]
  }
  \|
    x^{ t_0, T }_{ F_1, v_1 }( s )
    -
    x^{ t_0, T }_{ F_N, v_N }( s )
  \|_{ 
    U_{ r_1 } 
  }
\leq
  E_{
    \min\left( \alpha_1 - r_1, \dots, \alpha_n - r_1, 0 \right)
  }
  \bigg[ 
  [
    \kappa_{ [ t_0, T] }
  ]^{ 9 }
\\ &  
  \cdot
  \Big[
    4
    +
    2
    \,
    \sup_{ s \in [ \hat{t}, t ] }
    \|
      x^{ t_0, T }_{ F_1, v_1 }(s)
    \|_{ 
      U_{ r_1 }
    }
    +
    \|
      F_1
    \|_{
      \mathcal{C}^n_{ \alpha, \beta, \gamma, \delta }( [ t_0, T ] )
    }
  \Big]^{
    ( 
      2
      + 
      \sum_{ i = 1 }^n \delta_i 
    )
  }
  \bigg]
  \,
  [
    \kappa_{ [ t_0, T] } 
  ]^{
    10
  } 
\\ & \cdot
  \left[ 
    \|
      \hat{v}_1 - \hat{v}_N
    \|_{ U_{ r_1 } }
    +
    \| 
      F_1 - F_N 
    \|_{
      \mathcal{C}^n_{ \alpha, \beta, \gamma, \delta }( [t_0, T] )
    }
  \right]
  \left[
    1 +
    \frac{
      \| 
        x^{ t_0, T }_{ F_1, v_1 }|_{
          [ t_0, t ]
        }
      \|_{ 
        \mathcal{E}_{ [t_0, t] }
      }
    }{
      ( 
        \hat{t} - t_0 
      )^{
        ( r_1 - r_0 )
      }
    }    
  \right]^{
    ( 2 + \sum_{ i = 1 }^n \delta_i )
  }
\end{split}
\label{eq:last_cont}
\end{equation}
for all 
$ N \in \{ N_t, N_t + 1, \dots \} $
and all
$ 
  t \in ( \hat{t}, \tau^{ t_0, T }_{ F_1, v_1 } ) 
$.
Inequality~\eqref{eq:last_cont} 
implies that $ \tau^{ t_0, T } $ is lower
semicontinuous
and combining \eqref{eq:last_cont}
with \eqref{eq:continuity_firstC}
proves that 
\begin{equation}
\label{eq:x_cont_ALMOST}
  \lim_{ N \to \infty }
  \sup_{ s \in (t_0, t] }
  \left[
  \begin{array}{c}
    \left( s - t_0 \right)^{
      \left( r_1 - r_0 \right)
    }
    \|
      x^{ t_0, T }_{ F_1, v_1 }(s)
      -
      x^{ t_0, T }_{ F_N, v_N }(s)
    \|_{
      U_{ r_1 }
    }
  \\[1ex]
    +
    \,
    \|
      x^{ t_0, T }_{ F_1, v_1 }(s)
      -
      x^{ t_0, T }_{ F_N, v_N }(s)
    \|_{
      U_{ r_0 }
    }
  \end{array}
  \right]
  = 0
\end{equation}
for all 
$ t \in ( t_0,\tau_{ F_1, v_1 } ) $.
Interpolation (see, e.g.,
Theorem~37.6 in Sell \& You~\cite{sy02}) 
hence implies that \eqref{eq:x_cont}
is fulfilled.
Since every lower semicontinuous function is Borel measurable,
we obtain that $ \tau^{ t_0, T } $ is Borel measurable.
Therefore, we get 
for every $ t \in [t_0, T] $ that the sets
$
  \{
    ( F, v ) \in
    \mathcal{C}^n_{ \alpha, \beta, \gamma, \delta }( [t_0, T] ) 
    \times U_{ r_0 } 
    \colon
    \tau_{ F, v }^{ t_0, T } > t
  \}
$
and
$
  \{
    ( F, v ) \in
    \mathcal{C}^n_{ \alpha, \beta, \gamma, \delta }( [t_0, T] ) 
    \times U_{ r_0 } 
    \colon
    \tau_{ F, v }^{ t_0, T } \leq t
  \}
$
are Borel measurable subsets of
$
  \mathcal{C}^n_{ \alpha, \beta, \gamma, \delta }( [t_0, T] ) 
  \times U_{ r_0 } 
$	
and \eqref{eq:x_cont} implies 
for every $ t \in (t_0, T] $
and every $ r \in [r_0, r_1] $ 
that the mapping
$
  \{
    ( F, v ) \in
    \mathcal{C}^n_{ \alpha, \beta, \gamma, \delta }( [t_0, T] ) 
    \times U_{ r_0 } 
    \colon
    \tau_{ F, v }^{ t_0, T } > t
  \}
  \ni
  ( F, v )
  \mapsto 
  x_{ F, v }(t)
  \in
  U_{ r }
$
is continuous and, in particular, Borel measurable.
These two facts imply \eqref{eq:x_mess}
and this completes the proof of Theorem~\ref{thm:continuity}.
\end{proof}

Theorem~\ref{thm:continuity}
investigates solutions of \eqref{eq:PDEs}
on a bounded time interval. 
The next corollary extends 
this result to unbounded
time intervals.

\begin{cor}[Continuous dependence on the data
on unbounded time intervals]
\label{cor:continuity}
Assume the setting in the beginning 
of Subsection~\ref{sec:det_existence} 
and let 
$ t_0, r_0 \in \R $,
$ n \in \N $,
$ 
  \alpha = ( \alpha_1, \dots, \alpha_n ) \in \R^n
$,
$
  \beta = ( \beta_1, \dots, \beta_n ), 
  \gamma = ( \gamma_1, \dots, \gamma_n )
  \in [r_0, \infty)^n
$,
$
  \delta = ( \delta_1, \dots, \delta_n ) 
$
$
  \in [0,\infty)^n
$
with
$
  \max( \beta_1, \dots, \beta_n, 
$
$
  \gamma_1, \dots, \gamma_n ) 
  <
  1 +
  \min( \alpha_1, \dots, \alpha_n )
$
and
\begin{equation}
  \max_{ i \in \{ 1, \dots, n \} }
  \big[
  \gamma_i - \min( \alpha_i , r_0 ) 
  + 
  \delta_i
  ( \beta_i - r_0 ) 
  \big]
  < 1 .
\end{equation}
Then 
there exist a unique lower
semicontinuous function
$ 
  \tau
  \colon 
  \mathcal{C}^n_{ \alpha, \beta, \gamma, \delta }( [t_0, \infty) ) 
  \times
  U_{ r_0 } 
  \to (t_0, \infty] 
$
and a unique function
$
  x
  \colon 
  \mathcal{C}^n_{ \alpha, \beta, \gamma, \delta }( 
    [t_0, \infty) 
  ) 
  \times
  U_{ r_0 } 
  \to
  \cup_{ s \in (t_0,\infty] } 
  C( [t_0, 
$
$
  s), 
  U_{ r_0 } )
$
satisfying
$
  x_{ F, v }
  \in
  C( [t_0, \tau_{ F, v } ), U_{ r_0 } )
$,
$
  x_{ F, v }|_{ 
    ( t_0, \tau_{ F, v } ) 
  }
  \in
  C(
    (t_0,
    \tau_{ F, v } ), U_{ r_1 }
  )
$,
$
  \sup_{
    s \in (t_0,t]
  }
  ( s - t_0 )^{
    ( r_1 - r_0 )
  }
  \,
  \| 
    x_{ F, v }(s)
  \|_{ U_{ r_1 } }
  <
  \infty
$
and
\begin{equation}
  \lim_{ 
    s \nearrow \tau_{ F, v }
  }
  \big[
  \tau_{ F, v }
  +
  \| 
    x_{ F, v }(s) 
  \|_{ 
    U_{ 
      \max( \beta_1, \dots, \beta_n, \gamma_1, \dots, \gamma_n )
    } 
  }
  \big]
  = \infty
\end{equation}
and
\begin{equation}
  x_{ F, v }(t)
=
  e^{ A ( t - t_0 ) } \, v
+
  \sum_{ i = 1 }^n
  \int_{ t_0 }^t
  e^{ A ( t - s ) }
  \,
  F_i( s, x_{ F, v }(s) ) \, ds
\end{equation}
for all 
$ t \in ( t_0, \tau_{ F, v } ) $,
$ v \in U_{ r_0 } $,
$
  r_1
  \in
  [  
    r_0 ,
    1 +
    \min( \alpha_1, \dots, 
    \alpha_n )
  )
$
and all
$
  F = ( F_1, \dots, F_n )
  \in 
  \mathcal{C}^n_{ \alpha, \beta, \gamma, \delta }( [t_0, \infty) )
$.
In addition, it holds 
for every
$ t \in (t_0, \infty) $
and every
$
  r_1
  \in
  [  
    r_0 ,
    1 +
    \min( \alpha_1, \dots, \alpha_n )
  )
$
that the function
\begin{equation}
\label{eq:x_mess2}
\tag{Measurability property}
  \mathcal{C}^n_{ \alpha, \beta, \gamma, \delta }( [t_0, \infty) )
  \times U_{ r_0 }
  \ni (F,v) \mapsto 
  \left\{
  \begin{array}{ll}
    x_{ F, v }( t ) 
  &
    \colon
    t < \tau_{ F, v } 
  \\
    \infty 
  &
    \colon
    t \geq \tau_{ F, v }
  \end{array}
  \right\}
  \in U_{ r_1 } \cup \{ \infty \}
\end{equation}
is Borel measurable.
Moreover, it holds that
\begin{equation}
\label{eq:x_cont2}
\tag{Continuity property}
  \lim_{ N \to \infty }
  \sup_{ s \in (t_0, t] }
  \left[
  \begin{array}{c}
    \left( s - t_0 \right)^{
      \left( r_1 - r_0 \right)
    }
    \|
      x_{ F_1, v_1 }(s)
      -
      x_{ F_N, v_N }(s)
    \|_{
      U_{ r_1 }
    }
  \\[1ex]
    +
    \,
    \|
      x_{ F_1, v_1 }(s)
      -
      x_{ F_N, v_N }(s)
    \|_{
      U_{ r_0 }
    }
  \end{array}
  \right]
  = 0
\end{equation}
for all 
$ t \in ( t_0,\tau_{ F_1, v_1 } ) $,
$
  r_1
  \in
  \left[  
    r_0 ,
    1 +
    \min( \alpha_1, \dots, \alpha_n )
  \right)
$,
$ 
  ( v_N )_{ N \in \N } 
$
$
  \subset U_{ r_0 } 
$,
$
  ( F_N )_{ N \in \N }
  \subset
  \mathcal{C}^n_{ \alpha, \beta, \gamma, \delta }( [t_0, \infty) )
$
with
$
  \lim_{ N \to \infty } 
  d_{
    \mathcal{C}^n_{ \alpha, \beta, \gamma, \delta }( [ t_0, \infty) )
  }(
    F_1, F_N
  )
$
$
  =
  \lim_{ N \to \infty } 
  \|
    v_1 - v_N
  \|_{
    U_{ r_0 }
  }
$
$
  = 0
$.
\end{cor}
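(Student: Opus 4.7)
My plan is to extend the finite-horizon statement of Theorem~\ref{thm:continuity} to an infinite horizon by gluing together the local maximal solutions. For $F = (F_1, \dots, F_n) \in \mathcal{C}^n_{\alpha, \beta, \gamma, \delta}([t_0, \infty))$ and $T \in (t_0, \infty)$, let $F^{(T)}$ denote the restriction of $F$ to $[t_0, T] \times \prod_i U_{\max(\beta_i, \gamma_i)}$; the definition \eqref{eq:Fspace_inf} ensures $F^{(T)} \in \mathcal{C}^n_{\alpha, \beta, \gamma, \delta}([t_0, T])$, so Theorem~\ref{thm:continuity} yields $\tau^{t_0, T}_{F^{(T)}, v}$ and $x^{t_0, T}_{F^{(T)}, v}$. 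The uniqueness assertion there, combined with Corollary~\ref{cor:maximal2}, implies that for $t_0 < T_1 \leq T_2$ the solutions agree on their common interval of existence and that $T \mapsto \tau^{t_0, T}_{F^{(T)}, v}$ is non-decreasing, with equality $\tau^{t_0, T_2}_{F^{(T_2)}, v} = \tau^{t_0, T_1}_{F^{(T_1)}, v}$ whenever $\tau^{t_0, T_1}_{F^{(T_1)}, v} < T_1$.

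I then define $\tau_{F, v} := \sup_{T \in (t_0, \infty)} \tau^{t_0, T}_{F^{(T)}, v} \in (t_0, \infty]$ and $x_{F, v}(t) := x^{t_0, T}_{F^{(T)}, v}(t)$ for any $T$ with $\tau^{t_0, T}_{F^{(T)}, v} > t$. Consistency makes this well-defined; the regularity statements, the integral equation, and uniqueness lift directly from their local analogues. For the blow-up condition, the key observation is that for every $T > t$ one has $\tau^{t_0, T}_{F^{(T)}, v} > t$ if and only if $\tau_{F, v} > t$, because once the $T$-solution fails to reach $T$ the blow-up time is frozen. If $\tau_{F, v} = \infty$, then $\tau_{F, v} + \|x_{F,v}(s)\|_{U_{\max(\beta_i, \gamma_i)}} \to \infty$ as $s \nearrow \tau_{F, v}$ is automatic from the first term; if $\tau_{F, v} < \infty$, I choose any $T > \tau_{F, v}$, so $\tau^{t_0, T}_{F^{(T)}, v} = \tau_{F, v} < T$ and the term $\tfrac{1}{T - s}$ stays bounded as $s \nearrow \tau_{F, v}$, forcing $\|x_{F, v}(s)\|_{U_{\max(\beta_i, \gamma_i)}} \to \infty$ via the blow-up condition in Theorem~\ref{thm:continuity}.

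For the three topological properties I exploit that the Fréchet metric $d_{\mathcal{C}^n_{\alpha, \beta, \gamma, \delta}([t_0, \infty))}$ is constructed so that convergence in it is equivalent to convergence in $\|\cdot\|_{\mathcal{C}^n_{\alpha, \beta, \gamma, \delta}([t_0, t_0 + k])}$ for every $k \in \N$; in particular, the restriction map $(F, v) \mapsto (F^{(T)}, v)$ is continuous for every $T > t_0$. Lower semicontinuity of $\tau$ follows by fixing $t > t_0$, choosing any $T > t$, and writing $\{\tau_{F, v} > t\} = \{\tau^{t_0, T}_{F^{(T)}, v} > t\}$, which is open by the lower semicontinuity of $\tau^{t_0, T}$ proved in Theorem~\ref{thm:continuity}. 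The Borel measurability statement \eqref{eq:x_mess2} then follows by precomposing \eqref{eq:x_mess} with the continuous restriction map, using the same set identification to match the $\infty$-valued extensions. The continuity property \eqref{eq:x_cont2} follows from \eqref{eq:x_cont} because convergence in $d_{\mathcal{C}^n_{\alpha, \beta, \gamma, \delta}([t_0, \infty))}$ implies convergence of $F^{(T)}_N$ to $F^{(T)}_1$ in $\|\cdot\|_{\mathcal{C}^n_{\alpha, \beta, \gamma, \delta}([t_0, T])}$ for every $T > t_0$, so one applies \eqref{eq:x_cont} with any $T$ satisfying $T > t$.

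The only genuinely subtle point is the correct transition of the blow-up behaviour across the finite/infinite horizon boundary: Theorem~\ref{thm:continuity} always enforces $\tau^{t_0, T}_{F^{(T)}, v} \leq T$ through the artificial term $\tfrac{1}{T - s}$, and one must disentangle this from the intrinsic blow-up of $\|x_{F, v}\|_{U_{\max(\beta_i, \gamma_i)}}$; the dichotomy between $\tau_{F, v} = \infty$ and $\tau_{F, v} < \infty$ handles this cleanly once observed. All remaining assertions are formal transfers from Theorem~\ref{thm:continuity} via the continuity of the restriction maps and the monotone gluing of the maximal finite-horizon solutions.
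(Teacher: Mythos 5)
Your proposal is correct and follows exactly the route the paper intends: the authors omit the proof, stating that Corollary~\ref{cor:continuity} follows immediately from Theorem~\ref{thm:continuity}, and your construction (restrict to $[t_0,T]$, use uniqueness to glue the maximal finite-horizon solutions, pass to $\tau_{F,v}=\sup_T \tau^{t_0,T}_{F^{(T)},v}$, and transfer lower semicontinuity, measurability and continuity through the continuous restriction maps) is precisely that argument. Your explicit handling of the dichotomy between $\tau_{F,v}=\infty$ and $\tau_{F,v}<\infty$, disentangling the artificial $\tfrac{1}{T-s}$ term from genuine blow-up of $\|x_{F,v}(s)\|_{U_{\max(\beta_1,\dots,\beta_n,\gamma_1,\dots,\gamma_n)}}$, is a correct and welcome elaboration of a point the paper leaves implicit.
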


Corollary~\ref{cor:continuity}
follows immediately from
Theorem~\ref{thm:continuity}
and its proof is therefore omitted.

\subsection{SPDEs  
with space-time white noise
and polynomial nonlinearities
in two space dimensions}
\label{sec:two_dim}

The aim of this subsection is
to prove local existence and
uniqueness of mild solutions
of SPDEs in two space
dimensions with polynomial nonlinearities
of the form
\begin{equation}
\label{eq:2D_original}
  d X_t =
  \left[
    \triangle X_t
    +
    \kappa_n(t) 
    : ( X_t )^n :
    +
    \ldots 
    +
    \kappa_2(t)
    : ( X_t )^2 :
    +
    \kappa_1(t)
    X_t
    +
    \kappa_0(t)
  \right]
  dt
  + dW_t
\end{equation}
for $ t \in [0,\infty) $
with periodic boundary
conditions on
$ (0, 2 \pi)^2 $
where $ n \in \N $ is an arbitrary natural number,
where
$ \kappa_0, \kappa_1, \dots, \kappa_n \in C( [0,\infty) , \R ) $
are arbitrary continuous functions,
where $ ( W_t )_{ t \geq 0 } $ is a cylindrical
$ I $-Wiener process and where 
$ : ( X_t )^2 : , \, \dots , \, : ( X_t )^n : $
are suitable renormalizations
of $ ( X_t )^2 , \, \dots , \, ( X_t )^n $
for $ t \in [0,\infty) $.
The precise result is formulated in the following theorem.

\begin{thm}[Polynomial nonlinearities 
in two space dimensions]
\label{thm:2D}
Let 
$ 
\left( 
  \Omega, \mathcal{F}, \mathbb{P} 
\right) 
$
be a probability space,
let 
$ n \in \N $,
$ 
  t_0 \in \R 
$,
$
  \kappa_0, \kappa_1, \dots, \kappa_n 
  \in C( [t_0,\infty), \R ) 
$,
$ 
  \eta \in ( - \frac{ 2 }{ n } , 0 )
$,
let
$
  V =
  \; : \! ( V )^1 \!\! :
  ,
  : \! ( V )^2 \!\! : ,
  \dots ,
  : \! ( V )^n \!\! : \;
  \colon
  [ t_0, \infty ) \times \Omega
  \to 
  \cap_{ r \in ( - \infty, 0 ) }
  \,
  \mathcal{C}_{ \mathcal{P} }^r
  (
    [0, 2 \pi]^2, \R
  )
$
be stochastic processes
with continuous sample paths
given by
Propositions~\ref{prop:regularity_Wick}
and \ref{prop:regularity_OE}
and let 
$ 
  \xi \colon \Omega \to 
  \mathcal{C}^{ \eta }_{ 
    \mathcal{P}
  }
  ( [0, 2 \pi]^2, \R )
$
be a random variable.
Then there exists a unique random variable
$ 
  \tau \colon \Omega \to (t_0,\infty]
$
and a unique stochastic process
$
  X \colon [t_0,\infty) \times \Omega
  \to 
  \mathcal{C}^{ \eta }_{ 
    \mathcal{P}
  }
  ( [0, 2 \pi]^2, \R )
  \cup \{ \infty \}
$
such that 
for every $ \omega \in \Omega $ 
it holds that
$
  X_t( \omega ) = \infty
$
for all $ t \in [ \tau(\omega), \infty ) $,
that
\begin{equation}
  ( 
    X_s( \omega )
  )_{ 
    s \in 
    [ t_0, \tau(\omega) ) 
  }
  \in
  C\big( 
    [ t_0, \tau(\omega) ) ,
    \mathcal{C}^{ \eta }_{ \mathcal{P} }( [0, 2 \pi]^2, \R )
  \big) ,
\end{equation} 
\begin{equation}
  ( 
    X_s(\omega) - V_s(\omega) 
  )_{ s \in (t_0,\infty) } 
  \in 
  C\big( 
    (t_0,\infty), 
    \cap_{ \nu \in ( 0, 2 ) }
    \,
    [
      \mathcal{C}^{ \nu }_{ \mathcal{P} }( [0, 2 \pi]^2, 
      \R )
      \cup \{ \infty \}
    ]
  \big) ,
\end{equation}
\begin{equation}
  \sup_{
    s \in (t_0, t]
  }
  ( s - t_0 )^{ \frac{ ( r - \eta ) }{ 2 } }
  \| 
    X_s( \omega ) - V_s( \omega )
  \|_{
    \mathcal{C}_{ \mathcal{P} }^r(
      [0, 2 \pi]^2, \R
    )
  }
  < \infty
\end{equation}
for all
$ r \in [ \eta , 2 ) $
and all
$
  t \in ( t_0, \tau( \omega ) )
$
and that
\begin{equation}
\label{eq:2D}
\begin{split}
  X_t( \omega ) 
& =
  e^{     
    \mathcal{A}_2 
    \left( t - t_0 \right) 
  } 
  \,
  \xi( \omega ) 
  +
  V_t( \omega )
  - 
  e^{ 
    \mathcal{A}_2 
    \left( t - t_0 \right) 
  } 
  \, V_{ t_0 }( \omega )
  +
  \int_{ t_0 }^t
  e^{ 
    \mathcal{A}_2 
    (t - s ) 
  }
  \Bigg(
    \kappa_0(t) +
    \left(
      \kappa_1(t) + 1
    \right) X_t( \omega )
\\ & 
  +
    \sum_{ w = 2 }^n
    \kappa_w(t)
    \Bigg[
    \left( 
      X_t( \omega ) - V_t( \omega ) 
    \right)^w
    +
    \sum_{
      k = 0
    }^{ w - 1 }
    \left(
      \begin{array}{c}
        w \\ k
      \end{array}
    \right)
    \left( 
      X_t( \omega ) - V_t( \omega ) 
    \right)^k
    \left( 
      : \! ( V_t )^{ (w - k) } \! :
    \right)\!( \omega )
    \Bigg]
  \Bigg)
  \, ds
\end{split}
\end{equation}
for all $ t \in [t_0, \tau(\omega) ) $.
%
In that sense, the stochastic process 
$ X $ is a local mild solution of the
SPDE~\eqref{eq:2D_original}.
\end{thm}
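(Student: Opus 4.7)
\textbf{Proof plan (Theorem~\ref{thm:2D}).}
The strategy is the Da Prato--Debussche decomposition combined with the deterministic pathwise theory from Subsection~\ref{sec:det_existence}. Define the remainder $Y := X - V$ and rewrite \eqref{eq:2D} as the random mild equation
\begin{equation*}
  Y_t(\omega) = e^{\mathcal{A}_2(t-t_0)}\bigl[\xi(\omega) - V_{t_0}(\omega)\bigr]
  + \int_{t_0}^t e^{\mathcal{A}_2(t-s)}
  \Bigl[
    \kappa_0(s) + (\kappa_1(s)+1)(Y_s + V_s)(\omega)
    + \!\!\sum_{w=2}^n \kappa_w(s)\!
    \sum_{k=0}^{w} \binom{w}{k} Y_s^k(\omega)\,\bigl(:\!(V_s)^{w-k}\!:\bigr)(\omega)
  \Bigr] ds,
\end{equation*}
where we have used the binomial identity appearing on the right-hand side of \eqref{eq:2D}, together with the convention $:\!(V)^0\!: = 1$. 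The initial datum $\xi - V_{t_0}$ lies in $\mathcal{C}^{\eta}_{\mathcal{P}}$ by Proposition~\ref{prop:regularity_OE}, and for each fixed $\omega$ the inhomogeneities are products of polynomials in $Y$ with the fixed paths $\bigl(s \mapsto :\!(V_s)^{w-k}\!:(\omega)\bigr)$.

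Next I would apply Corollary~\ref{cor:continuity} pathwise with the Banach space $U = C_{\mathcal{P}}([0,2\pi]^2,\R)$, the generator $A = \mathcal{A}_2$ (so that $U_r$ corresponds to $\mathcal{C}^{2r}_\mathcal{P}$), and $r_0 = \eta/2$. Each summand
$F_{w,k}(s,y) := \kappa_w(s)\binom{w}{k}\,y^k\,\bigl(:\!(V_s)^{w-k}\!:\bigr)(\omega)$ (for $w\in\{2,\dots,n\}$, $k\in\{0,\dots,w\}$, with $:\!(V_s)^0\!:=1$), together with the affine term $\kappa_0 + (\kappa_1+1)(y + V_s)$, must be shown to belong to $\mathcal{C}^n_{\alpha,\beta,\gamma,\delta}([t_0,\infty))$ for suitable exponents. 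For this I would use Proposition~\ref{prop:regularity_Wick}, which provides continuous sample paths of $:\!(V)^{w-k}\!:$ in $\mathcal{C}^{-\varepsilon}_{\mathcal{P}}$ for every $\varepsilon>0$, together with the multiplication inequality \eqref{eq:multiplication}: taking $\beta_i = \gamma_i = \nu/2$ with $\nu>0$ small, and $\alpha_i = -\varepsilon/2$ for the Wick-product terms, the estimate \eqref{eq:multiplication} yields the required local Lipschitz bound with polynomial exponent $\delta_i = w - 1 \leq n - 1$ from writing $y_1^k - y_2^k = (y_1 - y_2)\sum_{j=0}^{k-1} y_1^{k-1-j} y_2^{j}$.

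The crucial algebraic check is the hypothesis
$\max_i[\gamma_i - \min(\alpha_i,r_0) + \delta_i(\beta_i - r_0)] < 1$ of Lemma~\ref{lem:existence}/Corollary~\ref{cor:continuity}. The worst case arises from the pure polynomial term $Y^n$ (where $\delta_i = n-1$), giving
$\gamma_i - r_0 + \delta_i(\beta_i - r_0) = (\nu/2 - \eta/2) + (n-1)(\nu/2 - \eta/2) = \tfrac{n}{2}(\nu - \eta)$,
which is $< 1$ exactly when $\nu < \eta + 2/n$. Since $\eta > -2/n$ by hypothesis, any $\nu \in (0, \eta + 2/n)$ works, and in particular $\nu$ can be taken in $(0,2)$ up to the multiplication-inequality restrictions which force $\nu > 0$ only mildly. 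The Wick-product terms give a milder condition because $\alpha_i$ is only slightly negative and $\delta_i \leq n-1$.

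Finally, Corollary~\ref{cor:continuity} provides a random time $\tau(\omega) \in (t_0,\infty]$ and a pathwise-unique continuous solution $Y(\omega) \in C([t_0,\tau(\omega)),\mathcal{C}^\eta_{\mathcal{P}})$ with $Y|_{(t_0,\tau(\omega))} \in C((t_0,\tau(\omega)),\mathcal{C}^\nu_{\mathcal{P}})$ for every admissible $\nu$, and with the blow-up alternative at $\tau(\omega)$. The Borel measurability statement \eqref{eq:x_mess2} in Corollary~\ref{cor:continuity}, combined with the fact that $\omega \mapsto (\text{random vector field})$ is measurable from $\Omega$ into the Fr\'echet space $\mathcal{C}^n_{\alpha,\beta,\gamma,\delta}([t_0,\infty))$ (a consequence of the continuous sample paths of $V$ and of $:\!(V)^m\!:$), ensures that $Y$ is a genuine stochastic process. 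Setting $X := Y + V$ on $[t_0,\tau)$ and $X \equiv \infty$ on $[\tau,\infty)$, and unfolding the mild equation using $V_t - e^{\mathcal{A}_2(t-t_0)}V_{t_0} = \int_{t_0}^t e^{\mathcal{A}_2(t-s)} dW_s - \int_{t_0}^t e^{\mathcal{A}_2(t-s)} V_s\,ds$ (which accounts for the $+X_s$ term coming from $\Delta = \mathcal{A}_2 + I$), recovers \eqref{eq:2D}. The main obstacle I expect is the bookkeeping of exponents ensuring all the $(\alpha_i,\beta_i,\gamma_i,\delta_i)$ simultaneously satisfy both the multiplication constraints from \eqref{eq:multiplication} and the subcriticality condition, but this reduces to the single scaling inequality $\eta > -2/n$; the passage from pathwise existence to a measurable stochastic process is routine once the random vector field is shown to be a measurable map into $\mathcal{C}^n_{\alpha,\beta,\gamma,\delta}([t_0,\infty))$.
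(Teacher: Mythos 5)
Your proposal is correct and follows essentially the same route as the paper's proof: the Da Prato--Debussche decomposition $Y = X - V$, a pathwise application of Corollary~\ref{cor:continuity} on $U_r = D((-\mathcal{A}_2)^r)$ with $r_0 = \eta/2$, the same exponent bookkeeping (your condition $\tfrac{n}{2}(\nu-\eta)<1$ is exactly the paper's $n(\varepsilon - \tfrac{\eta}{2})<1$ with $\nu = 2\varepsilon$, i.e.\ precisely $\eta > -\tfrac{2}{n}$), and the same measurability argument via \eqref{eq:x_mess2} and the measurability of $\omega \mapsto (F^\omega,\xi(\omega))$; the only differences are cosmetic (you split the nonlinearity into summands $F_{w,k}$ where the paper uses a single $F^\omega \in \mathcal{C}^1_{\alpha,\beta,\gamma,\delta}$). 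Two peripheral slips do not affect the argument: the identity you invoke at the end should simply be $V_t = e^{\mathcal{A}_2(t-t_0)}V_{t_0} + \int_{t_0}^t e^{\mathcal{A}_2(t-s)}\,dW_s$ (no extra $-\int e^{\mathcal{A}_2(t-s)}V_s\,ds$ term, since \eqref{eq:2D} is the defining equation and the $+1$ is already absorbed into $\kappa_1+1$), and the full regularity range $\nu \in (0,2)$ comes not from the choice of $\beta_i,\gamma_i$ (which is constrained by $\nu < \eta + 2/n$) but from $r_1 < 1 + \min_i \alpha_i$ together with the arbitrariness of $\varepsilon$, as in the paper's final step.
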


Let us briefly compare 
Proposition~4.4
in 
Da Prato \& 
Debussche~\cite{DaPratoDebussche2003}
with
Theorem~\ref{thm:2D}
above.
In the setting of Theorem~\ref{thm:2D}
we note that
\begin{equation}
\begin{split}
&
  \int_{ t_0 }^t
  \left\|
    X_s( \omega ) - V_s( \omega )
  \right\|^p_{
    \mathcal{C}^r_{ \mathcal{P} }(
      [0, 2 \pi]^2, \R
    )
  }
  ds
\\ & \leq
  \left[
    \sup_{ 
      s \in ( t_0, t ]
    }
    \left(
      s - t_0
    \right)^{
      \frac{ p ( r - \eta ) }{ 2 }
    }
    \left\|
      X_s( \omega ) - V_s( \omega )
    \right\|_{
      \mathcal{C}^r_{ \mathcal{P} }( [0, 2 \pi ]^2, \R )
    }^p
  \right]
  \int_{ t_0 }^t
  \left(
    s - t_0
  \right)^{
    \frac{ 
      p \left( \eta - r \right)
    }{
      2
    }
  }
  ds
\\ & =
  \left[
    \sup_{ 
      s \in ( t_0, t ]
    }
    \left(
      s - t_0
    \right)^{
      \frac{ p ( r - \eta ) }{ 2 }
    }
    \left\|
      X_s( \omega ) - V_s( \omega )
    \right\|_{
      \mathcal{C}^r_{ \mathcal{P} }( [0, 2 \pi ]^2, \R )
    }^p
  \right]
  \frac{
    \left(
      t - t_0
    \right)^{
      \left(
        1 +
        \frac{ 
          p \left( \eta - r \right)
        }{
          2
        }
      \right)
    }
  }{
    \left(
      1 +
      \frac{ 
        p \left( \eta - r \right)
      }{
        2
      }
    \right)
  }
  < \infty
\end{split}
\end{equation}
and hence
\begin{equation}
\label{eq:regularity_DPD}
  \left( 
    X_s( \omega ) - V_s( \omega ) 
  \right)_{
    s \in [ t_0, t]
  }
  \in
  C\big(
    [ t_0, t ];
    \mathcal{C}^{ \eta }_{
      \mathcal{P}
    }(
      [ 0, 2 \pi ]^2, \R
    )
  \big)
  \cap
  L^p\big(
    [ t_0, t] ;
    \mathcal{C}^r_{
      \mathcal{P}
    }(
      [0, 2 \pi]^2, \R
    )
  \big)
\end{equation}
for all 
$ t \in ( t_0, \tau(\omega) ) $,
$ \omega \in \Omega $,
$ r \in [ \eta , \frac{ 2 }{ p } + \eta ) $
and all
$ p \in (0, \infty) $.
Equation~\eqref{eq:regularity_DPD}
implies the regularity statement
in 
Proposition~4.4
in 
Da Prato \& Debussche~\cite{DaPratoDebussche2003}
and this demonstrates 
that Theorem~\ref{thm:2D} above
implies Proposition~4.4 in 
Da Prato \& Debussche~\cite{DaPratoDebussche2003}.


\begin{proof}[Proof
of Theorem~\ref{thm:2D}]
We show 
Theorem~\ref{thm:2D}
through an application of
Corollary~\ref{cor:continuity}.
For this application define
$ 
  \left( U, \left\| \cdot \right\|_U \right) 
  := 
  \big( 
    \mathcal{C}^0_{ \mathcal{P} 
    }( [0, 2 \pi ]^2, \R ) ,
    \| \cdot \|_{ 
      \mathcal{C}^0_{ \mathcal{P} 
      }( [0, 2 \pi ]^2, \R )    
    }
  \big)
$
and 
$ 
  \left( 
    U_r, 
    \left\| \cdot \right\|_{ U_r } 
  \right) 
  :=
  \left( 
    D( ( - \mathcal{A}_2 )^r ) ,
    \left\|
      ( - \mathcal{A}_2 )^r
      ( \cdot )
    \right\|_U
  \right) 
$
for all $ r \in \R $.
Moreover, define
$ r_0 := \frac{ \eta }{ 2 } \in ( - \frac{ 1 }{ n }, 0 ) $
and let
$
  \varepsilon
  \in (0, \min( \frac{ 1 }{ 2 } , \frac{ 1 }{ n } + r_0 ) )
$
be a real number.
Observe that this ensures that
$
  n \left( \varepsilon - r_0 \right)
  < 1
$.
Next define
$
  \alpha
  :=
  - \varepsilon 
$,
$
  \beta
  :=
  \varepsilon 
$,
$
  \gamma
  :=
  \varepsilon 
$,
$  
  \delta
  :=
  n - 1
$
and
let
$
  F^{ \omega } \colon 
  [t_0,\infty) \times
  U_{ \max( \beta, \gamma ) } 
  \to 
  U_{ \alpha }
$,
$ \omega \in \Omega 
$,
be functions defined through
\begin{equation}
\label{eq:2D_Fomega}
  F^{ \omega }( t, y ) 
  =
    \kappa_0(t)
    +
    \left(
      \kappa_1(t) + 1
    \right)
    \left(
      y + V_t( \omega )
    \right)
    +
    \sum_{ w = 2 }^n
    \kappa_w(t)
    \left[
    y^w
    +
    \sum_{
      k = 0
    }^{ w - 1 }
    \left(
      \begin{array}{c}
        w \\ k
      \end{array}
    \right)
    y^k
    \left( 
      : \! ( V_t )^{ (w - k) } \! :
    \right)\!( \omega )
    \right]
\end{equation}
for all 
$ y \in U_{ \max( \beta, \gamma ) } $,
$ t \in [t_0, \infty) $,
$ \omega \in \Omega $.
Then note 
for every $ \omega \in \Omega $
that
$ 
  F^{ \omega }
  \in
  \mathcal{C}^1_{ \alpha, \beta, \gamma, \delta }( [t_0, \infty) )
$;
see \eqref{eq:Fspace_inf} for 
the definition of
$
  \mathcal{C}^1_{ \alpha, \beta, \gamma, \delta }( [t_0, \infty) )
$.
Next observe that
\begin{equation}
  \big[
    \max( \beta,  \gamma ) ,
    1 +
    \alpha
  \big)
  =
  \big[
    \varepsilon ,
    1 - \varepsilon 
  \big)
  \neq 
  \emptyset
\end{equation}
and
\begin{equation}
\begin{split}
  \gamma - \min( \alpha, r_0 )
  + ( \beta - r_0 ) \delta
& =
  \varepsilon  
  - \min(  - \varepsilon, r_0 )
  + 
  ( \varepsilon - r_0 ) 
  \left( n - 1 \right)
\\ & 
  =
  \varepsilon - r_0
  + 
  ( \varepsilon - r_0 ) 
  \left( n - 1 \right)
  =
  n \left( \varepsilon - r_0 \right) 
  < 1 .
\end{split}
\end{equation}
We can thus apply Corollary~\ref{cor:continuity}
to obtain the existence of 
a unique lower
semicontinuous function
$ 
  \rho \colon 
  \mathcal{C}^1_{ 
    \alpha, \beta, \gamma, \delta 
  }( [t_0, T] )
  \times U_{ r_0 }
  \to
  (t_0,\infty]
$
and to obtain the
existence of a unique function
$
  y \colon 
  \mathcal{C}^1_{ \alpha, \beta, \gamma, \delta }( [t_0, \infty) )
  \times U_{ r_0 }
  \to
  \cup_{ s \in ( t_0, \infty ] 
  }
  \,
  C( [t_0, s), U_{ r_0 } )
$
which satisfy
$
  y_{ G, v }
  \in
  C( [t_0, \rho_{ G, v } ), U_{ r_0 } )
$,
$
  y_{ G, v }|_{ 
    ( t_0, \rho_{ G, v } )   
  }
  \in
  C(
    ( t_0, \rho_{ G, v } ), U_{ r_1 }
  )
$
and
\begin{equation}
  \sup_{
    s \in (t_0,t]
  }
  \left( s - t_0 \right)^{
    \left( r_1 - r_0 \right)
  }
  \| 
    y_{ G, v }(s)
  \|_{ U_{ r_1 } }
  <
  \infty
  =
  \lim_{ 
    s \nearrow \rho_{ G, v } 
  }
  \big[
    \rho_{ G, V }
    +
    \| y_{ G, v }(s) \|_{ 
      U_{ 
        \varepsilon
      } 
    }
  \big]
\end{equation}
and
\begin{equation}
\label{eq:use_y}
  y_{ G, v }(t)
=
  e^{ \mathcal{A}_2 ( t - t_0 ) } \, v
+
  \int_{ t_0 }^t
  e^{ \mathcal{A}_2 ( t - s ) }
  \,
  G( s, y_{ G, v }(s) ) \, ds
\end{equation}
for all 
$ 
  t \in ( t_0, \rho_{ G, v } ) 
$,
$ v \in U_{ r_0 } $,
$
  r_1
  \in
  \left[  
    \frac{ \eta }{ 2 } ,
    1 - \varepsilon
  \right)
$,
$
  G 
  \in 
  \mathcal{C}^1_{ 
    \alpha, \beta, \gamma, \delta 
  }( [t_0, T] )
$
and all
$ T \in (0,\infty) $.
Next we define functions
$ 
  \tau \colon \Omega \to (t_0, \infty] 
$
and 
$ 
  X \colon [t_0, \infty) \times \Omega 
  \to U_{ r_0 } \cup \{ \infty \} 
$
through
$
  \tau( \omega ) :=
  \rho_{
    F^{ \omega }
    , 
    \,
    \xi( \omega ) - 
    V_{ t_0 }( \omega ) 
  }
$
for all 
$ \omega \in \Omega $
and through
\begin{equation}
  X_t( \omega ) :=
  \begin{cases}
     y_{
      F^{ \omega }
      , 
      \,
      \xi( \omega ) - 
      V_{ t_0 }( \omega ) 
     }( t )
     +
     V_t( \omega ) 
   & 
     \colon 
     t < \tau( \omega )
   \\ 
     \infty 
   &
     \colon
     t \geq \tau( \omega )
  \end{cases}
\end{equation}
for all $ t \in [t_0,\infty) $
and all $ \omega \in \Omega $.
This definition together
with \eqref{eq:use_y}
ensures that
\begin{equation}
\begin{split}
  X_t( \omega )
  - 
  V_t( \omega )
=
  e^{ \mathcal{A}_2 ( t - t_0 ) } 
  \big(
    \xi( \omega ) 
    -
    V_{ t_0 }( \omega )
  \big)
  +
  \int_{ t_0 }^t
  e^{ \mathcal{A}_2 ( t - s ) }
  \,
  F^{ \omega }\big( 
    s, 
    X_t( \omega ) -  
    V_t( \omega )
  \big) \, 
  ds
\end{split}
\end{equation}
for all $ t \in ( t_0, \tau( \omega ) $
and all $ \omega \in \Omega $.
Combining this with \eqref{eq:2D_Fomega}
proves that $ X $ fulfills \eqref{eq:2D}.
In the next step we note that
\begin{equation}
\begin{split}
&
  \mathcal{B}\!\left(
    C\big( 
      [t_0, \infty) , 
      \big[
        \mathcal{C}^{ - \varepsilon / 2 }_{ \mathcal{P} }( [0, 2 \pi]^2, \R ) 
      \big]^{
        \! \times n
      }      
    \big)
  \right)
\\ & =
  \sigma_{ 
    C( 
      [t_0, \infty), 
      [
        \mathcal{C}^{ - \varepsilon / 2 }_{ \mathcal{P} }( [0, 2 \pi]^2, \R ) 
      ]^{
        \times n
      }
    )
  }\!\bigg(
    C\big( 
      [t_0, \infty), 
      \big[
        \mathcal{C}^{ - \varepsilon / 2 }_{ \mathcal{P} }( [0, 2 \pi]^2, \R ) 
      \big]^{
        \! \times n
      }      
    \big)
\\ & \quad
    \ni
    f \mapsto
    f(t)
    \in
    \big[
      \mathcal{C}^{ - \varepsilon / 2 }_{ \mathcal{P} }( [0, 2 \pi]^2, \R ) 
    \big]^{
      \! \times n
    }      
    \colon
    t \in [t_0, \infty)
  \bigg) .
\end{split}
\end{equation}
This implies that the mapping
\begin{multline}
  \Omega \ni
  \omega \mapsto
  \big( 
    V_t(\omega) ,
    ( : \!\! ( V_t )^2 \!\! : )(\omega) ,
    \dots ,
    ( : \!\! ( V_t )^n \!\! : )(\omega)
  \big)_{
    t \in [t_0, \infty)
  }
  \in
  C\big( 
    [t_0, \infty) , 
    \big[
      \mathcal{C}^{ - \varepsilon / 2 }_{ \mathcal{P} }( [0, 2 \pi]^2, \R ) 
    \big]^{
      \! \times n
    }      
  \big)
\end{multline}
is 
$ 
  \mathcal{F} 
$/$ 
  \mathcal{B}\big( 
    C( 
      [t_0, \infty), 
      \big[
        \mathcal{C}^{ - \varepsilon / 2 }_{ \mathcal{P} }( [0, 2 \pi]^2, \R ) 
      \big]^{
        \! \times n
      }      
    )
  \big)
$-measurable.
This ensures that
the mapping
\begin{equation}
  \Omega \ni \omega
  \mapsto 
  \big(
    F^{ \omega } ,
    \xi( \omega )
  \big)
  \in
  \mathcal{C}^1_{ \alpha, \beta, \gamma, \delta }( [t_0, \infty) )
  \times
  U_{ r_0 }
\end{equation}
is 
$ 
  \mathcal{F} 
$/$ 
  \mathcal{B}\big( 
    \mathcal{C}^1_{ \alpha, \beta, \gamma, \delta }( [t_0, \infty) )
    \times
    U_{ r_0 }
  \big) 
$-measurable.
Combining this with 
Corollary~\ref{cor:continuity}
proves that $ \tau $ is a random variable
and that $ X $ is a stochastic process
(see \eqref{eq:x_mess2} in
Corollary~\ref{cor:continuity}
for details).
Since 
$
  \varepsilon
  \in (0, \min( \frac{ 1 }{ 2 } , \frac{ 1 }{ n } + r_0 ) )
$
was arbitrary,
the proof of
Theorem~\ref{thm:2D}
is completed.
\end{proof}

\subsection{SPDEs
with space-time white noise
and quadratic nonlinearities
in three space dimensions}
\label{sec:three_dim}

The aim of this subsection is
to prove local existence and
uniqueness of mild solutions
of SPDEs in three space
dimensions with quadratic nonlinearities
of the form
\begin{equation}
\label{eq:quadratic}
  d X_t =
  \left[
    \triangle X_t
    + \kappa_2(t) 
    : ( X_t )^2 :
    + \kappa_1(t) X_t
    + \kappa_0(t) 
  \right]
  dt
  + dW_t
\end{equation}
for $ t \in [0,\infty) $
with periodic boundary
conditions on
$ (0, 2 \pi)^3 $
where 
$ \kappa_0, \kappa_1, \kappa_2 \in C( [0,\infty) , \R ) $
are arbitrary continuous functions, 
where $ ( W_t )_{ t \geq 0 } $ is a cylindrical
$ I $-Wiener process 
and where $ : ( X_t )^2 : $
is a suitable renormalization
of $ ( X_t )^2 $
for $ t \in [0,\infty) $.
The precise result is formulated in the following theorem.

\begin{thm}[Quadratic
nonlinearities in three space dimensions]
\label{thm:quadratic}
Let 
$ 
\left( 
  \Omega, \mathcal{F}, \mathbb{P} 
\right) 
$
be a probability space,
let 
$
  t_0 \in \R
$,
$
  \kappa_0, \kappa_1, \kappa_2 \in 
  C( [t_0,\infty), \R )
$,
$ 
  \eta \in ( - 1, - \frac{ 1 }{ 2 } )
$,
let
$
  V 
  \colon 
  [t_0, \infty) 
  \times \Omega
  \to 
  \cap_{
    r \in ( - \infty, - 1 / 2 )
  }
  \mathcal{C}_{ \mathcal{P} }^r
  ( [0, 2 \pi]^3, \R )
$
and
$
  : \!\! ( V )^2 \!\! : \;
  \colon
  [t_0, \infty)
  \times \Omega
  \to 
  \cap_{
    r \in ( - \infty, - 1 )
  }
  \mathcal{C}_{ \mathcal{P} }^{
    r
  }(
    [0, 2 \pi]^3, \R
  )
$
be stochastic processes
with continuous sample paths
given by 
Propositions~\ref{prop:regularity_Wick}
and \ref{prop:regularity_OE}
and let 
$ 
  \xi \colon \Omega \to 
  \mathcal{C}^{ \eta }_{ 
    \mathcal{P}
  }
  ( [0, 2 \pi]^3, \R )
$
be a random variable.
Then there exists a unique random variable
$ 
  \tau \colon \Omega \to (t_0,\infty]
$
and a unique stochastic process
$
  X \colon [t_0,\infty) \times \Omega
  \to 
  \mathcal{C}^{ \eta }_{ 
    \mathcal{P}
  }
  ( [0, 2 \pi]^3, \R )
  \cup \{ \infty \}
$
such that for every 
$ \omega \in \Omega $ 
it holds that
$
  X_t( \omega ) = \infty
$
for all $ t \in [ \tau(\omega), \infty ) $,
that
\begin{equation}
  ( 
    X_s( \omega )
  )_{ 
    s \in 
    [ t_0, \tau(\omega) ) 
  }
  \in
  C( 
    [ t_0, \tau(\omega) ) ,
    \mathcal{C}^{ \eta }_{ \mathcal{P} }( [0, 2 \pi]^3, \R )
  ) 
  ,
\end{equation}
\begin{equation}
  ( 
    X_s(\omega) - V_s(\omega) 
  )_{ s \in (t_0,\infty) } 
  \in 
  C\big( 
    (t_0,\infty), 
    \cap_{ \nu \in ( \frac{ 1 }{ 2 } , 1) }
    \,
    [
      \mathcal{C}^{ \nu }_{ \mathcal{P} }( [0, 2 \pi]^3, \R ) \cup \{ \infty \}
    ]
  \big)
  ,
\end{equation}
\begin{equation}
  \sup_{
    s \in (t_0, t]
  }
  \,
  ( s - t_0 )^{ 
    \frac{ ( r - \eta ) }{ 2 }
  }
  \,
  \| 
    X_s( \omega ) - V_s( \omega )
  \|_{
    \mathcal{C}^{ r }_{ \mathcal{P} }( [0, 2 \pi]^3, \R ) 
  }
  < \infty
\end{equation}
for all
$
  r \in [ \eta , 1 )
$
and all
$
  t \in (t_0, \tau(\omega))
$
and that
\begin{multline}
\label{eq:3D}
  X_t( \omega ) 
  =
  e^{ 
    \mathcal{A}_3
    \left( t - t_0 \right) 
  }   
  \,
  \xi( \omega ) 
  +
  V_t( \omega )
  - e^{ 
    \mathcal{A}_3 
    \left( t - t_0 \right) 
  } 
  \, V_0( \omega )
  +
  \int_{ t_0 }^t
  e^{ 
    \mathcal{A}_3 
    (t - s ) 
  }
  \bigg[
    \kappa_2(t)
    \Big(
    \left( 
      X_t( \omega ) - V_t( \omega ) 
    \right)^2
\\
    +
    2 
    \left( 
      X_t( \omega ) - V_t( \omega ) 
    \right) V_t( \omega )
    +
    \left( 
      : \! ( V_t )^2 \! :
    \right)\!( \omega )
    \Big)
    + 
    \left( \kappa_1(t) + 1 \right)
    X_t( \omega )
    + 
    \kappa_0(t) 
  \bigg]
  ds
\end{multline}
for all $ t \in [t_0, \tau(\omega) ) $.
%
In that sense, the stochastic process 
$ X $ is a local mild solution 
of the SPDE~\eqref{eq:quadratic}.
\end{thm}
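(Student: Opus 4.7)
The proof mirrors that of Theorem~\ref{thm:2D}: I will reduce the SPDE~\eqref{eq:quadratic} pathwise to a deterministic non-autonomous PDE for the shifted process $Y_t := X_t - V_t$ and then apply Corollary~\ref{cor:continuity}. Setting $U := C_{\mathcal{P}}([0,2\pi]^3, \R)$ so that $U_r = \mathcal{C}^{2r}_{\mathcal{P}}([0,2\pi]^3, \R)$, I take the baseline regularity $r_0 := \eta/2 \in (-\tfrac{1}{2}, -\tfrac{1}{4})$. Formally expanding $:\!X_s^2\!: \,= Y_s^2 + 2 Y_s V_s + \,:\!V_s^2\!:$ shows that $Y$ must satisfy the mild equation
\begin{equation*}
  Y_t = e^{\mathcal{A}_3 (t-t_0)} (\xi - V_{t_0}) + \int_{t_0}^t e^{\mathcal{A}_3(t-s)} F^{\omega}(s, Y_s)\, ds
\end{equation*}
with
\begin{equation*}
  F^{\omega}(s, y) = \kappa_0(s) + (\kappa_1(s) + 1)\big( y + V_s(\omega) \big) + \kappa_2(s)\big[\, y^2 + 2 y\, V_s(\omega) + (:\!V_s^2\!:)(\omega) \big].
\end{equation*}

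I would decompose $F^{\omega}$ into four components matching the framework of Corollary~\ref{cor:continuity}: $F_1 := \kappa_2 y^2$, $F_2 := 2\kappa_2\, y V$, $F_3 := \kappa_2 (:\!V^2\!:) + (\kappa_1+1) V + \kappa_0$, and $F_4 := (\kappa_1+1) y$. Using the multiplication bound~\eqref{eq:multiplication} together with $V \in \mathcal{C}^{-1/2-\varepsilon}_{\mathcal{P}}$ and $:\!V^2\!: \,\in \mathcal{C}^{-1-\varepsilon}_{\mathcal{P}}$, for sufficiently small $\varepsilon, \varepsilon' > 0$ I would take $\beta_1 = \gamma_1 = s_1 \in (0, (1+\eta)/2)$, $\alpha_1 = s_1 - \varepsilon'/2$, $\delta_1 = 1$ for $F_1$; $\beta_2 = \gamma_2 = s_2 \in (\tfrac{1}{4} + \tfrac{\varepsilon}{2}, \tfrac{1}{2} - \tfrac{\varepsilon}{2})$, $\alpha_2 = -\tfrac{1}{4} - \varepsilon$, $\delta_2 = 0$ for $F_2$; $\alpha_3 = -\tfrac{1}{2} - \tfrac{\varepsilon}{2}$, $\beta_3 = \gamma_3 = r_0$, $\delta_3 = 0$ for $F_3$; and all indices equal to $r_0$ with $\delta_4 = 0$ for $F_4$. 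The ambient bound $\max_i(\beta_i, \gamma_i) = s_2 < 1 + \min_i \alpha_i = \tfrac{1}{2} - \tfrac{\varepsilon}{2}$ is built into the choice of $s_2$, while the Lipschitz condition $\gamma_i - \min(\alpha_i, r_0) + \delta_i(\beta_i - r_0) < 1$ reduces to $2(s_1 - r_0) < 1$ for $F_1$, to $s_2 - r_0 < 1$ for $F_2$, to $r_0 + \tfrac{1}{2} + \tfrac{\varepsilon}{2} < 1$ for $F_3$, and to $0 < 1$ for $F_4$.

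The main obstacle, and the source of the hypothesis $\eta > -1$, is the balance of three competing scales: the singular source $:\!V^2\!: \,\in \mathcal{C}^{-1-\varepsilon}_{\mathcal{P}}$ caps the usable regularity of $Y$ via $\min_i \alpha_i$ below $\tfrac{1}{2}$; the bilinear term $y V$ demands that $Y$ live above $\mathcal{C}^{1/2}_{\mathcal{P}}$ so that the paraproduct is meaningful; and the quadratic term $y^2$, through the $F_1$ Lipschitz bound $s_1 < (1+\eta)/2$ together with the multiplication requirement $s_1 > 0$, forces $r_0 > -\tfrac{1}{2}$, i.e., $\eta > -1$. Under this assumption all three windows coexist and Corollary~\ref{cor:continuity} yields a lower semicontinuous map $\rho \colon \mathcal{C}^4_{\alpha,\beta,\gamma,\delta}([t_0,\infty)) \times U_{r_0} \to (t_0, \infty]$ together with a unique local-solution map $(G, v) \mapsto y_{G,v}$ satisfying the quantitative regularity and blow-up assertions of the corollary.

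Finally, I would set $\tau(\omega) := \rho_{F^{\omega},\, \xi(\omega) - V_{t_0}(\omega)}$, $X_t(\omega) := y_{F^{\omega},\, \xi(\omega) - V_{t_0}(\omega)}(t) + V_t(\omega)$ for $t < \tau(\omega)$, and $X_t(\omega) := \infty$ otherwise; identity~\eqref{eq:3D} then follows by inserting the definition of $F^{\omega}$ into the mild equation for $y_{F^{\omega},\,\xi(\omega)-V_{t_0}(\omega)}$. Measurability of $\tau$ and $X$ is handled exactly as in the proof of Theorem~\ref{thm:2D}: sample-path continuity of $V$ and $:\!V^2\!:$ in their respective Hölder spaces makes $\omega \mapsto (F^{\omega}, \xi(\omega) - V_{t_0}(\omega))$ Borel-measurable into $\mathcal{C}^4_{\alpha,\beta,\gamma,\delta}([t_0,\infty)) \times U_{r_0}$, and the measurability property of Corollary~\ref{cor:continuity} transports this to $(\tau, X)$ viewed in the topology~\eqref{eq:inf_topology}. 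Since $\varepsilon, \varepsilon' > 0$ were arbitrary, sending them to zero delivers the claimed regularity $X_\cdot - V_\cdot \in C\big((t_0,\infty), \cap_{\nu \in (1/2, 1)} [\mathcal{C}^{\nu}_{\mathcal{P}}([0,2\pi]^3, \R) \cup \{\infty\}]\big)$ and the weighted bound on $\|X_s - V_s\|_{\mathcal{C}^r_{\mathcal{P}}}$ for all $r \in [\eta, 1)$.
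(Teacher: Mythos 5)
Your proposal is correct and follows essentially the same route as the paper's proof: a pathwise reduction to the shifted equation for $Y = X - V$, verification of the hypotheses of Corollary~\ref{cor:continuity} with $U_r = D((-\mathcal{A}_3)^r)$ and $r_0 = \eta/2 \in (-\tfrac12,-\tfrac14)$, definition of $\tau(\omega)$ and $X_t(\omega)$ from the resulting solution map, and the identical measurability argument via sample-path continuity of $(V, :\!V^2\!:)$. The only difference is bookkeeping: the paper feeds the whole nonlinearity into the deterministic framework as a single vector field with exponents $(\alpha,\beta,\gamma,\delta)=(-\tfrac12-\varepsilon,\,-\tfrac14-\tfrac{\varepsilon}{2},\,\tfrac14+\varepsilon,\,1)$, whereas you split it into four components with individually tuned exponents; both choices satisfy the same hypotheses, and your constraint $s_1<(1+\eta)/2$ (requiring $s_1>0$) plays exactly the role of the paper's inequality $\tfrac12+2\varepsilon-r_0<1$, i.e.\ it is where $\eta>-1$ enters.
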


\begin{proof}[Proof
of Theorem~\ref{thm:quadratic}]
We show 
Theorem~\ref{thm:quadratic}
through an application of
Corollary~\ref{cor:continuity}.
For this application define
$ 
  \left( U, \left\| \cdot \right\|_U 
  \right) 
  := 
  \big( 
    \mathcal{C}^0_{ \mathcal{P} 
    }( [ 0, 2 \pi ]^3, \R ) ,
    \left\| \cdot 
    \right\|_{ 
      \mathcal{C}^0_{ \mathcal{P} 
      }( [ 0, 2 \pi ]^3, \R )    
    }
    \!
  \big)
$
and 
$ 
  \left( 
    U_r, 
    \left\| \cdot \right\|_{ U_r } 
  \right) 
  :=
  \left( 
    D( ( - \mathcal{A}_3 )^r ) ,
    \left\|
      ( - \mathcal{A}_3 )^r
      ( \cdot )
    \right\|_U
  \right) 
$
for all $ r \in \R $.
Moreover,
define
$ 
  r_0 := \frac{ \eta }{ 2 } \in 
  ( - \frac{ 1 }{ 2 }, - \frac{ 1 }{ 4 } )
$
and
let
$
  \varepsilon \in ( 0, \frac{ 1 }{ 4 } + \frac{ r_0 }{ 2 } )
$
be a real number.
Observe that this ensures that
$
  2 \varepsilon
  - r_0
  < \frac{ 1 }{ 2 }
$
and that
$
  \varepsilon
  < \frac{ 1 }{ 8 }
$.
Next define
$
  \alpha
  :=
  - \frac{ 1 }{ 2 }
  - \varepsilon 
$,
$
  \beta
  :=
  - \frac{ 1 }{ 4 } - \frac{ \varepsilon }{ 2 }
$,
$
  \gamma
  :=
  \frac{ 1 }{ 4 } + \varepsilon
$
and
$
  \delta
  :=
  1
$
and let
$
  F^{ \omega } 
  \colon 
  [t_0, \infty) 
  \times U_{ \max( \beta, \gamma ) } 
  \to 
  U_{ \alpha_i }
$,
$ 
  \omega \in \Omega 
$,
be functions defined 
through
\begin{equation}
\label{eq:3D_Fomega}
  F^{ \omega 
  }( t, y ) 
  :=
  \kappa_2(t) 
  \left(
    y^2 
    +
    2 \, V_t( \omega ) \,
    y
    +
    \left( 
      : \! ( V_t )^2 \! :
    \right)\!( \omega )
  \right)
  +
  \left(
    \kappa_1(t) + 1
  \right)
  \left( 
    y + V_t( \omega ) 
  \right)
  +
    \kappa_0(t)
\end{equation}
for all 
$ y \in U_{ \max( \beta, \gamma ) } $,
$ t \in [t_0, \infty) $,
$ \omega \in \Omega $.
Then note 
for every $ \omega \in \Omega $ 
that
$ 
  F^{ \omega }
  \in
  \mathcal{C}^1_{ 
    \alpha, \beta, \gamma, \delta 
  }( [t_0, \infty) )
$;
see \eqref{eq:Fspace_inf} for the definition of
$
  \mathcal{C}^1_{ 
    \alpha, \beta, \gamma, \delta 
  }( [t_0, \infty) )
$.
Next observe that
\begin{equation}
  \big[
    \max( \beta, \gamma ) ,
    1 +
    \alpha
  \big)
  =
  \big[
    \tfrac{ 1 }{ 4 } + \varepsilon ,
    \tfrac{ 1 }{ 2 } - \varepsilon
  \big)
  \neq
  \emptyset
\end{equation}
and
\begin{equation}
\begin{split}
&
    \gamma - \min( \alpha, r_0 )
    + 
    ( \beta - r_0 ) \delta
  =
    \gamma + \beta 
    - \alpha
    - r_0 
\\ & =
    \tfrac{ \varepsilon }{ 2 }
    +
    \tfrac{ 1 }{ 2 }
    +
    \varepsilon
    -
    r_0
    \leq
    \tfrac{ 1 }{ 2 }
    +
    2 \varepsilon 
    -
    r_0
  < 1 .
\end{split}
\end{equation}
We can thus apply Theorem~\ref{thm:continuity}
to obtain the existence of a unique lower
semicontinuous function
$ 
  \rho \colon 
  \mathcal{C}^1_{ \alpha, \beta, \gamma 
  }( [t_0, \infty) )
  \times U_{ r_0 }
  \to
  (t_0, \infty ]
$
and to obtain the existence
of a unique function
$
  y \colon 
  \mathcal{C}^1_{ 
    \alpha, \beta, \gamma 
  }( [t_0, \infty) )
  \times U_{ r_0 }
  \to
  \cup_{ 
    s \in (t_0, \infty] 
  }
  \,
  C( [t_0, s), U_{ r_0 } )
$
which satisfy
$
  y_{ G, v }
  \in
  C( 
    [t_0, \rho_{ G, v } ), U_{ r_0 } 
  )
$,
$
  y_{ G, v 
  }|_{ 
    ( t_0, \rho_{ G, v } ) 
  }
  \in
  C(
    (t_0, \rho_{ G, v } ), U_{ r_1 }
  )
$
and
\begin{equation}
  \sup_{
    s \in (t_0,t]
  }
  \left( s - t_0 \right)^{
    \left( r_1 - r_0 \right)
  }
  \| 
    y_{ G, v }(s)
  \|_{ U_{ r_1 } }
  <
  \infty
  =
  \lim_{ 
    s \nearrow \rho_{ G, v } 
  }
  \left[
    \rho_{ G, v }
    +
    \| 
      y_{ G, v }(s) 
    \|_{ 
      U_{ 
        \frac{ 1 }{ 4 } + \varepsilon
      } 
    }
  \right]
\end{equation}
and
\begin{equation}
\label{eq:use_y2}
  y_{ G, v }(t)
=
  e^{ 
    \mathcal{A}_3 ( t - t_0 ) 
  } \, v
+
  \int_{ t_0 }^t
  e^{ 
    \mathcal{A}_3 
    ( t - s ) 
  }
  \,
  G( s, y_{ G, v }(s) ) \, ds
\end{equation}
for all 
$ t \in ( t_0, \rho_{ G, v } ) $,
$ v \in U_{ r_0 } $,
$
  r_1
  \in
  \left[  
    \frac{ \eta }{ 2 } ,
    \frac{ 1 }{ 2 } - \varepsilon
  \right)
$
and all
$
  G 
  \in 
  \mathcal{C}^1_{ 
    \alpha, \beta, \gamma, \delta 
  }( [t_0, \infty) )
$.
Next we define functions
$ \tau \colon \Omega \to (t_0, \infty] $
and 
$ 
  X \colon [t_0, \infty) \times \Omega 
  \to U_{ r_0 } \cup \{ \infty \} 
$
through
$
  \tau( \omega ) :=
  \rho_{
    F^{ \omega }
    , 
    \,
    \xi( \omega ) - 
    V_{ t_0 }( \omega ) 
  }
$
for all 
$ \omega \in \Omega $
and through
\begin{equation}
  X_t( \omega ) :=
  \begin{cases}
     y_{
      F^{ \omega } ,
      \,
      \xi( \omega )        
      -
      V_{ t_0 }( \omega )
     }( t )
     +
     V_{ t }( \omega )
     & 
     \colon 
     t < \tau( \omega )
   \\ 
     \infty 
     &
     \colon
     t \geq \tau( \omega )
  \end{cases}
\end{equation}
for all $ t \in [t_0,\infty) $
and all $ \omega \in \Omega $.
This definition together
with \eqref{eq:use_y2}
ensures that
\begin{equation}
\begin{split}
  X_t( \omega )
  - 
  V_t( \omega )
  =
  e^{ 
    \mathcal{A}_3 
    ( t - t_0 ) 
  } 
  \big(
    \xi( \omega ) 
    -
    V_{ t_0 }( \omega )
  \big)
  +
  \int_{ t_0 }^t
  e^{ 
    \mathcal{A}_3 
   ( t - s ) 
  }
  \,
  F^{ \omega }( 
    s, 
    X_t( \omega ) -  
    V_t( \omega )
  ) \, 
  ds
\end{split}
\end{equation}
for all $ t \in ( t_0, \tau( \omega ) ) $
and all $ \omega \in \Omega $.
Combining this with \eqref{eq:3D_Fomega}
proves that $ X $ fulfills \eqref{eq:3D}.
In the next step we note that
\begin{equation}
\begin{split}
&
  \mathcal{B}\!\left(
    C\big( 
      [t_0, \infty), 
      \mathcal{C}^{ - ( 1 + \varepsilon ) / 2 }_{ \mathcal{P} }( [0, 2 \pi]^3, \R ) 
      \times
      \mathcal{C}^{ - ( 2 + \varepsilon ) / 2 }_{ \mathcal{P} }( [0, 2 \pi]^3, \R ) 
    \big)
  \right)
\\ & =
  \sigma_{ 
    C( 
      [t_0, \infty), 
      \mathcal{C}^{ - ( 1 + \varepsilon ) / 2 }_{ \mathcal{P} }( [0, 2 \pi]^3, \R ) 
      \times
      \mathcal{C}^{ - ( 2 + \varepsilon ) / 2 }_{ \mathcal{P} }( [0, 2 \pi]^3, \R ) 
    )
  }\!\Big(
\\ & 
    C\big( 
      [t_0, \infty), 
      \mathcal{C}^{ - ( 1 + \varepsilon ) / 2 }_{ \mathcal{P} }( [0, 2 \pi]^3, \R ) 
      \times
      \mathcal{C}^{ - ( 2 + \varepsilon ) / 2  }_{ \mathcal{P} }( [0, 2 \pi]^3, \R ) 
    \big)
\\ & 
    \ni
    f \mapsto
    f(t)
    \in
    \mathcal{C}^{ - ( 1 + \varepsilon ) / 2 }_{ \mathcal{P} }( [0, 2 \pi]^3, \R ) 
    \times
    \mathcal{C}^{ - ( 2 + \varepsilon ) / 2 }_{ \mathcal{P} }( [0, 2 \pi]^3, \R ) 
    \colon
    t \in [t_0, \infty)
  \Big) .
\end{split}
\end{equation}
This implies that the mapping
\begin{equation}
\begin{split}
  \Omega \ni
  \omega \mapsto
  \big( 
    V_t( \omega ), 
    ( : \! ( V_t )^2 \! : )(\omega)
  \big)_{
    t \in [t_0, \infty)
  }
  \in
  C\big( 
    [t_0, \infty), 
    \mathcal{C}^{ 
      - ( 1 + \varepsilon ) / 2 
    }_{ \mathcal{P} }( [0, 2 \pi]^3, \R ) 
    \times
    \mathcal{C}^{ - ( 2 + \varepsilon ) / 2 }_{ \mathcal{P} }( [0, 2 \pi]^3, \R ) 
  \big)
\end{split}
\end{equation}
is 
$ 
  \mathcal{F} 
$/$ 
  \mathcal{B}\big( 
    C( 
      [t_0, \infty), 
      \mathcal{C}^{ - ( 1 + \varepsilon ) / 2 }_{ \mathcal{P} }( [0, 2 \pi]^3, \R ) 
      \times
      \mathcal{C}^{ - ( 2 + \varepsilon ) / 2 }_{ \mathcal{P} }( [0, 2 \pi]^3, \R ) 
    )
  \big)
$-measurable
and this shows that
the mapping
\begin{equation}
  \Omega \ni 
  \omega
  \mapsto 
  \big(
    F^{ \omega }, 
    \xi( \omega )
    - V_{ t_0 }( \omega )
  \big)
  \in
  \mathcal{C}^1_{ 
    \alpha, \beta, \gamma, \delta 
  }( 
    [t_0, \infty) 
  )
  \times
  U_{ r_0 }
\end{equation}
is 
$ 
  \mathcal{F} 
$/$ 
  \mathcal{B}\big( 
    \mathcal{C}^1_{ 
      \alpha, \beta, \gamma, \delta 
    }( 
      [t_0, \infty) 
    )
    \times
    U_{ r_0 }
  \big) 
$-measurable.
Combining this with 
Corollary~\ref{cor:continuity}
proves that $ \tau $ is a random variable
and that $ X $ is a stochastic process
(see \eqref{eq:x_mess2} in Corollary~\ref{cor:continuity}
for details).
Since $ \varepsilon \in ( 0, \frac{ 1 }{ 4 } + \frac{ r_0 }{ 2 } ) $
was arbitrary,
the proof of Theorem~\ref{thm:quadratic}
is completed.
\end{proof}

\bibliographystyle{acm}
\bibliography{../Bib/bibfile}

\end{document}